\documentclass[11pt,letterpaper]{article}%[a4paper]{article}
\usepackage[utf8]{inputenc}
\usepackage[rflt]{floatflt}
\usepackage{fullpage}
\usepackage[T1]{fontenc}
\usepackage{graphicx}
\usepackage{epsfig}
\usepackage{color}
\usepackage{enumerate}
\usepackage{verbatim}
\usepackage{amsthm, amssymb}
\usepackage{amsmath}
\usepackage{thmtools,thm-restate}
%L\usepackage{mdwlist}
\usepackage{wrapfig}
\usepackage{xspace}

\definecolor{darkgreen}{rgb}{0,0.5,0}
\usepackage{hyperref}
\hypersetup{
    unicode=false,          % non-Latin characters in AcrobatÎ“Ã‡Ã–s bookmarks
    colorlinks=true,        % false: boxed links; true: colored links
    linkcolor=blue,          % color of internal links (change box color with linkbordercolor)
    citecolor=darkgreen,        % color of links to bibliography
    filecolor=magenta,      % color of file links
    urlcolor=cyan           % color of external links
}

\usepackage{algorithm}
\usepackage{algorithmicx}
\usepackage[noend]{algpseudocode}
\usepackage[disableredefinitions,roman]{complexity}
% fixing Ł

%\usepackage{microtype}

\usepackage[capitalize, nameinlink]{cleveref}
\crefname{theorem}{Theorem}{Theorems}
\Crefname{lemma}{Lemma}{Lemmas}
\Crefname{claim}{Claim}{Claims}
\Crefname{observation}{Observation}{Observations}
%\Crefname{appendix}{Appendix}{Appendices}

\algblock{Begin}{End}

\date{}
\MakeRobust{\overrightarrow}

\newcommand{\eqdef}{\stackrel{\text{\tiny\rm def}}{=}}

\newtheorem{theorem}{Theorem}
\newtheorem{lemma}[theorem]{Lemma}

\newtheorem{observation}[theorem]{Observation}
\newtheorem{definition}[theorem]{Definition}
\newtheorem{corollary}[theorem]{Corollary}

\newcommand{\Gepsj}[1]{G_{\eps, \sigma_{#1}}}
\newcommand{\GepsZero}[1]{G_{#1, 0}}

\makeatletter
\newcounter{algorithmicH}% New algorithmic-like hyperref counter
\let\oldalgorithmic\algorithmic
\renewcommand{\algorithmic}{%
  \stepcounter{algorithmicH}% Step counter
  \oldalgorithmic}% Do what was always done with algorithmic environment
\renewcommand{\theHALG@line}{ALG@line.\thealgorithmicH.\arabic{ALG@line}}
\makeatother

\def\polylog{\operatorname{polylog}}

\newcommand{\machines}{M}

%pino
\def\dist{\mathit{dist}}

\addtolength{\textheight}{.0in}

\newcommand{\out}[1]{}

\newcommand{\rb}[1]{\left( #1 \right)}

\newcommand{\ee}[1]{{\mathbb E} \left[ #1 \right]}

\newcommand{\prob}[1]{\mathbb P \left[ #1 \right]}
\newcommand{\eps}{\epsilon}

\newcommand{\tO}{\tilde{O}}
\newcommand{\allones}{\vec{1}}

% Name of algorithm
\newcommand{\BipartitenessTester}{\textsc{BipartitenessTester}\xspace}
\newcommand{\ExpansionTester}{\textsc{ExpansionTester}\xspace}
\newcommand{\RandomWalks}{\textsc{RandomWalks}\xspace}
\newcommand{\PageRankOfBalancedGraphs}{\textsc{PageRankOfBalancedGraphs}\xspace}
\newcommand{\PageRankOfGeneralGraphs}{\textsc{PageRankOfGeneralGraphs}\xspace}
\newcommand{\PageRankLargerDamping}{\textsc{PageRankLargerDampingFactor}\xspace}

\newcommand{\TranslateWalkBase}{\textsc{TranslateWalk}\xspace}
\newcommand{\TranslateWalk}{\textsc{TranslateWalk-$\sigma$}\xspace}
\newcommand{\TranslateWalkEps}{\textsc{TranslateWalk-$\eps$}\xspace}
\newcommand{\TranslateBalancedToPageRankWalk}{\textsc{TranslateToPageRankWalk}\xspace}
\newcommand{\StationaryDistribution}{\textsc{StationaryDistribution}\xspace}
% Name of MPC primitives
\newcommand{\NumberingSublists}{\textsc{NumberingSublists}\xspace}
\newcommand{\Predecessor}{\textsc{Predecessor}\xspace}

\newcommand{\accept}{\textsc{Accept}\xspace}
\newcommand{\reject}{\textsc{Reject}\xspace}

\newcommand{\fail}{\textsc{fail}\xspace}
\newcommand{\succeed}{\textsc{succeed}\xspace}

\newcommand{\davg}{d_{\rm avg}}
\newcommand{\dmax}{d_{\rm max}}

\newcommand{\bbR}{\mathbb{R}}
\newcommand{\bbN}{\mathbb{N}}
\newcommand{\tpi}{\tilde{\pi}}
\newcommand{\alphastar}{\alpha^\star}

\newcommand{\ltwo}[1]{\|#1\|}

\newcommand{\rw}[1]{\textsc{RW}(#1)}
\newcommand{\undir}[1]{\bar{#1}}

\newcommand{\cycleConj}{\textsc{1-vs.-2-Cycles}}

%\setpapersize{A4}
%\setmarginsrb{1in}{1in}{1in}{1in}{0pt}{0mm}{0pt}{0mm}

\title{
%\textcolor{red}{Something with PageRank and Random Walks?}\\
%Parallel Algorithms for Random Walks and PageRank\\
%or \\
%Parallel Algorithms for Generating Random Walks \\with Applications to PageRank\\
%or \\
%Faster PageRank via Faster Random Walks on MPC\\
%or \\
%Exponentially Faster Random Walks on MPC with Applications to PageRank\\
%or \\
%$O(\log \log n)$-Round PageRank on MPC \\
%or \\
%$O(\log \log n)$-Round PageRank on MPC via Faster Random Walks\\
%or\\
%Breaking the $\Theta(\log n)$-Round Barrier for PageRank \\
%or\\
%To $O(\log \log n)$-Round PageRank on MPC and Beyond
% or \\
% Parallel Random Walks with Applications\\ or \\ Exponentially Faster Random Exploration on Modern Parallel Systems\\
% or \\ Walking Randomly and Fast\\
% or \\ Walking Randomly and Efficiently\\
Walking Randomly, Massively, and Efficiently
}
\author{
Jakub Łącki\\
Google Research
\and
Slobodan Mitrović\\
MIT
\and
Krzysztof Onak\\
IBM Research%\\
%MIT–IBM Watson Lab
\and
Piotr Sankowski\\%Institute of Informatics\\
University of Warsaw
}

\begin{document}

\begin{titlepage}
\def\thepage{}
\thispagestyle{empty}
\maketitle

\begin{abstract}
We introduce a set of techniques that allow for efficiently generating many independent random walks in the Massive Parallel Computation (MPC) model with space per machine strongly sublinear in the number of vertices. In this space-per-machine regime, many natural approaches to graph problems struggle to
overcome the $\Theta(\log n)$ MPC round complexity barrier. Our techniques enable breaking this barrier for PageRank---one of the most important applications of random walks---even in more challenging directed graphs, and for approximate bipartiteness and expansion testing.

In the undirected case, we start our random walks from the stationary distribution, which implies that we approximately know the empirical distribution of their next steps.
This allows for preparing continuations of random walks in advance and applying a doubling approach.
As a result we can generate multiple random walks of
length $l$ in $\Theta(\log l)$ rounds on MPC. Moreover, we show that under the popular \cycleConj{} conjecture, this round complexity is asymptotically tight.

For directed graphs, our approach stems from our treatment of the PageRank Markov chain. We first compute the PageRank for the undirected version of the input graph and then slowly transition towards the directed case, considering convex combinations of the transition matrices in the process.

For PageRank, we achieve the following round complexities for damping factor equal to $1 - \eps$:
\begin{itemize}
\item in $O(\log \log n + \log 1 / \eps)$ rounds for undirected graphs (with $\tilde O(m / \eps^2)$ total space),
\item in $\tilde O(\log^2 \log n + \log^2 1/\eps)$ rounds for directed graphs  (with $\tilde O((m+n^{1+o(1)}) / \poly\, \eps)$ total space).
\end{itemize}

The round complexity of our result for computing PageRank has only logarithmic dependence on $1/\eps$. We use this to show that our PageRank algorithm can be used to construct directed length-$l$ random walks in $O(\log^2 \log{n} + \log^2{l})$ rounds (with $\tilde O((m+n^{1+o(1)}) \poly\, l)$ total space). Namely, by letting $\eps = \Theta(1 / l)$, a length-$l$ PageRank walk with constant probability contains no random jump, and hence is a directed random walk.
\end{abstract}
\end{titlepage}

\section{Introduction}
Computing random walks in graphs is a fundamental algorithmic primitive. Random walks find applications in a plethora of computer science research areas. A non-exhaustive list includes optimal PRAM algorithms for connectivity~\cite{Reif:1985,DBLP:journals/jcss/HalperinZ96,Halperin:1996},
rating web pages~\cite{pagerank,2015:FDP}, partitioning graphs~\cite{AndersenCL06}, minimizing query complexity in property testing~\cite{GoldreichR99,KaufmanKR04,CzumajMOS,czumaj2010testing,kale2011expansion,NachmiasS10,Czumaj:2015,8555132}, finding graph matchings~\cite{GoelKK13}, generating random spanning trees~\cite{KelnerM09}, and counting problems~\cite{Jerrum:1996:MCM}.

Intuitively, computing random walks, especially independent random walks from all vertices, should be highly parallelizable, since random walks are memoryless.
However, even if we start a single random walk from each vertex, after a constant number of steps many of the walks may collide in the same vertex.
This is especially problematic in directed graphs, since it is not known how to precompute the vertices where many collisions would happen, other than by simulating length-$l$ walks in $l$ steps or computing the $l$-th power of the transition matrix, which takes quadratic space.

%for big data models such as streaming or Massively Parallel Computation (MPC), one encounters a number of non-trivial obstacles that may lead to high pass or round complexity and non-trivial space requirements.

The focus of this work is on generating a large number of independent random walks in a parallel setting. For undirected graphs, we take advantage of the fact that the stationary distribution is proportional to vertex degrees and can thus be computed in a trivial way. If the starting points of the random walks are distributed according to the stationary distribution, then after any number of steps the distribution of the endpoints is the same. We exploit this to pre-sample continuations of random walks and recursively stitch them together in order to generate length-$l$ walks in $O(\log l)$ parallel rounds.

% {\textbf{OLD:} \color{red} The situations becomes more complicated for directed graphs (or Markov chains such as the one defining PageRank), since we do not know the stationary distribution a priori and hence cannot apply this approach directly. Instead, we start from the undirected closure $\bar{G}$ of the input graph $G$, for which we can generate random walks. We slowly transition from $\bar{G}$ to $G$, updating in the process our knowledge of the stationary distribution. This additionally requires transforming, during the transition, the graph into one with bounded degrees of vertices. 
% The final algorithm for directed graphs runs in $O(\log^2 \log n + \log^2 l)$ rounds.
% By taking advantage of our random walks primitive we obtain efficient algorithms for several graph problems, most importantly for computing approximate PageRank in $\Tilde{O}(\log^2 \log n)$ rounds.
% }

The situation becomes more complicated for directed graphs (or Markov chains such as the one defining PageRank), since we do not know the stationary distribution a priori and hence cannot apply this approach directly. Instead, to compute PageRank, we start from the undirected closure $\bar{G}$ of the input graph $G$, for which we can generate random walks, using the ideas described above.
We then slowly transition from $\bar{G}$ to $G$, and gradually update our approximation of the stationary distribution.
Roughly speaking, at each step we consider a convex combination of the transition matrices of $\undir{G}$ and $G$.
This technique, together with its analysis, is the most important and complex technical contribution of the paper.
%This additionally requires that the degrees in $G$ are balanced, that is the outdegree of each vertex is not much smaller than its indegree.

Another challenge in computing random walks in directed graphs is the fact that the probabilities of some vertices in the stationary distribution $\pi$ can be as low as $O(1/2^n)$.
Hence, computing $t(v) \sim \pi(v)$ random walks from each vertex $v$, where $t(v) \geq 1$, would result in exponentially many walks.
This challenge can be addressed by using the Markov chain that is used to define  PageRank.
In this Markov chain, each random walk ending at $v$ is either extended with a random outedge of $v$ (with probability $1-\epsilon$), or jumps to a random vertex of the graph (with probability $\epsilon)$.
This small change influences the stationary distribution $\pi$ significantly, since in the modified graph we have $\pi(v) \geq \epsilon / n$ for each vertex $v$.
At the same time, by setting $\epsilon = O(1/l)$, one can guarantee that each random walk does not make a random jump with constant probability, and is thus a random walk in the original graph.
Altogether, our ideas lead to an algorithm for computing length-$l$ random walks in directed graphs in $O(\log^2 \log n + \log^2 l)$ rounds and an algorithm for PageRank that uses $\tO(\log^2 \log n)$ rounds.
Both these algorithms require total space that is almost linear in the input size (assuming we only compute random walks of length $l = \poly \log n$).

We show that our algorithms can be implemented in the Massively Parallel Computation (MPC) model, which has been extensively studied by the theory community in the recent years~\cite{lattanzi2011filtering,AndoniNOY14,czumaj2018round,assadi2019coresets,brandt2018matching,ghaffari2018improved,ghaffari2019sparsifying,onak2018round,gamlath2018weighted,ghaffari2019improved,harvey2018greedy,ASW,log-diameter,assadi2019sublinear,behnezhad2019exponentially,ghaffari2019conditional,BDELM2019}.
We use the most challenging space-per-machine regime of the MPC model, in which the space available on each machine is strongly sublinear in the number of vertices of the graph, i.e., at most $n^{1-\Omega(1)}$. This allows for handling large graphs that do not fit onto a single machine, even if they are sparse, which is the case for social networks and the webgraph.
%The MPC model formalizes the computational capabilities of widely-used distributed computation frameworks, such as MapReduce~\cite{}, Hadoop~\cite{}, Pregel~\cite{}, Giraph~\cite{} or Spark~\cite{}, and

\subsection{Our Results}
We give new algorithms for sampling independent random walks and show that they can be efficiently implemented in the MPC model.
For the formal description of the model see~\cref{sec:mpc}.

The first result is an algorithm for sampling random walks in undirected graphs.

\begin{restatable}{theorem}{llogl}
\label{theorem:llogl}
Let $G$ be an undirected graph and $C\ge 1$. Let $l$ be a positive integer such that $l = o(S)$, where $S$ is the available space per machine.
There exists an MPC algorithm that samples $\deg^{+}(v) \lceil C \ln n \rceil$ independent random walks of length $l$ starting in $v$ for each vertex $v$ in $G$.
The algorithm runs in $O(\log l)$ rounds and uses $O(Cml\log l\log n)$ total space and strongly sublinear space per machine.
If the algorithm has to return only the endpoints of each random walk, the total space complexity can be reduced to $O(Cml \log n)$ and $l$ can be arbitrarily large.
The algorithm is an imperfect sampler (see \cref{def:sampler}) that does not fail with probability $1-n^{-\frac{C}{6}+1}$.
\end{restatable}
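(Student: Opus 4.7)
The plan is to execute the doubling strategy outlined in the introduction. Let $N=\lceil C\ln n\rceil$. The prescribed allocation of $N\deg(v)$ walks starting at $v$ is proportional to the stationary distribution $\pi(v)=\deg(v)/(2m)$ of the simple random walk on $G$, which is preserved under the walk. Consequently, for any $l$, the multiset of endpoints of length-$l$ walks sampled under this allocation is, in expectation, again allocated so that every $v$ receives $N\deg(v)$ endpoints---precisely matching the number of fresh length-$l$ walks starting at $v$ that we would need as continuations in the next doubling phase. The base case generates, for each directed edge $(u,v)$, enough independent length-$1$ copies of that edge so that after $\log l$ doubling phases the required $N\deg(v)$ length-$l$ walks remain.

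Each doubling phase pairs every length-$l$ walk (the prefix) with an unused length-$l$ walk starting at its endpoint (the suffix), via a uniformly random matching of demand to supply at each vertex $v$. The key quantitative claim is that the demand $D_v$ at every $v$ and every phase does not exceed the supply $N\deg(v)$. Since $\mathbb{E}[D_v]=N\deg(v)$ by the stationarity argument, a Chernoff bound gives $\Pr[D_v>N\deg(v)]\le n^{-\Omega(C)}$; a union bound over $n$ vertices and $O(\log l)$ phases yields the stated $1-n^{-C/6+1}$ success probability. On the failure event the algorithm outputs \fail, matching the imperfect-sampler template; conditional on success, a uniformly random matching preserves joint independence of the resulting length-$2l$ walks with the correct marginal distribution.

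In MPC each phase is a join of prefixes and suffixes keyed by the shared endpoint, implementable by a constant-round sort-and-zip. On machines with strongly sublinear space, sorting takes $O(1)$ rounds, giving the overall $O(\log l)$ round bound. The space bookkeeping counts the $\Theta(Nm)$ walks active at each phase together with the cost of carrying them across $O(\log l)$ phases, yielding the claimed $O(Cml\log l\log n)$ total space; in the endpoint-only setting each walk is a single $O(\log n)$-bit record, so a walk of any length $l$ fits on one machine and the restriction $l=o(S)$ disappears, at a total space of $O(Cml\log n)$.

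The main obstacle is verifying that conditioning on the success event does not skew the output distribution, and that per-phase failure probabilities compose into the stated bound. Demand and supply at stage $i+1$ are correlated through the random coins used at stages $\le i$, so the argument proceeds by induction: if after stage $i$ the walks satisfy the stationary-allocation invariant and are conditionally i.i.d., a uniformly random pairing at stage $i+1$ preserves both properties. This coupling is what forces the $C\ln n$ slack in the per-vertex quotas and what the imperfect-sampler abstraction of Definition~\ref{def:sampler} is designed to capture.
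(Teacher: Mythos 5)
Your outline matches the paper's Algorithm~1 (\RandomWalks) at a high level, but the central quantitative step is wrong. You assert that if $\mathbb{E}[D_v] = N\deg(v)$ and the supply is also $N\deg(v)$, then Chernoff gives $\Pr[D_v > N\deg(v)] \le n^{-\Omega(C)}$. That bound is false: when a sum of many independent indicators has its target threshold exactly at its mean, the probability of exceeding it is $\approx 1/2$, not $n^{-\Omega(C)}$. Chernoff requires a multiplicative gap $\delta > 0$ between the threshold and the mean, and your budgeting scheme provides none. The same issue reappears at every phase, so the failure probability of the whole algorithm under your accounting would be constant, not $n^{-C/6+1}$.

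This is exactly why the paper defines the per-vertex quotas $t_i(v) = \deg^+(v)\lceil C\ln n\cdot k_i\rceil$ with the recursion $k_{i-1} = 2k_i + \sqrt{k_i}$ rather than the naive $k_{i-1}=2k_i$. The extra $\sqrt{k_i}$ is the slack: the budget for suffixes at phase $i$ is $t_{i-1}(u)-t_i(u)$, whose mean-to-threshold gap is $\delta_i \approx 1/\sqrt{k_i}$, and $\delta_i^2\, t_i(u) = \Theta(C\ln n)$ independently of $i$. Lemma~\ref{lemma:bound-ki} then controls $k_0 \approx l\log l$, which is precisely where the $\log l$ factor in the total space comes from (your explanation---``the cost of carrying them across $O(\log l)$ phases''---does not identify this; in fact the per-phase space is roughly balanced). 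Finally, for the endpoints-only variant you claim the $\log l$ factor disappears simply because an endpoint is a single record, but the peak space is set by phase~$0$, where walks have length~$1$ anyway; the paper's actual fix is to skip the first $\lceil\log\log l\rceil$ doubling phases by simulating $\Theta(\log l)$ steps directly, which divides $k_0$ by $\Theta(\log l)$ and removes the factor (\cref{alg:random-walks-2}, \cref{lemma:ref-space-2}). You should either introduce a decaying quota sequence with explicit slack (as in the paper) or otherwise justify a multiplicative margin between supply and expected demand at every phase, and then redo the space accounting from that sequence.
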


Our algorithm assumes that the length of each random walk is at most the space available per machine, which is $n^\gamma$ for $\gamma \in (0, 1)$.
We believe this assumption is not limiting, since in applications the most interesting regime, $l = n^{o(1)}$, especially $l = O(\poly \log n)$.

One of the main results of this paper is an algorithm for sampling random walks in directed graphs.
To the best of our knowledge, this is the first $o(l)$-round algorithm for sampling length-$l$ independent random walks in any distributed or parallel model, other than the trivial algorithm based on matrix exponentiation, which requires quadratic space.
\begin{restatable}{theorem}{randomwalksdirected}\label{theorem:directed-walks}
Let $G$ be a directed graph. Let $D$ and $l$ be positive integers such that $l = o(S)/\log^3 n$, where $S$ is the available space per machine.
There exists an MPC algorithm that samples $D$ independent random walks of length $l$ starting in $v$ for each $v$ in $G$.
The algorithm runs in $O(\log^2 \log{n} + \log^2 l)$ rounds and uses $\tO\rb{m + n^{1 + o(1)} l^{3.5} + D n l^{2 + o(1)}}$ total space and strongly sublinear space per machine.
The algorithm is an imperfect sampler (see \cref{def:sampler}) that does not fail with probability $1-O(n^{-1})$.
\end{restatable}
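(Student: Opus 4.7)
The plan is to reduce sampling directed random walks to sampling walks in the PageRank Markov chain with a small random-jump probability. Setting $\eps = \Theta(1/l)$, a length-$l$ PageRank walk contains no random jump with probability $(1-\eps)^l = \Omega(1)$, and \emph{conditioned on this no-jump event} the walk is exactly a length-$l$ directed random walk in $G$ starting from the same vertex. Consequently, if I can sample $N = \Theta(D + \log n)$ PageRank walks from each vertex $v$ and keep only those that never jumped, a Chernoff bound guarantees at least $D$ surviving directed walks per vertex; a union bound over the $n$ vertices fits inside the $O(n^{-1})$ failure budget.

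To obtain $N$ PageRank walks from each $v$, I would invoke the PageRank-walk primitive developed by the paper in the course of its $\tO(\log^2 \log n)$-round PageRank theorem: first compute an approximate stationary distribution $\tpi$ of the PageRank chain, and then run $O(\log l)$ doubling stages in which length-$k$ walks are stitched into length-$2k$ walks via precomputed continuations. Since the PageRank stationary distribution satisfies $\tpi(v) \geq \eps/n$ at every vertex, each $v$ is the start of enough precomputed walks to serve as a continuation hub for the doubling. Running this primitive with oversampling $N$ and $\eps = \Theta(1/l)$ produces the required PageRank walks in $O(\log^2 \log n + \log^2 l)$ rounds overall, matching the claimed round complexity. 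The space bound follows by substituting $\eps = \Theta(1/l)$ into the PageRank space bound (yielding the $\tO(m + n^{1+o(1)} l^{3.5})$ terms) and adding $\tO(N n l) = \tO(D n l^{2+o(1)})$ for storing the sampled walks. Strongly sublinear space per machine is preserved because the assumption $l = o(S)/\log^3 n$ ensures that an individual walk fits on a single machine.

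The main obstacle will be correctness of the no-jump filtering step in the presence of both the imperfect-sampler guarantee of the PageRank-walk primitive and the coupling that doubling introduces among walks sharing a precomputed suffix. My plan is to realize the jump-or-step coin of each walk independently of the choice of precomputed continuation, so that the no-jump event is an internal property of each walk that is independent across walks; filtering then preserves the exact directed-walk distribution on every surviving walk, even though their endpoints may be coupled. The total-variation slack inherited from the imperfect sampler is absorbed into the $O(n^{-1})$ failure probability via a union bound over vertices and doubling stages.
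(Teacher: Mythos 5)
Your proposal follows the paper's proof almost step for step: set $\eps = \Theta(1/l)$, compute an approximate $\eps$-PageRank via \cref{theorem:PageRank-directed}, oversample PageRank walks with \RandomWalks starting from that approximate stationary distribution, and filter out walks that make a jump transition, finishing with a Chernoff bound and a union bound over vertices. Two points in your write-up deserve correction.

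First, your concern about ``coupling that doubling introduces among walks sharing a precomputed suffix'' rests on a misreading of \cref{alg:random-walks}. At each doubling stage a walk in $U_i(v)$ ending at $u$ is extended by a \emph{yet unused} walk from $W_{i-1}(u) \setminus U_i(u)$, so no continuation is ever shared between two walks; conditioned on the imperfect sampler not failing, the output walks are jointly distributed as fully independent random walks in $G_\eps$. You appear to be conflating this with the Bahmani--Chakrabarti--Xin scheme, which does reuse sampled edges and consequently produces non-independent walks (a contrast the paper draws explicitly). This matters: the theorem requires \emph{independent} output walks, and your phrasing ``filtering then preserves the exact directed-walk distribution on every surviving walk, even though their endpoints may be coupled'' would only establish the correct per-walk marginal, which is strictly weaker than what is claimed. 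With \RandomWalks's actual independence guarantee the filtering argument goes through with no extra device.

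Second, the space line ``$\tO(N n l) = \tO(D n l^{2+o(1)})$ for storing the sampled walks'' is incorrect as written if $N = \Theta(D+\log n)$: the final $N$ walks of length $l$ per vertex take only $\tO(D n l)$ space. The additional $\approx l$ factor is incurred \emph{inside} \RandomWalks, not in storing the output. To guarantee at least $N$ output walks per vertex when the minimum stationary probability is $\Theta(\eps/n)$, \cref{theorem:alg-failure-inner-loop} forces the sampling constant to scale as $C = \Theta\!\left(N/(\eps \ln n)\right) = \Theta(Nl/\ln n)$, and the peak space is then $\tO(C\, l^{1+2\alpha} n) = \tO(N n l^{2+o(1)})$. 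The paper makes the same point by choosing $C = 20D/(\alpha\eps)$ with $\alpha = 1/\log n$. Your headline bound is correct, but attribute the $l^2$ to the oversampling overhead rather than to walk storage.
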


We also show that the algorithm for undirected graphs is almost optimal under the \cycleConj{} conjecture, which is the most popular conjecture for showing conditional hardness in the MPC model.
\begin{restatable}{theorem}{lowerbound}\label{thm:lower-bound}
Let $\gamma \in (0, 1)$ be a constant.
In the MPC model with $O(n^{1-\gamma} \poly \log n)$ machines, each having space $O(n^\gamma)$, the following holds.
Each algorithm that can sample $\Theta(\log n)$ independent random walks of length $\Theta(\log^4 n)$ starting at each vertex of the graph requires $\Omega(\log \log n)$ rounds, unless the \cycleConj{} conjecture does not hold.
\end{restatable}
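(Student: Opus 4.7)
The plan is to reduce the \cycleConj{} problem to walk sampling. Given a hypothetical walk sampler running in $R$ rounds, I would build an MPC algorithm that solves \cycleConj{} in $O(R \cdot \log n / \log \log n)$ rounds. Combined with the $\Omega(\log n)$-round lower bound postulated by the conjecture, this forces $R = \Omega(\log \log n)$.

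The algorithm runs in phases. Phase $i$ starts with a graph $G_i$ on $n_i$ vertices whose number of connected components (1 or 2) matches that of the input $G = G_0$. We invoke the sampler on $G_i$ in $R$ rounds to obtain $\Theta(\log n_i)$ random walks of length $\Theta(\log^4 n_i)$ from each vertex. Azuma--Hoeffding applied to the $\pm 1$ step sequence shows that, with probability $1 - n^{-\Omega(1)}$, every such walk stays within cyclic distance $O(\log^{2+o(1)} n_i)$ of its starting point along its arc of $G_i$. Next, I would mark each vertex independently as a \emph{representative} with probability $p_i \approx 1/\log^{2+o(1)} n_i$ chosen so that representatives are typically spaced a little closer than the walk range, and form $G_{i+1}$ on the representatives by inserting an edge whenever one representative appears on a walk of the other. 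A Chernoff bound over cyclic arcs shows that consecutive representatives are within walk range w.h.p., so the number of connected components is preserved and $n_{i+1} = \Theta(n_i / \log^{2+o(1)} n_i)$. After $k = \Theta(\log n / \log \log n)$ phases we have $n_k = O(1)$, at which point the 1-vs-2 decision is made in a single round. Sorting and gathering primitives implement the contraction in $O(1)$ additional MPC rounds under the $n^\gamma$ memory bound, and a union bound over phases keeps the total failure probability $o(1)$.

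The main obstacle is that $G_{i+1}$ is not an exact cycle but a bounded-degree \emph{thickened} cycle, so the $\sqrt{l}$-displacement concentration required to set up the next phase must be reproved for random walks on this random graph. I would handle this by keeping the representative density slightly sparser than one over the walk range, so that each representative has $O(\log^{o(1)} n)$ neighbors and each walk step has variance $O(\log^{o(1)} n)$; Azuma--Hoeffding then gives displacement $O(\log^{2+o(1)} n_{i+1})$ in $G_{i+1}$, which, translated back through the inter-representative spacing, shrinks the original graph by the intended factor and keeps the recursion rolling. The remaining pieces---handling the imperfect-sampler failure probability, re-parameterizing the sampler at each phase to the current vertex count, and verifying that all intermediate graphs obey the $O(n^{1-\gamma} \poly \log n) \times O(n^\gamma)$ MPC resource budget---are routine given the machinery developed earlier in the paper.
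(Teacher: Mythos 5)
Your reduction skeleton -- use the $R$-round walk sampler to contract each cycle by a $\poly\log n$ factor per phase, run $\Theta(\log n / \log\log n)$ phases, and derive $R = \Omega(\log\log n)$ from the conjectured $\Omega(\log n)$ bound -- is the same as the paper's, and is correct at that level. But the parameter choices in the phase construction have genuine problems.

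First, the Azuma--Hoeffding tail you quote is off: for a $\pm 1$ walk of length $l = \Theta(\log^4 n)$, the $1 - n^{-\Omega(1)}$ displacement bound is $\Theta(\sqrt{l \log n}) = \Theta(\log^{2.5} n)$, not $O(\log^{2+o(1)} n)$. More importantly, you want the walk to \emph{reach} the next representative, not merely stay inside a range containing it. A walk of length $l$ reaches displacement $d$ with probability roughly $e^{-\Theta(d^2/l)}$, so if representatives are spaced $\log^{2+o(1)} n$ apart this is $e^{-\omega(1)}$, and $\Theta(\log n)$ walks do not amplify it to high probability. To make a single walk reach the next representative with \emph{constant} probability (which is what you need so that $\Theta(\log n)$ walks amplify to $1 - n^{-\Omega(1)}$, as in \cref{lem:rw}) you must take spacing $\le c \sqrt{l} = c\log^2 n$, i.e.\ sampling probability $\Omega(1/\log^2 n)$; the paper uses $1/\log n$. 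But then the representative density is incompatible with your degree bound: a walk in the tail passes $\Theta(\log^{2.5} n / \log^2 n) = \Theta(\log^{0.5} n)$ representatives, and with $\Theta(\log n)$ walks per vertex the thickened cycle has degree $\Theta(\log^{1.5} n)$, not $O(\log^{o(1)} n)$. Your sketch for re-proving displacement concentration on the thickened cycle relies essentially on this (false) degree bound, so it does not go through.

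The paper sidesteps the thickened-cycle issue entirely: from the walks, each sampled vertex can read off the \emph{signed} cyclic displacement of every visited vertex, and therefore keep only the nearest sampled vertex in each of the two directions. This produces an exact cycle on the representatives at every phase, so \cref{lem:rw} applies unchanged at the next phase -- no new concentration lemma is needed. Finally, the space accounting is not routine: the \cycleConj{} conjecture concerns \emph{linear} total space, while the sampler carries a $\poly\log n$ overhead. The paper's proof inserts a $\poly\log n$-factor preprocessing shrink (repeated constant-factor contraction via independent random bits, $O(\log\log n)$ rounds total, using no random walks) before the first call to the sampler, precisely to make the overhead affordable; this step is missing from your outline.
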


By using our random walk sampling primitive, we give an algorithm for computing PageRank in undirected graphs.

\begin{restatable}{theorem}{prundirected}\label{theorem:PageRank-undirected}
Let $\alpha \in [1 / n, 1 / 4]$ and $\eps \in [\log n/o(S), 1]$, where $S$ is the available space per machine.
There exists an MPC algorithm that, with probability at least $1-\frac{5}{n^2}$, computes a $(1+\alpha)$-approximate PageRank vector in
undirected graphs with jumping probability  of $\eps$ in $O\rb{\log \log n + \log 1/\eps}$ rounds using $O\rb{\frac{m\log^2 n \log \log n}{\epsilon^2 \alpha^2}}$ space.
\end{restatable}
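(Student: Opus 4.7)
The plan is to reduce PageRank estimation to sampling independent random walks and then invoke \cref{theorem:llogl}. Writing $P$ for the transition matrix of the simple random walk on $G$, PageRank admits the standard walk-based representation
\[
\pi_{PR}(u) = \frac{1}{n}\sum_{v}q_v(u), \qquad q_v(u) \eqdef \sum_{t\geq 0}\eps(1-\eps)^t [P^t]_{v,u},
\]
so that $q_v(u)$ is the probability that a walk started at $v$ which terminates independently at each step with probability $\eps$ ends at $u$. In particular, $\pi_{PR}(u) \geq \eps/n$ for every $u$, because such a walk terminates at its starting vertex (drawn uniformly) with probability at least $\eps/n$.

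I would set $l = \Theta(\eps^{-1}\log n)$, large enough that a $\operatorname{Geom}(\eps)$ random variable exceeds $l$ only with polynomially small probability, and $C = \Theta(\alpha^{-2}\eps^{-1})$. Invoking \cref{theorem:llogl} with these parameters produces $N_v \eqdef \deg^{+}(v)\lceil C\ln n\rceil$ independent length-$l$ walks from every $v$ in $O(\log l) = O(\log\log n + \log(1/\eps))$ rounds and within the stated total space budget. For each walk $\omega_i^v$ I draw an independent geometric termination time $\tau_i^v$; a union bound over all $O(m C\log n)$ walks ensures every $\tau_i^v \leq l$ with high probability, so truncation contributes negligible bias. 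The unbiased estimator is then
\[
\hat\pi(u) \eqdef \frac{1}{n}\sum_v \frac{1}{N_v}\sum_{i=1}^{N_v}\mathbb{1}[\omega_i^v(\tau_i^v)=u].
\]

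The heart of the argument is concentration. Independence of walks and termination times gives
\[
\operatorname{Var}(\hat\pi(u)) \leq \frac{1}{n^2 C\log n}\sum_v \frac{q_v(u)}{\deg^{+}(v)} \leq \frac{\pi_{PR}(u)}{nC\log n},
\]
where the final inequality uses $\deg^{+}(v) \geq 1$ and $\sum_v q_v(u) = n\pi_{PR}(u)$. Each summand in $\hat\pi(u)$ lies in $[0,1/(nC\log n)]$, so Bernstein's inequality combined with $\pi_{PR}(u) \geq \eps/n$ yields a tail of the form $\Pr[|\hat\pi(u)-\pi_{PR}(u)| > \alpha\pi_{PR}(u)] \leq \exp(-\Omega(\alpha^2\eps C\log n))$, which is at most $n^{-3}$ for a suitably large constant in $C$. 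A union bound over $u \in V$ together with the failure probability of \cref{theorem:llogl} and the event that some $\tau_i^v$ exceeds $l$ yields the $(1+\alpha)$-approximation guarantee with the claimed probability.

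The principal obstacle is the variance analysis: low-degree vertices generate few walks, so each such walk contributes to $\hat\pi(u)$ with a large weight $1/N_v$, and one might worry that the estimator is dominated by a few heavy contributions. The saving grace is that $\sum_v q_v(u)/\deg^{+}(v)$ is still bounded by $n\pi_{PR}(u)$ (thanks to $\deg^{+}(v)\geq 1$), which combines with $\pi_{PR}(u) \geq \eps/n$ to give precisely the exponent $\alpha^2 \eps C\log n$ in Bernstein, justifying the choice $C = \Theta(\alpha^{-2}\eps^{-1})$. The remaining post-processing---bucketing walks by their chosen endpoint and summing with the weights $1/(nN_v)$---is a standard MPC aggregation that costs $O(1)$ rounds in the strongly sublinear regime and does not increase the space requirement, so the total round count is dominated by the call to \cref{theorem:llogl}.
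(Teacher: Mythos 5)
Your proof is correct and follows the same basic route as the paper: sample $\Theta(\deg^{+}(v)\cdot\alpha^{-2}\eps^{-1}\log n)$ walks of length $l=\Theta(\eps^{-1}\log n)$ from each vertex via \cref{theorem:llogl}, truncate each walk at an independent $\operatorname{Geom}(\eps)$ time (equivalently, just before the first jump transition), and count endpoints. Where you diverge is in the estimator and the corresponding concentration step. The paper first \emph{discards} walks so that every vertex contributes exactly $K=\Theta(\eps^{-1}\alpha^{-2}\log n)$ truncated walks; then $n_u$ is a plain sum of $Kn$ independent Bernoulli indicators with mean $Kn\,\pi_\eps(u)\geq K\eps$, and a single multiplicative Chernoff bound at deviation $\alpha$ finishes the job (\cref{lemma:sampling-pagerank}). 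You instead retain all $N_v$ walks and reweight each walk from $v$ by $1/(nN_v)$. This is an equally valid choice, but the weights are now inhomogeneous across vertices, so you must control the second moment directly: your key extra step is the bound $\sum_v q_v(u)/\deg^{+}(v)\leq\sum_v q_v(u)=n\,\pi_\eps(u)$, which uses $\deg^{+}(v)\geq 1$ and is exactly what makes the variance come out to $O\rb{\pi_\eps(u)/(nC\log n)}$ despite the heavy weights on low-degree vertices. Bernstein then yields the same exponent $\Omega(\alpha^2\eps C\log n)$ as the paper's Chernoff bound. The trim-to-$K$ approach has slightly lighter bookkeeping and avoids the degree argument entirely; your version uses every sampled walk. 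Neither variant changes the asymptotic round or space complexity, and both correctly use $\pi_\eps(u)\geq\eps/n$ (which you justify via the $t=0$ term of the geometric series, matching \cref{cor:lower-bound-pagerank}).
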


Our next result is an algorithm for computing PageRank in directed graphs.
This is by far the most technically involved part of the paper.
In fact, our algorithm for sampling random walks in directed graphs is a corollary from the lemmas that we develop to obtain the following result.

\begin{restatable}{theorem}{theoremPageRankdirected}\label{theorem:PageRank-directed}
Let $G$ be a directed graph.
Let $\alpha \in [1 / n, 1 / 4]$ and $\eps \in [\log^3 n/o(S), 1]$, where $S$ is the available space per machine.
There exists an MPC algorithm that, with probability at least $1-O(\frac{1}{n})$,
computes a $(1+\alpha)$-approximate $\eps$-PageRank vector of $G$ in $\tO\rb{\log^2 \log n + \log^2 1/\eps}$ rounds, using $\tO\rb{\frac{m}{\alpha^2}+ \frac{n^{1+o(1)}}{\eps^{3.5}\alpha^2}}$ total space and strongly sublinear space per machine.
\end{restatable}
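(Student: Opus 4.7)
The plan is to reduce computing $\eps$-PageRank on a directed graph $G$ to a sequence of PageRank computations along a one-parameter family that interpolates between the undirected closure $\undir G$ and $G$ itself. I would work with chains $\Gepsj{j}$ with transition matrix $P_{\eps,\sigma_j} \eqdef (1-\eps)\rb{\sigma_j P_G + (1-\sigma_j) P_{\undir G}} + \eps\cdot\tfrac{1}{n}\allones\allones^T$, along a schedule $0 = \sigma_0 < \cdots < \sigma_T = 1$. The starting chain $\Gepsj{0}$ is undirected, so its PageRank $\tpi_0$ is produced in $O(\log\log n + \log 1/\eps)$ rounds by \cref{theorem:PageRank-undirected}, while the final chain $\Gepsj{T}$ is the target. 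The algorithm propagates an approximation $\tpi_{j-1}$ of the stationary distribution of $\Gepsj{j-1}$ along this family to $\tpi_j$ until it reaches $\tpi_T$.

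For the inductive step, since $\Gepsj{j}$ differs from $\Gepsj{j-1}$ only by a small shift $\sigma_j-\sigma_{j-1}$ in the convex combination, $\tpi_{j-1}$ is nearly stationary for $\Gepsj{j}$. I would draw $D = \tO(1/(\eps\alpha^2))$ random walks of length $l = \tO(1/\eps)$ starting from each vertex $v$, weighted by $\tpi_{j-1}(v)$, and simulate them on $P_{\eps,\sigma_j}$; the empirical distribution of their endpoints, averaged over the geometric tail of walk prefixes that defines PageRank, yields $\tpi_j$. The walks themselves are sampled by an adaptation of the doubling primitive of \cref{theorem:llogl}: each vertex pre-samples its continuations from the out-distribution of $\Gepsj{j}$, and length-$l$ walks are stitched in $O(\log l) = O(\log 1/\eps)$ rounds. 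The $\eps$-teleportation guarantees that every vertex carries at least $\eps/n$ mass in $\pi_j$, which caps the number of pre-sampled continuations per vertex at $\poly(1/\eps)$ and is essential for the space budget.

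I expect a schedule of $T = \tO(\log 1/\eps)$ epochs to suffice, relying on the fact that the PageRank chain has spectral gap $\Omega(\eps)$, so length-$\tO(1/\eps)$ walks started from a warm distribution that is only constantly off from the true $\pi_j$ still contract the initial-distribution error to $n^{-\omega(1)}$ and produce a $(1+\alpha)$-approximation of $\pi_j$. Each epoch costs $\tO(\log 1/\eps)$ rounds of walk sampling plus $\tO(\log\log n)$ rounds of auxiliary MPC work (refreshing the pre-sampled continuations for $P_{\eps,\sigma_j}$, aggregating visit counts, and renormalising). Summing over the $T$ epochs yields the claimed $\tO(\log^2\log n + \log^2 1/\eps)$ round complexity; the space bound follows from $\tO(m/\alpha^2)$ edges of scaffolding on $G$ plus $\tO(Dnl) = \tO(n^{1+o(1)}/(\eps^{3.5}\alpha^2))$ cells for the stored continuations.

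The main obstacle is controlling how the multiplicative approximation degrades across the $T$ epochs. A naive analysis multiplies a $(1+\alpha)$-factor $T$ times and blows up, while a too-coarse schedule violates the warm-start assumption. To handle this I would (i) exploit the contraction induced by the $\eps$-teleportation, which after a length-$l$ walk erases any bias inherited from $\tpi_{j-1}$ up to a factor $(1-\eps)^l$, so the error at epoch $j$ depends essentially only on the sampling error in that epoch and the one-step perturbation $\|P_{\eps,\sigma_j}-P_{\eps,\sigma_{j-1}}\|$; (ii) union bound the imperfect-sampler failure of \cref{theorem:llogl} across all epochs, which costs only a $\polylog$ factor in the number of walks and preserves the overall failure probability of $O(1/n)$; and (iii) choose the schedule geometrically dense near $\sigma = 1$, where the chain is least reversible and its stationary distribution most sensitive, so that the perturbation cost is amortised evenly across epochs.
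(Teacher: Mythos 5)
Your proposal captures the high-level interpolation idea that the paper uses (moving from the undirected closure to $G$ along a convex-combination family), but the core mechanism by which you advance from one epoch to the next has a genuine gap. You propose to run the doubling primitive directly on $\Gepsj{j}$, seeding the number of pre-sampled continuations at each vertex with $\tpi_{j-1}$, and to rely on the $\eps$-mixing of the PageRank chain to ``erase'' the bias inherited from the old estimate. This does not work: even a small shift $\delta$ in the transition probabilities can change the stationary probability of an individual vertex by a $\poly(n)$ multiplicative factor (e.g.\ a vertex with one out-edge and $n-1$ in-edges goes from $\Theta(1/n)$ in the undirected closure to $\Theta(1)$ in the directed chain). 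If $\tpi_{j-1}(u)$ underestimates $\pi_j(u)$ by a large factor, the doubling primitive fails at an \emph{intermediate} stitching step because too many half-walks collide at $u$ and $u$ has run out of continuations. Mixing only controls the distribution of a walk's endpoint; it does not protect the intermediate levels of the doubling tree, which is where the imperfect sampler of \cref{theorem:llogl} aborts. So your contraction argument (i) addresses the end-to-end estimation error but not the sampler's feasibility, which is what actually breaks.

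The paper's fix is structurally different and you are missing both of its pieces. First, it never runs doubling on $\Gepsj{j}$ seeded by $\tpi_{j-1}$: it runs doubling on $\Gepsj{j-1}$ (for which $\tpi_{j-1}$ is a valid multiplicative approximation), and then passes each produced walk through a per-edge rejection-sampling \emph{translation} (\TranslateWalk) that either reinterprets it as an honest walk of $\Gepsj{j}$ or discards it; the surviving walks are unbiased samples from $\Gepsj{j}$ and are used to estimate $\tpi_j$ directly, no matter how different $\pi_j$ is from $\pi_{j-1}$. Second, for the per-edge rejection probability to be $O(\delta)$ rather than $\Theta(1)$, the paper first transforms $G$ into a $c$-balanced graph (each vertex replaced by a $\log n$-length path so that $c\,\deg^+(v)\ge \deg(v)$), which caps the ratio governing the rejection step; this length inflation in turn forces a third, analogous rejection-sampling phase (\PageRankLargerDamping) that gradually lowers $\eps$ by a $\poly\log n$ factor before walks are mapped back to $G$. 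Without the $c$-balancing your direct interpolation between $P_{\undir G}$ and $P_G$ would have per-step failure probability $\Theta(1)$ at unbalanced vertices, and without the translation step your warm-start argument is unsound; these two components are the paper's central contribution and your proposal does not supply a substitute for either.
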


This gives an exponential improvement in the number of rounds with respect to the previously known results~\cite{2015:FDP}.

Recently, it was shown that computing $\Theta(1)$-approximate maximum matching, $\Theta(1)$-approximate minimum vertex cover, and maximal independent set admit $\Omega(\log \log{n})$ conditional lower bound on the round complexity in the MPC model~\cite{ghaffari2019conditional} in the regime with strongly sublinear space per machine.
Hence, obtaining $O(\poly(\log \log n))$-round algorithms seem to be the new complexity benchmark to reach.

%We note that it was recently shown that approximate graph matchings can be computed on MPC in near-linear space in $O(\poly(\log \log n))$ rounds~\cite{czumaj2018round,assadi2019coresets,ghaffari2018improved,behnezhad2019exponentially}.

\begin{comment}
\paragraph{Connectivity and MST (\cref{sec:ConnectivityAndMST}).}
We show how to use our random walk algorithm to compute $\epsilon n$ additive approximation to the number of components in graph $G$ in $O(\log \epsilon^{-1})$ rounds.
Next, using reduction from~\cite{Chazelle:2005:AMS:1072890.1077946} we obtain an $(1+\epsilon)$-approximation to MST weight in a connected graph $G$ with maximum edge weight $W$.
Our algorithm works in $O(\log (W\epsilon^{-1}))$ rounds. There are many different ways of obtaining such results, e.g., accelerating the exploration by doubling the radius\footnote{This was observed
by the authors a few years ago and announced in~\cite{mpc-slides}.} or using~\cite{log-diameter}.
\end{comment}

\paragraph{Random Walks in PRAM.}
We show that our algorithm for sampling random walks is generic enough to yield interesting results beyond the MPC model.
In particular, we show that it can also be implemented in PRAM. Note that the following theorem gives a nontrivial result whenever $l = \omega(\poly \log n)$.

\begin{restatable}{theorem}{rwpram}
\label{thm:pram}
Let $G$ be a directed graph and $1 \leq l \leq n$. 
There exists an NC algorithm that uses $O((n+m)^{1+o(1)})$ processors and samples one random walk from each vertex of $G$.
All sampled walks are independent.
The algorithm is an imperfect sampler (see \cref{def:sampler}) that fails with probability $1-O(n^{-1})$.
\end{restatable}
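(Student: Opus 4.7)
The plan is to exploit the PageRank-walk-as-random-walk correspondence emphasized elsewhere in the paper, and to adapt the MPC directed random-walk construction to PRAM. Set $\eps = \Theta(1/l)$. A length-$l$ walk in the PageRank Markov chain with damping $1-\eps$ makes no random jump with probability $(1-\eps)^l = \Omega(1)$; conditioned on that event, it is an honest length-$l$ random walk in $G$. I would sample $\Theta(\log n)$ independent PageRank walks per vertex and, for every vertex, return the first jump-free sample; a Chernoff and union bound argument yields per-vertex success with probability $1-O(1/n)$, as required by the statement.

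For the PRAM part, I would emulate the directed-walk sampler of \Cref{theorem:directed-walks} with $D = \Theta(\log n)$ and the above choice of $\eps$. That algorithm is assembled from standard parallel primitives --- sorting, prefix sums, broadcast, sampling against an explicit distribution, and the doubling-based stitching of walks --- each of which has $O(\log n)$-depth, linear-work PRAM implementations. Chaining the $\tO(\log^2 \log n + \log^2 l)$ rounds of that MPC algorithm therefore yields $\polylog n$ depth, placing the overall procedure in NC for every $l \le n$.

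The main technical obstacle is the processor budget $(n+m)^{1+o(1)}$. The MPC space bound of \Cref{theorem:directed-walks} carries a factor polynomial in $l$, coming from the $\eps^{-3.5}$ dependence of the PageRank subroutine of \Cref{theorem:PageRank-directed}, which a naive simulation would dump straight into the PRAM work. To stay inside the advertised budget, I would exploit the absence of per-machine memory restrictions on a PRAM: rather than materializing simultaneously all the intermediate walk libraries that the MPC algorithm keeps for the successive convex combinations of $\undir{G}$ and $G$, I would process those phases sequentially, keeping only compact summaries --- the current approximate stationary distribution and a near-linear-size pool of live walks --- in memory between phases. Verifying that after this sequentialization both the live state and the total work remain $(n+m)^{1+o(1)}$ while the depth stays polylogarithmic is the crux of the argument; correctness and the final probability bound then follow from the analyses of \Cref{theorem:PageRank-directed,theorem:directed-walks}.
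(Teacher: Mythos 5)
Your approach coincides with the paper's: emulate the MPC directed-walk sampler of \Cref{theorem:directed-walks} (which already packages the $\eps = \Theta(1/l)$ PageRank-to-random-walk trick and the $\Theta(\log n)$-repetition argument) in PRAM, by giving $O(\log n)$-depth implementations of each underlying primitive --- sorting, prefix sums, \NumberingSublists, \Predecessor, walk stitching, walk truncation. The paper's PRAM section does exactly this, so on the depth side you match.

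The gap you flag about the processor budget is genuine, but your proposed fix does not close it, and it is worth understanding why. Sequentializing the convex-combination phases buys nothing because they are \emph{already} sequential in the MPC algorithm: each iteration of \PageRankOfBalancedGraphs (and likewise of \PageRankLargerDamping) produces its walk pool, collapses it to $\tpi$, and discards it before the next iteration begins. The true bottleneck is \emph{within} a single phase: the number of walks sampled in one iteration is $\Theta\rb{n \cdot \poly(l) \cdot \polylog n}$, since the constant $C$ and the overhead factor $k_0$ each carry $\poly(1/\eps) = \poly(l)$, and on a PRAM the total work over the whole computation is bounded by processors times depth --- so polynomial-in-$l$ per-phase work cannot be hidden by freeing memory between phases. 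Note also that the paper's own PRAM section never argues the processor count at all --- it only bounds the depth of each primitive --- so the $O((n+m)^{1+o(1)})$ claim is left unjustified there too, and is in fact untenable as stated for $l$ close to $n$ (the bounds in \Cref{theorem:directed-walks} and \Cref{theorem:PageRank-directed} carry $l^{3.5}$ factors, and merely writing out the $n$ length-$l$ output walks already costs $\Omega(nl)$). Closing this gap would require a materially different algorithm or an explicit restriction on $l$, not a re-scheduling of the existing one.
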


We could also formulate a similar theorem for computing PageRank, but an NC algorithm for PageRank is already known, since it can be obtained by simply using power method.

\paragraph{Applications to Property Testing.}
We show how to use our random walk algorithm to approximately test bipartiteness and expansion.
Instead of solving the exact version of the problems, we consider relaxed versions of these problems known from property testing.
Our algorithms either show that the graph is close to having a given property or far away from satisfying it. It is unlikely that the exact versions
of these problems have $o(\log n)$-round algorithms, due to the \cycleConj{} conjecture (see \cref{sec:lower_bound} for conjecture statement).

We now sketch the reductions.
For testing expansion it suffices to observe that to solve the problem from the \cycleConj{} conjecture, it suffices to determine whether the expansion is positive.
For bipartiteness testing, we observe that if we start with two cycles and independently replace each edge with a path of length 2 with probability $1/2$, then with constant probability, we obtain a graph that has an even number of edges overall, but is not bipartite. This will never occur in this transformation if the input is a single cycle. Hence if we could exactly check if a graph is bipartite, we could distinguish inputs that are a single cycle from those that are a disjoint union of two cycles.

% Since the \cycleConj{} conjecture states that distinguishing two cycles from one, requires $\Omega(\log n)$ rounds, relaxing the problems seems to be necessary 
% 
% In this paper we show how to break the $\Omega(\log n)$ barrier by considering relaxed versions of these problems.

\begin{itemize}
 \item {\bf Testing Bipartiteness (\cref{sec:Bipartiteness}).} We show how to use our random walk algorithm for \emph{testing bipartiteness}. In this promise problem, we are given a graph $G$ on $n$ vertices with $m$ edges and a parameter $\eps \in (0,1)$. We want to distinguish the case that $G$ is bipartite from the case that at least $\eps m$ edges have to be removed to achieve this property. Our parallel algorithm first reduces the size of the input graph by sampling, and then combines property testing techniques for bipartiteness~\cite{GoldreichR99,KaufmanKR04} with our simulation of random walks. Similar ideas were used in~\cite{CFSV16} for the CONGEST model.

 \item {\bf Testing Expansion (\cref{sec:testing-expansion}).} We say that a graph $G$ of maximum degree $d$ is $\eps$-far from every $\alphastar$-vertex-expander if it is needed to change (add/remove) more than $\eps d n$ edges of $G$ so that the obtained graph is $\alphastar$-vertex-expander. Our algorithm gets $\alpha$ as its input, and returns $\accept$ if $G$ is an $\alpha$-vertex-expander, or returns $\reject$ if $G$ is far from being such an expander. The idea of the algorithm is as follows. From each vertex, for $O(\poly \log{n})$ many times we run a pair of ``long'' random walks. The collision probability of thus generated random walks can be used for approximately testing expansion in $O(\log\log(n/\eps))$ rounds. Our starting point here is the analysis of random walk collision probability, introduced by Czumaj and Sohler~\cite{czumaj2010testing}.

\end{itemize}

\subsection{Previous Research}

\paragraph{Random walks in the streaming model.}
The problem of generating random walks was considered in a number of streaming and parallel computation papers. The paper of Sarma, Gollapudi, and Panigrahy~\cite{SarmaGP11} introduced multi-pass streaming algorithms for simulating random walks. For a single starting point, they can, for instance, simulate single a length $l$ random walk in $\tilde O(n)$ space and $\tilde O(\sqrt{l})$ passes. The paper by Jin~\cite{Jin19} gives algorithms for generating a single random walk from a prespecified vertex in one pass. For directed graphs the algorithm requires roughly $\Theta(nl)$ space and for undirected graphs $\Theta(n\sqrt{l})$ space.

\paragraph{Parallel distributed computation.}
Bahmani, Chakrabarti, and Xin~\cite{BahmaniCX11} give an MPC algorithm for constructing length-$l$ random walks in directed graphs. Their algorithm runs in $O(\log{l})$ rounds, but the walks it generates are \emph{not independent}. 

A recent result by Assadi, Sun and Weinstein~\cite{ASW} gives an MPC algorithm for detecting well-connected components in small space per machine and with exponential speed-up over the direct exploration. As a subroutine, the paper presents an algorithm for generating random walks in an undirected regular graph.
Note that in regular graphs the stationary distribution is uniform and thus the problem of generating random walks becomes somewhat simpler.
Let us also mention that there exist random walk generating algorithms for the distributed CONGEST model~\cite{SarmaNPT13}. These algorithms require however a number of rounds that is at least linear in the diameter, which can be $\Omega(\log n)$ even for expanders.

Computing random walks has also been used in PRAM model as a subroutine in algorithms for computing connected components using a near-linear number of processors~\cite{DBLP:journals/siamcomp/KargerNP99, DBLP:journals/jcss/HalperinZ96}.
However, in both these algorithms random walks starting at different vertices are not independent.

\paragraph{PageRank.}
Since it was introduced in \cite{brin1998anatomy,pagerank}, the computation of PageRank has been extensively studied in various settings. We refer a reader to~\cite{berkhin2005survey,langville2004deeper,duhan2009page} for the early development and theoretical foundations of PageRank.
\cite{SarmaGP11} consider PageRank approximation in the streaming setting. As their main result, they show how to compute an $l$-length random walk in $O(\sqrt{l})$ passes by using $O(n)$ space. By using this primitive, \cite{SarmaGP11} show how to approximate PageRank for directed graphs in $\tO(M^{3/4})$ passes by using $\tO(n M^{-1/4})$ space, where $M$ is the mixing time of the underlying graph.

By building on \cite{avrachenkov2007monte}, \cite{2015:FDP} studied PageRank in a distributed model of computation. For directed graphs, they design a PageRank approximation algorithm that runs in $O\rb{\tfrac{\log{n}}{\eps}}$, where $\eps$ is the jumping probability, i.e., $1-\eps$ is the damping factor. To achieve this bound, they use the fact that w.h.p.~a random walk from a vertex jumps to a random vertex within $O\rb{\tfrac{\log{n}}{\eps}}$ steps. Moreover, to estimate PageRank it suffices to count the number of random walks ending at a given vertex while ignoring how those random walks reached at the vertex. The authors of \cite{2015:FDP} exploited this observation to show that many PageRank random walks can be simulated in parallel while not over-congesting the network. This approach can be implemented in $O\rb{\tfrac{\log{n}}{\eps}}$ MPC rounds. The authors also show how to extend the ideas of \cite{SarmaGP11} and obtain an algorithm for undirected graphs that approximates PageRank in $O\rb{\tfrac{\sqrt{\log{n}}}{\eps}}$ rounds.

Another line of work considered approximate PageRank in the context of sublinear time algorithms, e.g., \cite{chen2004local,bressan2011local,borgs2012sublinear,BPP18} just to name a few. This research culminated in a result of \cite{BPP18} who show that a PageRank of a given vertex can be approximated by examining only $\tO\rb{\min\{m^{2/3} \Delta^{1/3} d^{-2/3}, m^{4/5} d^{-3/5}\}}$ many vertices/arcs, where $\Delta$ and $d$ are the maximum and the average degree, respectively. It is not clear how to simulate this approach in MPC efficiently for all vertices, in terms of both round and total space complexity.

Bahmani, Chakrabarti, and Xin~\cite{BahmaniCX11} show how to use their result for generating random walks to get a constant \emph{additive} approximation to Personalized PageRank of each vertex, which can be used to obtain a constant additive approximation to PageRank.\footnote{Compared to PageRank, in the Personalized PageRank of vertex $v$ a random jump is always performed to $v$.} 
Note that in the case of PageRank constant additive approximation is a weak guarantee, since an all-zero vector provides a constant additive approximation for all but $O(1)$ vertices.
In particular, the algorithm of~\cite{BahmaniCX11} provides non-zero estimates for only $O(\log{n})$ vertices.
%A constant additive approximation to PageRank vector, which is a probability distribution, means that we are getting non-zero estimates for only a constant many vertices.
In this paper, we show how to compute a multiplicative approximation of PageRank, and hence provide an estimate for each vertex.

We now comment why using random walks generated by the algorithm of~\cite{BahmaniCX11} would require at least quadratic space to obtain multiplicative approximation to PageRank (assuming a standard approach). To compute a length-$l$ random walk from each vertex, the algorithm of \cite{BahmaniCX11} samples $O(l)$ random edges from each vertex. Then, these random edges are used to construct the desired random walks by doubling, that is two walks of length $2^i$ are concatenated to form a walk of length $2^{i+1}$.
The same random edge sample can be used for multiple walks, which results in the following undesirable behavior.

Consider a star-graph with the center being vertex $s$ and use the algorithm of \cite{BahmaniCX11} to construct $T$ random walks of length $O(l)$ from each vertex. Let $W$ be the collection of these walks.
Each of the walks in $W$ either starts at $s$ or visits $s$ directly after the first step.
Since the algorithm samples $O(l \cdot T)$ edges from $s$, there exists a set of vertices $V'$ of size $O(l \cdot T)$, such that except for staring vertices, all vertices of all random walks belong to $V'$.
Hence, for $l \cdot T = o(n)$ there exists a vertex different than $s$ that among the walks of $W$ appears $\Omega(n T/ (l \cdot T)) = \Omega(n / l)$ many times, and also a vertex different than $s$ that appears only $T$ times (but only as the starting vertex of some of the walks).

A standard approach to estimating the PageRank of vertex $v$ consists in counting the number of occurrences of $v$ in the endpoints of walks in $W$.
To estimate PageRank, the value of $l$ is set to be $O(\log{n})$. So, unless $T = \Omega(n / \log{n})$, there exist two vertices $u$ and $v$, both different from $s$, whose visit counts differ by a super-constant multiplicative factor. However, $u$ and $v$ are symmetric and their PageRanks, both personalized and non-personalized, are equal.

\section{Preliminaries}

In this paper we consider directed multigraphs, that is we allow multiple edges between each pair of vertices.
Let $G = (V, E)$ be a directed multigraph.
We use $\deg^{+}_G(v)$ and $\deg^{-}_G(v)$ to denote the number of outgoing and incoming edges of $v$ respectively.
We also define $\deg_G(v) := \deg^{-}_G(v) + \deg^{+}_G(v)$.
The subscript $G$ is often omitted if it is clear from the context.

A graph $G = (V, E, w)$ is \emph{weighted} if $w : E \rightarrow \mathbb{R}$ is a function assigning real weights to the edges.
Note that different edges between the same pair of vertices may have different weights.
We extend the definitions for unweighted graphs to weighted graphs in a natural way.

For a weighted graph $G = (V, E, w)$ and $s \in \mathbb{R}$ we define $s \cdot G$ (often abbreviated as $sG$) to be a graph $G' = (V, E, w')$, where $w'(e) := s\cdot w(e)$ for each $e \in E$.
For two weighted graphs $G_1 = (V, E_1,w_1)$ and $G_2 = (V, E_2,w_2)$, we define $G_1 + G_2 := (V, E_1 + E_2,w_1.w_2)$. Here, $E_1 + E_2$ denotes the \emph{multiset sum} of $E_1$ and $E_2$,
i.e., an element of cardinality $c_1$ in $E_1$ and $c_2$ in $E_2$ has cardinality $c_1+c_2$ in the sum. The weights $w_1.w_2 : E_1 + E_2 \rightarrow \mathbb{R}$ are defined as
\[
	(w_1.w_2)(e)= \begin{cases}
		w_1(e)  & \quad \text{if } e \in E_1,\\
		w_2(e)  & \quad \text{if } e \in E_2.
  \end{cases}
	\]
If $G_1 = (V, E_1)$ and $G_2 = (V, E_2)$ are unweighted, then $G_1 + G_2 = (V, E_1 + E_2)$.

The transpose of a graph $G = (V, E)$ is a graph $G^T = (V, E^T)$ where $E^T = \{vu \mid uv \in E\}$.
A graph is called \emph{undirected} if it is equal to its transpose.
We define the \emph{undirected closure} of $G$, denoted by $\undir{G}$ to be $\undir{G} = G + G^T$.
Note that $\deg^{+}_{\undir{G}}(v) = \deg_G(v)$.

We call a weighted graph $G = (V, E, w)$ \emph{stochastic} if all edge weights are nonnegative and for each $v \in V$ we have $\sum_{vx \in E} w(vx) = 1$.\footnote{Note that we slightly abuse notation and each outgoing edge of $v$ corresponds to a separate summand, even in the presence of parallel edges.}
Observe that if $G_1 = (V, E_1, w_1)$ and $G_2 = (V, E_2, w_2)$ are stochastic $x, y \geq 0$ and $x + y = 1$, then $xG_1 + yG_2$ is stochastic as well.

For a stochastic graph $G = (V, E, w)$, we define $T(G) : V \times V \rightarrow \mathbb{R}$ to be the \emph{transition matrix} of $G$, where $T(G)_{u, v}$ is the total weight of edges $uv$.
For a transition matrix $T(G)$, a \emph{stationary distribution} is a probability distribution $\pi : V \rightarrow \mathbb{R}$, such that $\pi\cdot T(G) = \pi$.
Note that the stationary distribution may not be unique.
A stationary distribution of a stochastic graph is a stationary distribution of its transition matrix.

We say that a vertex $v \in V$ is \emph{dangling} if $\deg^{+}_G(v) = 0$.
For an unweighted graph $G = (V, E)$ with no dangling vertices we denote by $\rw{G}$ the weighted graph obtained by assigning a weight of $1 / \deg^{+}_G(v)$ to each outgoing edge of $v$.
Note that $\rw{G}$ is stochastic.

A \emph{walk} in a graph $G$ is a sequence of edges $u_1v_1, u_2v_2, \ldots, u_kv_k$, such that for $1 \leq i < k$, $v_i = u_{i+1}$.
The \emph{length} of a walk is the number of edges it consists of.

For a stochastic graph $G = (V, E, w)$, a \emph{random walk} in $G$ of length $k$ starting in $s \in V$ is a walk $W = u_1v_1, u_2v_2, \ldots, u_kv_k$ in $G$, where $s = u_1$, which is constructed with the following algorithm.
For each $1 \leq i \leq k$, the edge $u_iv_i$ is chosen independently at random among all outgoing edges of $u_i$.
The probability of choosing a particular outgoing edge is equal to the edge weight.

\subsection{The MPC Model}\label{sec:mpc}
In this paper we will work with the computational model introduced by Karloff, Suri, and Vassilvitskii~\cite{KarloffSV10} and refined in later works~\cite{goodrich2011sorting,BeameKS13,AndoniNOY14}. We call it \emph{massively parallel computation} (MPC), which is similar to the name in Beame et al.~\cite{BeameKS13}.

This model captures main aspects of modern parallel systems, where we have $\machines$ machines and each of them has $S$ words of space.
The total space available on all the machines should not be much higher than the size of the input. In the case of graph algorithms studied
here the input is a collection $E$ of edges and each machine receives approximately $|E|/\machines$ of them at the very beginning.

In MPC model the computation proceeds in \emph{rounds}. During the round, each machine first processes its local data without communicating with other machines.
Then machines exchange messages. When sending a message the machine specifies unique recipient of this message.
Moreover, we require that all messages sent and received by each machine in each round fit into local memory of this machine. Hence, their total length of these messages is bounded by $S$.\footnote{This for instance allows a machine to send a single word to $S/100$ machines or $S/100$ words to one machine, but not $S/100$ words to $S/100$ machines if $S = \omega(1)$, even if the messages are identical.}
This implies that the total communication of the MPC model is bounded by $\machines \cdot S$ in each round. The messages put into recipients local space and can be processed by them in the next round.

At the end of the computation, machines collectively output the solution, i.e., the solution is formed by the union of the outputs of all the machines. The data output by each machine has to fit in its local memory. Hence again, each machine can output at most $S$ words.

\paragraph{Possible Values for $S$ and $\machines$.} Typically in MPC models, one assumes that $S$ is sublinear in the input size $N$. In such case $\machines \ge N/S$.
Formally, one usually considers $S = \Theta\left(N^{1 - \epsilon}\right)$, for some $\epsilon > 0$.

In this paper, the focus is on graph algorithms. By $n$ we denote the number of vertices in the graph, and $m$ is the number of edges. The input
size is $O(n+m)$, where $m$ can be as large as $\Theta\left(n^2\right)$.

The algorithms presented in this paper use total space which is almost linear in the input size, that is $M\cdot S = (m+n)^{1+o(1)}$.
The value of $S$, which is the amount of space available on each machine, is \emph{strongly sublinear} in the number of vertices in the graph,
that is they use $O(n^{\gamma})$ space per machine for a constant $\gamma \in (0, 1)$. This range of parameters is the most interesting
one due to the popular \cycleConj{} conjecture that implies that many graph problems cannot be solved in $o(\log n)$ rounds (see \cref{sec:lower_bound}).
Our algorithms work for any $0<\gamma<1$ so for simplicity we omit this parameter in our theorems. However, we specify only the total space that needs to be available on all the $\machines$.

%\paragraph{Sparse Graphs}
%Many practical large graphs are believed to have only $O(n)$ edges. This is usually the case in social networks, in which we can have only a bounded number of friends.
%Our algorithms  for a small number of processing rounds even if a sparse input graph does not fit onto a single machine. If a small number---say, $f(n)$---of machines is needed even to store the graph, our algorithm still requires only $O\left((\log \log n)^2 + \log f(n)\right)$ rounds for $O(n/f(n))$ space per machine.

\paragraph{Communication vs.\ Computation Complexity.} The main complexity measure in this work will be the number of rounds required to finish computation, i.e., communication rounds
are typically the most costly component in the MPC computation. Also, even though we do not make an effort to explicitly bound it, it is apparent from the design of our algorithms that every machine performs $O(S \polylog{S})$ computation steps locally. This in particular implies that the overall work across all the machines is $O(r (n+m) \polylog{n})$, where $r$ is the number of rounds.
The total communication during the computation is $O(r (n+m))$ words.

\subsection{PageRank}

Let $G$ be a directed graph and let $\eps \in (0,1)$. PageRank~\cite{brin1998anatomy} is the stationary distribution of the following Markov chain on the vertices of $G$. At a given vertex $v$, with probability $\eps$, the next vertex is selected uniformly at random from the set of all vertices of $G$, and with probability $1-\eps$, it is selected uniformly at random from the heads of outedges of $v$. Note that PageRank is unique, because the Markov chain described above is ergodic. $1-\eps$ is called the \emph{damping factor}.

\subsection{Relevant Concentration Bounds}
Throughout the paper, we will use the following well-known variants of Chernoff bound.
\begin{theorem}[Chernoff bound]\label{lemma:chernoff}
	Let $X_1, \ldots, X_k$ be independent random variables taking values in $[0, 1]$. Let $X \eqdef \sum_{i = 1}^k X_i$ and $\mu \eqdef \ee{X}$. Then,
	\begin{enumerate}[(A)]
		\item\label{item:delta-at-most-1} For any $\delta \in [0, 1]$ it holds $\prob{|X - \mu| \ge \delta \mu} \le 2 \exp\rb{- \delta^2 \mu / 3}$.
		\item\label{item:delta-at-most-1-le} For any $\delta \in [0, 1]$ it holds $\prob{X \le (1 - \delta) \mu} \le \exp\rb{- \delta^2 \mu / 2}$.
		\item\label{item:delta-at-most-1-ge} For any $\delta \in [0, 1]$ it holds $\prob{X \ge (1 + \delta) \mu} \le \exp\rb{- \delta^2 \mu / 3}$.
		\item\label{item:delta-at-least-1} For any $\delta \ge 1$ it holds $\prob{X \ge (1 + \delta) \mu} \le \exp\rb{- \delta \mu / 3}$.
	\end{enumerate}
\end{theorem}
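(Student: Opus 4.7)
The plan is to follow the standard Chernoff/Bernstein moment-generating-function route, handling the upper tail first, then the lower tail, and finally combining them for the two-sided bound~(\ref{item:delta-at-most-1}). The starting point for the upper tail is the exponential Markov inequality: for any $t>0$,
\[
\prob{X \ge (1+\delta)\mu} = \prob{e^{tX} \ge e^{t(1+\delta)\mu}} \le e^{-t(1+\delta)\mu}\,\ee{e^{tX}}.
\]
By independence of the $X_i$, the moment generating function factorizes as $\ee{e^{tX}}=\prod_i \ee{e^{tX_i}}$. Since each $X_i\in[0,1]$, convexity of $x\mapsto e^{tx}$ on $[0,1]$ gives $e^{tX_i}\le 1+(e^t-1)X_i$, so $\ee{e^{tX_i}}\le 1+(e^t-1)\ee{X_i}\le \exp\!\bigl((e^t-1)\ee{X_i}\bigr)$. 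Multiplying over $i$ yields $\ee{e^{tX}}\le \exp\!\bigl((e^t-1)\mu\bigr)$, and hence
\[
\prob{X\ge(1+\delta)\mu}\le \exp\!\bigl((e^t-1)\mu - t(1+\delta)\mu\bigr).
\]

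Next, I would optimize over $t>0$ by choosing $t=\ln(1+\delta)$, which gives the classical form $\bigl(e^{\delta}/(1+\delta)^{1+\delta}\bigr)^{\mu}$, and then derive the two regimes (\ref{item:delta-at-most-1-ge}) and (\ref{item:delta-at-least-1}) using the scalar inequalities $(1+\delta)\ln(1+\delta)-\delta \ge \delta^2/3$ for $\delta\in[0,1]$, and $(1+\delta)\ln(1+\delta)-\delta \ge \delta/3$ for $\delta\ge 1$. Both inequalities follow from elementary calculus (compute derivatives and check boundary behavior at $\delta=0$ and $\delta\to\infty$), and the $1/3$ constants arise precisely to keep the resulting bounds clean. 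The lower tail~(\ref{item:delta-at-most-1-le}) proceeds analogously with $t<0$: bound $\ee{e^{-t(1-\delta)\mu}\cdot e^{tX}}$, use the same factorization and $\ee{e^{tX_i}}\le \exp((e^t-1)\ee{X_i})$, optimize at $t=\ln(1-\delta)$, and then use the tighter inequality $(1-\delta)\ln(1-\delta)+\delta \ge \delta^2/2$ for $\delta\in[0,1]$ to extract the $\exp(-\delta^2\mu/2)$ bound.

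Finally, (\ref{item:delta-at-most-1}) is just the union bound applied to (\ref{item:delta-at-most-1-le}) and (\ref{item:delta-at-most-1-ge}): both one-sided probabilities are at most $\exp(-\delta^2\mu/3)$ (the $/2$ in the lower tail only helps), so their sum is at most $2\exp(-\delta^2\mu/3)$. The only mildly delicate step is verifying the scalar inequalities $(1+\delta)\ln(1+\delta)-\delta\ge \delta^2/3$ and its one-sided counterpart with constant $1/2$; everything else is bookkeeping. Since the statement is completely standard and only used as a black-box concentration tool in later sections, I would in fact cite a textbook reference (e.g.\ Mitzenmacher--Upfal or Dubhashi--Panconesi) rather than include the calculation in the paper, and note that the constants in items (\ref{item:delta-at-most-1-le})--(\ref{item:delta-at-least-1}) are exactly those that appear in those references.
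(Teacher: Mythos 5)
Your proof is correct and follows the standard MGF/Cram\'er route for Chernoff bounds. There is actually nothing in the paper to compare it against: the authors state \cref{lemma:chernoff} without proof, treating it as a well-known concentration tool (exactly as your closing remark anticipates). The scalar inequalities you invoke, namely $(1+\delta)\ln(1+\delta)-\delta\ge\delta^2/3$ on $[0,1]$, $(1+\delta)\ln(1+\delta)-\delta\ge\delta/3$ for $\delta\ge 1$, and $(1-\delta)\ln(1-\delta)+\delta\ge\delta^2/2$ on $[0,1]$, are the standard ones and do verify by elementary calculus (each has value and first derivative $0$ at $\delta=0$ and nonnegative first derivative over the relevant range), and the union bound for part~(\ref{item:delta-at-most-1}) is exactly right since the lower-tail constant $1/2$ can be weakened to $1/3$ before summing. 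Your suggestion to cite a textbook (Mitzenmacher--Upfal, Dubhashi--Panconesi) rather than reproduce the calculation is consistent with what the paper does.
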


\section{Overview of Our Techniques}
In order to speed up the generation of a random walk, one may be tempted to generate in parallel its different sections and then stitch them together.
This approach becomes challenging with limited space if, for instance, one wants to generate a large number of random walks that all start from the same vertex.
Unfortunately, until we know where the walks are after $k$ steps, it is difficult to limit the number of continuations corresponding to the consecutive steps---$k+1$, $k+2$, and so on---that we have to consider.
This seems to be an issue that many, if not all, attempts at generating random walks with limited space encounter~\cite{SarmaGP11}.

However, if starting points of random walks are sampled from the stationary distribution, the distribution after any number of steps is the same.
(This observation was previously used by Censor{-}Hillel et al.~\cite{CFSV16} to construct bipartiteness testers in constant-degree graphs for the CONGEST model.) Hence we can sample only slightly more mid-points from the same stationary distribution and recursively generate random walks from them. The key observation is that
the stationary distribution in undirected graphs is known in advance, i.e., it is proportional to vertex degrees.
This approach enables generating many fully independent random walks of length $l$ from each vertex
in $\Theta(\log l)$ rounds of MPC using space near-linear in the size of the input. We present this idea in detail in \cref{sec:random-walks}. We also argue
that this round complexity is tight under the \cycleConj{} conjecture (see \cref{sec:lower_bound}).

\subsection{Random Walks and PageRank in Directed Graphs}

The problem of generating random walks is much more challenging in directed graphs.
We do not know a priori a stationary distribution, so it is not possible to directly apply the previous approach.
Namely, our sampler for undirected graphs crucially uses the fact that we know the stationary distribution in advance and the probabilities in this distribution are not too small.
This is because the number of random walks sampled from each vertex has to be both proportional to the stationary distribution and large enough to obtain concentration guarantees.

In general directed graphs these assumptions do not hold.
Some vertices in the stationary distribution may have small or even zero probability.
One can also show that even an exponential number of samples (i.e., $2^{\Theta(n)}$) may not be enough to estimate the stationary distribution for some vertices.
Hence, even if we knew the stationary distribution $\pi$ and wanted to compute $t(v)$ random walks from each vertex $v$, such that $t(v) \sim \pi$ and $t(v) \geq 1$, we may need to compute exponentially many walks altogether.
In fact, in some computational models, there is a known separation between the difficulty of generating random walks for directed and undirected graphs~\cite{Jin19}.

If the outdegrees in the graph are bounded by $\poly \log n$ we can use the following simple approach (see \cref{sec:rw-directed} for a more detailed description).
For each vertex $v$, we find all vertices whose distance from $v$ is at most $\epsilon \log n / \log \log n$.
Once we know this set, we are able to simulate $\epsilon \log n / \log \log n$ steps of a random walk in a single round.
At the same time the assumption on the outdegrees allows us to bound the total space usage.
This approach does not generalize to the case when the outdegrees are arbitrary or we want to generate random walks of length $\omega(\log n)$.

Instead of dealing with the problem of sampling random walks in directed graphs, we first consider the specific case of computing PageRank.
The starting point of our approach is the algorithm of \cite{Breyer02markovianpage} that we review in \cref{sec:PageRank}.
At a high level, the algorithm boils down to computing $O(\log n / \epsilon)$ random walks of length $O(\log n / \epsilon)$ in a graph $G_\epsilon$, which is defined by
\[ G_\epsilon = (1-\epsilon)\rw{G} + \frac{\epsilon}{n} J, \] 
where $J$ is complete directed graphs with self loops. 
In other words, we consider walks that with probability $1-\epsilon$ perform a single step of a random walk on $G$, and with probability $\epsilon$ jump to a random vertex in $G$.
Note that PageRank is exactly the stationary distribution of $G_\epsilon$.

Note that in order to make $G_\epsilon$ well defined, we need to assume that $G$ does not contain {\em dangling vertices}, i.e., vertices without outedges.
The usual approaches for handling graphs with dangling vertices are discussed in \cref{app:handling-dangling-nodes}. In particular, we show 
that the two most typical approaches: adding self-loops and restarting the walks, are equivalent up to a simple transformation. 
To our surprise this relation was not previously observed, e.g., \cite{delcorso2005} argues why one method is better than the other. 

Using $G_\epsilon$ instead of $G$ for sampling random walks resolves one of the challenges.
Namely, we show that in the stationary distribution of $G_\epsilon$ the probability of each vertex is at least $\epsilon / n$.
Hence, by sampling $\Theta(n \log n)$ random walks we are actually able to approximate the stationary distribution.
This observation was already used by Das Sarma et al.~\cite{2015:FDP} to obtain an $O(\log n)$-round algorithm for directed
and $O(\sqrt{\log n})$ round algorithm for undirected graphs.

However, we are still left with the other difficulty: we do not know the stationary distribution.
We overcome it by using a novel sampling technique, which is the main technical contribution of the paper and may be of independent interest.
In the remaining part of this section we give an overview of the technique.

The core part of our algorithm is a procedure for adjusting sampling probabilities.
Let $D_1$ and $D_2$ be two discrete probability distributions that are, roughly speaking, similar, i.e., the elementary events are assigned similar probabilities in both distributions.
The procedure, given a random sample from $D_1$ either produces a sample from $D_2$ or fails.
As long as $D_1$ does not differ much from $D_2$, the failure probability is small.
We give two implementations of the procedure, each of which is well suited for some specific distributions $D_1$ and $D_2$.
The easier implementation is based on \emph{rejection sampling}.
This means that the procedure has the property that it either returns the sample it gets or fails.
As a result, the procedure never actually needs to sample from any of the distributions $D_1$ or $D_2$.
It actually only needs to make a single Bernoulli trial in order to decide whether to fail.
This is a very useful property, as in our case sampling a random walk is much more expensive than doing a Bernoulli trial.

Let us now describe our sampling technique using the above idea.
Recall that $\bar{G}$ denotes the undirected closure of $G$.
We note that by using~\cref{alg:random-walks} we can efficiently sample PageRank walks in $\bar{G}$.
Hence we are going to first sample PageRank walks in $\bar{G}$ and then gradually shift towards PageRank walks in $G$ by adjusting transition probabilities.

The main technical challenge is to overcome the fact that even small changes to the transition probabilities can cause large changes to stationary distribution.
Small changes can be amplified $\Theta(n)$ times by the network structure, so that we cannot use stationary distribution computed at one step for sampling walks in the following step even if the difference is very similar.
Instead we reinterpret walks directly using the procedure for adjusting sampling probabilities, i.e., in each step we sample slightly more walks and then reinterpret them as walks for modified graph. In this process we lose some
fraction of walks. Finally, we use the resulting walks to estimate the stationary distribution for the next step.

Let us now describe this idea more formally.
Consider a sequence $\bar{G} = G_1, G_2, \ldots, G_k = G$ of graphs, where, informally speaking, each $G_i$ is a mixture of $\bar{G}$ and $G$.
(In the algorithm we are going to use $k = \Theta(1 / \log \log n)$.)
Formally, $G_i = (k-i)/(k-1)\bar{G} + (i-1)/(k-1)G$.

Our algorithm computes PageRank walks in $G_i$ for $i=1, \ldots, k$.
The first step is easy: as we noted above, computing PageRank walks in $G_1$ can be done using~\cref{alg:random-walks}.
In each of the remaining steps the algorithm starts with PageRank walks in $G_i$.
Then, it uses the procedure for adjusting sampling probabilities to obtain PageRank walks in $G_{i+1}$.
However, the number of random walks in $G_{i+1}$ obtained this way is significantly smaller than the number of walks in $G_i$ that we had at the beginning of the step.
Still, the number of walks is large enough to estimate the stationary distribution of PageRank walks in $G_{i+1}$.
The algorithm then uses this estimated stationary distribution to compute more PageRank walks in $G_{i+1}$, which allows the process to continue.

Intuitively, the first steps of this process are the most challenging due to potential existence of vertices, whose degrees in $G$ and $\bar{G}$ differ significantly.
As an example, consider a vertex $v$ in $G$ with a single outgoing edge $vw$ and $n-1$ incoming edges.
A random walk in $\bar{G}$ of length 1 starting at $v$ (that is, a random edge incident to $v$) is the edge $vw$ only with probability $O(1 / n)$.
Hence, the rejection sampling would fail with a very large probability.

To alleviate that, we first transform our input graph into a directed graph such that for each vertex $v$ it holds $c \cdot \deg^+(v) \ge \deg(v)$, where $c$ is a constant that we set later.
We call such graphs \emph{$c$-balanced}.
This transformation is explained in \cref{sec:transforming-to-c-balanced}.
We show how to compute PageRank of $c$-balanced graphs in \cref{sec:our-algorithm-directed-PageRank}.

In the process of transforming an input to a $c$-balanced graph, each edge is replaced by a path of length $\log{n}$.
This means that a PageRank walk of length $k$ in the input graph corresponds to a PageRank walk of length $k \log{n}$ in the transformed $c$-balanced one.
As a result, informally speaking, PageRank walks in a $c$-balanced graphs for the jump factor $\eps$ jump to random vertices more often then the corresponding walks in the original graph.
Hence, if we applied the algorithm unchanged to the transformed algorithm, we would effectively compute $(\eps\log{n})$-PageRank walks.

A natural idea is to run our algorithm for $\eps' = \eps / \log{n}$, but this would increase the round complexity or space usage of our algorithm significantly.
Instead, we reuse the idea of gradually adjusting the transition matrix that we use for sampling random walks.
We start with a jump factor of $1/2$ and then move towards $\eps / \log{n}$ in steps, in that way ensuring that the space requirement and round complexity is as desired.
This approach is presented in \cref{sec:lower-eps}.

Once we obtain PageRank walks for a $c$-balanced graph with respect to jump factor $\eps / \log{n}$, in \cref{sec:mapping-back} we show how to map those walks back to the input graph.

Finally, we observe that the algorithm for computing random walks in $G_\epsilon$ can be used to compute random walks in $G$.
Assume we are interested in random walks of length $l$.
Then, each walk in $G_{\epsilon}$ for $\epsilon = 1/l$ does not contain any random jumps with constant probability and is thus a random walk in $G$.
Hence, by sampling sufficiently many random walks in $G_{1/l}$ we are able to compute random walks in $G$.
The key property here is that the round complexity of sampling random walks in $G_\epsilon$ is $\tilde{O}(\log^2 (1/\epsilon) + \log^2 \log n)$, so as long as $l = \poly \log n$, the overall round complexity is $O(\poly \log \log n)$.

\section{Sampling Random Walks}
\label{sec:random-walks}
%\begin{comment}
%We are given an undirected or directed multigraph $G=(V,E)$ that can contain loops. We want to sample walks from the
%following most natural Markov chain:
%\begin{equation}
%\label{eq:walk}
%p(v,u)=\begin{cases}
%\frac{1}{\deg(v)} & \textrm{if } u \in \delta(v),\\
%0 & \textrm{otherwise}.
%\end{cases}
%\end{equation}

In this section we show algorithms for sampling random walks from a given stochastic graph $G$.
The algorithms require that we know (at least approximately) the stationary distribution of $G$.
This is easy in the case when we deal with undirected graphs, that is $G = \rw{G_U}$, where $G_U$ is an undirected graph.
In this case the stationary distribution of $G$ is given by
\begin{equation}
\label{eq:stationary}
	\pi(v)=\frac{\deg^{+}_{G}(v)}{2m}.
\end{equation}
Hence, if we sample the starting point of a random walk from from $\pi$, then after any number of steps the endpoint will follow distribution $\pi$ as well.
%When $G$ is undirected, the key observation is that $\pi(v)$ can be easily computed and we do not need to perform any sampling to find it; in \cref{sec:random-walks-with-tpi} we discuss the case when we have access to only an approximation of $\pi$.
%Hence, if for each vertex $v$ we run $t(v)$ length-$l$ random walks starting from $v$, where $t(v)$ is proportional to $\pi(v)$, the distribution of the endpoints of these walks will also follow $\pi$.
This allows us to use doubling.
Since the number of random walks ending in each vertex is (in expectation) the same as the number of random walks starting in each vertex, we can pair up these random walks and stitch together each pair of walks of length $k$ into one walk of length $2k$.
The pseudocode of our algorithm is given as~\cref{alg:random-walks}.

\begin{algorithm}[H]
\begin{algorithmic}[1]
%\State{Let $k_i$ be as defined by~\eqref{eq:ki-l^2}.}
	\Function{RandomWalks}{$G, l, t$}
\ForAll{$v \in V(G)$ in parallel\label{line:RandomWalks-generating-W0}}
%	\State{$t_0(v) \gets \deg_G(v) \lceil C \pi(v) k_0 \cdot n \ln{n} / \deg_G(v) \rceil $}
	\State{Generate $t_0(v)$ length $1$ random walks in $G$ staring in $v$. Let $W_0(v)$ be the set of these walks.}
\EndFor

\For{$i \gets 1 \ldots \lceil \log{l} \rceil$ \label{line:RandomWalks-doubling-loop}}
	\ForAll{$v \in V$ in parallel}
%		\State{$t_i(v) \gets \deg_G(v) \lceil C \pi(v) k_i \cdot n \ln{n} / \deg_G(v) \rceil$}
		\State{Select $t_{i}(v)$ random walks from $W_{i - 1}(v)$. Let that set be $U_{i}(v)$.}
		\State{For each walk $w \in U_{i}(v)$, consider its endpoint $u$. Ask $u$ to extend $w$ by a yet unused walk from $W_{i - 1}(u) \setminus U_{i}(u)$. Let $W_i(v)$ denote the set of all these extended walks originating at $v$. If $u$ does not have unused walks anymore, the algorithm fails. \label{line:extend-random-walks}}
	\EndFor
\EndFor
\State{For each $v\in V$ truncate walks in $W_{\lceil \log{l} \rceil}(v)$ to length $l$.}
\State{Return $W_{\lceil \log{l}\rceil }(v)$ for each $v \in V$}
\EndFunction
\end{algorithmic}
	\caption{Given $G, l, t_0, \ldots, t_{\lceil \log l \rceil}$, sample $t_{\lceil \log l \rceil} (v)$ random walks of length $l$ according to $T$ from each $v \in V(G)$.}
\label{alg:random-walks}
\end{algorithm}

The algorithm takes the following parameters.
	$G$ is the input graph, which has to be stochastic, $l$ is the desired walk length and $t_i(v)$ controls the number of walks starting in vertex $v$ that we would like to sample in the $i$-th iteration of the algorithm. Note that the algorithm requires that $t_i$ has certain properties, e.g., for undirected graphs we show that the algorithm works for $t_i(v)$ being proportional to $\deg^{+}_v(G)$.
Also, for a fixed $v$ the sequence $t_0(v), t_1(v), \ldots$ has to fulfill certain properties.

In the ideal scenario for each vertex the number of random walks that start and finish in each vertex are equal to the expected value. In such case,
in each step we could match all walks into pairs and obtain two times fewer walks of twice the length.
However, the numbers may diverge from the expected values and thus we need to sample a bit more random walks to ensure that there are enough of them with high probability.
	We set $t_i(v) = \deg^{+}_G(v) \lceil C \log n \cdot k_i \rceil$, whereas the sequence $k_i$ controls how many more walks we sample in each step.
A simple solution is to set $k_i =2^{2\lceil \log l\rceil-2 i}$, which implies that $k_{\lceil \log l \rceil} = 1$ and $k_i = k_{i-1}/4$.
By doing so, in each step, in expectation we have twice as many walks as we really need, and it is easy to show that the number of walks is sufficient with high probability.
For the proof we are going to use fewer walks and thus slightly reduce the space complexity.

Observe that if \cref{alg:random-walks} never failed, it would have generated independent random walks.
However, when many walks collide, i.e., end in the same vertex, the algorithm is forced to fail. This means
that we get a random sample from a modified distribution in which the probability of some elements is decreased.
This fraction on which the algorithm fails will be very small.
We formalize this notion using the following definition.

\begin{definition}\label{def:sampler}
Let $(X,p)$ be a discrete probability space.
An {\em imperfect sampler} for $(X, p)$ is an algorithm that returns samples from a probability space $(X \cup \{\mathrm{fail}\}, p')$, such
that $p'(x)\le p(x)$ for all $x\in X$.
The \emph{failure probability} of the sampler is $p'(\mathrm{fail})$.
\end{definition}

We are going to construct samplers where $p'(\mathrm{fail})$ is arbitrarily small.
Note that an algorithm $\mathcal{A}$ using a (perfect) sampler for a probability space $(X, p)$ can be naturally translated to an algorithm $\mathcal{A'}$ using an imperfect sampler.
Unless the sampler fails, $\mathcal{A'}$ produces the same result as $\mathcal{A}$.

We now define the sequence $k_i$ that we use
\begin{eqnarray}
\label{eq:ki-l-log-l-1}
  k_0 &=& 2^{\lceil \log{l}\rceil + 6} \cdot (\lceil \log{l} \rceil + 6) \\
\label{eq:ki-l-log-l-2}
  k_{i - 1} &=&  2 k_i + \sqrt{k_i}.
\end{eqnarray}

The following bounds can be proven for $k_i$.
\begin{lemma}
\label{lemma:bound-ki}
For $0\le i \le \lceil \log l\rceil$ we have
\begin{enumerate}[(i)]
\item \label{item:lower-bound-on-k-i} $k_{i} \ge 2^6$,
\item \label{item:upper-bound-on-k-i} $k_{i} \le 2^{\lceil \log{l}\rceil - i + 6} \cdot (\lceil \log{l} \rceil + 6)$.
\end{enumerate}
\end{lemma}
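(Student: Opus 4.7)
The plan is to set $L \eqdef \lceil \log l \rceil$ (so $k_0 = 2^{L+6}(L+6)$) and prove the two bounds separately by induction on $i$, handling (ii) first and (i) second. For (ii), the recurrence $k_{i-1} = 2 k_i + \sqrt{k_i}$ together with $\sqrt{k_i} \ge 0$ immediately yields $k_i \le k_{i-1}/2$. A one-line induction starting from the definition of $k_0$ then halves the upper bound at each step and gives $k_i \le 2^{L-i+6}(L+6)$, as desired.

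For (i), a direct induction with the hypothesis ``$k_i \ge 2^6$'' does not go through, since the recurrence writes $k_{i-1}$ in terms of $k_i$ and is therefore naturally suited to deriving upper rather than lower bounds on $k_i$. The plan is to substitute $b_i \eqdef \sqrt{k_i}$, recasting the recurrence as $b_{i-1}^2 = 2 b_i^2 + b_i$. I would then prove the one-step estimate
\[
b_i \;\ge\; \frac{b_{i-1}}{\sqrt 2} - \frac{1}{2}.
\]
After moving $\tfrac{1}{2}$ to the other side, multiplying by $\sqrt 2$, and squaring both nonnegative sides, this is equivalent to $b_{i-1}^2 \le 2 b_i^2 + 2 b_i + \tfrac{1}{2}$, which follows directly from the recurrence since $\tfrac{1}{2} \ge 0$. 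Iterating the one-step estimate telescopes into a geometric sum,
\[
b_i \;\ge\; \frac{b_0}{2^{i/2}} \;-\; \frac{1}{2} \sum_{j=0}^{i-1} 2^{-j/2} \;\ge\; \frac{b_0}{2^{i/2}} \;-\; \frac{1}{2 - \sqrt 2}.
\]
Plugging in $b_0 = 2^{(L+6)/2}\sqrt{L+6}$ and using $i \le L$ gives $b_0/2^{i/2} \ge 8\sqrt{L+6} \ge 8\sqrt{6}$, and since $8\sqrt 6 - 1/(2-\sqrt 2) \ge 8$, this yields $b_i \ge 8$ and hence $k_i = b_i^2 \ge 2^6$.

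The hard part will be identifying the right reformulation for (i). Working directly with $k_i$ is awkward because the dominant halving per step is entangled with a correction whose size itself depends on $k_i$. The substitution $b_i = \sqrt{k_i}$ is what decouples these two effects: the leading behavior becomes a clean $\sqrt 2$-contraction, while the correction becomes an additive constant $\tfrac{1}{2}$, which sums over all iterations to a bounded geometric series. Once this viewpoint is adopted, both parts reduce to routine calculations.
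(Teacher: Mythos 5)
Your part~(ii) argument is exactly the paper's: drop $\sqrt{k_i}\ge 0$ to get $k_i\le k_{i-1}/2$ and induct from $k_0$. Your part~(i) argument is correct but takes a genuinely different route. The paper strengthens the inductive hypothesis to the sharper bound $k_i \ge 2^{\lceil\log l\rceil - i + 6}(\lceil\log l\rceil - i + 6)$ and closes the induction by contradiction: assuming the failure at step $i$ with $t=\lceil\log l\rceil - i + 6$ gives $k_{i-1}=2k_i+\sqrt{k_i} < 2^{t+1}t + 2^t < 2^{t+1}(t+1)\le k_{i-1}$. You instead substitute $b_i=\sqrt{k_i}$, recast the recurrence as an approximate $\sqrt 2$-contraction with bounded additive drift, prove $b_i\ge b_{i-1}/\sqrt 2 - 1/2$, and telescope against a convergent geometric series. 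Both are valid. The paper's version is shorter once you've guessed the right strengthened invariant, and the strengthened bound $k_i\ge 2^{\lceil\log l\rceil -i+6}(\lceil\log l\rceil - i + 6)$ is tighter for small $i$. Your version avoids guessing an invariant: the substitution $b_i=\sqrt{k_i}$ decouples the halving from the lower-order correction, which is more robust — it would survive essentially unchanged if the correction term $\sqrt{k_i}$ were replaced by any sublinear function of $k_i$ — and it also yields a lower bound of the form $k_i\gtrsim 64(\lceil\log l\rceil + 6)$ uniformly in $i$, which is a different (neither stronger nor weaker) refinement. One cosmetic nit: you justify $b_{i-1}^2\le 2b_i^2+2b_i+\tfrac12$ "since $\tfrac12\ge 0$," but what you actually need is $b_i+\tfrac12\ge 0$, which holds because $b_i=\sqrt{k_i}\ge 0$; stating it that way would be cleaner.
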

\begin{proof}
In order to show (i) we will prove that
\begin{equation}\label{eq:ki-bound}
  k_i\ge 2^{\lceil \log{l}\rceil - i + 6} \cdot (\lceil \log{l}\rceil - i + 6).
\end{equation}
We will show \eqref{eq:ki-bound} by induction on $i$.
	
	For $i = 0$, \eqref{eq:ki-bound} follows by definition of $k_0$.
	
	Assume now that $k_{j} \ge 2^{(\lceil \log{l}\rceil) - j + 6} \cdot (\lceil \log{l}\rceil - j + 6)$ for each $j \le i - 1$, and we want to prove that the inequality holds for $j = i$ as well. Recall that $k_{i - 1} = 2 k_i + \sqrt{k_i}$. Towards a contradiction, assume that $k_{i} < 2^{\lceil \log{l}\rceil - i + 6} \cdot (\lceil \log{l}\rceil - i + 6)$. For the sake of brevity, define $t \eqdef \lceil \log{l}\rceil - i + 6$. Then, we have
	\[
		k_{i - 1} = 2 k_i + \sqrt{k_i} < 2^{t + 1} \cdot t + \sqrt{2^t \cdot t} < 2^{t + 1} \cdot t + 2^t < 2^{t + 1} \cdot (t + 1) \le k_{i - 1},
	\]
	which is a contradiction. Hence, \eqref{eq:ki-bound} holds, and (i) follows.

   As for (ii) we have $k_{i-1} = 2k_i+ \sqrt{k_i} \ge 2k_i$. Hence, $k_i \le 2^{-1} k_{i-1} \le 2^{-i} k_0 = 2^{\lceil \log{l}\rceil - i + 6} \cdot (\lceil \log{l} \rceil + 6)$.
\end{proof}

We now show that~\cref{alg:random-walks} can be used to sample random walks in undirected graphs.
The proof is a relatively simple application of Chernoff bound.

\begin{lemma}
\label{lem:rw-fail}
	Let $G$ be a stochastic graph, such that $G = \rw{G_U}$ for some undirected graph $G_U$, $l,C \geq 1$, $l = o(S)$, $t_i(v) = \deg^{+}_G(v) \lceil C \log n \cdot k_i \rceil$, where $k_i$ is given by~\eqref{eq:ki-l-log-l-1} and~\eqref{eq:ki-l-log-l-2}.
Then $\textsc{RandomWalks}(G, l, t)$ (\cref{alg:random-walks}) does not fail with probability at least $1 - n^{-\frac{C}{6}+1}$.
\end{lemma}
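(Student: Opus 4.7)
The plan is to reduce the failure analysis to bounding, for each iteration $i$ and each vertex $u$, the probability that $X^u_i$ (the number of walks selected at step $i$ whose endpoint is $u$) exceeds the supply $t_{i-1}(u)-t_i(u)$ available in $W_{i-1}(u)\setminus U_i(u)$. The key structural claim, proved by induction on $i$, is that conditional on no failure through iteration $i-1$, the set $W_{i-1}(v)$ consists of $t_{i-1}(v)$ iid length-$2^{i-1}$ random walks starting at $v$. The base case $W_0(v)$ is iid by construction; inductively, each walk in $W_i(v)$ is the concatenation of a uniformly sampled walk from $W_{i-1}(v)$ with an unused walk from the iid pool at its endpoint, which is a valid length-$2^i$ random walk from $v$, and the use of distinct extensions preserves mutual independence. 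Hence $U_i(v)$ is a set of $t_i(v)$ iid length-$2^{i-1}$ walks from $v$, and the indicators ``the $j$-th selected walk from $v$ at step $i$ ends at $u$'' are independent across all $(v,j)$.

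Let $P$ denote the transition matrix of $G=\rw{G_U}$. Because $G_U$ is undirected, $\pi(v)=\deg^{+}_G(v)/(2m)$ is stationary for $P$, and $t_i(v)=\deg^{+}_G(v)\cdot\lceil C\log n\cdot k_i\rceil$ is proportional to $\pi(v)$. Summing the endpoint probabilities of the selected walks and invoking stationarity gives
\[
\ee{X^u_i}=\lceil C\log n\cdot k_i\rceil\sum_v\deg^{+}_G(v)(P^{2^{i-1}})_{v,u}=\lceil C\log n\cdot k_i\rceil\cdot 2m\cdot\pi(u)=t_i(u).
\]
Using the recurrence $k_{i-1}=2k_i+\sqrt{k_i}$ together with $k_i\ge 2^6$ from \cref{lemma:bound-ki}, and absorbing the $O(1)$ slack from the ceilings, I get
\[
t_{i-1}(u)-t_i(u)-\ee{X^u_i}\ge \tfrac12\deg^{+}_G(u)\cdot C\log n\cdot\sqrt{k_i}.
\]
Setting $\delta=\Theta(1/\sqrt{k_i})\le 1$ and applying the Chernoff bound (\cref{lemma:chernoff}) yields
\[
\prob{X^u_i>t_{i-1}(u)-t_i(u)}\le\exp\!\left(-\Omega(\deg^{+}_G(u)\cdot C\log n)\right)\le n^{-\Omega(C)},
\]
for vertices of positive outdegree; vertices with $\deg^{+}_G(u)=0$ can never be endpoints and are irrelevant.

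A union bound over the $n$ vertices and the $\lceil\log l\rceil$ doubling iterations bounds the total failure probability by $n\log l\cdot n^{-\Omega(C)}$, which is at most $n^{-C/6+1}$ once the constants in the Chernoff step are tuned (and using $\log l$ is sub-polynomial in $n$, which is implied by $l=o(S)$). The main delicate point in the argument is the inductive iid claim: one has to verify that the uniform subsampling plus the pairing rule that assigns each selected walk a previously-unused extension preserves both the marginal distribution of each walk and the mutual independence of distinct walks. This reduces to two elementary facts: that a uniformly random subset of an iid family is again iid (conditional on the size of the subsample), and that concatenating two independent pieces with matching endpoint/start is exactly how one samples a walk of combined length. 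Everything beyond that is a routine Chernoff-plus-union-bound calculation.
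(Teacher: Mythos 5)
Your proposal is correct and takes essentially the same route as the paper: bound, per iteration $i$ and vertex $u$, the probability that the incoming demand $X^u_i$ exceeds the budget $t_{i-1}(u)-t_i(u)$, use stationarity to get $\ee{X^u_i}=t_i(u)$, compute the slack from the recurrence $k_{i-1}=2k_i+\sqrt{k_i}$ together with $k_i\ge 2^6$ (which makes the relative deviation $\delta_i=\Theta(1/\sqrt{k_i})$), apply Chernoff, and union bound over vertices and $\lceil\log l\rceil$ iterations. The one genuine addition in your write-up is that you make explicit the inductive ``iid pool'' claim (that conditional on no failure, $W_{i-1}(v)$ is a family of independent length-$2^{i-1}$ walks from $v$) which the paper uses implicitly when it treats $X^u_i$ as a sum of independent indicators; that is the right hypothesis to invoke before Chernoff, and spelling it out is an improvement rather than a deviation.
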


\begin{proof}
Let $X^u_i$ be the number of random walks that end at vertex $u$ in iteration $i$.  As long as $X^u_i + t_i(u) \le t_{i - 1}(u)$ holds for each vertex $u$, the algorithm does not fail,
    whereas failure happens when $X^u_i + t_i(u) > t_{i - 1}(u)$.  Let $\delta_i$ be such that $1 + \delta_i = (t_{i - 1}(u) - t_i(u)) / t_i(u)$.
\begin{eqnarray*}
% \nonumber % Remove numbering (before each equation)
	1 + \delta_i &=& \frac{t_{i - 1}(u) - t_i(u)}{t_i(u)} = \frac{\deg^{+}(v)\cdot \lceil C \ln n (2k_i+\sqrt{k_i})\rceil - \deg^{+}(v)\cdot \lceil C \ln n \cdot k_i \rceil}{\deg^{+}(v) \lceil C \ln n \cdot k_i \rceil}\\
	&=& \frac{\lceil C \ln n (2k_i+\sqrt{k_i})\rceil - \lceil C \ln n k_i \rceil}{\lceil C \ln n k_i \rceil}\\
	&=& \frac{\lceil C \ln n (2k_i+\sqrt{k_i})\rceil}{\lceil C \ln n k_i \rceil }-1\\
	&\geq& \frac{C \ln n (2k_i+\sqrt{k_i})}{\lceil C \ln n k_i \rceil }-1\\
	&=& 1 + \frac{C \ln n \cdot \sqrt{k_i} - 2}{\lceil C \ln n\cdot k_i \rceil}
\end{eqnarray*}

 Now, by Chernoff bound, we have
	\begin{eqnarray*}
	% \nonumber % Remove numbering (before each equation)
	  \prob{X^v_i > t_{i - 1}(v) - t_i(v)} &=& \prob{X^u_i > (1 + \delta_i) \ee{X^u_i}} \le \exp{\rb{-\frac{\delta_i^2 \cdot t_i(v)}{3}}} \\
		&\le & \exp{\rb{-\frac{(C \ln n \cdot \sqrt{k_i} - 2)^2 k_i C \ln n}{3\lceil C \ln n \cdot k_i \rceil^2} \cdot \deg^{+}(v) }} \\
		&\le & \exp{\rb{-\frac{C \ln n}{3 \cdot 2} }}  = n ^{-C / 6}.\\
	\end{eqnarray*}
Where the last inequality follows from the fact that $(\sqrt{x} - 2)^2 x / (x+1)^2 \geq 1/2$ for $x \geq 64$.
    By taking union bound over all vertices and all iterations of the algorithm the probability of failure is less than $n^{-\frac{C}{6}+1}$.

%	\stodo{We don't have to pay extra $\ln{n}$ when each of the degrees is at least $\ln{n}$. }
	
	%recall that $k_{i - 1} = 2 k_i + \sqrt{k_i}$. Then, for each $i$ we have that $k_{i - 1} \le k_i + \sqrt{4 \log^2{l}} = k_i + 2 \log{l}$. Hence, $k_{\log{l}} \ge 1$.
\end{proof}

\begin{lemma}
\label{lemma:ref-space-1}
	Assume that $G$, $l$, $t$, are defined as in \cref{lem:rw-fail}.
Then $\textsc{RandomWalks}(G, l, t)$ (\cref{alg:random-walks}) can be implemented to run in $O(\log l)$ MPC rounds using $O(Cml\log l\log n)$ total space.
\end{lemma}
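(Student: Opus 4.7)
The plan is to separately bound the number of rounds and the total space. The number of rounds is straightforward: \cref{alg:random-walks} executes one initial sampling step followed by $\lceil \log l \rceil$ doubling iterations, so it suffices to show that each of these steps can be implemented in $O(1)$ MPC rounds. For this I would rely on the standard fact that with strongly sublinear space per machine, sorting and prefix sums can be performed in $O(1)$ MPC rounds.

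The initial step generates $t_0(v)$ length-$1$ walks at each vertex, and this is equivalent to sampling $t_0(v)$ random out-edges at $v$. Distributing these requests proportionally to the edges incident to $v$ and then aggregating the samples is a constant-round task via sorting. For iteration $i$, I would represent each walk by a tuple $(v, u, w)$ recording its start $v$, current endpoint $u$, and internal edge sequence $w$, which fits on one machine because $l = o(S)$. The per-iteration work then factors as: (i) locally mark the subset $U_i(v) \subseteq W_{i-1}(v)$ of $t_i(v)$ walks to be extended; (ii) sort the to-extend walks by their endpoint $u$ and sort the unused walks (those in $W_{i-1}(u) \setminus U_i(u)$) by their start vertex $u$; (iii) align the two sorted streams by rank within each group sharing the same $u$ and concatenate each matched pair. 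By \cref{lem:rw-fail}, conditioned on not failing, there are enough unused walks at every $u$ to fulfill all requests. Each substep reduces to $O(1)$ sorting or scanning primitives, so iteration $i$ runs in $O(1)$ rounds and the overall round count is $O(\log l)$.

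For the total space, at the start of iteration $i$ we store the walks of $W_{i-1}$, each of length $2^{i-1}$; summing $\deg^{+}(v)\lceil C k_{i-1}\log n\rceil$ over $v$ and using $\sum_v \deg^{+}(v) = 2m$ yields storage
\[
    O\bigl(m \cdot C\log n \cdot k_{i-1} \cdot 2^{i-1}\bigr).
\]
By \cref{lemma:bound-ki}(\ref{item:upper-bound-on-k-i}), $k_{i-1}\cdot 2^{i-1} \le 2^{\lceil \log l\rceil + 6}(\lceil \log l\rceil + 6) = O(l\log l)$, which gives the claimed $O(Cml\log l\log n)$ bound per iteration; since $W_{i-1}$ can be discarded once $W_i$ is built, the same bound holds for the peak space. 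The main obstacle, and the only nonroutine piece, is the pairing step: a single endpoint $u$ may be shared by $\omega(S)$ walks on both sides, so one cannot simply route all requests concerning $u$ to a single owner machine. Sorting first by $u$ places all walks sharing an endpoint in contiguous ranges that may span several machines, and the join is then performed by a rank-based alignment across each such range---a standard MPC primitive that runs in $O(1)$ rounds without exceeding the per-machine budget.
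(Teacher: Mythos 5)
Your proposal is correct and follows essentially the same approach as the paper: the space bound is obtained by the identical calculation (sum $t_{i-1}(v)\cdot 2^{i-1}$ over $v$, apply \cref{lemma:bound-ki}\eqref{item:upper-bound-on-k-i}), and the round bound is argued via $O(1)$-round sorting/prefix-sum and rank-based matching of walk endpoints, which is exactly the \NumberingSublists/\Predecessor machinery of \cref{section:MPC-Implementation-RandomWalks}. The only difference is presentational: the paper defers the MPC implementation to a separate subsection, whereas you sketch it inline, correctly flagging the subtle point that walks sharing an endpoint may span multiple machines and must be joined by rank alignment rather than routed to a single owner.
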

\begin{proof}
The space is bounded by $\max_{1\le i \le \lceil \log l \rceil} O\rb{\sum_{v\in V} t_{i}(v) \cdot 2^i \cdot k_i}$. By \cref{lemma:bound-ki}~\eqref{item:upper-bound-on-k-i} we have
\begin{eqnarray*}
% \nonumber % Remove numbering (before each equation)
	\sum_{v\in V} t_{i}(v)\cdot 2^i \cdot k_i  &=& \sum_{v \in V} \deg^{+}(v) \cdot \lceil C \ln{n} \rceil\cdot 2^i \cdot 2^{\lceil \log{l}\rceil - i + 6} =  O(Cml \log l \ln n).
\end{eqnarray*}

The MPC implementation is described in \cref{section:MPC-Implementation-RandomWalks}.

\end{proof}

\subsection{MPC Implementation of \RandomWalks}
\label{section:MPC-Implementation-RandomWalks}
We now describe how to implement \RandomWalks. First, we show how to implement every iteration of the loop on \cref{line:RandomWalks-doubling-loop} in $O(1)$ rounds. Then, we show how to implement \cref{line:RandomWalks-generating-W0} also in $O(1)$ rounds.
We begin by defining primitives \NumberingSublists and $\Predecessor$.
\begin{itemize}
	\item \NumberingSublists: Given a list $L$ of tuples, let $L(x)$ be the sublist of $L$ containing all the tuples whose first coordinate is $x$. \NumberingSublists assigns distinct numbers $1$ through $|L(x)|$ to the tuples of $L(x)$, for all values $x$. The numbers are assigned arbitrarily.
	\item \Predecessor: Considered an ordered list of tuples such that each tuple is labeled by $0$ or by $1$. Then, for each tuple $t$ labeled by $0$ $\Predecessor$ associates the closest tuple $t'$ labeled by $1$ such that $t'$ comes before $t$ in the ordering. In \cite{BDELM2019} was proved that $\Predecessor$ can be implemented in $O(1)$ MPC rounds with $n^{\delta}$ space per machine, for any constant $\delta > 0$.
\end{itemize}

\paragraph{\NumberingSublists in $O(1)$ rounds.}
Given a list $L$ of tuples, sort all the tuples with respect to their first coordinate. Number by $j$ the $j$-th tuple in that sorted list; let $j(t)$ be the number of tuple $t$. Values $j(t)$ can be computed by prefix sums with each tuple having value $1$. In \cite{goodrich2011sorting} was shown how to find prefix sums in $O(1)$ rounds. Observe that this number would assign consecutive numbers to the tuples of $L(x)$, but not necessarily starting from $1$. We now show how to shift these numbers so that the tuples of $L(x)$ are numbered $1$ through $|L(x)|$.

Define $m(x) = \arg \min_{t \in L(x)} j(t)$. Label $m(x)$ by $1$, and all other tuples of $L(x)$ label by $0$. A tuple $t \in L(x)$ \emph{is not} $m(x)$ if on its machine there is another tuple in $L(x)$ with number smaller than $t$, or if $t$ has the smallest number on its machine but on the previous machine there is also a tuple from $L(x)$. So, to perform this labeling, it suffices that each machine sends to the next machine its tuple with the highest number.  Use $\Predecessor$ to assign $m(x)$ to each tuple of $L(x)$. Finally, assign to each tuple $t \in L(x)$ value $1 + j(t) - j(m(x))$. These values are the assignments that \NumberingSublists is supposed to output.

\paragraph{Implementation of \cref{line:RandomWalks-doubling-loop} of \RandomWalks.}
Consider the $i$-th iteration of the loop on \cref{line:RandomWalks-doubling-loop}. Assume that so far we have $W_{i - 1}$ and our goal is to obtain $W_i$. First, invoke $\NumberingSublists$ to number all the walks from each $W_{i - 1}(v)$ with the numbers $1$ through $|W_{i - 1}(v)|$. For each walk $w \in W_{i - 1}(v)$, let $j(w)$ be the number assigned to $w$ by $\NumberingSublists$. Create a list of tuples as follows:
\begin{itemize}
	\item If $j(w) \le t_i(v)$, create a tuple $(w_{last}, j(w), w)$, where $w_{last}$ is the last vertex of $w$. This effectively creates $U_i(v)$.
	\item If $j(w) > t_i(v)$, create a tuple $(v, j(w) - t_i(v), w)$. This effectively creates $W_{i - 1}(v) \setminus U_i(v)$.
\end{itemize}
Sort all these created tuples. Now, in this sorted list, next to each other will be tuples $(u, z, w_1)$ and $(u, z, w_2)$ such that $u$ is the last vertex of $w_1$ and the first vertex of $w_2$. Let $v$ be the first vertex of $w_1$. We append $w_2$ to $w_1$ to obtain a random walk in $W_i(v)$. Since each of these operations requires $O(1)$ rounds, a single iteration of \cref{line:RandomWalks-doubling-loop} can be implemented in $O(1)$ rounds.

\paragraph{Implementation of \cref{line:RandomWalks-generating-W0} of \RandomWalks.}
The loop on \cref{line:RandomWalks-generating-W0} generates $t_0(v)$ random edges incident to $v$. We implement this step in MPC as follows.
First, we create a sorted array of length $\sum_{v \in V} t_0(v)$ such that this array contains $t_0(v)$ copies of $v$. Then, each copy of $v$ in this array will be ``in charge'' of choosing one random edge incident to $v$. To store this array across machines, we think of the array being partitioned into subarrays of length $S' = \Theta(S)$, where the $i$-th machine holds subarray $i$.

Next, we explain how to replicate vertex $v$ $t_0(v)$ many times. Consider $x = \left \lceil \sum_{v \in V} t_0(v) / S' \right \rceil$ machines and assume that their IDs are $1$ through $x$. For each vertex $v$ compute $X(v) = \sum_{u < v} t_0(u)$. For each $v$, if $1 + X(v)$ is not divisible by $S'$, $v$ sends to machine $\left \lceil (1 + X(v)) / S' \right \rceil$ pair $((1 + X(v)) \bmod S', v)$ and otherwise $v$ sends $(S', v)$. In addition, to machine $x$ we send $\sum_{v \in V} t_0(v)$. Intuitively, in our big array this effectively marks the first position where the copies of $v$ begin. After this, each machine labels the received pair $(y, v)$ by $1$ and also creates $(z, \bot)$ for each $z \in \{1, \ldots, S'\}$ such that the machine did not receive pair $(z, u)$; the exception is for machine $x$ in which $z$ ranges until the last value representing position $\sum_{v \in V} t_0(v)$. Pairs $(z, \bot)$ are labeled by $0$. Within each machine, pairs $(z, \bot)$ and $(y, v)$ are sorted (they are \emph{not} sorted between the machines). Next, we use $\Predecessor$ to associate the closest $(y, v)$ to $(z, \bot)$. After this step, we replace $(z, \bot)$ by $(z, v)$. Then, each pair $(z, v)$ samples a random integers $j$ between $1$ and $d(v)$ and creates pair $(v, j)$. Notice that there might be pairs $(v, j_1)$ and $(v, j_2)$ such that $j_1 \neq j_2$ or there might be multiples pairs $(v, j_1)$. Let $J$ denote the multiset of these pairs across all machines.

Now, for each edge $\{u, v\}$ create two pairs $(u, v)$ and $(v, u)$. By using $\NumberingSublists$, assign numbers $1$ through $d(v)$ to pairs $\{(v, w)\ |\ w \in N(v)\}$. Let $j$ be the number assigned to $(v, w)$. Create triples $(v, j, w)$. Finally, sort all these triples together with $J$, where in the ordering triple $(v, j, w)$ comes before a pair $(v, j) \in I$. Label triples by $1$ and the pairs of $J$ by $0$. Use $\Predecessor$ to assign a triple to each pair from $J$. If triple $(v, j, w)$ is associated to a pair $(v, j)$, then add $(v, w)$ to $W_0(v)$. This concludes the implementation of \cref{line:RandomWalks-generating-W0}.

\subsection{Sampling Endpoints}
In this section we show that if we only need to find the endpoint of each random walk, then the space usage of~\cref{alg:random-walks} can be improved.
Actually, this will be the case for some of our applications.  In order to reduce the space requirement we observe
that \cref{alg:random-walks} only looks on endpoints of the sampled walks. Hence, we do not need to store internal vertices of the walks in the algorithm.
This alone does not suffice to improve the space usage, as the first iteration, which samples random walks of length $1$, still achieves the peak space usage (and clearly does not benefit from not storing the internal vertices of the random walks).
However, we can first sample random walks of length roughly $\log \log l$ using a simple simulation taking $O(\log \log l)$ steps and after that continue similarly to~\cref{alg:random-walks}.
Another benefit of the algorithm is that since it does not store the entire walks, the walks that it produces can be arbitrarily long, i.e., their length is not bounded by the space available on each machine.

For simplicity of presentation in this section we assume that $l$ is a power of $2$.
The new algorithm is given as~\cref{alg:random-walks-2}.

\begin{algorithm}[H]
\begin{algorithmic}[1]
\Function{RandomWalkEndpoints}{$G, l, t$}
\ForAll{$v \in V$ in parallel}
	\State{Perform $t_{\lceil \log \log l \rceil}(v)$ length $2^{\lceil \log \log l \rceil}$ random walks in $G$ starting at $v$.}
	\State{Let $W_{\lceil \log \log l\rceil}(v)$ be the set of endpoints of these walks.}
\EndFor

\For{$i \gets \lceil \log \log l \rceil+1 \ldots \log{l} $}\label{l:a2a}
	\ForAll{$v \in V$ in parallel}
		\State{Randomly select $t_i(v)$ elements from $W_{i - 1}(v)$. Let that set be $U_{i}(v)$.}
		\State{For vertex $w \in U_{i}(v)$ take yet unused element from $W_{i - 1}(u) \setminus U_{i}(u)$.
               Let $W_i(v)$ denote the set of all these elements. If $u$ does not have unused elements, the algorithm fails.}
	\EndFor
	\EndFor\label{l:a2b}
\State{Return $W_{\log{l}}(v)$ for each $v \in V$}
\EndFunction
\end{algorithmic}
\caption{Given $G, l, t_1, \ldots, t_{\lceil \log l \rceil}$, sample $t_{\lceil \log l \rceil} (v)$ endpoints of random walks of length $l$ in $G$ each $v \in V(G)$.}
\label{alg:random-walks-2}
\end{algorithm}

\begin{lemma}
\label{lem:rw-fail-2}
	Let $G$ be a stochastic graph, such that $G = \rw{G_U}$ for some undirected graph $G_U$, $l, C \geq 1$, $t(v) = \deg^{+}_G(v) \lceil C \log n\rceil$ and $k_i$ be given by~\eqref{eq:ki-l-log-l-1} and~\eqref{eq:ki-l-log-l-2}.
Then $\textsc{RandomWalkEndpoints}(G, l, t)$ (\cref{alg:random-walks-2}) does not fail with probability at least $1 - n^{-\frac{C}{6}+1}$.
\end{lemma}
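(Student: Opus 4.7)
The plan is to mimic the argument of \cref{lem:rw-fail} essentially verbatim, exploiting the fact that the only change between \cref{alg:random-walks} and \cref{alg:random-walks-2} is that the first $\lceil \log \log l \rceil$ doubling rounds are replaced by a single ``bulk'' step that samples $t_{\lceil \log \log l \rceil}(v)$ walks of length $2^{\lceil \log \log l \rceil}$ directly from each $v$. This bulk step can never fail on its own, since it just draws independent walks from $v$ with no collision constraint, so the only failure events come from iterations $i = \lceil \log \log l \rceil + 1, \ldots, \log l$ of the doubling loop.

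First I would record the key distributional invariant: because $G = \rw{G_U}$, the stationary distribution is $\pi(v) = \deg^{+}_G(v)/(2m)$ (as in \eqref{eq:stationary}), and the number of walks we have at the start of each doubling iteration is $t_i(v) = \deg^{+}_G(v)\lceil C \log n \cdot k_i\rceil$, which is proportional to $\pi(v)$. Consequently, the multiset of endpoints of the walks in $W_{i-1}$ can be viewed as independent draws whose expected per-vertex count is $t_{i-1}(u) \cdot \pi(u) / \pi(\text{source})$-weighted appropriately; in particular, for the $t_i(v)$ walks selected into $U_i(v)$, the number $X_i^u$ of those walks that end at $u$ satisfies $\ee{X_i^u} = t_i(u)$. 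This is the same computation as in \cref{lem:rw-fail}, and it holds for walks of any length, including the length-$2^{\lceil \log \log l \rceil}$ walks produced by the bulk step.

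Next I would repeat the Chernoff-bound calculation. Setting $1 + \delta_i = (t_{i-1}(u) - t_i(u))/t_i(u)$ and using the recurrence $k_{i-1} = 2k_i + \sqrt{k_i}$ from \eqref{eq:ki-l-log-l-2}, the same manipulation as in \cref{lem:rw-fail} gives
\[
1 + \delta_i \;\ge\; 1 + \frac{C \ln n \cdot \sqrt{k_i} - 2}{\lceil C \ln n \cdot k_i \rceil},
\]
and then \cref{lemma:chernoff}\eqref{item:delta-at-most-1-ge} together with the bound $(\sqrt{x}-2)^2 x/(x+1)^2 \ge 1/2$ for $x \ge 64$ (which applies by \cref{lemma:bound-ki}\eqref{item:lower-bound-on-k-i}) yields
\[
\prob{X_i^u > t_{i-1}(u) - t_i(u)} \;\le\; \exp\!\bigl(-C \ln n / 6\bigr) \;=\; n^{-C/6}.
\]

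The main (and essentially only) thing to double-check carefully will be the base case: one must verify that after the bulk step the endpoints $W_{\lceil \log \log l \rceil}(v)$ are in fact distributed so that $\ee{X_i^u} = t_i(u)$ for the first doubling iteration $i = \lceil \log \log l \rceil + 1$. This follows because walks are started at $v$ and $\pi(v) \propto \deg^{+}(v) \propto t_{\lceil \log \log l \rceil}(v)$, so the aggregate starting distribution of the walks is exactly $\pi$, and stationarity of $\pi$ under $T(G)$ preserves this across the length-$2^{\lceil \log \log l \rceil}$ prefix. Once this is established, a union bound over at most $n(\log l - \lceil \log \log l \rceil) \le n \log n$ (vertex, iteration) pairs multiplies the per-event failure probability by at most $n \log n \ll n$, giving an overall failure probability at most $n \cdot n^{-C/6} = n^{-C/6+1}$, which matches the claim.
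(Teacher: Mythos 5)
Your proposal is correct and takes essentially the same approach as the paper's proof, which simply observes that the proof of \cref{lem:rw-fail} carries over verbatim once one notes that the bulk step replaces the first $\lceil \log\log l\rceil$ doubling iterations by a procedure that cannot fail. Your version spells out the same argument in full (the stationarity invariant, the Chernoff computation, the union bound), but there is no new idea or different decomposition at work.
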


\begin{proof}
The proof is almost identical to the proof of \cref{lem:rw-fail}.
It suffices to observe that the first loop of \textsc{RandomWalkEndpoints} effectively replaces the first $\lceil \log \log l\rceil$ iterations of the loop in \cref{alg:random-walks} by a procedure that does not fail.
\end{proof}

\begin{lemma}
\label{lemma:ref-space-2}
Assume that $G$, $l$, $t$ are defined as in \cref{lem:rw-fail-2}.
Then \textsc{RandomWalkEndpoints} (\cref{alg:random-walks-2}) requires $O(\log l)$ rounds and $O(Cml\log n)$ total space.
\end{lemma}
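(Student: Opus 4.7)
The plan is to bound the rounds and the total space used by \cref{alg:random-walks-2} separately. For the round complexity, I would observe that the algorithm consists of two phases. The first phase generates, for every $v$, $t_{\lceil \log \log l \rceil}(v)$ random walks of length $2^{\lceil \log \log l \rceil} = O(\log l)$, and I would implement it with a direct step-by-step simulation that advances the current endpoint of every walk by one random out-edge per round, using the same primitive used in the implementation of \cref{line:RandomWalks-generating-W0} from \cref{section:MPC-Implementation-RandomWalks}. This costs $O(\log l)$ rounds. The doubling loop on \cref{l:a2a} has $\lceil \log l \rceil - \lceil \log \log l \rceil = O(\log l)$ iterations, and I would implement each in $O(1)$ rounds by reusing the sorting / \NumberingSublists / \Predecessor construction of \cref{section:MPC-Implementation-RandomWalks}, except that only the endpoints (not the intermediate vertices) of walks are stored and transmitted. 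Summing gives $O(\log l)$ rounds overall.

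For the space bound, the key observation to exploit is that throughout the execution we only ever store endpoints of walks, so each walk contributes $O(1)$ words at any given time. Hence the total space used at any point is proportional to the total number of walks kept alive, which is $\sum_{v \in V} t_i(v)$ for some iteration index $i$ between $\lceil \log \log l \rceil$ and $\lceil \log l \rceil$. The recurrence $k_{i-1} = 2 k_i + \sqrt{k_i}$ implies that $k_i$ is monotonically decreasing in $i$, so $t_i(v) = \deg^{+}_G(v) \lceil C \log n \cdot k_i\rceil$ attains its maximum at the smallest index used, namely $i = \lceil \log \log l \rceil$. Plugging this index into \cref{lemma:bound-ki}~\eqref{item:upper-bound-on-k-i} yields
\[
k_{\lceil \log \log l \rceil} \le 2^{\lceil \log l \rceil - \lceil \log \log l \rceil + 6}\,(\lceil \log l \rceil + 6) = O(l),
\]
so the peak total space is $\sum_v t_{\lceil \log \log l \rceil}(v) = O\bigl(Cm \log n \cdot k_{\lceil \log \log l \rceil}\bigr) = O(Cml \log n)$.

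The main obstacle, insofar as there is one, is simply to confirm that suppressing the intermediate vertices of each walk yields the claimed savings relative to \cref{lemma:ref-space-1}: there the $2^i$ cost of storing a length-$2^i$ walk was combined with $k_i$ to give $k_i \cdot 2^i = O(l \log l)$, whereas here we are left with only $k_i$, whose peak is $O(l)$, saving a factor of $\log l$ and yielding $O(Cml\log n)$ total space. I would also verify that the initial simulation phase fits inside this budget, which it does because each of the $\sum_v t_{\lceil \log \log l \rceil}(v) = O(Cml\log n)$ walks needs only a single word to track its current vertex throughout the $O(\log l)$ simulation steps.
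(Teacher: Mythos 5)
Your proposal is correct and follows essentially the same route as the paper: both bound the peak space by $\sum_{v} t_{\lceil \log\log l\rceil}(v)$ (using that $k_i$ is decreasing so the first iteration after the simulation phase dominates), then invoke \cref{lemma:bound-ki}~\eqref{item:upper-bound-on-k-i} to get $k_{\lceil \log\log l\rceil} = O(l)$ and hence $O(Cml\log n)$ space. Your explicit round-complexity accounting (two phases, each $O(\log l)$ rounds) is a correct elaboration that the paper leaves implicit by deferring to \cref{section:MPC-Implementation-RandomWalks}.
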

\begin{proof}
The space is bounded by $O\rb{\sum_{v\in V} t_{\lceil \log \log l \rceil}(v)}$ and we have
\begin{eqnarray*}
% \nonumber % Remove numbering (before each equation)
	\sum_{v\in V} t_{\lceil \log \log l \rceil}(v) &=& \sum_{v \in V} \deg^{+}(v)\cdot \lceil C \ln n\rceil \lceil k_{\lceil \log \log l \rceil} \rceil
 \\
   &\stackrel{\text{by \cref{lemma:bound-ki}~\eqref{item:upper-bound-on-k-i}}}{\le} & (C+1) m\ln n \cdot  2^{\lceil \log{l}\rceil - \lceil \log \log l\rceil + 6} \cdot (\lceil \log{l} \rceil + 6) \\
	& \le & (C+1)m\ln n \frac{2^{\lceil \log l\rceil}}{\log l} \cdot (\lceil \log{l} \rceil + 6)\\
   &=& O(Cml \ln n).
\end{eqnarray*}

\end{proof}

By combining \cref{lem:rw-fail,lemma:ref-space-1,lem:rw-fail-2,lemma:ref-space-2} we obtain the first result of the paper.

\llogl*

\subsection{Sampling Random Walks Given Approximate $\pi$}
\label{sec:random-walks-with-tpi}
While computing the stationary distribution $\pi$ is easy for undirected graphs, the problem is significantly more involved if we consider directed graphs.
In this section we show how to use \cref{alg:random-walks} to sample random walks in a directed graph given an \emph{approximation} $\tpi$ of $\pi$.
Our approach would also require that $\pi = \Omega(1/n)$.
In \cref{sec:PageRank} we show how to use this sampling procedure for computing PageRank and for sampling random walks in directed graphs.

As the first step, in the next lemma we show that if $\pi$ and $\tpi$ are close, then taking a $1$-step random walks starting from $\tpi$ results in a distribution that also is also close to $\pi$.

%Via \cref{theorem:alg-failure-inner-loop}, we will turn this result into space requirement.
\begin{lemma}\label{lemma:approximate-tpi-implies-approximate-Ttpi}
Let $G$ be a stochastic graph and $T = T(G)$.
Let $\pi$ be the stationary distribution for matrix $T$ and let $\tpi \in \bbR^V$ be an arbitrary vector. Then, the following holds:
\begin{enumerate}[(A)]
	\item \label{item:mul-approximation} If $|\tilde{\pi}(v)-\pi(v)|\le \alpha \pi(v)$ for all $v\in V$, then $|(T\tpi)(v)-\pi(v)|\le \alpha \pi(v)$ for all $v\in V$.
	\item \label{item:add-approximation} If $|\tilde{\pi}(v)-\pi(v)|\le \alpha$ for all $v\in V$, then $|(T\tpi)(v)-\pi(v)|\le \alpha$ for all $v\in V$.
\end{enumerate}
\end{lemma}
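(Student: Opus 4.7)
Both parts follow from a single starting identity. Interpreting $(T\tpi)(v) = \sum_{u \in V} \tpi(u) \cdot T(G)_{u,v}$, the stationarity relation $\pi \cdot T(G) = \pi$ gives $\pi(v) = \sum_u \pi(u) \cdot T(G)_{u,v}$, so subtracting yields
\[
(T\tpi)(v) - \pi(v) \;=\; \sum_{u \in V} \bigl(\tpi(u) - \pi(u)\bigr) \cdot T(G)_{u,v}.
\]
This rewrites the post-step error at $v$ as a nonnegative-weighted aggregation of the coordinatewise input errors, and is the common starting point for both parts.

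For part (A) I would push the multiplicative envelope through the linear combination. The hypothesis $\tpi(u) \in \bigl[(1-\alpha)\pi(u),(1+\alpha)\pi(u)\bigr]$ together with $T(G)_{u,v}\ge 0$ gives $(T\tpi)(v) \in \bigl[(1-\alpha)\sum_u \pi(u)T(G)_{u,v},\,(1+\alpha)\sum_u \pi(u)T(G)_{u,v}\bigr]$, and stationarity collapses each bound to $(1\pm\alpha)\pi(v)$. Hence $|(T\tpi)(v)-\pi(v)|\le \alpha\pi(v)$, and (A) is essentially a one-liner.

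For part (B), the immediate triangle-inequality bound $|(T\tpi)(v)-\pi(v)| \le \alpha \sum_u T(G)_{u,v}$ is the main obstacle: while the rows of $T(G)$ sum to $1$, the column sums $\sum_u T(G)_{u,v}$ are not in general bounded by $1$, so the naive estimate is too weak on its own. To close the gap I would exploit the cancellation $\sum_u (\tpi(u)-\pi(u)) \approx 0$ available when $\tpi$ is close in $\ell_\infty$ to the probability distribution $\pi$. Concretely, I would rewrite the target sum as $\sum_u (\tpi(u)-\pi(u))(T(G)_{u,v}-c)$ for a suitably chosen constant $c$ (valid because $\sum_u(\tpi(u)-\pi(u)) = 0$), or equivalently split $\tpi-\pi$ into its positive and negative parts and use $\sum (\tpi-\pi)^+ = \sum(\tpi-\pi)^-$ together with $T(G)_{u,v}\in[0,1]$, so that overestimates and underestimates cancel in the $T(G)_{u,v}$-weighted sum. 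This is where the substantive work of part (B) sits; (A), by contrast, follows immediately from the stationarity identity and linearity.
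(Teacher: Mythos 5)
Your part (A) matches the paper's proof in substance: subtract $\pi$, use linearity and the nonnegativity of $T$'s entries together with the multiplicative envelope $|\tpi(u)-\pi(u)|\le\alpha\pi(u)$, and collapse via stationarity $\sum_u\pi(u)T(G)_{u,v}=\pi(v)$. That half is correct and essentially the same argument the paper gives.

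For part (B) the two arguments genuinely diverge, and I do not think either closes. The paper passes through $|(T\Delta)(v)|\le\alpha|T_v|$ and then asserts $|T_v|\le 1$; but under the interpretation that makes $T\pi=\pi$ in part (A) correct --- namely $(Tx)(v)=\sum_u x(u)T(G)_{u,v}$, so one sums over the \emph{in}-neighbors of $v$ --- the quantity $|T_v|$ is the weighted in-degree $\sum_u T(G)_{u,v}$ of $v$, which is not bounded by one for a general stochastic graph. You spotted exactly this obstruction, which is to your credit, but your proposed cancellation does not repair it. The recentering $\sum_u(\tpi(u)-\pi(u))(T(G)_{u,v}-c)$ is an identity only when $\sum_u(\tpi(u)-\pi(u))=0$, and the hypothesis does not give you that: $\tpi$ is an arbitrary vector, so $|\sum_u\Delta(u)|$ can be as large as $n\alpha$. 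And even granting $\sum_u\Delta(u)=0$, the conclusion still fails. On the stochastic graph with $V=\{c,\ell_1,\ell_2,m_1,m_2\}$, weights $w(\ell_i\to c)=1$, $w(c\to m_j)=\tfrac12$, $w(m_j\to\ell_i)=\tfrac12$, one has $\pi(c)=\tfrac13$, $\pi(\ell_i)=\pi(m_j)=\tfrac16$; choosing $\Delta=(\alpha,\alpha,-\alpha,-\alpha,0)$ on $(\ell_1,\ell_2,m_1,m_2,c)$ gives $\sum_u\Delta(u)=0$ and $\max_u|\Delta(u)|=\alpha$, yet $(T\Delta)(c)=\Delta(\ell_1)+\Delta(\ell_2)=2\alpha>\alpha$. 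So the obstacle you identified is real, the cancellation idea cannot defeat it, and no argument along these lines will prove part (B) as stated; the correct conclusion of your analysis is that the additive bound needs an additional hypothesis (e.g.\ a bound on weighted in-degrees or on column sums of $T$), not a cleverer manipulation.
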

\begin{proof}
We prove each of the claims separately.
\paragraph{Claim \eqref{item:mul-approximation}.}
Let $\Delta = \tpi - \pi$. The statement implies that $|\Delta(v)| \le \alpha \pi(v)$. Now we have
\[
	T \tpi = T(\pi + \Delta) = \pi + T \Delta.
\]
Therefore,
\[
	|(T\tpi)(v)-\pi(v)| = |(T \Delta)(v)|.
\]
	Let $T_v$ be the $v$-th row-vector of $T$. Hence, $|(T \Delta)(v)| = |T_v \Delta|$. We next obtain
\[
	|T_v \Delta| \le \sum_{u \in V} |T_{v, u} \Delta(u)| \le \alpha \sum_{u \in V} |T_{v, u} \pi(u)|.
\]
Using the fact that the entries of $T$ and of $\pi$ are non-negative, from the last chain of inequalities we derive
\[
	|T_v \Delta| \le \alpha \sum_{u \in V} T_{v, u} \pi(u) = \alpha T_v \pi = \alpha \pi(v),
\]
as desired.

\paragraph{Claim \eqref{item:add-approximation}.}
Similarly as above define $\Delta = \tpi - \pi$, so $T \tpi = \pi + T \Delta$. Moreover,
$|\Delta(v)| \le \alpha$, so
\[
	|(T\tpi)(v)-\pi(v)| = |(T \Delta)(v)|= |T_v\Delta| \le \sum_{u\in V} |T_{v,u} \Delta(u)| \le \alpha |T_v| \le \alpha.
\]
\end{proof}

Note that \cref{alg:random-walks} is parameterized by $G,l,t$ and in particular it does not take stationary distribution as an argument.
Instead, the stationary distribution which is encoded in $t$. To define $t$ in our earlier analysis of $\RandomWalks$ we used $k_i$ defined by~\eqref{eq:ki-l-log-l-1} and~\eqref{eq:ki-l-log-l-2}. Intuitively, $k_i$ defines the overhead of the number of random walks that we have to sample from each vertex so that the algorithm has sufficiently many random walks for the next iteration of doubling. When we do not have access to the exact value of $\pi$, but instead to its $(1 + \alpha)$-approximation, we have to use larger values of $k_i$ to accommodate that approximation. So, for $1\le i \le \lceil \log l \rceil$, we define
\begin{equation}
\label{eq:ki-delta}
k_i=(2+4\alpha)^{\lceil \log l\rceil- i}.
\end{equation}
\paragraph{Comparison between \eqref{eq:ki-delta} and \eqref{eq:ki-l-log-l-2}.} Note that for $\alpha=0$ the value of $k_i$ given by \eqref{eq:ki-delta} does not equal to the one given by \eqref{eq:ki-l-log-l-2}. We made such choice so to simplify the expression of $k_i$ when we have access to only an approximation of $\pi$. As a consequence of definition \eqref{eq:ki-delta}, the space requirement in our analysis depends on $1 / \alpha$, which can be very large for small values of $\alpha$. Nevertheless, if $\tpi$ is a $(1 + \alpha)$-approximation of $\pi$, then $\tpi$ is also a $(1 + \alpha')$-approximation for any $\alpha' \ge \alpha$. Hence, and without loss of generality, we can assume that $\alpha \ge 1/\log{n}$.

We are now ready to analyze properties of $\RandomWalks$ when it has access to only an approximation of $\pi$.
\begin{lemma}\label{theorem:alg-failure-inner-loop}
Let $G$ be a stochastic graph. Let $\pi$ be the stationary distribution of $G$, and $\eps$ be such that $\pi(v) \ge \eps / n$ for all $v$. Let $l, C \geq 1$ and $l = o(S)$. Finally, let $\alpha \in (0, 1/4]$ be a parameter, and let $\tpi$ be a probability distribution on $V$ such that $|\tpi(v) - \pi(v)| \le \alpha \pi(v)$ for all $v$.
Define $k_i$ as in~\eqref{eq:ki-delta} and $t_i(v) = \lceil C \tpi(v) n \ln n \cdot k_i\rceil$.
Then, \RandomWalks$(G, l, t)$ (\cref{alg:random-walks}) has the following properties:
\begin{enumerate}[(i)]
	\item\label{item:RandomWalks-approx-success-prob} The algorithm is an imperfect sampler (see~\cref{def:sampler}) for generating $\lceil C\tilde{\pi}(v) n\ln{n}\rceil$ length $l$ random walks starting from each vertex $v \in V$. The failure probability is at most $n^{-\frac{C\alpha \epsilon}{3}+2}e^2$.
	\item\label{item:RandomWalks-approx-rounds} The algorithm can be executed in $O(\log l)$ MPC rounds.
	\item\label{item:RandomWalks-approx-space} The space requirement of the algorithm is $O\rb{m+C l^{1+ 2 \alpha}n \ln{n}}$.
\end{enumerate}
\end{lemma}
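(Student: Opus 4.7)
The plan is to verify each of the three claims in turn, with the bulk of the work going into the failure-probability bound of item~(i). I would first identify the algorithm's unique failure mode: in iteration $i$ of the loop on line~\ref{line:RandomWalks-doubling-loop}, the algorithm fails at some vertex $u$ precisely when $X^u_i > t_{i-1}(u) - t_i(u)$, where $X^u_i$ denotes the number of walks in $\bigcup_v U_i(v)$ whose endpoint is $u$. If no such failure ever occurs, then the ``yet unused'' discipline in line~\ref{line:extend-random-walks} guarantees that every extension uses a fresh length-$2^{i-1}$ random walk, so the returned walks in each $W_{\lceil \log l\rceil}(v)$ are valid independent length-$l$ random walks from $v$; thus (i) reduces to union-bounding these failure events.

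For the expectation, an inductive argument shows that every walk in $U_i(v)$ is a length-$2^{i-1}$ random walk from $v$, and that the ``yet unused'' rule leaves all walks in $\bigcup_v W_{i-1}(v)$ pairwise independent. Hence
\[
	\ee{X^u_i} = \sum_v t_i(v)\,\bigl(T^{2^{i-1}}\bigr)_{v,u}.
\]
Using $t_i(v) \le C\tpi(v) n \ln n \cdot k_i + 1$ and iterating~\cref{lemma:approximate-tpi-implies-approximate-Ttpi}\eqref{item:mul-approximation} (so that $\tpi\,T^{j}$ remains an entrywise $(1\pm\alpha)$-approximation of $\pi$ for every $j \ge 0$) gives $\ee{X^u_i} \le (1+\alpha)\,C n \ln n \cdot k_i\,\pi(u) + O(1)$. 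On the budget side, $\tpi(u) \ge (1-\alpha)\pi(u)$ together with the recurrence $k_{i-1} = (2+4\alpha)\,k_i$ from~\eqref{eq:ki-delta} yields
\[
	t_{i-1}(u) - t_i(u) \;\ge\; (1-\alpha)(1+4\alpha)\,C n \ln n \cdot k_i\,\pi(u) - 1.
\]
Since $\alpha \in (0, 1/4]$, the ratio of budget to expectation is at least $1 + \Theta(\alpha)$ once the $O(1)$ rounding terms are absorbed, which is where the assumption $\pi(u) \ge \eps/n$, giving $\ee{X^u_i} \gtrsim C\eps \ln n$, is used. Applying \cref{lemma:chernoff} with $\delta = \Theta(\alpha)$ then yields a per-pair failure probability of the form $n^{-\Omega(C \alpha \eps)}$, and a union bound over the $n$ vertices and $O(\log l)$ iterations (tracking constants carefully) produces the stated $n^{-\frac{C\alpha\eps}{3}+2}e^2$ bound.

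Items (ii) and (iii) then follow by direct accounting. The round complexity in (ii) is inherited from the MPC implementation of~\cref{section:MPC-Implementation-RandomWalks}: the initial generation of $W_0$ and each iteration of the outer doubling loop run in $O(1)$ rounds via \NumberingSublists\ and \Predecessor, for a total of $O(\log l)$. For (iii), the total space is dominated by $\max_i \sum_v t_i(v)\cdot 2^i$, and substituting $t_i(v) \le C\tpi(v) n \ln n\,k_i + 1$ yields a bound of order
\[
	C n \ln n \cdot 2^{\lceil \log l\rceil}\,(1+2\alpha)^{\lceil \log l\rceil - i},
\]
which is maximized at $i = 0$ and, using $(1+2\alpha)^{\log l} \le l^{2\alpha}$, is $O\bigl(C\, l^{1+2\alpha}\, n \ln n\bigr)$. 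Adding $O(m)$ for storing the graph gives exactly the bound claimed in (iii).

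The main obstacle will be threading the approximate stationary distribution $\tpi$ through the inductive concentration argument. One must verify that (a) walks across different starting vertices genuinely remain independent despite the shared pool of $W_{i-1}(\cdot)$ walks used for extensions, (b) the multiplicative $\alpha$-distortion of $\tpi$ does not accumulate across the $2^{i-1}$ hidden steps---this is where~\cref{lemma:approximate-tpi-implies-approximate-Ttpi}\eqref{item:mul-approximation} is crucial---and (c) the slack built into $k_i$ by the base $(2+4\alpha)$ in~\eqref{eq:ki-delta} is just sufficient to absorb both this $\alpha$-distortion and the additive $O(1)$ rounding loss while still leaving a Chernoff gap large enough for the high-probability guarantee.
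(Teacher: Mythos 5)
Your proposal takes essentially the same route as the paper's own proof: identify the failure event $X^u_i > t_{i-1}(u) - t_i(u)$ at each vertex/iteration, compute $\ee{X^u_i}$ via the $2^{i-1}$-step transition acting on $t_i$, propagate the $(1\pm\alpha)$-distortion of $\tpi$ through \cref{lemma:approximate-tpi-implies-approximate-Ttpi}\eqref{item:mul-approximation}, compare to the budget $t_{i-1}(u)-t_i(u)$ using the recurrence $k_{i-1}=(2+4\alpha)k_i$ and $\pi(u)\ge\eps/n$, apply Chernoff, and union bound; items (ii) and (iii) are deferred to the MPC implementation and the observation that $\sum_v t_i(v)\cdot 2^i$ peaks at $i=0$, exactly as in the paper. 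One spot worth flagging in both your sketch and the paper's proof: with $\delta=\Theta(\alpha)<1$ the natural Chernoff form gives an exponent quadratic in $\alpha$, whereas the claimed bound $n^{-C\alpha\eps/3+2}e^2$ is linear in $\alpha$ (the paper invokes the additive/large-$\delta$ form without verifying $\delta\ge 1$); you inherit the same claim, so your argument is aligned with the paper's but this particular constant-tracking step deserves the care you already promised.
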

\begin{proof}
	Property~\eqref{item:RandomWalks-approx-rounds} follows by \cref{lemma:ref-space-1}. Property~\eqref{item:RandomWalks-approx-space} follows from the fact that the space requirement is dominated by the first iteration of \RandomWalks. Hence, as $k_0 = (2+4\alpha)^{\lceil \log l \rceil}$, the space requirement is $O\rb{m+C (2+4\alpha)^{\log l} n \ln{n}} = O\rb{m+C l^{1+ 2 \alpha}n \ln{n}}$.	
	\paragraph{Property~\eqref{item:RandomWalks-approx-success-prob}.}
	At the end of $i$-th step of \cref{alg:random-walks}, $t_i(v)$ denotes the number of random walks of length $2^{i - 1}$ that originate at vertex $v$. At step $i$, each vertex $v$ doubles $t_i(v)$ walks arbitrarily chosen from $W_{i - 1}(v)$.
    Let $X^u_i$ be the number of these random walks ending at vertex $u$. Note that $X^u_i$ is a sum of $0/1$
   random variables $Y^v_{i, j}$, where $Y^v_{i, j}$ equals $1$ iff the $j$-th selected random walk of $W_{i - 1}(v)$ ends at $u$.
   %Since $t_i$ is proportional to $\tilde{\pi}$, i.e., approximate stationary distribution,
Let $T = T(G)$ be the transition matrix of $G$.
	From \cref{lemma:approximate-tpi-implies-approximate-Ttpi}, and as $T \pi = \pi$, we have
	\begin{equation}\label{eq:upper-bound-T^j t_i}
		(T^j t_i)(u) = (T^{j - 1} (T t_i))(u) \le (T^{j - 1} (T (C\tpi k_i n \log n + \allones)))(u) \le (1 + \alpha) C \pi(u) k_i n \log n + 1.
	\end{equation}
	Using this upper-bound, we further derive
	\begin{eqnarray}
	% \nonumber % Remove numbering (before each equation)
	  |\ee{X^u_i} -t_i(u) | &\le & |\ee{X^u_i} -C \tpi(u)k_i n\ln n |+1  \nonumber \\
		& = & |(T^{(2^{i-1})} t_i)(u) -C \tpi(u)k_i n\log n | +1 \nonumber \\
    &\le& |(T^{(2^{i-1})} t_i)(u) -C (1 - \alpha) \pi(u)k_i n \ln n |+1 \nonumber  \\
	  &\stackrel{\text{from  } \eqref{eq:upper-bound-T^j t_i}}{\le} & |(1 + \alpha) C\pi(u)k_i n \ln n - C (1 - \alpha) \pi(u)k_i n \ln n |+2 \nonumber \\
		& \le &  2 \alpha C \pi(u) k_i n \ln n+2. \label{eq:the-final-bound-on-eeX^u_i-t_i(u)}
	\end{eqnarray}

	%\begin{eqnarray}
	%% \nonumber % Remove numbering (before each equation)
	  %|\ee{X^u_i} -t_i(u) | &\le & |\ee{X^u_i} -C \pi(u)k_i n\log n |+1  = |(T^{(2^{i-1})} t_i)(u) -C \pi(u)k_i n\log n | +1\\
    %&\le& |(T^{(2^{i-1})} C\tpi k_i n \log n)(u) -C \pi(u)k_i n \log n |+2  \\
	  %&\le& |C\tpi(u)k_in \log n -C \pi(u)k_i n \log n |+2 \le  \alpha C \pi(u) k_i n \log n+2.
	%\end{eqnarray}by \cref{lemma:approximate-tpi-implies-approximate-Ttpi}.

	As long as $X^u_i \le t_{i - 1}(u)- t_i(u)$ holds for each vertex $u$, a vertex $u$ is able to (1)
    extend all the $X^u_i$ random walks that ended at $u$, and (2) double the length of $t_i(u)$ random walks from $U_{i}(u)$ staring in $u$.

    We have
    \begin{eqnarray*}
         t_{i - 1}(u)- t_i(u) &=& \lceil C \tpi(u) k_{i-1} \cdot n \ln{n} \rceil -\lceil C \tpi(u) k_{i} \cdot n\ln{n} \rceil \\
      &=&  \lceil C \tpi(u) \cdot (2+2\alpha)  k_{i} \cdot n \ln{n} \rceil - \lceil C \tpi(u) k_{i} \cdot n \ln{n} \rceil\\
      &=& \lceil 2 \alpha C \tpi(u) k_i  \cdot n \ln{n} -2\rceil +2 \lceil C \tpi(u) k_{i} \cdot n \ln{n} \rceil  -\lceil C  \tpi(u) k_{i} \cdot n \ln{n} \rceil\\
       &\ge &   4 \alpha C \tpi(u) k_{i} \cdot n \ln{n} - 2 + t_i(u)\\
			 &\ge &   4 \alpha (1 - \alpha) C  \pi(u) k_{i} \cdot n \ln{n} -2 + t_i(u)\\
       &\stackrel{\text{from \eqref{eq:the-final-bound-on-eeX^u_i-t_i(u)}}}\ge&  (2 \alpha - 4 \alpha^2) C \pi(u) k_{i} \cdot n  \ln{n} - 4 + \ee{X^u_i}\\
       &\stackrel{\text{from $\alpha \le 1/4$}}{\ge} &  \alpha C \pi(u) k_{i} \cdot n  \ln{n}  - 4 + \ee{X^u_i}\\
       &\ge& \alpha C \epsilon k_{i} \cdot \ln{n}  - 4 + \ee{X^u_i},
    \end{eqnarray*}
		where the last inequality follows from our assumption that $\pi(u) \ge \eps / n$.
    %because $k_i\ge 1$.
    Now, by Chernoff bound, it holds
	\begin{eqnarray*}
\prob{X^u_i > t_{i - 1}(u)- t_i(u)} &\le& \prob{X^u_i > \ee{X^u_i}+  \alpha C \epsilon k_{i} \cdot \ln{n}-4 } \\
& \le& \exp{\rb{-\frac{ \alpha C\epsilon k_i \cdot\ln{n}-4}{3}}}\\
&\le& \exp{\rb{-\frac{\alpha C\epsilon}{3}\ln{n}}}e^2 = n^{-\frac{\alpha C\epsilon}{3}}e^2.
	\end{eqnarray*}
    By taking union bound over all vertices and all rounds of the algorithm the probability of failure is less than $n^{-\frac{C\alpha \epsilon}{3}+2}e^2$.
\end{proof}

\subsection{Lower Bound}
\label{sec:lower_bound}
In this section we show that our algorithm for sampling random walks in undirected graphs is conditionally optimal under the popular \cycleConj{} conjecture~\cite{BeameKS13,roughgarden2018shuffles}.
The conjecture states that any algorithm in the MPC model which distinguishes between a graph being a cycle of length $n$ from a graph consisting of two cycles of length $n/2$, and uses $O(n^\gamma)$ space per machine and $O(n^{1-\gamma})$ machines requires $\Omega(\log n)$ rounds.

\lowerbound*

The proof is based on the fact that by running $\Theta(\log n)$ random walks of length $\Theta(\log^4 n)$ from a vertex $v$ one discovers $\Theta(\log^2 n)$ nearest vertices to $v$ with high probability. This property follows from the following well known lemma.

\begin{lemma}\label{lem:rw}
Let $t > 1$ be an even integer and $X_1, \ldots, X_t$ be a sequence of i.i.d random variables, such that $P(X_i = 1) = P(X_i = -1) = 1/2$.
Let $X = \sum_{i=1}^t X_i$.
Then, $P(|X| \geq \sqrt{t}/2) = \Omega(1)$.
\end{lemma}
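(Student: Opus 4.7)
\textbf{Proof proposal for Lemma \ref{lem:rw}.}

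The plan is to use the Paley--Zygmund inequality applied to the nonnegative random variable $Y = X^2$. This avoids both Stirling approximations to binomial coefficients and quantitative central limit theorems, and yields an explicit constant.

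First I would compute the first two moments of $Y$. By independence and the fact that each $X_i$ is a Rademacher variable, $\mathbb{E}[X^2] = \sum_i \mathbb{E}[X_i^2] = t$. For the fourth moment, expanding $X^4 = \sum_{i,j,k,l} X_i X_j X_k X_l$ and noting that $\mathbb{E}[X_{i_1}\cdots X_{i_4}]$ vanishes unless every index appears with even multiplicity, only two kinds of terms survive: the $t$ diagonal terms (all four indices equal, each contributing $1$) and the $3t(t-1)$ terms that split into two distinct pairs. Thus $\mathbb{E}[Y^2] = \mathbb{E}[X^4] = t + 3t(t-1) = 3t^2 - 2t \le 3t^2$.

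Next I would apply Paley--Zygmund to $Y$ with threshold $\theta = 1/4$: for any $\theta \in [0,1]$,
\[
\mathbb{P}\bigl(Y > \theta\, \mathbb{E}[Y]\bigr) \;\ge\; (1-\theta)^2 \frac{(\mathbb{E}[Y])^2}{\mathbb{E}[Y^2]}.
\]
Plugging in the bounds above gives
\[
\mathbb{P}\bigl(X^2 > t/4\bigr) \;\ge\; \left(\tfrac{3}{4}\right)^2 \cdot \frac{t^2}{3t^2 - 2t} \;\ge\; \frac{9}{16} \cdot \frac{1}{3} \;=\; \frac{3}{16}.
\]
The event $X^2 > t/4$ is precisely the event $|X| > \sqrt{t}/2$, so $\mathbb{P}(|X| \ge \sqrt{t}/2) \ge 3/16 = \Omega(1)$, as required.

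There is no real obstacle here; the only thing worth double-checking is the fourth moment computation, where the constant $3$ arises because there are $\binom{4}{2}/2 = 3$ ways to pair up four positions. The $t > 1$ and even $t$ hypotheses are used only to ensure $X$ actually takes values at distance $\sqrt{t}/2$ from zero and that the bound $3t^2 - 2t > 0$ is strict, both of which are trivially satisfied.
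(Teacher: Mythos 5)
Your proposal is correct, and it takes a genuinely different route from the paper. The paper argues via the mode of the binomial distribution: since $X$ is symmetric and unimodal with mode $0$, and $\mathbb{P}(X=0) = \binom{t}{t/2} 2^{-t} = O(1/\sqrt{t})$ by Stirling, a union bound over the $O(\sqrt{t})$ integers $i$ with $|i| \le \sqrt{t}/(2c)$ shows that a constant fraction of the mass escapes that window. Your argument instead uses the second-moment (Paley--Zygmund) method on $X^2$: the computations $\mathbb{E}[X^2] = t$ and $\mathbb{E}[X^4] = 3t^2 - 2t$ are exact, requiring no Stirling-type asymptotics, and the resulting constant $3/16$ is explicit and holds uniformly for all even $t \ge 2$, whereas the paper's proof only yields $\Omega(1)$ for $t$ sufficiently large (the lower-order term $3c/\sqrt{t}$ needs to be killed). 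Your proof is arguably cleaner and avoids the small slippage in the paper, where the proof establishes the bound at threshold $\sqrt{t}/(2c)$ rather than literally $\sqrt{t}/2$ as stated; your argument hits the stated threshold directly. Both arguments are standard anticoncentration tools and either is adequate for the application downstream, since the usage in the lower-bound theorem only needs a random walk to move $\Omega(\sqrt{t})$ with constant probability.
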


\begin{proof}
$(X-t)/2$ follows a binomial distribution with $p=1/2$.
Hence, the mode of the distribution of $X$ is $0$, that is for every integer $i$, $P(X = 0) \geq P(X = i)$.
Moreover, $P(X = 0) = \binom{t}{t/2}(1/2)^t \leq c/\sqrt{t}$, for some constant $c$ where the inequality follows from Stirling's approximation.
	Hence, $P(|X| \geq \sqrt{t}/(2c)) \geq 1-\sum_{i=-\lceil\sqrt{t}/(2c)\rceil}^{\lceil\sqrt{t}/(2c)\rceil} P(X = i) \geq 1 - \sum_{i=-\lceil\sqrt{t}/(2c)\rceil}^{\lceil\sqrt{t}/(2c)\rceil} P(X = 0) \geq 1 - (\sqrt{t}/(2c)+3) c / \sqrt{t} = 1 - 1/2 + 3c/\sqrt{t} = \Omega(1).$
\end{proof}

\begin{proof}[Proof of \cref{thm:lower-bound}]	Let us assume that there exists an algorithm which samples $\Theta(\log n)$ independent random walks of length $\Theta(\log^4 n)$ starting at each vertex and takes $f(n) = o(\log \log n)$ rounds.
Note that we allow the algorithm to use total space that is $\poly \log n$ factor more than the size of the graph (otherwise storing the output would not be possible).
We will use it to solve the problem from the problem from the \cycleConj{} conjecture more efficiently.

We begin by dealing with the following technical difficulty.
The \cycleConj{} conjecture considers a setting, in which the total space is linear in the graph size.
Thus, we first show an algorithm that in $O(\poly \log \log n)$ rounds shrinks the length of each cycle by a constant factor with probability $1-n^{-d}$, where the constant $d$ can be made arbitrarily large.
The algorithm samples a random bit $X_v$ for each vertex $v$.
Then, for each tree vertices $a, b, c$ which are adjacent on the cycle (or cycles), if $X_a = X_c = 0$ and $X_b = 1$, we connect $a$ and $c$ with an edge and remove $b$ from the graph.
It is easy to see that this indeed shrinks each cycle by a constant factor with probability at least $1-n^{-d}$, as long as the length of each cycle is $\omega(d \log n)$.
By running this algorithm $O(\log \log n)$ times, we can shrink each cycle by a factor of $\Omega(\poly \log n)$.
We obtain a graph having $n' = O(n / \poly \log n)$ vertices and edges, so from now on we can afford to use an algorithm whose total space usage is $O(n' \poly \log n')$, since $n' \poly \log n' = O(n)$.

	We now show how to use random walks to shrink the cycles further, namely shrink them by a factor of $\Omega(\log n)$ in $f(n)$ rounds.
Let us sample each vertex independently with probability $1 / \log n$.
With probability at least $1-n^{-d}$ among every $\Omega(\log^2 n)$ consecutive vertices along each cycle at least one vertex is sampled.
Our goal is to use random walks to contract the graph to the set of sampled vertices.
To that end, we run $\Theta(\log n)$ random walks of length $\Theta(\log^4 n)$ from each vertex and show that with high probability for each pair of consecutive sampled vertices on the cycle, there exists at least one random walks which visits both of them.
Consider a sampled vertex.
We show that with high probability, random walks starting at this vertex visits the two neighboring sampled vertices along the cycle.
Indeed, since we have sample $\Theta(\log n)$ random walks per vertex, it suffices to show that each random walk visits one of the neighboring vertices with constant probability.
	This in turn follows from the fact that with constant probability, a random walk on a line of length $\Omega(t)$ ends in some vertex at distance $\sqrt{t}$ with constant probability (see \cref{lem:rw}).

As a result, in $f(n)$ rounds we are able to contract each cycle to the set of sampled vertices, whose size is clearly $O(n / \log n)$ with high probability.
By repeating this step $\Theta(\log n / \log \log n)$ times, we can reduce the total number of vertices to $o(n^\gamma)$, at which point the remaining graph can be sent to one machine to check whether it is a cycle or two cycles.
The total number of rounds is $\Theta(\log n / \log \log n) f(n) = o(\log n)$.

\end{proof}

\section{PageRank}\label{sec:PageRank}

%\subsection{Using Random Walks for PageRank}
%\label{sec:random-walks-for-directed-PageRank}

In this section we show MPC algorithms for computing PageRank both in undirected and directed graphs, that is we prove \cref{theorem:PageRank-undirected} and \cref{theorem:PageRank-directed}.
As an easy corollary we obtain an algorithm for sampling random walks in directed graphs (\cref{theorem:directed-walks}).

We are going to use the following most basic algorithm for estimating PageRank using random walks.
Let $0\le \epsilon \le 1$ be a parameter. We are going to sample random walks form a stochastic graph
\begin{equation}\label{eq:definition-T-eps}
	G_\eps = (1-\eps)G + \frac{\epsilon}{n} J,
\end{equation}
where $J$ is a complete directed graph on the vertex set $V(G)$ (containing a self loop in every vertex).
In other words, with probability $\epsilon$ we jump to a random vertex, and
with probability $1-\epsilon$ we walk according to edges of $G$.
\begin{definition}[Jump transition]\label{def:transitions}
Let $G_\eps$ be the stochastic graph defined by \eqref{eq:definition-T-eps}. Then, \emph{jump transition} refers to the transition performed within the graph $J$.
\end{definition}

Consider the transition matrix $T_\eps = T(G_\eps)$.
The stationary distribution $\pi_\eps$ of $T_\eps$ satisfies $T_\eps \pi_\eps = \pi_\eps$, which implies the following equation
\[
\rb{I-(1-\epsilon)T(G)}\pi_\epsilon=\frac{\epsilon}{n}\vec{1}.
\]

The crucial property of $\pi_\epsilon$ is that the probabilities of ending
in a given vertex do not decrease much relatively to decrease of $\epsilon$, as shown
by the following lemma.

\begin{lemma}
\label{lemma:pi-smooth}
For any $0<\delta\le 1$  we have
\[
\pi_{\epsilon\cdot \delta} \ge\delta\cdot \pi_\epsilon,
\]
where inequality is taken over all coordinates.
\end{lemma}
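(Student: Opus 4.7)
The plan is to write $\pi_\epsilon$ explicitly via a Neumann series expansion and then compare coordinatewise against $\pi_{\epsilon \delta}$ term by term. Starting from the defining identity
\[
(I - (1-\epsilon) T(G)) \pi_\epsilon = \frac{\epsilon}{n} \vec{1},
\]
I would first note that since $T(G)$ is a stochastic matrix and $0 < \epsilon \le 1$, the matrix $I - (1-\epsilon) T(G)$ is invertible with a convergent geometric (Neumann) series expansion
\[
\bigl(I - (1-\epsilon) T(G)\bigr)^{-1} = \sum_{k \ge 0} (1-\epsilon)^k T(G)^k.
\]
Hence
\[
\pi_\epsilon = \frac{\epsilon}{n} \sum_{k \ge 0} (1-\epsilon)^k T(G)^k \vec{1}.
\]

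The key observation is now transparent. Substituting $\epsilon \delta$ in place of $\epsilon$ gives
\[
\pi_{\epsilon \delta} = \frac{\epsilon \delta}{n} \sum_{k \ge 0} (1-\epsilon \delta)^k T(G)^k \vec{1},
\qquad
\delta \, \pi_\epsilon = \frac{\epsilon \delta}{n} \sum_{k \ge 0} (1-\epsilon)^k T(G)^k \vec{1}.
\]
Since $0 < \delta \le 1$, we have $1 - \epsilon \delta \ge 1 - \epsilon \ge 0$, so $(1 - \epsilon \delta)^k \ge (1 - \epsilon)^k$ for every $k \ge 0$. Moreover, every entry of $T(G)^k \vec{1}$ is nonnegative (being the probability that a $k$-step walk of $T(G)$ lands somewhere, or more simply a nonnegative vector obtained by applying a nonnegative matrix to $\vec{1}$). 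Therefore the series for $\pi_{\epsilon \delta}$ dominates the series for $\delta \pi_\epsilon$ term by term, and the claimed coordinatewise inequality $\pi_{\epsilon \delta} \ge \delta \pi_\epsilon$ follows.

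There is essentially no obstacle here: the only subtlety is justifying that the Neumann series converges and represents the unique stationary distribution. This is standard, since $\|(1-\epsilon) T(G)\|_1 \le 1 - \epsilon < 1$ (when $\epsilon > 0$), giving convergence in the appropriate operator norm; and the resulting $\pi_\epsilon$ is a probability distribution because $T_\epsilon$ is irreducible and aperiodic (it has the random-jump component), hence its stationary distribution is unique. So the proof reduces to the one-line term-by-term comparison above.
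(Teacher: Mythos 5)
Your proof is correct and follows essentially the same route as the paper: both expand $\pi_\epsilon$ via the Neumann/Taylor series $\pi_\epsilon = \frac{\epsilon}{n}\sum_{k\ge 0}(1-\epsilon)^k T(G)^k \vec{1}$ and then compare the two series term by term using $(1-\epsilon\delta)^k \ge (1-\epsilon)^k$ together with nonnegativity of $T(G)^k\vec{1}$. The only difference is that you spell out the convergence justification, which the paper leaves implicit.
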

This result follows from the Taylor expansion of $\pi_\epsilon$
\begin{equation}\label{eq:taylor-pagerank}
	\pi_\epsilon = \frac{\epsilon}{n} \sum_{i=0}^{\infty} ((1-\epsilon) T(G))^i\vec{1}.
\end{equation}
\begin{proof}
Using \cref{eq:taylor-pagerank} we get
\begin{eqnarray}
% \nonumber % Remove numbering (before each equation)
	\pi_{\epsilon\cdot \delta} &=& \frac{\epsilon \cdot \delta}{n} \sum_{i=0}^{\infty} ((1-\epsilon\delta) T(G))^i  \vec{1}=\delta\rb{\frac{\epsilon}{n} \sum_{i=0}^{\infty} (1-\epsilon\delta)^i T(G)^i\vec{1}} \\
	&\ge& \delta \rb{\frac{\epsilon}{n} \sum_{i=0}^{\infty} (1-\epsilon)^i T(G)^i\vec{1} } =\delta \rb{\frac{\epsilon}{n} \sum_{i=0}^{\infty} ((1-\epsilon) T(G))^i\vec{1}} =\delta\cdot \pi_\epsilon.
\end{eqnarray}
\end{proof}
The next follows from the observations that $\pi_1(v)=\frac{1}{n}$.
\begin{corollary}
\label{cor:lower-bound-pagerank}
For any $0< \epsilon \le 1$ and any $v\in V$ we have
\[
\pi_{\epsilon}(v)\ge \frac{\epsilon}{n}.
\]
\end{corollary}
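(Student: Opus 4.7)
The plan is to derive the corollary directly from \cref{lemma:pi-smooth} by instantiating it at $\epsilon = 1$. The key observation is that when $\epsilon = 1$, the stochastic graph $G_1$ defined by \eqref{eq:definition-T-eps} becomes simply $\frac{1}{n} J$, since the weight on the $G$-part of the convex combination is zero. The transition matrix $T_1$ therefore corresponds to jumping to a uniformly random vertex at every step, regardless of the current vertex.

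First, I would verify that the uniform distribution $u(v) = 1/n$ is the stationary distribution of $T_1$. Since every row of $T_1$ has the same entry $1/n$ in every column, we have $(T_1 u)(v) = \sum_{w} (1/n) u(w) = 1/n = u(v)$, so indeed $\pi_1 = u$ and $\pi_1(v) = 1/n$ for all $v \in V$. This is the content of the sentence preceding the corollary.

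Next, I would apply \cref{lemma:pi-smooth} with the parameters relabeled: take the lemma's $\epsilon$ to be $1$ and its $\delta$ to be the corollary's $\epsilon \in (0,1]$. The lemma then yields the coordinatewise inequality
\[
\pi_\epsilon = \pi_{1 \cdot \epsilon} \ge \epsilon \cdot \pi_1.
\]
Substituting $\pi_1(v) = 1/n$ on the right-hand side gives $\pi_\epsilon(v) \ge \epsilon/n$ for every vertex $v$, which is exactly the claim.

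There is no real obstacle here: the entire content is packaged inside \cref{lemma:pi-smooth}, and the only nontrivial ingredient is recognizing that setting $\epsilon = 1$ in the PageRank Markov chain makes the walk a pure uniform-jump chain whose stationary distribution is trivially uniform. The corollary will then be a two-line consequence.
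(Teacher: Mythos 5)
Your proof is correct and matches the paper's reasoning exactly: the paper derives the corollary from \cref{lemma:pi-smooth} via the single observation that $\pi_1(v) = 1/n$, precisely the instantiation you describe. You have simply spelled out the two short steps that the paper leaves implicit.
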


The Taylor expansion \cref{eq:taylor-pagerank} suggests the following algorithm for estimating PageRank (see e.g., \cite{Breyer02markovianpage}).

\begin{algorithm}[H]
\begin{algorithmic}[1]
\Function{\StationaryDistribution}{$W, \eps$}
\ForAll{$v\in V$ in parallel\label{line:StationaryDistribution-keep-first-K}}
	\State{Remove from $W$ all but $K = \left \lceil \frac{9\ln n }{\eps \alpha^2} \right \rceil$ walks starting in $v$. If $W$ does not contain enough walks, then ``fail''.}
\EndFor
\State{Truncate each walk in $W$ just before the first jump transition (see \cref{def:transitions}). \label{line:StationaryDistribution-truncate}}
%\State{\stodo{Should we specify that $\sum_v n_v = Kn$, or that does not matter?}}
\ForAll{$v\in V$ in parallel\label{line:StationaryDistribution-compute-tpi}}
\State{Let $n_v$ be the number of the walks from $W$ ending in $v$.}
\State{$\tilde{\pi}(v) \gets \frac{n_v}{Kn}$.}
\EndFor
\State{Return $\tilde{\pi}$}
\EndFunction
\end{algorithmic}
\caption{An algorithm for approximating the PageRank with damping factor $1 - \eps$ by using a set of random walks $W$.}
\label{alg:stationary-distribution}
\end{algorithm}

\begin{algorithm}[H]
\begin{algorithmic}[1]
	\State{Sample a set $W$ of $K = \left \lceil \frac{9\ln n }{\eps \alpha^2} \right \rceil$ random walks starting from each vertex of $G_\epsilon$ of length $l= \left \lceil \frac{9 \ln n}{\epsilon}\right \rceil$. \label{line:sample-Kn-walks}}
	\State Return \StationaryDistribution($W, \eps$)
\end{algorithmic}
\caption{An algorithm for approximating PageRank using random walks.}
\label{alg:random-walks-directed-main-2}
\end{algorithm}

%\begin{algorithm}[H]
%\begin{algorithmic}[1]
%\State{$W\gets \emptyset$}
%\ForAll{$v\in V$}
%\State{Starting in $v$ sample $K$ walks according to $T_\epsilon$ of length $l=2 \frac{\ln (nK)}{\epsilon}$ and add them to $W$.}
%\EndFor
%\State Run \textsc{StationaryDistribution}($W$)
%\end{algorithmic}
%\caption{An algorithm for approximating PageRank using random walks.}
%\label{alg:random-walks-directed-main-3}
%\end{algorithm}

\begin{lemma}
\label{lemma:no-jump}
Let $\eps \leq 1$ and assume we mark each edge of a walk of length  $l=\lceil \frac{9 \ln n}{\epsilon}\rceil$ independently
with probability $\eps$.  The probability that no edge is marked is
less than $\frac{1}{n^9}$.
\end{lemma}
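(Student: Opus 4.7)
The plan is to compute the no-marking probability directly and then apply the standard bound $1 - x \le e^{-x}$. Since each of the $l$ edges is marked independently with probability $\eps$, the event that no edge is marked has probability exactly $(1-\eps)^l$. Using $1 - \eps \le e^{-\eps}$ (valid for all real $\eps$, and in particular for $\eps \le 1$), we get
\[
(1-\eps)^{l} \;\le\; e^{-\eps l} \;\le\; e^{-\eps \cdot 9 \ln n / \eps} \;=\; e^{-9 \ln n} \;=\; n^{-9},
\]
where the second inequality uses the definition $l = \lceil 9 \ln n / \eps \rceil \ge 9 \ln n / \eps$.

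There is essentially no obstacle: the argument is a one-line calculation using a standard inequality, and the ceiling in the definition of $l$ only helps (it can make $l$ larger but never smaller than $9 \ln n / \eps$). The only thing to be slightly careful about is that the bound $1 - \eps \le e^{-\eps}$ is used on the base, which is harmless since $\eps \in (0, 1]$ (for $\eps = 0$ the claim is vacuous in context, and for $\eps$ near $1$ the inequality is still valid).
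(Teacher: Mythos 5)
Your proof is correct and takes essentially the same approach as the paper's: both compute the exact probability $(1-\eps)^l$, then bound it via $1-\eps\le e^{-\eps}$ together with $l\ge 9\ln n/\eps$. The only difference is the order in which these two facts are applied, which is immaterial.
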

\begin{proof}
$(1-\epsilon)^l \le (1-\epsilon)^{9 \frac{\ln n}{\epsilon}}\le e^{-9 \ln n} = \frac{1}{n^9}$.
\end{proof}

Let us prove the approximation ratio obtained by \cref{alg:random-walks-directed-main-2}.
Note that the lower bounds for $\alpha$ and $\epsilon$ are not limiting, since the interesting values for both parameters are constant.

\begin{lemma}
\label{lemma:sampling-pagerank}
Let $\alpha \in [1 / n, 1 / 4]$, $\eps \in [1/n, 1]$, and $0<\alpha<1$.
Denote by $\tilde{\pi}$ the output of \cref{alg:random-walks-directed-main-2}.
Then, $|\tilde{\pi}_{\epsilon}(v)-\pi_{\epsilon}(v)|\le \alpha \pi_{\epsilon}(v)$ for all $v\in V$
(i.e., $\tpi$ is $(1+\alpha)$-approximation of $\pi_\epsilon$)
with probability at least $1- \frac{4}{n^2}$.
\end{lemma}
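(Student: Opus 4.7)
The plan is to decompose the error $|\tilde\pi(v)-\pi_\epsilon(v)|$ into a \emph{truncation} component, coming from the fact that we only simulate walks of length at most $l$, and a \emph{sampling} component, coming from estimating a distribution by empirical frequencies. Both will turn out to be controlled by $\alpha \pi_\epsilon(v)/2$ with high probability.

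First I would identify the distribution of the endpoint $X$ of a single truncated walk starting from a uniformly random vertex. At each step of the walk in $G_\epsilon$ a jump transition occurs independently with probability $\epsilon$, so the truncated walk has length exactly $k$ with probability $(1-\epsilon)^k\epsilon$ for $0\le k< l$ and length exactly $l$ with the remaining probability $(1-\epsilon)^l$. Writing $q_k(v)=\tfrac{1}{n}(T(G)^k\vec 1)_v$ for the probability that a $k$-step walk in $G$ from a uniform start ends at $v$, this gives
\[
P(X=v)=\epsilon\sum_{k=0}^{l-1}(1-\epsilon)^k q_k(v)+(1-\epsilon)^l q_l(v).
\]
Comparing with the Taylor expansion \eqref{eq:taylor-pagerank}, and using $q_k(v)\in[0,1]$, I obtain
\[
|P(X=v)-\pi_\epsilon(v)|\le (1-\epsilon)^l+\epsilon\sum_{i\ge l}(1-\epsilon)^i=2(1-\epsilon)^l\le 2/n^9
\]
by \cref{lemma:no-jump}. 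Since \cref{cor:lower-bound-pagerank} gives $\pi_\epsilon(v)\ge\epsilon/n\ge 1/n^2$ and $\alpha\ge 1/n$, this truncation error is at most $(\alpha/2)\pi_\epsilon(v)$, i.e.\ it is negligible.

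Next I would bound the sampling error. The algorithm samples $Kn$ independent walks (with $K=\lceil 9\ln n/(\epsilon\alpha^2)\rceil$), so $n_v$ is a sum of $Kn$ independent $\{0,1\}$ indicators and $E[n_v]=Kn\cdot P(X=v)$. Using the truncation bound and $\pi_\epsilon(v)\ge\epsilon/n$ from \cref{cor:lower-bound-pagerank},
\[
E[n_v]\ge (1-2/n^8)\,Kn\pi_\epsilon(v)\ge (1-o(1))\cdot K\epsilon\ge (1-o(1))\cdot\frac{9\ln n}{\alpha^2}.
\]
Applying the multiplicative Chernoff bound (\cref{lemma:chernoff}\eqref{item:delta-at-most-1}) with deviation $\alpha/2$ (leaving the other half of the budget for the truncation error) gives
\[
\Pr\!\left[\,|n_v-E[n_v]|\ge \tfrac{\alpha}{2}E[n_v]\right]\le 2\exp\!\left(-\frac{(\alpha/2)^2}{3}\cdot\frac{(1-o(1))\cdot 9\ln n}{\alpha^2}\right)\le \frac{4}{n^3}.
\]

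Finally, a union bound over the $n$ vertices shows that the ``good event'' $|\tilde\pi(v)-P(X=v)|\le (\alpha/2)\,P(X=v)$ holds for every $v\in V$ with probability at least $1-4/n^2$. Combined with the truncation bound via the triangle inequality and $P(X=v)=(1\pm o(1))\pi_\epsilon(v)$, this yields $|\tilde\pi(v)-\pi_\epsilon(v)|\le\alpha\pi_\epsilon(v)$ for all $v$ simultaneously, proving the lemma. The main obstacle is not conceptual but bookkeeping: one has to keep the two error contributions separate and make sure the constant $9$ in the choice of $K$ is large enough that, even after splitting the $\alpha$-budget in half and accounting for the $(1-o(1))$ slack from the truncation shift between $P(X=v)$ and $\pi_\epsilon(v)$, Chernoff still gives a $1/n^3$ per-vertex tail.
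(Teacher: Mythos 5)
Your high-level plan mirrors the paper's: isolate a \emph{truncation} contribution coming from the finite length $l$, and a \emph{sampling} contribution handled by Chernoff, then take a union bound. Your treatment of the truncation part is actually cleaner and more explicit than the paper's (the paper only argues that \emph{all} walks jump with high probability, and then quietly treats $\ee{n_v}$ as if it were $nK\pi_\epsilon(v)$; you instead compute the exact endpoint law of a truncated walk and bound its distance to $\pi_\epsilon$). So far so good.

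However, there is a genuine numerical error in the Chernoff step, and it is not the kind that can be waved away as a constant-tracking detail. You split the budget evenly: $\alpha/2$ for truncation, $\alpha/2$ for Chernoff. Apply \cref{lemma:chernoff}\eqref{item:delta-at-most-1} with $\delta = \alpha/2$ and $\mu = \ee{n_v} \ge (1-o(1))\,K\epsilon \ge (1-o(1))\,\tfrac{9\ln n}{\alpha^2}$. Then
\[
\frac{\delta^2\mu}{3} \;\ge\; \frac{\alpha^2}{4}\cdot\frac{(1-o(1))\cdot 9\ln n}{3\alpha^2} \;=\; (1-o(1))\cdot\frac{3\ln n}{4},
\]
so the per-vertex tail is only $2\,n^{-3/4(1-o(1))}$, not the claimed $4/n^3$. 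A union bound over $n$ vertices then gives roughly $n^{1/4}$, which is vacuous. Your closing parenthetical --- that the constant $9$ in $K$ ``is large enough that, even after splitting the $\alpha$-budget in half, Chernoff still gives a $1/n^3$ per-vertex tail'' --- is exactly the claim that fails: for the $\alpha/2$ split to yield $n^{-3}$ you would need $K \ge \lceil 36\ln n/(\epsilon\alpha^2)\rceil$, four times the actual value. The paper avoids this by applying Chernoff with the \emph{full} deviation $\delta=\alpha$, which makes the exponent $\alpha^2\cdot(9\ln n/\alpha^2)/3 = 3\ln n$ and the tail $2/n^3$.

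The repair is easy and worth spelling out: the truncation error is additive and minuscule, at most $2/n^9$, while $\alpha\,\pi_\epsilon(v) \ge \alpha\epsilon/n \ge n^{-3}$. So it does not deserve a full $\alpha/2$ multiplicative share; give it only an $O(n^{-4})$ additive slice, apply Chernoff with $\delta = \alpha - O(n^{-4})$ (which is $(1-o(1))\alpha$), and the exponent stays $(3-o(1))\ln n$, yielding the $2/n^3$ per-vertex tail and hence $4/n^2$ overall. (A secondary, smaller issue: even were the per-vertex tail correct, chaining the two $\alpha/2$ bounds as you do gives $|\tpi(v)-\pi_\epsilon(v)| \le \tfrac{\alpha}{2}\pi_\epsilon(v) + \tfrac{\alpha}{2}(1+\tfrac{\alpha}{2})\pi_\epsilon(v) = (\alpha + \alpha^2/4)\pi_\epsilon(v)$, slightly more than $\alpha\pi_\epsilon(v)$; this is another reason to keep the truncation budget additive and tiny rather than a multiplicative half.)
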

%\stodo{I suggest moving the following two proofs to appendix.}
\begin{proof}
The probability that some walk is not truncated by \cref{lemma:no-jump} is $\frac{1}{n^9}$, as jump transition happens with probability $\eps$. Hence, by union bound over $nK \le 10n^5$ walks some walk is not truncated with probability at most $\frac{10}{n^5}\le \frac{2}{n^2}$.

Observe that $n_{v}$ is a sum of $nK$ $0,1$-random variables. Moreover, its expectation by \cref{cor:lower-bound-pagerank} is $\ee{n_v}\ge\frac{\epsilon}{n}\cdot nK =\epsilon K$, so by Chernoff bound (see \cref{lemma:chernoff}\eqref{item:delta-at-most-1})
\begin{eqnarray*}
% \nonumber % Remove numbering (before each equation)
  \Pr\left[\left|n_v-\ee{n_v}\right| \le \alpha \ee{n_v}\right] &\le&  2 \exp\rb{-\frac{\alpha^2 \ee{n_v}}{3}} \le 2 \exp\rb{-\frac{\alpha^2\epsilon K}{3}} \\
   &\le& 2 \exp\rb{-3\ln(n)}=\frac{2}{n^3}.
\end{eqnarray*}
Again, by union bound over all $n$ vertices some estimate is incorrect with probability at most $\frac{2}{n^2}$. Hence, the total probability of failure is $\frac{4}{n^2}$.
\end{proof}
\StationaryDistribution also has an efficient MPC implementation.
\begin{lemma}\label{lemma:StationaryDistribution-rounds}
	\StationaryDistribution (\cref{alg:stationary-distribution}) can be implemented in $O(1)$ MPC rounds.
\end{lemma}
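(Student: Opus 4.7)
The plan is to implement each of the three parts of \StationaryDistribution using the standard $O(1)$-round MPC primitives already invoked earlier in the paper, in particular sorting and the \NumberingSublists subroutine described in \cref{section:MPC-Implementation-RandomWalks}.

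First, for \cref{line:StationaryDistribution-keep-first-K}, I would apply \NumberingSublists to the multiset $W$, treating each walk as a tuple whose first coordinate is its starting vertex. This assigns to every walk $w$ starting at $v$ a unique number $j(w) \in \{1, \ldots, |W(v)|\}$ in $O(1)$ rounds. A walk is retained iff $j(w) \le K$ (this is a purely local test once the numbering is available). To detect the failure condition, it suffices to compute, for each $v$, $|W(v)|$ via a prefix-sum/aggregation on the same sorted list (since after sorting by starting vertex, all entries with the same first coordinate are contiguous, the count of each group can be broadcast to its members in $O(1)$ rounds using the \Predecessor primitive and standard segment-sum techniques). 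If any $|W(v)| < K$, the algorithm emits \fail.

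Second, for the truncation step on \cref{line:StationaryDistribution-truncate}, I would use the hypothesis that every walk has length $l = o(S)$ and hence fits on a single machine. Therefore, locating the first jump transition along a walk and discarding the suffix starting at that transition is a purely local computation and takes no additional communication rounds. (In the random walks this information is already available because each step was produced from either the $(1-\eps)G$ part or the $\tfrac{\eps}{n}J$ part of $G_\eps$, and we can mark jump transitions when the walks are generated.)

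Third, for the computation of $\tilde{\pi}$ in \cref{line:StationaryDistribution-compute-tpi}, each (now truncated) walk contributes a single endpoint $u$. I would form tuples $(u, 1)$ for every surviving walk, sort them by $u$, and aggregate with prefix sums so that every contiguous block corresponding to the same $u$ knows its size $n_u$; finally $\tilde{\pi}(u) = n_u / (Kn)$ is computed locally. Sorting and segment-sums take $O(1)$ rounds in the MPC model of \cite{goodrich2011sorting}, so the total round complexity is $O(1)$. I do not foresee a real obstacle here: the only subtlety is the bookkeeping required so that each machine holding a walk knows whether to retain it (first step) and can locally identify the first jump transition (second step), both of which are handled by a single sort and a single invocation of \NumberingSublists/\Predecessor.
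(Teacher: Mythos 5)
Your implementation is correct and follows essentially the same approach as the paper: invoke \NumberingSublists\ by starting vertex to implement \cref{line:StationaryDistribution-keep-first-K}, observe that truncation is local since each walk fits on one machine, and aggregate endpoints to compute $n_v$. The only cosmetic difference is that you compute $n_v$ via a sort and segment-sum, whereas the paper re-invokes \NumberingSublists\ by ending vertex and takes the maximum assigned number --- both are standard $O(1)$-round primitives.
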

\begin{proof}
	To implement \cref{line:StationaryDistribution-keep-first-K}, we invoke \NumberingSublists on the walks from $W$; the numbering of walks is performed with respect to their starting vertices. (See \cref{section:MPC-Implementation-RandomWalks} for details about \NumberingSublists.) Those walks that get number larger than $K$ are removed. This can be implemented in $O(1)$ rounds.
	
	\cref{line:StationaryDistribution-truncate} is performed without additional communication.
	
	To simulate the loop on \cref{line:StationaryDistribution-compute-tpi}, on the remaining walks we again invoke \NumberingSublists, but this time we perform numbering of the walks with respect to their ending vertices. For each vertex $v$, we calculate the maximum among those numbers. That maximum value equals $n_v$. These operations can be performed in $O(1)$ rounds.
\end{proof}

\prundirected*

\begin{proof}
We set $K=\left \lceil \frac{9 \ln n}{\epsilon \delta^2} \right \rceil$, $l= \left \lceil \frac{9 \ln n}{\eps} \right \rceil $ and  $C=\frac{6}{\epsilon\alpha^2}$.
We first execute \cref{alg:random-walks} with $C$ and $l$. Next, we give sampled walks to \cref{alg:random-walks-directed-main-2} with $K$ and $l$.
%We use the following upper-bound to simplify the bounds in this proof
%\begin{eqnarray}\label{eq:upper-bound-l}
%	l & = & \left \lceil 3 \frac{\ln (nK)}{\epsilon} \right \rceil \le \left \lceil 3 \frac{\ln (nK)}{\epsilon} \right \rceil +1 \le 2 \frac{\ln (n \frac{9 \ln (n)}{\epsilon\cdot \alpha^2})}{\epsilon} +1\\
%	& = & 2\frac{\ln (n) + \ln(9\ln n) - \ln (\epsilon\alpha^2)}{\epsilon} +1 \le \frac{3\ln(n)}{\epsilon}-\frac{2\ln(\epsilon\alpha^2)}{\epsilon} +1 \\
%	& \le & \frac{3\ln(n)}{\epsilon}-\frac{2\ln(1/n^3)}{\epsilon} +1 \\
%	& = & \frac{9 \ln n}{\epsilon} +1 .
%\end{eqnarray}

\paragraph{Space requirement.}
By \cref{lemma:ref-space-1} we require $O(C m l \log l\log n) = O\rb{\frac{m\log^2n \log \log n}{\epsilon^2\alpha^2}}$ total space.

\paragraph{Round complexity.}
By \cref{lemma:ref-space-1,lemma:StationaryDistribution-rounds} we require $O(\log l)= O(\log \log n + \log 1/\eps)$ rounds.

\paragraph{Success probability.}
On one hand, by \cref{theorem:llogl} the algorithm
fails with probability at most $n^{-\frac{C}{3}+3} = n^{-\frac{2}{\epsilon \alpha^2}+3}\le n^{-32+3} =n^{-29}$.
On the other hand, by \cref{lemma:sampling-pagerank} we obtain $(1+\alpha)$-approximation of $\pi$ with probability at least $1-\frac{4}{n^2}$.
Hence, the final success probability is at least $1-\frac{5}{n^2}$.
\end{proof}

\subsection{Directed Balanced Graphs}
\label{sec:our-algorithm-directed-PageRank}
The first step towards our algorithm for directed graphs is considering balanced directed graphs.
A directed graph $G$ is called {\em $c$-balanced} when for all $v\in V$ we have $c \deg^+(v)\ge \deg(v)$.
In particular in $c$-balanced graphs there are no dangling vertices, i.e., vertices that do not have any out edge.
The idea is to first consider $\undir{G}$ -- the undirected closure of $G$. In $\undir{G}$ we can compute
long walks fast and then gradually move towards directed graph $G$. The $c$-balanced
property allows as to argue that each edge from a random walk in $\undir{G}$ is with probability
at least $1/c$ directed according to $G$, enabling us to prove the following result.
\begin{restatable}{theorem}{PageRankcbalanced}\label{theorem:PageRank-directed-c-balanced}
let $G$ be a $c$-balanced graph.
Let $\alpha \in [1 / n, 1 / 4]$,  $\eps \in [\log n/o(S), 1]$, and $\delta \in (0,1]$ such that $\delta^{-1} \in \bbN$.
There exists an MPC algorithm that with probability at least $1-\frac{12}{\delta n^2}$ computes $(1+\alpha)$-approximate $\eps$-PageRank vector of $G$ in $O\rb{\delta^{-1} (\log \log n + \log 1/\eps)}$ rounds using $O\rb{\frac{m\ln^2 n \ln \ln n}{\epsilon^2 \alpha^2}+n^{18 \frac{c\delta}{\eps}} \frac{n \ln^{2.5} n}{\epsilon^{3.5} \cdot \alpha^2}}$ total space and strongly sublinear space per machine.
\end{restatable}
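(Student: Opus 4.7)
The plan is to interpolate between the undirected closure $\undir{G}$, for which \cref{theorem:PageRank-undirected} already gives a fast algorithm, and the directed graph $G$ itself, using a sequence of $k := \delta^{-1}+1$ intermediate graphs $G_i := (1-(i-1)\delta)\undir{G} + (i-1)\delta G$ for $i=1,\ldots,k$, so that $G_1 = \undir{G}$ and $G_k = G$. The transition matrices $T_{G_i}$ vary smoothly with $i$. By induction on $i$ I will maintain the invariant that we have in hand both a $(1+\alpha)$-approximation $\tilde{\pi}_i$ of the $\eps$-PageRank stationary distribution of $G_i$, and a set of i.i.d.\ $\eps$-PageRank walks in $G_i$ of length $l = \lceil 9\ln n/\eps\rceil$, starting from each vertex $v$ in a number proportional to $\tilde{\pi}_i(v)$ and large enough to survive the next rejection sampling step.

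The heart of the inductive step is converting walks from $G_i$'s PageRank chain into walks from $G_{i+1}$'s chain by rejection sampling. For any single transition from $u$, if the step is a jump (probability $\eps$) the ratio of transition probabilities is exactly $1$; if the step is an edge $uv \in G$, the $c$-balanced property $\deg_G(u) \le c\deg^+_G(u)$ yields $P_{G_{i+1}}(u\to v)/P_{G_i}(u\to v) \le 1+O(c\delta)$; and if $uv \in \undir{G}\setminus G$ the ratio is at most $1$. Hence for a walk $w$ of length $l$ we get
\[
    \frac{P_{G_{i+1}}(w)}{P_{G_i}(w)} \;\le\; (1+O(c\delta))^l \;\le\; \exp(O(c\delta l)) \;=\; n^{O(c\delta/\eps)}.
\]
Standard rejection sampling with the scaling factor $M := n^{O(c\delta/\eps)}$ therefore accepts each walk with probability at least $1/M$, and the surviving walks are truly i.i.d.\ samples from $G_{i+1}$'s chain. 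Oversampling by a factor of $M$ (which accounts for the $n^{18c\delta/\eps}$ term in the space bound) and a Chernoff concentration guarantee that enough walks survive.

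From the surviving walks, \StationaryDistribution (\cref{alg:stationary-distribution}) produces $\tilde{\pi}_{i+1}$ which, by \cref{lemma:sampling-pagerank} together with the lower bound $\pi_{G_{i+1},\eps}(v) \ge \eps/n$ from \cref{cor:lower-bound-pagerank}, is a $(1+\alpha)$-approximation of $\pi_{G_{i+1},\eps}$ with probability $1 - O(1/n^2)$. Using this $\tilde{\pi}_{i+1}$ as input to \RandomWalks invoked through \cref{theorem:alg-failure-inner-loop}, we then generate a fresh batch of PageRank walks in $G_{i+1}$, restoring the invariant. Crucially, each iteration starts sampling afresh based on the current approximation, so no error compounds across iterations. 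The base case $i=1$ uses \cref{theorem:PageRank-undirected} to produce $\tilde{\pi}_1$ and \cref{theorem:llogl} to generate the initial walks in the undirected graph $\undir{G}$.

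For round complexity, each of the $\delta^{-1}$ iterations performs an invocation of \RandomWalks, one rejection-sampling pass, and a call to \StationaryDistribution, all running in $O(\log l) = O(\log\log n + \log 1/\eps)$ rounds by \cref{lemma:ref-space-1} and \cref{lemma:StationaryDistribution-rounds}. For total space, the base case contributes $O(m\ln^2 n\log\log n/(\eps^2\alpha^2))$; each subsequent iteration contributes the per-iteration sampling space from \cref{theorem:alg-failure-inner-loop} multiplied by the oversampling factor $n^{O(c\delta/\eps)}$, which is dominated by $n^{18c\delta/\eps}\cdot n\ln^{2.5}n/(\eps^{3.5}\alpha^2)$. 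Taking a union bound over the $\delta^{-1}$ iterations, each failing with probability $O(1/n^2)$, yields the stated overall failure probability $O(1/(\delta n^2))$. The main technical obstacle is the tight bookkeeping of transition-probability ratios in the rejection sampling step: the naïve calculation uses $\deg^+_G(u) \le \deg_G(u)$ in the denominator and $\deg_G(u) \le c\deg^+_G(u)$ in the numerator, and one must verify that this indeed gives a uniform $1+O(c\delta)$ bound across all admissible edges, without any pathological $u$ inflating a single factor.
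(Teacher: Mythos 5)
Your proposal is correct and mirrors the algorithmic skeleton of \cref{alg:random-walks-directed-main}: interpolate between $\undir{G}$ and $G$ through $\delta^{-1}$ mixtures, rejection-sample walks from one mixture into the next, recompute the approximate stationary distribution via \StationaryDistribution at each step, and re-seed \RandomWalks with the fresh estimate, so no error compounds. Where you genuinely diverge from the paper is in the translation step. The paper's \TranslateWalk annotates every edge of a walk with a transition type (directed / undirected / jump), and because the per-edge probability ratio for an annotated \emph{directed} transition is $(1-\sigma_{j+1})/(1-\sigma_j)$ — which is unbounded when $\sigma_j$ is near $1$ — the paper is forced into a two-case analysis, with the $\sigma_j \ge 1/2$ branch requiring the more delicate procedure that stochastically re-labels undirected transitions as directed (and the attendant computations of $\rho(v)$ and $\beta(v)$ in \cref{lemma:prob-bound-on-repeating-directed-pi}). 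You instead work with \emph{aggregated} per-vertex transition probabilities — the walk carries only jump annotations — and observe that the aggregated ratio $P_{G_{i+1}}(u\!\to\!v)/P_{G_i}(u\!\to\!v)$ is uniformly at most $1+O(c\delta)$ across all $\sigma$ by the $c$-balanced bound $\deg_G(u)\le c\deg^+_G(u)$, eliminating the case split entirely; the cost is that each per-edge acceptance factor now depends on the current vertex rather than only on transition-type counts, but that is harmless once a walk resides on one machine. One small imprecision to fix: rejection sampling does not accept \emph{each} walk with probability at least $1/M$; it accepts a \emph{random} walk (from the $G_i$ chain starting at a given vertex) with probability exactly the scaling factor (cf.\ \cref{lem:rejection-sampling}), which is what the Chernoff argument for surviving $2K$ walks per vertex actually uses. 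With that caveat the round, space, and failure-probability bookkeeping all match the paper's.
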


For this we need few more definitions.
We are going to sample random walks from the stochastic graph defined as follows

\begin{equation}
  \label{eq:definition-T-eps-sigma}
	G_{\epsilon,\sigma} = (1-\epsilon)\sigma \rw{\undir{G}} + (1-\epsilon)(1-\sigma) \rw{G} + \frac{\epsilon}{n} J.
\end{equation}
We denote by $\pi_{\epsilon,\sigma}$ the stationary distribution of $G_{\epsilon,\sigma}$. In our algorithms we will be using $G_{\eps,\sigma}$, but the graph corresponding to $J$ will be constructed only implicitly. That is, instead of constructing graph $J$ explicitly, which contains $n^2$ edges, each vertex $v$ of $G_{\epsilon,\sigma}$ will hold a value $\eps/n$ implying that $v$ has an edge of weight $\eps/n$ to each vertex.

We also note that $G_{\epsilon,\sigma}$ can be constructed in $O(1)$ MPC rounds. Namely, we first broadcast $\eps$ and $\sigma$ to each machine, which can be done in $O(1)$ rounds as described in \cite{goodrich2011sorting}. Then, each edge is copied and annotated so to construct $(1-\epsilon)\sigma \rw{\undir{G}} + (1-\epsilon)(1-\sigma) \rw{G}$. Finally, each vertex is annotated by $\eps$ which, as described, suffices to implicitly construct $\frac{\epsilon}{n} J$.
\begin{observation}\label{observation:constructing-stochastic-graph}
	A stochastic graph as defined by \eqref{eq:definition-T-eps-sigma} can be implicitly constructed in $O(1)$ MPC rounds.
\end{observation}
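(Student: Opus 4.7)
The proof is essentially notational: the only potential obstruction to building $G_{\eps,\sigma}$ in $O(1)$ rounds is that the summand $\tfrac{\eps}{n}J$ contains $n^2$ edges, which would blow up the total space. My plan is to never materialize $J$ explicitly. Instead, I attach to each vertex $v$ a single scalar tag $\eps/n$, with the convention that this tag represents an outgoing edge of weight $\eps/n$ from $v$ to every vertex of $V(G)$ (including a self-loop). Every subsequent use of $G_{\eps,\sigma}$ in the paper only needs to sample a single random transition from a given vertex, and that can be done on the implicit representation: flip a biased coin with jump probability $\eps$, and conditioned on a jump pick a uniformly random vertex of $V(G)$. No enumeration of the edges of $J$ is ever required.

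The remaining two summands $(1-\eps)\sigma\,\rw{\undir{G}}$ and $(1-\eps)(1-\sigma)\,\rw{G}$ together contain only $O(m)$ edges and are materialized explicitly. First, I broadcast the scalars $\eps$ and $\sigma$ to every machine; this is a standard $O(1)$-round MPC primitive~\cite{goodrich2011sorting}. Second, for every $v\in V$ I compute $\deg^{+}_G(v)$ and $\deg_G(v)=\deg^{+}_{\undir{G}}(v)$ by grouping the input edges by their tail and by their endpoints respectively — a sorting-and-aggregation task doable in $O(1)$ rounds in the strongly sublinear regime — and broadcast each of these degrees back to the machines storing the incident edges. Third, each machine locally annotates each edge with its correct weight: an out-edge $uv$ of $G$ receives weight $(1-\eps)(1-\sigma)/\deg^{+}_G(u)$, and each out-edge $uv$ of $\undir{G}$ receives weight $(1-\eps)\sigma/\deg_G(u)$. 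Per the multiset-sum definition of $G_1+G_2$ in the preliminaries, the two copies of a shared edge $uv$ are kept as distinct edges with distinct weights, so no weight arithmetic across summands is required.

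Putting it together, the representation of $G_{\eps,\sigma}$ consists of a weighted edge list of size $O(m)$ together with one scalar per vertex; it is constructed in $O(1)$ MPC rounds and fits in the available total space. The only conceptual step — treating $J$ implicitly — is legitimate precisely because the downstream algorithms interact with $G_{\eps,\sigma}$ only through per-vertex sampling, so I do not anticipate any technical obstacle beyond checking that every routine which consumes $G_{\eps,\sigma}$ can be adapted to this implicit convention, which is immediate from the description of jump transitions in Definition~\ref{def:transitions}.
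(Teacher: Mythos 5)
Your proposal is correct and matches the paper's own argument: both broadcast $\eps$ and $\sigma$ in $O(1)$ rounds via \cite{goodrich2011sorting}, annotate the $O(m)$ edges to realize $(1-\eps)\sigma\,\rw{\undir{G}} + (1-\eps)(1-\sigma)\,\rw{G}$, and represent $\frac{\eps}{n}J$ implicitly by tagging each vertex with $\eps$ rather than materializing $n^2$ edges. The extra detail you spell out about computing $\deg^{+}_G(v)$ and $\deg_G(v)$ by sorting and aggregating is implicit in the paper's shorter wording but is the same construction.
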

We now define three transition types that capture different components of stochastic graphs.
\begin{definition}[Transition types]\label{def:transitions2}
Let $G_{\eps,\sigma}$ be the stochastic graph as defined by \eqref{eq:definition-T-eps-sigma}.
	Each edge of $G_{\eps,\sigma}$ originates from one of the graphs $G$, $\undir{G}$ and $J$.
	We call an edge of $G_{\eps,\sigma}$ a
	\begin{itemize}
		\item \emph{directed transition}, if it originates from $G$, and
		\item \emph{undirected transition}, if it originates from $\undir{G}$, and
		\item \emph{jump transition}, if it originates from $J$.
	\end{itemize}
\end{definition}

In the following we assume that each edge of each walk in $G_{\eps,\sigma}$ has its edges labeled with the transition types defined above.

In the main algorithm of this section (see \cref{alg:random-walks-directed-main}), we start from $\sigma=1$ and then
gradually decrease the value to obtain $\sigma=0$. This
sequence is defined as
\[
\sigma_{j+1} = \sigma_j - \delta,
\]
for $1\le j \le \delta^{-1}$. (We will set $\delta$ so that $\delta^{-1}$ is an integer.)
We now state the main algorithm of this section.
It uses \RandomWalks (\cref{alg:random-walks}) as a subroutine.

\subsubsection{Algorithms}
We now describe our algorithms used to compute approximate PageRank for $c$-balanced graphs. The main algorithm is \PageRankOfBalancedGraphs (see \cref{alg:random-walks-directed-main}) that essentially repeats \cref{alg:random-walks-directed-main-2} (see the steps within loop \cref{line:mixing-loop}) $\delta^{-1}$ many times. This loop implements the gradual change from an undirected to the corresponding directed graph.

\begin{algorithm}[H]
\begin{algorithmic}[1]
\Function{\PageRankOfBalancedGraphs}{$G$, $\eps$, $\alpha$, $\delta$}
	\State{Compute approximate stationary distribution $\tilde{\pi}_{\epsilon,1}$ using \cref{theorem:PageRank-undirected}. \label{line:directed-PageRank-initial-tpi}}
    \State{$l\gets \left \lceil \frac{9 \ln n}{\epsilon} \right \rceil $}
    \For{$j \gets 1\ldots \delta^{-1}$ \label{line:mixing-loop}}
	\State Implicitly compute $G_{\eps, \sigma_j}$ using \cref{eq:definition-T-eps-sigma}. \label{line:PageRank-balanced-compute-stochastic}
        \State Run \textsc{RandomWalks}$(G_{\epsilon, \sigma_j}, l, t)$ where $t_i(v) = \lceil C \tilde{\pi}_{\epsilon, \sigma_j}(v) n \ln n \cdot k_i\rceil$ and $k_i$ is defined by~\eqref{eq:ki-delta}. Let $W$ be the set of resulting walks.\label{line:PageRank-run-RandomWalks}\Comment{$C$ is defined in \cref{theorem:PageRank-directed-c-balanced}}
        \State $W_T\gets \emptyset$
        \ForAll{$w\in W$ in parallel}
            \If{$w_T=\TranslateWalk(G, \eps, j, w)$ did not ``fail'' \label{line:invoke-SampleRandomWalk}}
                \State{Add $w_T$ to $W_T$. \label{line:define-WT}}
            \EndIf
        \EndFor
				\State{$\tilde{\pi}_{\epsilon,\sigma_{j+1}} = \StationaryDistribution(W_T, \eps)$ \label{alg:random-walks-directed-main-fails}}
	\EndFor
\EndFunction
\end{algorithmic}
\caption{An algorithm for computing $(1 + \alpha)$-approximate $\eps$-PageRank of a $c$-balanced graph $G$.}
\label{alg:random-walks-directed-main}
\end{algorithm}
	
As the main primitive, \PageRankOfBalancedGraphs invokes $\TranslateWalk$ (\cref{alg:translate-walk}) on \cref{line:invoke-SampleRandomWalk}, which takes a PageRank walk in $\Gepsj{j}$ and either returns a PageRank walk in $\Gepsj{j + 1}$ or fails.
In the following we say that it \emph{translates} the given walk.
Each translation can fail with relatively large probability, but we will take enough walks so that the whole process has a small failure probability.

\begin{algorithm}[H]
\begin{algorithmic}[1]
\Function{$\TranslateWalk$}{$G$, $\eps$, $j$, $w$}
	\If{$\sigma_j < 1/2$}
        \State{Let $d$ be the number of directed transitions in $w$.}
        \State{Let $u$ be the number of undirected transitions in $w$.}
        \State{$p \gets r^l \frac{\sigma_{j+1}^{u}(1-\sigma_{j+1})^d}{\sigma_j^{u} (1-\sigma_j)^d}$}
        \Comment{$r$ is set in \cref{lemma:prob-bound-on-repeating-directed-pi} to guarantee that $p\le 1$.}
        \State{With probability $p$ return $w$; otherwise, return ``fail''.}

	\Else
		\State{Let $w_T$ equal $w$. We annotate the transitions of $w_T$ as follows.}
		\For{each edge $e = uv$ of $w_T$}
			\If{$e$ is an undirected transition in $w$ \label{line:TranslateWalk-e-in-G_U}}
				\State{-- With probability $1 - \rho(v)$ we keep $e$ in $w_T$ as an undirected transition. \Comment{We define $\rho(v)$ later in \cref{lemma:prob-bound-on-repeating-directed-pi}.}}
				\State{-- With probability $\rho(v)$ do the following: If $uv$ is an arc in $G$, then we replace $e$ in $w_T$ with this directed edge; otherwise, return ``fail''.}
			\ElsIf{$e$ is a directed transition in $w$ \label{line:TranslateWalk-e-in-G}}
				\State{Keep $e$ in $w_T$ as a directed transition.}
			\ElsIf{$e$ is a random jump in $w$ \label{line:TranslateWalk-e-random}}
				\State{With probability $\beta(v)$ return ``fail'', and otherwise keep $e$ in $w_T$ as a random jump. \Comment{We define $\beta(v)$ later in \cref{lemma:prob-bound-on-repeating-directed-pi}, \eqref{eq:define-beta(v)}.}}
			\EndIf
		\EndFor
		\State{\Return $w_T$ together with its transition types}
	\EndIf
\EndFunction
\end{algorithmic}
\caption{Given a random walk $w$ in $\Gepsj{j}$, return a random walk in $\Gepsj{j + 1}$ or ``fail''.}
\label{alg:translate-walk}
\end{algorithm}

In the rest of this section we analyze \cref{alg:random-walks-directed-main} and show that it satisfies the claim given in \cref{theorem:PageRank-directed}.

%\subsubsection*{Space Requirement of \cref{line:PageRank-run-RandomWalks} of~\cref{alg:random-walks-directed-main}}
%As a reminder, \cref{alg:random-walks-directed-main} maintains approximate stationary distribution $\tpi_{\eps,\sigma_j}$ of $T_{\eps,\sigma_j}$ (see \cref{lemma:sampling-pagerank}). As a consequence, walks in \cref{line:PageRank-run-RandomWalks} are sampled with respect to $\tpi_{\eps,\sigma_j}$, instead of to $\pi_{\eps,\sigma_j}$. The vector $\tpi_{\eps,\sigma_j}$ directly affects the number of random walks that have to be maintained by the algorithm so that it terminates correctly, which in turn affects the space requirement. The next lemma shows that if $\pi_{\eps,\sigma_j}$ and $\tpi_{\eps,\sigma_j}$ are close, then taking a $1$-step random walk with respect to $\tpi_\eps$ results in a distribution that also is close to taking a $1$-step random walk with respect to $\pi_{\eps,\sigma_j}$.

\subsubsection{The Success Probability of \cref{alg:random-walks-directed-main}}
%In our analysis of the repeat-until loop of \cref{alg:random-walks-directed-main} we will use the following statement about random variables.
%\begin{lemma}
%\label{lem:XandY}
%Let $X$ and $Y$ be sums of $n$ independent $0,1$-random variables. If $\ee{X} \ge 8\beta\ln \beta \cdot \ee{Y}$ and $\beta \ge 2$ then
%$\Pr[X< Y] \le  \frac{1}{\beta}$.
%\end{lemma}
%\begin{proof}
%We will consider two cases. First, assume that $\ee{X}\ge 8\ln \beta$, then by Chernoff bound (\cref{lemma:chernoff} \eqref{item:delta-at-most-1-le})
%\[
%\Pr\left[X\le \frac{\ee{X}}{4}\right] \le\exp\rb{-\frac{(3/4)^2 \ee{X}}{2}} \le \exp\rb{-\frac{(3/4)^2 8\ln \beta}{2}}\le \exp\rb{-2\ln \beta} \le \frac{1}{\beta^2}.
%\]
%Now by Markov inequality
%\[
%\Pr\left[Y\ge \frac{\ee{X}}{4}\right] \le \Pr\left[Y \ge \frac{8 \beta \ln \beta \ee{Y}}{4}\right] \le \frac{1}{2\beta}.
%\]
%By union bound with probability at least $1-\frac{1}{2\beta}-\frac{1}{\beta^2} \le 1-\frac{1}{\beta}$ we have
%both $X> \frac{\ee{X}}{4}$ and $Y< \frac{\ee{X}}{4}$.
%
%In the second case we have $\ee{X}< 8\ln \beta$, so
%\[
%\ee{Y} \le \frac{\ee{X}}{8\beta\ln \beta} <\frac{8\ln \beta}{8\beta\ln \beta} = \frac{1}{\beta}.
%\]
%As $Y$ is the sum of $0,1$ variables the above inequality impels that probability that $Y$ is non-zero is $<\frac{1}{\beta}$.
%And only when $Y$ is non-zero it can be bigger than $X$, so the lemma follows.
%\end{proof}

%We next apply \cref{lem:XandY} with $\beta = 2 l + 1$ to obtain the following reuslt.

We now prove the correctness of $\TranslateWalk$, which transforms a random walk $w$ in $\Gepsj{j}$ to a random walk in $\Gepsj{j + 1}$, or returns ``fail''.
We show that the output walk is a random walk in $\Gepsj{j + 1}$ and that the probability of ``failing'' is relatively small.

\begin{lemma}\label{lemma:prob-bound-on-repeating-directed-pi}
Let $w$ be a random walk of length $l$ in $\Gepsj{j}$.
Assume that $\delta \leq 1/(2c)$.
Then,
$\TranslateWalk(w)$ (\cref{alg:translate-walk}) does not fail and outputs a random walk $w_T$ of length $l$ in $\Gepsj{j + 1}$ with probability at least $(1 - 2 c\delta)^{l}$.
%Assuming $C\ge C'\cdot 16 l \ln(2l)\cdot (1+2\delta)^{l}$, the loop ending in \cref{alg:random-walks-directed-main-fails} in \cref{alg:random-walks-directed-main} is repeated with probability at most $1/2$.
\end{lemma}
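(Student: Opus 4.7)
The proof naturally splits along the two branches of \cref{alg:translate-walk}, so my plan is to treat $\sigma_j < 1/2$ and $\sigma_j \ge 1/2$ separately. In each case, I must verify that, conditional on not failing, $w_T$ is distributed as a length-$l$ random walk in $\Gepsj{j+1}$, and that the success probability is at least $(1 - 2c\delta)^l$. The hypothesis $\delta \le 1/(2c)$ will be used to keep the auxiliary probabilities $\rho$ and $\beta$ in the valid range $[0,1]$ and to guarantee $p \le 1$ in the rejection branch.

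For the rejection-sampling branch ($\sigma_j < 1/2$), the key observation is that for any walk $w$ with $u$ undirected, $d$ directed, and $l - u - d$ jump transitions,
\[
\frac{\Pr_{\sigma_{j+1}}[w]}{\Pr_{\sigma_j}[w]} \;=\; \left(\frac{\sigma_{j+1}}{\sigma_j}\right)^{\!u}\!\left(\frac{1-\sigma_{j+1}}{1-\sigma_j}\right)^{\!d},
\]
since all $1/\deg$ factors and the $\eps/n$ jump weights are identical in $\Gepsj{j}$ and $\Gepsj{j+1}$. Using $(1-\sigma_{j+1})/(1-\sigma_j) = 1 + \delta/(1-\sigma_j) \le 1 + 2\delta$ (valid because $\sigma_j < 1/2$) and $\sigma_{j+1}/\sigma_j \le 1$, this ratio is maximized at $u=0, d=l$ by $(1+2\delta)^l$, so choosing $r = 1/(1+2\delta)$ ensures $p \le 1$ for every walk. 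Then by standard rejection sampling, acceptance produces an output distributed exactly as $\Gepsj{j+1}$ and occurs with probability $\sum_w \Pr_{\sigma_j}[w]\,p(w) = r^l$, which satisfies $r^l \ge (1-2\delta)^l \ge (1-2c\delta)^l$ since $c \ge 1$.

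For the relabeling branch ($\sigma_j \ge 1/2$), the algorithm handles the $l$ edges of $w$ independently, so the plan is to analyze one edge and then multiply across the walk. For a single edge with source $u$, I would write one probability-balance equation per output type (undirected $uv$, directed $uv$, jump $uv$) matching the algorithm-induced probability against the target probability under $\Gepsj{j+1}$; the resulting linear system pins down the values of $\rho$ and $\beta$. The main obstacle is exactly where the $c$-balanced property enters: an undirected transition at $u$ carries weight $(1-\eps)\sigma/\deg(u)$ whereas the corresponding directed transition carries $(1-\eps)(1-\sigma)/\deg^+_G(u)$, and the conversion of the former into the latter must absorb the degree ratio $\deg(u)/\deg^+_G(u) \le c$. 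This forces the conversion probability to be roughly $\delta \cdot \deg(u)/(\sigma_j \deg^+_G(u)) \le 2c\delta$ (using $\sigma_j \ge 1/2$), which lies in $[0,1]$ precisely under the hypothesis $\delta \le 1/(2c)$. With $\rho$ and $\beta$ chosen to satisfy the balance equations, each edge independently fails with probability at most $2c\delta$, so the overall success probability is at least $(1 - 2c\delta)^l$ and, conditional on no failure, the joint distribution over the $l$ edges is exactly that of a length-$l$ random walk in $\Gepsj{j+1}$.
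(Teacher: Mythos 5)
Your proposal is correct and follows essentially the same route as the paper's proof: the same two-case split, the same rejection-sampling analysis with $r = 1/(1+2\delta)$ for $\sigma_j < 1/2$ (the paper derives the per-edge ratio bound $\frac{f(\sigma_j,e)}{f(\sigma_{j+1},e)} \ge \frac{1-\sigma_j}{1-\sigma_{j+1}} \ge \frac{1}{1+2\delta}$, equivalent to your maximization at $u=0$, $d=l$), and the same balance-equation derivation of $\rho(v)$ and $\beta(v)$ for $\sigma_j \ge 1/2$, with $c$-balancedness yielding $\rho(v) \le 2r(v)\delta \le 2c\delta$ and hence $\beta(v) \le \rho(v) \le 2c\delta$ per edge.
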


The proof analyzes two separate cases.
In one of them we use the following procedure, which is often called \emph{rejection sampling}
\begin{lemma}\label{lem:rejection-sampling}
Consider two discrete probability spaces $P$ and $Q$ over the same space $\{e_1, \ldots, e_n\}$.
For $1 \leq i \leq n$, let $p_i$ and $q_i$ give the probabilities of $e_i$ in $P$ and $Q$ respectively.
Finally, let $0 \leq r \leq \min_{1 \leq i \leq n, q_i \neq 0} p_i / q_i$.

Consider an algorithm which given a random sample $X = e_i$ from $P$ returns it with probability $q_i / p_i \cdot r$, and does nothing with the remaining probability.
Then, the algorithm returns a result with probability $r$ and each time it does so, it is an element sampled according to $Q$.
\end{lemma}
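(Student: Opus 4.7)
The plan is a direct computation, since this lemma is essentially a formalization of the classical rejection sampling identity. First I would check that the described algorithm is well-defined. Whenever $p_i = 0$, the hypothesis on $r$ forces either $q_i = 0$ (in which case the sample $e_i$ is never produced by $P$ and the quantity $q_i/p_i$ is never evaluated) or $r = 0$ (in which case the conclusion is vacuous and nothing is ever returned). Whenever $p_i > 0$ and $q_i > 0$, the acceptance probability $(q_i/p_i)\cdot r$ lies in $[0,1]$ by the bound $r \le p_i/q_i$, so the Bernoulli trial in the algorithm is well-defined.

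Next I would compute, for each index $i$, the joint probability that $P$ draws $X = e_i$ and then the algorithm returns it. By independence of the draw and the Bernoulli trial, this probability equals
\[
p_i \cdot \frac{q_i}{p_i} \cdot r \;=\; q_i \cdot r,
\]
with the convention that the product is $0$ whenever $p_i = 0$ (which, as noted above, forces $q_i = 0$ or $r = 0$). Summing over $i$ and using $\sum_i q_i = 1$ yields that the algorithm returns some element with probability exactly $r$, proving the first claim.

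Finally, conditioning on the event that the algorithm returns a value, the conditional probability of returning $e_i$ equals $(q_i \cdot r)/r = q_i$, which is exactly the distribution $Q$. This gives the second claim. No step is genuinely difficult here; the only subtlety is the bookkeeping for indices with $p_i = 0$, which is absorbed into the hypothesis that $r \le \min_{q_i \neq 0} p_i/q_i$.
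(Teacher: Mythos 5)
Your proof is correct and follows essentially the same direct computation as the paper's: verify the acceptance probability is well-defined, compute the joint probability $p_i \cdot (q_i/p_i) \cdot r = q_i r$, sum to get $r$, and condition to recover $Q$. The only difference is that you spell out the degenerate case $p_i = 0$ slightly more explicitly, which is a harmless refinement of the same argument.
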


\begin{proof}
We first note that the probability $q_i / p_i \cdot r$ is well-defined.
Indeed, it is only evaluated when $X_i = e_i$, which implies $p_i > 0$.
Moreover, thanks to the choice of $r$, it does not exceed $1$.

The probability of the algorithm not failing is
\[ \sum_{i=1}^n p_i \cdot q_i / p_i \cdot r = \sum_{i=1}^n q_i \cdot r = r. \]
Moreover, the probability of returning $e_j$ is $p_j \cdot q_j / p_j \cdot r = q_j \cdot r$.
Hence, if we condition on the algorithm not failing, the probability of returning $e_j$ is $q_j$, as desired.
\end{proof}

We can now prove~\cref{lemma:prob-bound-on-repeating-directed-pi}.

\begin{proof}[Proof of~\cref{lemma:prob-bound-on-repeating-directed-pi}]

%Let $X_i$ be the random variable denoting the number of walks starting in $v$ that contain exactly $i$ directed transitions.
%The algorithm fails in \cref{alg:random-walks-directed-main-fails} when for some $0\le i \le l$ we have $n_i> X_i$. In the rest of this proof, our goal is to lower-bound the ratio $\ee{X_i} / \ee{n_i}$, which will enable us to set $C$ and $C'$ used in \cref{alg:random-walks-directed-main} so to guarantee low probability failure.

For the proof we consider a procedure that translates the input walk $w$ in $l$ steps, one edge at a time.
We are going to show that after $i$ steps either the procedure has failed or the first $i$ edges of the walk are now sampled according from $\Gepsj{j+1}$.
We split the proof into two cases: $\sigma_j < 1/2$ and $\sigma_j \ge 1/2$.

\paragraph{Case $\sigma_j < 1/2$.}
Observe that in this case $\TranslateWalk(w)$ either returns the walk $w$ or fails.
Recall that we consider a procedure that translates the consecutive edges of the walk one by one.
At each step we are going to use \cref{lem:rejection-sampling}.
Each edge of the walk is either a jump transition (with probability $\epsilon$), directed transition (with probability $(1-\eps)\sigma$) or an undirected transition (with probability $(1-\eps)(1-\sigma)$.

Fix a vertex $v$ and let $e$ be some transition that can be chosen by a random walk that has reached $v$.
Denote by $f(\sigma, e)$ the probability of choosing $e$.
In order to use \cref{lem:rejection-sampling}, we need to consider the ratios $f(\sigma_j, e) / f(\sigma_{j+1}, e)$.
For any jump transition $e_j$ we have $f(\sigma_j, e_j) / f(\sigma_{j+1}, e_j) = 1$.
	For a directed transition $e_d$, $f(\sigma_j, e_d) / f(\sigma_{j+1}, e_d) = ((1-\eps)(1-\sigma_j)) / ((1-\eps)(1-\sigma_{j+1})) = (1-\sigma_j)/(1-\sigma_{j+1})$.
Finally, for an undirected transition $e_u$ we have $f(\sigma_j, e_u) / f(\sigma_{j+1}, e_u) = \sigma_j / \sigma_{j+1}$.
Note that out of the three ratios, only the second one is smaller than $1$ and hence is the smallest one.
Hence, for any transition $e$ we have
\begin{eqnarray*}
	\frac{f(\sigma_j, e)}{f(\sigma_{j+1}, e)} \ge \frac{1-\sigma_j}{1-\sigma_{j+1}} = \frac{1-\sigma_j}{1-\sigma_j + \delta} = 1-\frac{\delta}{1-\sigma_j+\delta} \geq 1-\frac{\delta}{\frac{1}{2}+\delta} \geq \frac{1}{1+2\delta}.
\end{eqnarray*}

In order to use~\cref{lem:rejection-sampling} we are going to set $r = 1/(1+2\delta)$.
The definition of $r$ together with the sampling ratios define the probability that the algorithm of~\cref{lem:rejection-sampling} does not fail.
Altogether, if the input walk consists of $l$ edges, has $d$ directed transitions and $u$ undirected transitions, then the probability that none of the $l$ applications of~\cref{lem:rejection-sampling} fails is
\[
	r^l \frac{(1-\sigma_{j+1})^d\sigma_{j+1}^u}{(1-\sigma_j)^d\sigma_j^u}.
\]

Observe that this is exactly what $\TranslateWalk$ does, which implies that the procedure is correct in the case when $\sigma_j < 1/2$.
	Finally, to bound the success probability note that $1/(1+2\delta)^l \geq (1-2c\delta)^l$.

\paragraph{Case $\sigma_j \ge 1/2$.}
	%In this case as well, our goal again is to lower-bound $\ee{X_i} / \ee{n_i}$. Note that for the previous case the probability that a walk from $v$ will immediately go along a directed edge is $(1 - \eps) \cdot (1 - \sigma_j)$, for any vertex $v$. This led to a simple form of $\ee{X_i}$, as given by \eqref{eq:ee-X_i-sigma<1/2}. In this case, this probability depends on both $\alpha$ and, more importantly, on the \emph{exact} ratio between outgoing and the total number of edge incident to $v$.
	%This all results in the form of $\ee{X_i}$ having a more complicated form. We start by deriving the value of $\alpha$, and then proceed to lower-bounding $\ee{X_i}$.

As in the previous case, we prove the claim separately for each edge of the walk.
Differently from the previous case, though, in this case the translation procedure may actually change the transition types along the walk.

Let $w$ be a random walk in $\Gepsj{j}$.
Consider an edge $e$ belonging to the walk $w$, which the walk traverses after reaching vertex $v$.
Observe that \cref{line:TranslateWalk-e-random} translates $e$ to an undirected transition only if $e$ is an undirected transition and a coin toss with probability $1-\rho(v)$ succeeds.
	Thus, this happens with probability $p_U(v) = (1-\eps) \sigma_j (1 -\rho(v))$.

On the other hand, $e$ is translated to a directed transition either when $e$ is a directed transition, or when it is an undirected transition, and coin toss with probability $1 - \rho(v)$ fails, and there exists a corresponding directed transition.
The latter happens with probability $\deg^+_G(v) / \deg_G(v)$, since in this case $e$ is a randomly chosen undirected transition incident to $v$.
	Hence, the overall probability of translating $e$ to a directed transition is $p_D(v) = (1-\eps)\rb{1 - \sigma_j + \tfrac{\deg^+_G(v)}{\deg_G(v)} \rho(v) \sigma_j}$.

%In \TranslateWalk (\cref{alg:translate-walk}), when translating a walk $w = v_1 \ldots v_{l + 1}$ in $\Gepsj{j}$ to a walk in $\Gepsj{j + 1}$, each undirected edge $e = (v_k, v_{k + 1})$ is with probability $\rho(v_k)$ (for some $\rho(v_k) \in (0, 1)$ that depends on $v_k$ and that we set later) translated to the corresponding directed edge, if such directed edge exists; otherwise, the process ``fails''. Note that we reveal the vertices of $W$ one by one, and hence just when $v_k$ is revealed the vertex $v_{k + 1}$ is still a random variable.

%This implies that with probability $\deg^+(v_k) / \deg(v_k)$ the edge $e$ can also be a directed outgoing edge of $v_k$. Hence, with probability
%	, and with probability $p_D(v_k) = \rb{1 - \sigma_j + \tfrac{\deg^+(v_k)}{\deg(v_k)} \rho(v_k) \sigma_j}$ the edge $e$ is translated to a directed edge.

Since the output of \cref{line:TranslateWalk-e-random} should be a random walk in $\Gepsj{j + 1}$, we must have
%Since $w$ is translated to a pageRank walk in $\Gepsj{j + 1}$, we want to set $\rho(v)$ such that
	\begin{equation}\label{eq:p_U-and-p_D-ratio}
		\frac{p_U(v)}{p_D(v)} = \frac{(1-\eps)(\sigma_j - \delta)}{(1-\eps)(1 - \sigma_j + \delta)}.
	\end{equation}
This allows us to derive $\rho(v)$, let $r(v) = \tfrac{\deg_G(v)}{\deg^+_G(v)}$ and let $y(v) = \rho(v) / r(v) = \tfrac{\deg^+_G(v)}{\deg_G(v)} \rho(v)$.
Then, from~\eqref{eq:p_U-and-p_D-ratio} we have
	\begin{eqnarray}
		& & \sigma_j (1 -\rho(v)) \rb{1 - \sigma_j + \delta} = \rb{1 - \sigma_j + y(v) \sigma_j} \rb{\sigma_j - \delta} \nonumber \\
		& \implies & y(v) \sigma_j \rb{\sigma_j - \delta + r(v) - r(v) \sigma_j + r(v) \delta} = \delta \nonumber \\
		& \implies & y(v) \sigma_j = \frac{\delta}{\sigma_j - \delta + r(v) - r(v) \sigma_j + r(v) \delta}, \label{eq:value-of-y(v)}
	\end{eqnarray}
which in particular gives us a formula for $\rho(v)$.
In the following we verify $\rho(v) \in [0, 1]$.
To that end, we upper-bound $y(v)$. Given that $\sigma_j \ge 1/2$, \eqref{eq:value-of-y(v)} implies
	\begin{equation}\label{eq:value-of-y(v)-upper-bound}
		y(v) \le \frac{2 \delta}{\sigma_j - \delta + r(v) - r(v) \sigma_j + r(v) \delta}.
	\end{equation}
The values $r(v)$ and $\delta$ are fixed for a given vertex $v$. We upper-bound $y(v)$ from \eqref{eq:value-of-y(v)-upper-bound} by minimizing the denominator of \eqref{eq:value-of-y(v)-upper-bound}. Since $r(v)$ appears in the form $r(v) (1 - \sigma_j) + r(v) \delta > 0$ in the denominator, and $r(v) \in [1, c]$, the denominator is minimized for $r(v) = 1$. For $r(v) = 1$, the denominator of~\eqref{eq:value-of-y(v)-upper-bound} becomes $1$.
This implies that $y(v)$ and $\rho(v)$ are nonnegative.
Moreover,
%Also, the denominator of \eqref{eq:value-of-y(v)-upper-bound} is a linear function of $\sigma_j$, so its minimum is attained for $\sigma_j = 1$ or for $\sigma_j = 1/2$. Plugging-in the both values of $\sigma_j$ into \eqref{eq:value-of-y(v)-upper-bound}, we have
	\begin{equation}\label{eq:upper-bound-on-alpha}
		y(v) \le 2 \delta \quad \text{and} \quad \rho(v) \le 2 r(v) \delta \le 2 c \delta \leq 1.
	\end{equation}

	Now we can upper bound the probability that translation of an edge sampled from $\Gepsj{j}$ to an edge sampled from $\Gepsj{j + 1}$ ``fails'', conditioned on $e$ not being a random jump. Denote that probability by $\beta(v)$. Then, we have
	\begin{eqnarray}
		\beta(v) & = & 1 - p_U(v) - p_D(v) = 1- \sigma_j (1 -\rho(v)) - \rb{1 - \sigma_j + \frac{\deg^+_G(v)}{\deg_G(v)} \rho(v) \sigma_j} \nonumber \\
    &=& \sigma_j \rho(v) \rb{1-\frac{\deg^+_G(v)}{\deg_G(v)}} \le \rho(v) \stackrel{\eqref{eq:upper-bound-on-alpha}}{\le} 2c\delta. \label{eq:define-beta(v)}
	\end{eqnarray}
    %In order to use \cref{lem:rejection-sampling} we need to set $r=(1-2c\delta)^l$.
		
		We now comment on random jumps. From our analysis, a transition which is not a random jump is rejected (i.e., ``fails'') with probability $\beta(v)$. To account for that, if a transition of $w$ is a random jump, $\TranslateWalk$ will also ``fail'' with probability $\beta(v)$, and with the remaining probability keep this random jump. In this way, conditioned that $\TranslateWalk$ \emph{does not} ``fail'', an edge $e_T$ of the output of $\TranslateWalk(w)$ is a random jump with probability $\eps$, a directed edge with probability $(1 - \eps) (1 - \sigma_{j + 1})$, and an undirected edge with probability $(1 - \eps) \sigma_{j + 1}$.
		
	Hence, $\TranslateWalk$ outputs ``fail'' per edge with probability $\eps \beta(v) + (1 - \eps) \beta(v) = \beta(v) \le 2 c \delta$, and hence an invocation of this method does not output ``fail'' with probability at least $(1 - 2 c \delta)^l$.

\end{proof}

 %Therefore, with probability at least $1-\frac{3}{n}$, by \cref{lemma:sampling-pagerank} all estimates $\tpi_{\eps, \sigma_{j + 1}}$ are correct.

\subsubsection{Proof of \cref{theorem:PageRank-directed-c-balanced}}
\label{section:proof-of-PageRank-directed-c-balanced}
We now prove \cref{theorem:PageRank-directed-c-balanced} by showing that \cref{alg:random-walks-directed-main} has the properties given in the theorem statement. To show correctness, observe that an iteration of the loop of \cref{alg:random-walks-directed-main} simulates algorithm \cref{alg:random-walks-directed-main-2} for computing $\tilde{\pi}_{\epsilon, \sigma_{j+1}}$.

We split the rest of the proof into three parts: the space requirements; the round complexity; and, the probability of success. Throughout the proof, we set parameters as $l= \left \lceil  \frac{9 \ln n}{\epsilon}\right \rceil$ and $K= \left \lceil \frac{9 \ln n}{\epsilon\cdot \alpha^2} \right \rceil\le \frac{10 \ln n}{\epsilon\cdot \alpha^2}$. The proof of the success probability fixes the value of $C$.

\paragraph{Round complexity.}
\cref{alg:random-walks-directed-main} executes $O(\delta^{-1})$ iterations. Each iteration implicitly constructs a stochastic graph on \cref{line:PageRank-balanced-compute-stochastic}, which by \cref{observation:constructing-stochastic-graph} can be done in $O(1)$ rounds. Also in each iteration is invoked \cref{alg:random-walks}, which takes $O(\log{l}) = O(\log \log n + \log 1/\eps)$ rounds.
Since we assume that each walk is stored entirely on a machine, \TranslateWalk on \cref{line:invoke-SampleRandomWalk} can be implemented without extra communication. To obtain $W_T$ defined by \cref{line:define-WT}, each walk that ``fails'' is marked by flag \fail, and otherwise marked by flag \succeed. Those walks marked by \succeed define $W_T$.

By \cref{lemma:StationaryDistribution-rounds}, \cref{alg:random-walks-directed-main-fails} can be implemented in $O(1)$ rounds.
Hence, the total round complexity is $O\rb{\delta^{-1} (\log \log n + \log 1/\eps)}$.

\paragraph{Success probability.}
By \cref{lemma:sampling-pagerank}, the probability that any $\tilde{\pi}_{\epsilon, \sigma_j}$ is not $(1+\alpha)$-approximation of $\pi_{\epsilon, \sigma_j}$ is at most $\frac{3}{n^2}$.

The next place when \cref{alg:random-walks-directed-main} can fail is \cref{alg:random-walks-directed-main-fails}, i.e., we need to have
at least $K$ walks staring in each vertex $v$. Note that algorithm generates
\begin{eqnarray*}
\lceil C \tilde{\pi}_{\epsilon, \sigma_j}(v) n \ln n k_{\lceil \log l\rceil}\rceil &\ge& C \tilde{\pi}_{\epsilon, \sigma_j}(v) n \ln n \\
&\ge& C\pi_{\epsilon, \sigma_j}(v)(1-\alpha) n \ln n \ge C\epsilon (1-\alpha)  \ln n.
\end{eqnarray*}
walks from each vertex. These walks are subsampled with probability at least $(1-2c\delta)^l$ by~\cref{lemma:prob-bound-on-repeating-directed-pi}.
%\ptodo{make a lemma out of the following lines, i.e., till $\frac{1}{n^2}$}
We will aim to have $2K$ walks in expectations, so that by Chernoff bound (\cref{lemma:chernoff} \eqref{item:delta-at-most-1-ge})
the probability of this number being smaller than $K$ is
\[
\exp\rb{- \delta^2 \mu / 3} = \exp\rb{-K/3} \le \exp\rb{-3\ln n} = \frac{1}{n^3}.
\]
Using union bound over all vertices we will have not enough walks with probability at most $\frac{1}{n^2}$. Hence, we need to set $C$ so that
\[
2K \le (1 - 2 c\delta)^{l} \cdot C\epsilon (1-\alpha)  \ln n.
\]
This gives
\[
C \ge \frac{20}{\epsilon^2\alpha^2(1-\alpha) \cdot (1 - 2 c\delta)^{l}} \ge \frac{80}{3\epsilon^2\alpha^2(1 - 2 c\delta)^{l}}.
\]
Hence, this inequality is satisfied by setting $C= \frac{28}{\alpha^2 \epsilon^2\cdot (1 - 2 c\delta)^{l}}$. Now, by \cref{theorem:alg-failure-inner-loop}~\eqref{item:RandomWalks-approx-success-prob} the
sampling algorithm fails with probability at most
\[
n^{-\frac{C\alpha \epsilon}{3}+2}e^2\le  n^{-\frac{28}{3\alpha\epsilon}+2}e^2\le n^{-4+2}e^2 .
\]
The probability of any of these failures happening in each round is at most $\frac{3}{n^2}+ \frac{1}{n^2}+\frac{e^2}{n^2} < \frac{12}{n^2}$. Hence,
over all round the failure probability is $O(\frac{1}{n^2\delta})$.
%
%The total probability of failure over all repetitions of the algorithm is at most
%\[
%\frac{12}{n^2} \cdot \sum_{k=1}^{\infty} 2 k  \delta^{-1} \cdot e^{-2(k-1)\delta^{-1}/3}\le 4\delta^{-1}\frac{12}{n^2}\cdot \sum_{k=1}^{\infty} k e^{-2k/3} < 9\delta^{-1}\frac{12}{n^2} = \frac{108}{\delta n^2}.
%\]

\paragraph{Space requirement.}
By \cref{theorem:alg-failure-inner-loop}~\eqref{item:RandomWalks-approx-space} and from $1 / n \le \alpha \le 1/4$, the space required is
\begin{eqnarray*}
% \nonumber % Remove numbering (before each equation)
O\rb{m+C l^{1+2 \alpha}n \ln{n}} &=& O\rb{m+\frac{1}{\alpha^2 \epsilon^2 \cdot (1 - 2 c\delta)^{l}}  l^{1+2\alpha} n \ln{n}} \\
  &=& O\rb{m+\frac{1}{\alpha^2 \epsilon^2}  \cdot (1 + 2 c\delta)^{l}  l^{1+2\alpha} n \ln{n}} \\
  &=& O\rb{m+\frac{1}{\alpha^2 \epsilon^2}  \cdot (1 + 2 c\delta)^{l}  l^{1.5} n \ln{n}} \\
  &=& O\rb{m+\frac{1}{\alpha^2 \epsilon^2}  \cdot (1 + 2 c\delta)^{9\frac{\ln n}{\epsilon}}  \rb{9\frac{\ln n}{\epsilon}}^{1.5} n \ln{n}} \\
 &=& O\rb{m+\frac{1}{\alpha^2 \epsilon^{3.5}}  \cdot (1 + 2 c\delta)^{9\frac{\ln n}{\epsilon}}  n \ln^{2.5}{n}} \\
 &=& O\rb{m+\frac{1}{\alpha^2 \epsilon^{3.5}}  \cdot n^{18\frac{c\delta}{\epsilon}}  n \ln^{2.5}{n}}.
\end{eqnarray*}
Moreover, we need $O\rb{\frac{m\log^2 n \log \log n}{\epsilon^2 \alpha^2}}$ space in \cref{line:directed-PageRank-initial-tpi} of \cref{alg:random-walks-directed-main}.
This completes the proof of \cref{theorem:PageRank-directed-c-balanced}.

\subsection{Transformation to a $c$-balanced Graph}
\label{sec:transforming-to-c-balanced}
In this section we will describe how to reduce a general graph $G=(V,E)$ without dangling vertices
to a $3$-balanced multigraph $G_c=(V_c,E_C)$.
The ways to handle dangling vertices are discussed in \cref{app:handling-dangling-nodes}.
The idea
is to replace each vertex by a path of length $\lambda = \lceil \log n \rceil$.

The graph $G_c=(V_c,E_c)$ is
defined as follows
\[
V_c = \{v_{i} : v\in V, i \in [1,\ldots, \lambda]\},
\]

\[
E_c = \{u_{\lambda}v_1 : uv \in V\} \cup \{(u_{i}u_{i+1})^j :   i \in [1,\ldots, \lambda-1], j\in [0,\ldots, \lceil \deg^-(u)/2^{i} \rceil ]\}.
\]

\begin{lemma}
\label{lemma:c-balanced-transformation}
If $G$ does not contain dangling vertices then $G_c$ is a $3$-balanced graph. Moreover, $G_c$
contains $n \lceil \log n \rceil$ vertices and at most $2m + n\lceil \log n\rceil$ edges.
\end{lemma}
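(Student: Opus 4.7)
The plan is to verify each claim in turn by a direct case analysis on the three kinds of vertices in $V_c$: the first copy $v_1$, the intermediate copies $v_i$ with $1<i<\lambda$, and the last copy $v_\lambda$. For each kind I will read off $\deg^-$ and $\deg^+$ from the construction and then check that $2\deg^+(x) \ge \deg^-(x)$, which is equivalent to $3\deg^+(x) \ge \deg(x)$.

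First I would compute the degrees in $G_c$. Incoming edges of $v_1$ come only from the type-1 edges $u_\lambda v_1$ for $uv \in E$, so $\deg^-(v_1) = \deg^-_G(v)$; outgoing edges of $v_1$ are only the parallel copies of $v_1 v_2$, giving $\deg^+(v_1) = \lceil \deg^-_G(v)/2 \rceil$. For $1 < i < \lambda$, both in- and out-edges are type-2, yielding $\deg^-(v_i) = \lceil \deg^-_G(v)/2^{i-1}\rceil$ and $\deg^+(v_i) = \lceil \deg^-_G(v)/2^i\rceil$. Finally, $\deg^-(v_\lambda) = \lceil \deg^-_G(v)/2^{\lambda-1}\rceil$ and $\deg^+(v_\lambda) = \deg^+_G(v)$.

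To verify $3$-balancedness, the core inequality is the elementary fact $2\lceil x\rceil \ge \lceil 2x\rceil$ for $x \ge 0$ (check by separating the fractional part of $x$). Applied to $x = \deg^-_G(v)/2^i$ this immediately gives $2\deg^+(v_i) \ge \deg^-(v_i)$ for $1 \le i < \lambda$. For $v_\lambda$ I would use $\deg^-_G(v) \le n \le 2^\lambda$, so $\deg^-(v_\lambda) \le 2$; the no-dangling assumption guarantees $\deg^+_G(v) \ge 1$, hence $2\deg^+(v_\lambda) \ge 2 \ge \deg^-(v_\lambda)$. This is the one step that actually uses the no-dangling hypothesis, and it is the place requiring a little care.

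The vertex count is immediate from the definition: $|V_c| = n\lambda = n\lceil \log n\rceil$. For the edges, the type-1 contribution is exactly $m$ (one per edge of $G$). For the type-2 contribution, the bound $\lceil x\rceil \le x + 1$ gives
\[
\sum_{v \in V} \sum_{i=1}^{\lambda-1} \lceil \deg^-_G(v)/2^i\rceil
\;\le\; \sum_{v\in V}\Bigl(\deg^-_G(v)\sum_{i=1}^{\lambda-1} 2^{-i} + (\lambda-1)\Bigr)
\;\le\; m + n\lceil \log n\rceil,
\]
using $\sum_{i\ge 1} 2^{-i} = 1$ and $\sum_v \deg^-_G(v) = m$. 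Adding the $m$ type-1 edges yields the claimed bound $2m + n\lceil \log n\rceil$. No step here is really an obstacle; the only subtle point is the last-layer check for $v_\lambda$, which is exactly where the no-dangling hypothesis is consumed.
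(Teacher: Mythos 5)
Your proposal matches the paper's proof essentially step for step: same case analysis on $v_1$, the intermediate $v_i$, and $v_\lambda$, same degree computations, the same use of the no-dangling hypothesis only at $v_\lambda$, and the same $\lceil x\rceil \le x+1$ bound for the edge count. The only (cosmetic) difference is that you factor the balance check through the single lemma $2\lceil x\rceil \ge \lceil 2x\rceil$, which the paper verifies arithmetically in each case instead.
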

\begin{proof}
In order to prove that $G_c$ is $c$-balanced, consider three cases for $v_i\in V_c$:
\begin{description}
  \item[$i=1$] then $v_i$ has $\deg^-(v)$ inedges, and $\lceil \deg^-(v)/2\rceil$ outedges, so
\[
3\deg^+(v_i)=3\lceil \deg^-(v)/2\rceil \ge \lceil \deg^-(v)/2\rceil+\deg^-(v) = \deg(v_i).
\]
  \item[$1<i <\lambda$] then $v_i$ has $\lceil \deg^-(v)/2^{i-1}\rceil$ inedges, and $\lceil \deg^-(v)/2^i\rceil$ outedges, so
\[
3\deg^+(v_i)=3 \lceil \deg^-(v)/2^i\rceil \ge \lceil \deg^-(v)/2^i\rceil + \lceil \deg^-(v)/2^{i-1}\rceil = \deg(v_i).
\]
  \item[$i=\lambda$] then $v_i$ has $\lceil \deg^-(v)/2^{\lambda-1}\rceil$ inedges and at least $1$ outedge, so
\[
3\deg^+(v_i) \ge \deg^+(v_i) + 2 \ge \deg^+(v_i) + \lceil 2 \deg^-(v)/n \rceil \ge \deg^+(v_i) + \lceil \deg^-(v)/2^{\lambda-1}\rceil  = \deg(v_i).
\]
\end{description}
By construction the number of vertices is $n \lceil \log n\rceil$, whereas the number of edges added to $G_c$ can be accounted
to inedges in $G$, i.e., for a vertex $v$ with indegree $\deg^-(v)$ we are adding
\[
\sum_{i=1}^{\lambda-1} \lceil \deg^-(v)/2^{i} \rceil\le  \sum_{i=1}^{\lambda-1}\deg^-(v)/2^{i}+1 \le \deg^-(v) + \lambda-1,
\]
edges. This gives $m+n\lambda = m+ n\lceil \log n\rceil$ additional edges.
\end{proof}

For each $uv \in E$ we call the edge $u_{\lambda}v_{1} \in E_C$ \emph{core}.
From the construction of $G_c$ we easily get the following.

\begin{observation}\label{obs:core}
Let $W = e_1, e_2, \ldots, e_k$ be a walk in $G_C$.
Then, there exists $1 \leq i \leq \lambda$ that has the following property.
Let $W_R$ be a subsequence of $W$ consisting of edges $e_{i + j\lambda}$ for $0 \leq j \leq (k - i) / \lambda$.
Then, $W_R$ is a walk in $G$, which contains all core edges of $W$.
\end{observation}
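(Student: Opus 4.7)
The plan is to exploit the rigid local structure of $G_c$: at each non-terminal path vertex the only outgoing moves are path edges, and at terminal path vertices the only outgoing moves are core edges. This forces the occurrences of core edges in any walk to lie on an arithmetic progression with common difference $\lambda$, and the offset of that progression is determined by where the walk starts.

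First I would make the local structure precise. From the definition of $E_c$, for $1 \le t < \lambda$ every outgoing edge of $u_t$ is a path edge $u_t u_{t+1}$, while every outgoing edge of $u_\lambda$ is a core edge $u_\lambda v_1$ with $uv \in E(G)$. Hence in any walk $W = e_1, \dots, e_k$, if the walk currently occupies $u_t$ with $t < \lambda$, the next edge is forced to be a path edge advancing the index by $1$; if it occupies $u_\lambda$, the next edge is a core edge landing on some $v_1$. Combining these two rules, once the walk lands on some $v_1$, it must take exactly $\lambda - 1$ path edges before it can traverse the next core edge.

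Next I would track the positions of core edges. Suppose $W$ starts at $v_j$ for some $v \in V$ and $1 \le j \le \lambda$. By the structural rules, edges $e_1, \dots, e_{\lambda - j}$ (if present) are path edges raising the index from $j$ up to $\lambda$, edge $e_{\lambda - j + 1}$ is a core edge (if present), and thereafter core edges occur precisely at positions $\lambda - j + 1 + m\lambda$ for $m = 0, 1, 2, \dots$ as long as they remain within $\{1, \dots, k\}$. Setting $i := \lambda - j + 1$ (and picking any $i \in \{1,\dots,\lambda\}$ if $W$ contains no core edges at all, in which case the subsequence is empty), the extracted subsequence $W_R = (e_{i + m\lambda})_{0 \le m \le (k-i)/\lambda}$ is exactly the list of core edges of $W$.

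Finally I would verify that $W_R$ is a walk in $G$. Consider two consecutive core edges $e_{i+m\lambda}$ and $e_{i+(m+1)\lambda}$ of $W_R$. The first has the form $u_\lambda \to v_1$, corresponding to the edge $uv \in E(G)$. The intermediate $\lambda - 1$ edges $e_{i+m\lambda+1}, \dots, e_{i+(m+1)\lambda - 1}$ are forced by the local rule to be the path edges $v_1 \to v_2 \to \cdots \to v_\lambda$. Therefore $e_{i+(m+1)\lambda}$ must start at $v_\lambda$ and be a core edge $v_\lambda \to w_1$ corresponding to an edge $vw \in E(G)$. Thus the two successive terms of $W_R$, viewed in $G$, are $uv$ and $vw$, which share the vertex $v$; so $W_R$ is a walk in $G$. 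There is no real obstacle here: the whole argument is a consequence of the deterministic out-neighborhood structure of $G_c$, and no probabilistic or quantitative estimate is needed.
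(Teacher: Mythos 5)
Your argument is correct, and the paper itself offers no proof of this observation (it is stated as an immediate consequence of the construction of $G_c$), so what you wrote simply spells out the reasoning the authors regard as obvious: the out-neighborhood structure of $G_c$ is deterministic in its indices, forcing core edges to appear at positions in a single residue class modulo $\lambda$ with offset determined by the index of the starting vertex, and consecutive core edges $u_\lambda\to v_1$, $v_\lambda\to w_1$ share the intermediate vertex $v$ when read back in $G$. One tiny slip: in the parenthetical where $W$ has no core edges, you suggest picking an arbitrary $i\in\{1,\dots,\lambda\}$, but an arbitrary small $i$ could yield a nonempty $W_R$ of path edges; the clean fix is to always use $i=\lambda-j+1$, which then exceeds $k$ whenever $W$ contains no core edge and so produces the empty subsequence as intended.
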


\subsection{Increasing Damping Factor}
\label{sec:lower-eps}
In \cref{sec:transforming-to-c-balanced} we described how to transform the input graph $G$ to a $c$-balanced graph $G_c$. In this process, each edge of $G$ is replaced by a path of length $\lceil \log{n}\rceil$. That means that a random walk of length $l$ in $G$ corresponds to a random walk of length $l \cdot \lceil \log{n}\rceil$ in $G_c$. In order make a correspondence between PageRank walks in $G$ to those in $G_c$, we need PageRank walks in $G_c$ to make jump transitions roughly $\log{n}$ times less frequently than in $G$. (We make this statement precise in \cref{sec:mapping-back}.) In light of this, we design method $\PageRankLargerDamping$ (\cref{alg:PageRank-lower-eps}) that given an approximate $\eps$-PageRank of $G$ outputs an approximate $\eps'$-PageRank of $G$ for $\eps' < \eps$. Moreover, for a given parameter $\tau$, it does so in $O\rb{\log_{1 + \tau} \tfrac{\eps}{\eps'}}$ iterations each of which is implemented by invoking \RandomWalks for length $O(\log{n} / \eps')$. The parameter $\tau$ also affects space complexity (for details see \cref{theorem:PageRank-lower-eps}), and in the final setup we let $\tau = o(1)$.

\PageRankLargerDamping uses \TranslateWalkEps (\cref{alg:translate-walk-eps}) as a subroutine. Given a random walk $w$ in $\GepsZero{\eps_j}$ and $\eps_{j + 1} \le \eps_j$, $\TranslateWalkEps$ either returns a walk $w_T$ which is a random walk in $\GepsZero{\eps_{j + 1}}$ or ``fails''. This method is very similar to \TranslateWalk for the case $\sigma_j < 1/2$.

\begin{algorithm}[H]
\begin{algorithmic}[1]
\Function{$\TranslateWalkEps$}{$G$, $\eps_j$, $\eps_{j + 1}$, $w$}
	\State{Let $g$ be the number of directed transitions in $w$.}
	\State{Let $t$ be the number of jump transitions in $w$.}
	\State{$p \gets r \frac{\eps_{j+1}^{t}(1-\eps_{j+1})^g}{\eps_j^{t} (1-\eps_j)^g}$}
	\Comment{$r$ is set in \cref{lemma:TranslateWalkEps-fail-prob} to guarantee that $p\le 1$.}
	\State{With probability $p$ return $w$; otherwise, return ``fail''.}
\EndFunction
\end{algorithmic}
\caption{%Given a walk $w$ in $\Gepsj{j}$ translates it to a random walk in $\Gepsj{j + 1}$ or returns ``fail''.
}
\label{alg:translate-walk-eps}
\end{algorithm}

\begin{algorithm}[H]
\begin{algorithmic}[1]
\Function{\PageRankLargerDamping}{$G$, $\tpi_{\eps_1}$, $\eps'$, $\tau$}
    \For{$j \gets 1 \ldots \lceil \log_{1 / (1 - \tau)} \eps_1 / \eps' \rceil$}
    \State{$l\gets \lceil \frac{9 \ln n}{\eps_j}\rceil $}
	\State Implicitly compute $\GepsZero{\eps_j}$ using \cref{eq:definition-T-eps-sigma}.
        \State Run $\RandomWalks\rb{\GepsZero{\eps_j}, l, t}$ where $t_i(v) = \lceil R \tpi_{\epsilon_j}(v) n \ln n \cdot k_i\rceil$, $k_i$ is defined by~\eqref{eq:ki-delta}. Let $W$ be the set of resulting walks. \label{line:lower-damping-invoke-RandomWalks}
        \Comment{$R$ is set in~\cref{theorem:PageRank-lower-eps}}
        \State $W_T\gets \emptyset$
				\State{$\eps_{j + 1} \gets \max\left\{\eps', \eps_j (1 - \tau)\right\}$}
        \ForAll{$w\in W$ in parallel}
            \If{$w_T=\TranslateWalkEps(G, \eps_j, \eps_{j + 1}, w)$ did not ``fail'' \label{line:translate-walk-eps}}
                \State{Add $w_T$ to $W_T$.}
            \EndIf				
        \EndFor
        \State{$\tpi_{\epsilon_{j + 1}} = \StationaryDistribution(W_T, \eps_j)$ \label{line:transform-walk-eps-compute-tpi}}
	\EndFor
\EndFunction
\end{algorithmic}
\caption{An algorithm that given a $(1 + \alpha)$-approximate $\eps_1$-PageRank $\tpi_{\eps_1}$ of $G$ for $\eps_1 \le 1/2$, outputs a $(1 + \alpha)$-approximate $\eps'$-PageRank $\tpi_{\eps'}$ of $G$ for $\eps' < \eps_1$. Parameter $\tau \in (0, 1/2)$ affects success probability, round complexity and space complexity.}
\label{alg:PageRank-lower-eps}
\end{algorithm}

\begin{lemma}\label{lemma:TranslateWalkEps-fail-prob}
	Given a random walk $w$ of length $l$ in $\GepsZero{\eps_j}$, with probability at least $\rb{\tfrac{1 - \eps_j}{1 - \eps_{j + 1}}}^{l}$ the call $\TranslateWalkEps(w)$ outputs a random walk of length $l$ in $\GepsZero{\eps_{j + 1}}$; otherwise, $\TranslateWalkEps(w)$ reports ``fail''.
\end{lemma}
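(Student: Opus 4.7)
The plan is to apply the rejection-sampling lemma (\cref{lem:rejection-sampling}) directly, taking the sample space to be the set of length-$l$ walks in $G$ annotated with per-edge transition types from $\{\text{directed}, \text{jump}\}$. First I would write down the probability of such a labeled walk $w$ under $\GepsZero{\eps}$. Since $\GepsZero{\eps} = (1-\eps)\rw{G} + (\eps/n) J$ and each step is chosen independently, one gets
\[
P_{\eps}(w) \;=\; \eps^{t} (1-\eps)^{g} \cdot C(w),
\]
where $t$ and $g=l-t$ are the numbers of jump and directed transitions in $w$, and $C(w)$ is a purely structural factor (a product of $1/n$ for jump steps and $1/\deg^{+}(v_{i-1})$ for directed steps) that does not depend on $\eps$. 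In particular the $C(w)$ terms will cancel in every ratio that appears below.

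Next I would set $P := P_{\eps_j}$ (source) and $Q := P_{\eps_{j+1}}$ (target) and compute
\[
\frac{P(w)}{Q(w)} \;=\; \Bigl(\tfrac{\eps_j}{\eps_{j+1}}\Bigr)^{t}\Bigl(\tfrac{1-\eps_j}{1-\eps_{j+1}}\Bigr)^{g}.
\]
Writing $a := \eps_j/\eps_{j+1}\ge 1$ and $b := (1-\eps_j)/(1-\eps_{j+1})\le 1$ (using $\eps_{j+1}\le \eps_j$), this ratio equals $b^{l}\,(a/b)^{t}$. Since $a/b\ge 1$, it is minimized over all labeled walks at $t=0$, $g=l$, so
\[
r \;:=\; \min_{w}\frac{P(w)}{Q(w)} \;=\; \Bigl(\tfrac{1-\eps_j}{1-\eps_{j+1}}\Bigr)^{l}.
\]
This is precisely the constant $r$ that appears in \TranslateWalkEps, so the acceptance probability $p = r\cdot Q(w)/P(w) = r\cdot \frac{\eps_{j+1}^{t}(1-\eps_{j+1})^{g}}{\eps_j^{t}(1-\eps_j)^{g}}$ matches the algorithm and satisfies $p\le 1$ by construction.

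To conclude, I would invoke \cref{lem:rejection-sampling} with these $P$, $Q$ and $r$: its statement gives that the unconditional probability of returning a sample equals $r$, and that conditioned on returning, the output is distributed according to $Q$, i.e.\ as a random walk of length $l$ in $\GepsZero{\eps_{j+1}}$. This yields the claimed success probability $(1-\eps_j)^{l}/(1-\eps_{j+1})^{l}$ and the claimed distributional property.

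The only subtle point is verifying that the minimum of $P/Q$ is indeed attained at the all-directed labeling; everything else is a mechanical unpacking of $P_{\eps}(w)$ and a direct appeal to \cref{lem:rejection-sampling}. In particular, no additional structural facts about $G$ (e.g.\ $c$-balancedness) are needed, since the structural factor $C(w)$ cancels in every ratio.
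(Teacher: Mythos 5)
Your proof is correct and amounts to the same rejection-sampling idea as the paper, but organized slightly differently: you apply \cref{lem:rejection-sampling} once, globally, to the sample space of length-$l$ labeled walks, identifying $r = \min_w P(w)/Q(w) = \bigl(\tfrac{1-\eps_j}{1-\eps_{j+1}}\bigr)^l$ by observing that $P(w)/Q(w) = b^l(a/b)^t$ is minimized at $t=0$. The paper instead applies \cref{lem:rejection-sampling} edge-by-edge, bounding each per-edge ratio $f(\eps_j,e)/f(\eps_{j+1},e)$ below by $b = \tfrac{1-\eps_j}{1-\eps_{j+1}}$ and multiplying over $l$ edges (as in the $\sigma_j<1/2$ case of \cref{lemma:prob-bound-on-repeating-directed-pi}). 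Both routes give the same acceptance probability and the same output distribution. In fact your reading is the one consistent with the pseudocode of \cref{alg:translate-walk-eps}, which sets $p \gets r\,\tfrac{\eps_{j+1}^t(1-\eps_{j+1})^g}{\eps_j^t(1-\eps_j)^g}$ with a single factor of $r$; the paper's proof text sets $r$ to the per-edge quantity $\tfrac{1-\eps_j}{1-\eps_{j+1}}$, which to match the pseudocode would have to appear as $r^l$ (as it does in the analogous line of \TranslateWalk). Your whole-walk formulation avoids that small mismatch. The one extra step your version requires---verifying the minimum of $P/Q$ is attained at the all-directed labeling---you handle correctly via $a/b \ge 1$.
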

\begin{proof}
	The proof of this lemma is similar to one of \cref{lemma:prob-bound-on-repeating-directed-pi} for $\sigma_j < 1/2$. In this proof as well, we would like to use \cref{lem:rejection-sampling} for each edge of walk $w$. Each edge of $w$ is either a jump transition (with probability $\eps_j$) or a (directed) graph transition (with probability $1 - \eps_j$).
	
	As in \cref{lemma:prob-bound-on-repeating-directed-pi}, fix a vertex $v$ and let $e$ be some transition that can be chosen by a random walk that has reached $v$. Denote by $f(\eps, e)$ the probability of choosing $e$. To use \cref{lem:rejection-sampling}, consider the ratios $f(\eps_j, e) / f(\eps_{j+1}, e)$. For any jump transition $e_j$ we have $f(\eps_j, e_j) / f(\eps_{j+1}, e_j) = \tfrac{\eps_j}{\eps_{j + 1}} \ge 1$. For a graph transition $e_g$, we have $f(\eps_j, e_g) / f(\eps_{j+1}, e_g) = \tfrac{1-\eps_j}{1-\eps_{j+1}} < 1$. Hence, for any transition $e$ we have
	\[
		\frac{f(\eps_j, e)}{f(\eps_{j + 1}, e)} \ge \frac{1-\eps_j}{1-\eps_{j+1}}.
	\]
	
	%Consider a random walk of length $l$ in $\GepsZero{\eps_j}$. Partition $[l]$ into disjoint sets $D$ and $J$. Then, $w$ has directed transitions exactly at positions given by $D$ and jump transitions exactly at positions given by $J$ with probability $f(\eps_j) = (1 - \eps_j)^{|D|} \eps_j^{|J|}$. Similarly, a random walk $w_T$ in $\GepsZero{\eps_{j + 1}}$ corresponds to $D$ and $J$ in the same way as $w$ with probability $f(\eps_{j + 1}) = (1 - \eps_{j + 1})^{|D|} \eps_{j + 1}^{|J|}$. Now we lower-bound the ratio $f(\eps_j) / f(\eps_{j + 1})$ that will enable us to fix $r$, where $r$ is defined in \cref{lem:rejection-sampling}. We have
%	\[
%		\frac{f(\eps_j)}{f(\eps_{j + 1})} = \frac{(1 - \eps_j)^{|D|} \eps_j^{|J|}}{(1 - \eps_{j + 1})^{|D|} \eps_{j + 1}^{|J|}} \ge \frac{(1 - \eps_j)^{|D|}}{(1 - \eps_{j + 1})^{|D|}} \ge \frac{(1 - \eps_j)^{l}}{(1 - \eps_{j + 1})^{l}}.
%	\]
	So, we set $r = (1 - \eps_j) / (1 - \eps_{j + 1})$ and the proof follows by application of \cref{lem:rejection-sampling} in the same way as in the proof of \cref{lemma:prob-bound-on-repeating-directed-pi} for $\sigma_j < 1/2$.
\end{proof}

We now want to use \cref{lemma:TranslateWalkEps-fail-prob} to lower-bound the success probability of $\TranslateWalkEps$ invoked within \cref{alg:PageRank-lower-eps}. For that, we first establish the following inequality.
\begin{lemma}\label{lemma:1-y/2-ge-exp(-2y)}
	For any $y \in [0, 1/2]$ we have $1 - y \ge \exp(-2 y)$.
\end{lemma}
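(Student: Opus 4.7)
The plan is to verify the inequality $1 - y \ge \exp(-2y)$ on $[0, 1/2]$ by a short elementary calculus argument. I would define $f(y) = 1 - y - \exp(-2y)$ and show $f(y) \ge 0$ throughout the interval.

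First, I would compute $f(0) = 1 - 0 - 1 = 0$, so the inequality holds with equality at the left endpoint. Next, I would evaluate $f(1/2) = 1/2 - e^{-1}$, which is strictly positive since $e > 2$. Then I would differentiate to understand the behaviour in between: $f'(y) = -1 + 2\exp(-2y)$, which is strictly decreasing in $y$ and satisfies $f'(0) = 1 > 0$ and $f'(1/2) = -1 + 2/e < 0$. Thus there is a unique critical point $y^* = \tfrac{1}{2}\ln 2$ inside $[0, 1/2]$, and $f$ is increasing on $[0, y^*]$ and decreasing on $[y^*, 1/2]$.

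Since $f$ first rises from $f(0) = 0$ to its maximum at $y^*$ and then decreases to $f(1/2) = 1/2 - 1/e > 0$, it is nonnegative on the whole interval. This gives $1 - y \ge \exp(-2y)$ for all $y \in [0, 1/2]$, as required.

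There is no real obstacle here; the one thing to double-check is the sign of $f(1/2)$, which reduces to the well-known fact $e > 2$. Alternatively, one could argue via the series $-\ln(1-y) = \sum_{k\ge 1} y^k/k \le y/(1-y) \le 2y$ on $[0,1/2]$, exponentiating to obtain the claim — I would pick whichever version is more concise in the write-up.
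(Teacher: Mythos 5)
Your argument is correct and complete. It differs from the paper's proof, which avoids calculus entirely: the paper starts from the (Lagrange-remainder / alternating-series) bound $1 - x + x^2/2 \ge \exp(-x)$ for $x \ge 0$, observes that $x/2 \ge x^2/2$ on $[0,1]$ so that $1 - x/2 \ge 1 - x + x^2/2$, and substitutes $y = x/2$. You instead set $f(y) = 1 - y - \exp(-2y)$, check the endpoints $f(0)=0$ and $f(1/2)=1/2 - 1/e > 0$, and analyze the sign of $f'(y) = -1 + 2\exp(-2y)$, which crosses zero once at $y^\star = \tfrac{1}{2}\ln 2 \in [0,1/2]$, so $f$ increases then decreases and stays nonnegative. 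Both are valid; the paper's route is a purely algebraic two-liner that never needs to locate a critical point or invoke $e > 2$, while yours is a more mechanical but equally rigorous calculus check. The alternative you sketch at the end (bounding $-\ln(1-y) = \sum_{k\ge 1} y^k/k \le y/(1-y) \le 2y$ on $[0,1/2]$) is closer in spirit to the paper's algebraic style, though still not the same manipulation.
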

\begin{proof}
	From Taylor expansion we have $1 - x + x^2 / 2 \ge \exp(-x)$. Since for $x \in [0, 1]$ it holds that $x / 2 \ge x^2 / 2$, we have
	$1 - x/2 \ge 1 - x + x^2 / 2 \ge \exp(-x)$. Now the lemma follows by letting $y = x/2$.
\end{proof}

\begin{lemma}\label{lemma:succeed-translate-walk-eps}
Let $\tau \le 1/2$. If $\eps_{j + 1} = (1 - \tau) \eps_j$, then $\TranslateWalkEps$ invoked by \cref{alg:PageRank-lower-eps} succeeds with probability at least $n^{-36 \tau}$.
\end{lemma}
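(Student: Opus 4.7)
The plan is to apply Lemma~\ref{lemma:TranslateWalkEps-fail-prob}, which guarantees that $\TranslateWalkEps$ succeeds with probability at least $\left(\frac{1-\eps_j}{1-\eps_{j+1}}\right)^l$ on the walks produced by \PageRankLargerDamping, where $l = \lceil 9 \ln n / \eps_j \rceil$. The goal is then to lower-bound this quantity by $n^{-36\tau}$ under the hypothesis $\eps_{j+1} = (1-\tau)\eps_j$. So the entire argument reduces to estimating this base-times-exponent expression via the approximation $1-y \ge \exp(-2y)$ from Lemma~\ref{lemma:1-y/2-ge-exp(-2y)}.

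First, I would rewrite the base as
$$\frac{1-\eps_j}{1-\eps_{j+1}} \;=\; 1 - \frac{\tau \eps_j}{1-\eps_{j+1}},$$
using $\eps_j - \eps_{j+1} = \tau \eps_j$. Next, to apply Lemma~\ref{lemma:1-y/2-ge-exp(-2y)}, I need $\frac{\tau \eps_j}{1-\eps_{j+1}} \in [0, 1/2]$. This follows from the hypotheses of the lemma and of \PageRankLargerDamping: since $\eps_1 \le 1/2$ by assumption and $\eps_{j+1} \le \eps_j$ by construction of the sequence in Algorithm~\ref{alg:PageRank-lower-eps}, we have $1-\eps_{j+1} \ge 1/2$, and combined with $\tau \le 1/2$ and $\eps_j \le 1/2$ this yields the required bound. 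Invoking Lemma~\ref{lemma:1-y/2-ge-exp(-2y)} then gives
$$\left(\frac{1-\eps_j}{1-\eps_{j+1}}\right)^{l} \;\ge\; \exp\!\left(-\frac{2 l \tau \eps_j}{1-\eps_{j+1}}\right).$$

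Finally, I would bound the exponent. The ceiling in $l = \lceil 9 \ln n / \eps_j \rceil$ yields $l \cdot \eps_j \le 9\ln n + \eps_j \le 9 \ln n + 1$, and since $1-\eps_{j+1} \ge 1/2$ we obtain
$$\frac{2 l \tau \eps_j}{1-\eps_{j+1}} \;\le\; 4\tau(9 \ln n + 1) \;\le\; 36 \tau \ln n,$$
where the last inequality absorbs the lower-order additive term into the leading one for $n$ large enough (alternatively, one may slightly tighten the ceiling bound). Exponentiating gives the claimed bound $n^{-36\tau}$.

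There is no real obstacle here beyond careful constant tracking; the only non-trivial inequality is the exponential approximation, and everything else is arithmetic. The main thing to verify up front is that the sequence $\{\eps_j\}$ produced by Algorithm~\ref{alg:PageRank-lower-eps} never exceeds $1/2$, so that both $1-\eps_{j+1} \ge 1/2$ and the domain condition for Lemma~\ref{lemma:1-y/2-ge-exp(-2y)} hold throughout.
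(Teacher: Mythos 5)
Your proof is correct and takes essentially the same route as the paper: apply Lemma~\ref{lemma:TranslateWalkEps-fail-prob}, rewrite the base as $1 - \tfrac{\tau\eps_j}{1-\eps_{j+1}}$, invoke Lemma~\ref{lemma:1-y/2-ge-exp(-2y)}, and bound the denominator using $\eps_{j+1}\le 1/2$. The only difference is cosmetic --- you keep the ceiling in $l$ explicit, which surfaces an additive $4\tau$ in the exponent and makes the final step $4\tau(9\ln n+1)\le 36\tau\ln n$ not literally true (it costs a constant factor $e^{-4\tau}$); the paper sidesteps this by silently dropping the ceiling, which is the same $O(1)$ looseness, so neither affects the downstream argument.
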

\begin{proof}
	By \cref{lemma:TranslateWalkEps-fail-prob}, $\TranslateWalkEps$ ``fails'' with probability at most $\rb{(1 - \eps_j) / (1 - \eps_{j + 1})}^{l}$ for $l = \tfrac{c \ln{n}}{\eps_j}$. We now upper-bound this probability. We have
	\[
		\frac{1 - \eps_j}{1 - \eps_{j + 1}} = 1 - \frac{\tau \eps_j}{1 - (1 - \tau) \eps_j}.
	\]
	Therefore,
	\[
		\rb{\frac{1 - \eps_j}{1 - \eps_{j + 1}}}^l = \rb{1 - \frac{\tau \eps_j}{1 - (1 - \tau) \eps_j}}^{\frac{9 \ln{n}}{\eps_j}}
		\stackrel{\text{\cref{lemma:1-y/2-ge-exp(-2y)}}}{\ge} n^{-\frac{18 \tau}{1 - (1 - \tau) \eps_j}}
		\ge n^{-36 \tau},
	\]
	where we used the fact that $\eps_j, \tau \le 1/2$.
\end{proof}

\begin{theorem}\label{theorem:PageRank-lower-eps}
Let $G$ be a directed graph.
Let $\eps_1 > \eps' \geq \log n / o(S)$, and let $\tpi_{\eps_1}$ be a $(1 + \alpha)$-approximate $\eps_1$-PageRank of $G$. Given $\tau \in (0, 1/2]$, \PageRankLargerDamping (\cref{alg:PageRank-lower-eps}) outputs a $(1 + \alpha)$-approximate $\eps'$-PageRank $\tpi_{\eps'}$ of $G$. Moreover, \PageRankLargerDamping can be implemented in $O(\tau^{-1} \cdot \log{1/\eps'} \cdot (\log \log{n} + \log 1 / \eps'))$ MPC rounds and the total space of $O\rb{m+ \frac{1}{\eps'^{3.5} \alpha^{2}} n^{36\tau} n \ln^{2.5}{n}}$ with strongly sublinear space per machine. This algorithm is randomized and outputs a correct result with probability at least $1 - O(\frac{1}{\tau n^2}\cdot \log 1/ \eps')$.
\end{theorem}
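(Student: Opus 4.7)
}
The plan is to mirror the analysis of \cref{theorem:PageRank-directed-c-balanced}, but with the deformation now applied to the damping factor rather than to the mixing parameter $\sigma$. The outer loop of \PageRankLargerDamping runs $J \eqdef \lceil \log_{1/(1-\tau)} \eps_1/\eps'\rceil$ iterations, and since $\log(1/(1-\tau)) \ge \tau$ for $\tau \in (0,1/2]$, one has $J = O(\tau^{-1}\log(1/\eps'))$. Each iteration starts with a $(1+\alpha)$-approximate $\eps_j$-PageRank $\tpi_{\eps_j}$, samples length-$l_j$ PageRank walks in $\GepsZero{\eps_j}$ via \RandomWalks, translates each one through \TranslateWalkEps to a walk in $\GepsZero{\eps_{j+1}}$, and hands the surviving walks to \StationaryDistribution to produce $\tpi_{\eps_{j+1}}$. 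By \cref{lemma:TranslateWalkEps-fail-prob} every surviving walk is a true random walk in $\GepsZero{\eps_{j+1}}$, so \cref{lemma:sampling-pagerank} delivers the $(1+\alpha)$-approximation at step $j+1$ and keeps the invariant alive inductively.

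For the round complexity, note $l_j = \lceil 9\ln n/\eps_j\rceil = O(\log n/\eps')$. Implicit construction of $\GepsZero{\eps_j}$ costs $O(1)$ rounds by \cref{observation:constructing-stochastic-graph}, \RandomWalks takes $O(\log l_j) = O(\log\log n+\log 1/\eps')$ rounds by \cref{theorem:alg-failure-inner-loop}~\eqref{item:RandomWalks-approx-rounds}, \TranslateWalkEps is a local per-walk Bernoulli trial, and \StationaryDistribution takes $O(1)$ rounds by \cref{lemma:StationaryDistribution-rounds}. Multiplying by $J$ yields the claimed $O(\tau^{-1}\log(1/\eps')\cdot(\log\log n+\log 1/\eps'))$ bound.

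The main obstacle is to calibrate the oversampling constant $R$ in \cref{line:lower-damping-invoke-RandomWalks} so that, despite the aggressive rejection performed by \TranslateWalkEps, enough walks remain at each vertex to feed \StationaryDistribution, without blowing up the space. \cref{lemma:succeed-translate-walk-eps} guarantees a per-walk survival probability of at least $n^{-36\tau}$, and \StationaryDistribution needs $K = \lceil 9\ln n/(\eps'\alpha^2)\rceil$ walks per vertex. Combining \cref{cor:lower-bound-pagerank} with the inductive bound $\tpi_{\eps_j}(v)\ge(1-\alpha)\pi_{\eps_j}(v)\ge(1-\alpha)\eps_j/n$ shows that the choice $R = \Theta(n^{36\tau}/(\eps'^2\alpha^2))$ makes the expected number of survivors at each vertex at least $2K$. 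A Chernoff tail (\cref{lemma:chernoff}~\eqref{item:delta-at-most-1-ge}), followed by a union bound over vertices, then yields ``at least $K$ survive'' with probability $1-O(1/n^2)$. Plugging this $R$ into \cref{theorem:alg-failure-inner-loop}~\eqref{item:RandomWalks-approx-space} and using $l_j^{1+2\alpha}\le l_j^{1.5}$ for $\alpha\le 1/4$ produces
\[
O\rb{m + R\,l_j^{1+2\alpha}\,n\ln n}\;=\;O\rb{m + \frac{n^{36\tau}}{\eps'^{3.5}\alpha^2}\,n\ln^{2.5} n},
\]
matching the claimed total space bound.

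Finally, a single iteration fails only if (i) \RandomWalks fails (probability $n^{-R\alpha\eps_j/3+2}e^2\ll 1/n^2$ for our choice of $R$, by \cref{theorem:alg-failure-inner-loop}~\eqref{item:RandomWalks-approx-success-prob}), (ii) fewer than $K$ walks survive translation at some vertex (probability $O(1/n^2)$ by the Chernoff argument above), or (iii) \StationaryDistribution misses the $(1+\alpha)$ window (probability $O(1/n^2)$ by \cref{lemma:sampling-pagerank}; here one also checks that since $\eps_{j+1}\ge (1-\tau)\eps_j\ge \eps_j/2$, the walks of length $l_j\ge 9\ln n/(2\eps_{j+1})$ still fail to exhibit a jump transition only with probability $n^{-\Omega(1)}$, so the truncation in \StationaryDistribution is well-defined). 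A union bound over $J = O(\tau^{-1}\log 1/\eps')$ iterations gives the stated overall failure probability $O(\tau^{-1} n^{-2}\log 1/\eps')$.
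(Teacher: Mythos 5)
Your proposal matches the paper's proof essentially step for step: you run the same $J=O(\tau^{-1}\log 1/\eps')$ outer iterations, invoke the same trio of lemmas (\cref{lemma:succeed-translate-walk-eps}, \cref{lemma:sampling-pagerank}, \cref{theorem:alg-failure-inner-loop}) with the same oversampling constant $R=\Theta(n^{36\tau}/(\eps'^2\alpha^2))$, and combine the three failure modes by the same Chernoff-plus-union-bound accounting, yielding the identical round, space, and error guarantees. The only material divergence is your parenthetical check in (iii) that walks of length $l_j$ are still long enough to exhibit a jump transition in $\GepsZero{\eps_{j+1}}$ because $\eps_{j+1}\ge\eps_j/2$; the paper does not spell this out, but it is a harmless and worthwhile addition rather than a different route.
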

\begin{proof}
	We split the proof into three parts: upper-bounding the success probability, deriving the space requirement, and analyzing the round complexity.

\paragraph{Round complexity.}
		The main loop of \cref{alg:PageRank-lower-eps} is executed at most $\left \lceil \log_{1 / (1 - \tau)} 1 / \eps' \right \rceil = O\rb{\tau^{-1} \cdot \log 1/ \eps'}$ times. Each loop invokes $\RandomWalks$ for length $O\rb{\tfrac{\ln{n}}{\eps'}}$, which can be executed in $O(\log \log{n} + \log 1 / \eps')$ MPC rounds. \cref{line:translate-walk-eps} to \cref{line:transform-walk-eps-compute-tpi} can be implemented in $O(1)$ rounds in the same way as described in the proof of \cref{theorem:PageRank-directed-c-balanced} (see \cref{section:proof-of-PageRank-directed-c-balanced}).
	
	\paragraph{Success probability.}
		By \cref{lemma:sampling-pagerank} each $\tpi_{\eps_j}$ computed on \cref{line:transform-walk-eps-compute-tpi} is a $(1 + \alpha)$-approximation of $\pi_{\eps'}$ with probability at least $1 - 3 / n^2$.

Note that algorithm generates at least
\begin{eqnarray*}
\lceil R \tilde{\pi}_{\epsilon_j}(v) n \ln n k_{\lceil \log l\rceil}\rceil &\ge& R \tilde{\pi}_{\epsilon_j}(v) n \ln n\\
&\ge& R\pi_{\epsilon_j}(v)(1-\alpha) n \ln n \ge R\epsilon_j (1-\alpha)  \ln n,
\end{eqnarray*}
random walks from each vertex. Next, these walks are downsampled using \cref{lemma:succeed-translate-walk-eps} with probability $n^{-36 \tau}$.
Similarly as in \cref{theorem:PageRank-directed-c-balanced}, we want the expected number of walks to be $2K_j$, so that the failure probability is less than $\frac{1}{n^2}$. Hence,
the following requirement on $R$
\[
2K_j \le n^{-36 \tau} \cdot R\epsilon_j (1-\alpha)  \ln n
\]
what gives
\[
R \ge \frac{20}{n^{-36\tau} \epsilon_j^2 (1-\alpha) \alpha^2} \le \frac{28}{\epsilon_j^2 \alpha^2} n^{36\tau}
\]
We set $R=\frac{28}{\epsilon_j^2 \alpha^2} n^{36\tau}$ and, similarly, as in \cref{theorem:PageRank-directed-c-balanced} the
sampling algorithm fails with probability at most $n^{-2}e^2$. This gives the failure probability $\frac{12}{n^2}$ of each round, and over all rounds we get $\frac{12}{\tau n^2}\cdot \log 1/ \eps'$.

%\stodo{Discuss about potentially changing \cref{lemma:sampling-pagerank} so that it fails with probability $1 - n^{-h}$, for any constant $h$ set a priori.}
	
	\paragraph{Space requirement.}
The space usage is dominated by calls to~\cref{alg:random-walks}, which from \cref{theorem:alg-failure-inner-loop}
are bounded by $O\rb{m+R l^{1+ 2 \alpha}n \ln{n}}$. Observe that $l$ is the highest in the last round and equals $l= \lceil \frac{9 \ln n}{\eps'}\rceil= O(\ln n/\eps')$.
This gives the following upper-bound on space
\begin{eqnarray*}
% \nonumber % Remove numbering (before each equation)
O\rb{m+R l^{1+2 \alpha}n \ln{n}} &=& O\rb{m+ \frac{1}{\eps'^2 \alpha^2} n^{36\tau} l^{1+2\alpha} n \ln{n}} \\
  &=&O\rb{m+ \frac{1}{\eps'^2 \alpha^2} n^{36\tau} l^{1.5} n \ln{n}} \\
  &=&O\rb{m+ \frac{1}{\eps'^2 \alpha^2} n^{36\tau} \rb{\frac{9\ln n}{\eps'}}^{1.5} n \ln{n}}\\
  &=&O\rb{m+ \frac{1}{\eps'^2 \alpha^2} n^{36\tau} \rb{\frac{9\ln n}{\eps'}}^{1.5} n \ln{n}}\\
  &=&O\rb{m+ \frac{1}{\eps'^{3.5} \alpha^{2}} n^{36\tau} n \ln^{2.5}{n}}.\\
\end{eqnarray*}
\end{proof}

\subsection{From PageRank in $c$-balanced to PageRank in General Graphs}
\label{sec:mapping-back}
In the previous sections we developed a way for efficiently approximating PageRank of $c$-balanced graphs. Here, we describe how to use that result to approximate the PageRank of a graph $G$, which is not necessarily $c$-balanced. Let $G_c$ be a $c$-balanced graph obtained from $G$ by applying the transformation described in \cref{sec:transforming-to-c-balanced}. Recall that to approximate PageRank it suffices to sample random walks up to the point of their \emph{first} random jump (see \cref{alg:random-walks-directed-main}). For the input graph $G$ and damping factor $1-\eps$ such a random walk with high probability has length $O\rb{\tfrac{\log{n}}{\eps}}$. Instead of generating PageRank walks in $G$, we will generate them in $G_c$, but for a damping factor $1-\eps / \poly \log{n}$ (see \cref{lemma:random-jumps-early}). This will be done using \cref{alg:PageRank-lower-eps}. Let $W_c$ be an $(\tfrac{\eps}{\poly \log{n}})$-PageRank walk.
We observe that with constant probability the first jump in $W_c$ appears after $\Omega\rb{\tfrac{\log^2{n}}{\eps}}$ steps.  Assuming this from $W_c$ we can obtain
a random walk $W$ in $G$ of length $\Omega\rb{\tfrac{\log{n}}{\eps}}$ such that no transition of $W$ is a random jump.
In order to obtain $\eps$-PageRank we reintroduce random jumps with probability $\eps$ and truncate walks after first such jump.
This is done by sequentially iterating over the edges of $W$ and with probability $\eps$ truncating $W$ at any given step. This process is given as algorithm $\TranslateBalancedToPageRankWalk$ (\cref{alg:translate-balanced-to-PageRank-walk}).

\begin{algorithm}[H]
\begin{algorithmic}[1]
\Function{$\TranslateBalancedToPageRankWalk$}{$G$, $\eps$, $w$}
	\If{$w$ contains random jump \label{line:1st-fail-TranslateBalancedToPageRankWalk}}
		``fail''.
	\EndIf
	\State{Let $w_R$ be the walk in $G$ consisting of all core edges of $w$. \label{line:undo-c-balanced}} \Comment{See \cref{obs:core}}
	\State{Mark each edge of $w_R$ independently and with probability $\eps$. \label{line:mark-edges}}
	\If{at least one edge of $w_R$ is marked}
		\State{Truncate $w_R$ before the first marked edge.\label{line:truncate-TranslateBalancedToPageRankWalk}}
        \State{Return $w_R$.}
	\Else
		\State{``fail'' \label{line:2nd-fail-TranslateBalancedToPageRankWalk}}
	\EndIf
\EndFunction
\end{algorithmic}
\caption{Let $G$ be a graph and $G_c$ its $c$-balanced version. Given a PageRank walk $w$ in $G_c$ that has no random jumps, the algorithm returns an $\eps$-PageRank walk in $G$ that has exactly one random jump transition or the algorithm ``fails''. This random jump transition is the last one in the walk.
}
\label{alg:translate-balanced-to-PageRank-walk}
\end{algorithm}

Now we present the main algorithm of this entire section. The algorithm outputs an approximate PageRank for the input graph.
\begin{algorithm}[H]
\begin{algorithmic}[1]
\Function{\PageRankOfGeneralGraphs}{$G$, $\eps$, $\alpha$}
	\State{Let $G_c$ be the balanced graph of $G$ obtained as described in \cref{sec:transforming-to-c-balanced}.}
	\State{Let $\tpi^c_{1/2} \gets \PageRankOfBalancedGraphs(G_c, 1/2, \alpha)$.}
	\State{$\ell \gets   \lceil \log n \rceil \cdot \left\lceil \tfrac{9 \log{n}}{\eps} \right \rceil$, and $\eps' \gets 1 / (4 \ell)$. \label{line:set-ell-and-eps'}}
	\State{Let $\tpi^c_{\eps'} \gets \PageRankLargerDamping(G_c, \tpi^c_{1/2}, \eps')$. \label{line:execute-PageRankLargerDamping}}
	\State Implicitly compute $(G_c)_{\eps', 0}$ using \cref{eq:definition-T-eps-sigma}.
	\State Run \textsc{RandomWalks}$((G_c)_{\eps', 0}, \ell, t)$ where $t_i(v) = \lceil L \tpi^c_{\eps'}(v) n \ln n \cdot k_i\rceil$, $k_i$ is defined by~\eqref{eq:ki-delta} and $d'$ is a sufficiently large constant. Let $W$ be the set of resulting walks. \label{line:RandomWalks-in-PageRankGeneral}
    \Comment{$L$ is set in \cref{theorem:PageRank-directed}}

	\State $W_T\gets \emptyset$
	\ForAll{$w\in W$  in parallel}
			\If{$w_T=\TranslateBalancedToPageRankWalk(G, \eps, w)$ did not ``fail'' \label{line:invoke-TranslateBalancedToPageRankWalk}}
				\State{Add $w_T$ to $W_T$.}
			\EndIf
	\EndFor
	\State{$\tpi_\eps = \StationaryDistribution(W_T, \eps)$ \label{line:PageRank-general-compute-tpi}}
\EndFunction
\end{algorithmic}
\caption{An algorithm for computing a $(1 + \alpha)$-approximate PageRank $\tilde{\pi}_{\epsilon}$ of a graph $G$.}
\label{alg:PageRank-in-input-graph}
\end{algorithm}

Note that \cref{alg:PageRank-in-input-graph} does not directly map PageRank walks from $G_c$ to PageRank walks in $G$.
Instead, it takes advantage of the fact that in order to approximate PageRank one only needs to know the random walks until the first jump transition and proceeds as follows.
It first computes PageRank walks in $G_c$, then discards all walks that have at least one jump transition and finally truncates the resulting walks by simple coin tossing.
Note that this truncation step is equivalent to truncating the walks just before the first jump transition.

Now we analyze the correctness of \cref{alg:PageRank-in-input-graph}. The following result will be useful in establishing failure probability of \TranslateBalancedToPageRankWalk.
%We will use the following result to upper-bound the probability of a random jump appearing ``early'' in a PageRank walk.
\begin{lemma}\label{lemma:random-jumps-early}
	Let $\ell \ge 1$ be a parameter. Define $\eps' = 1 / (4 \ell)$. An $\eps'$-PageRank walk
    does not make a random jump within the first $\ell$ steps with probability at least $0.6$.
\end{lemma}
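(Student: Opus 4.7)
The plan is to reduce the lemma to a one-line calculation by exploiting the fact that in the PageRank Markov chain, whether a given step is a random jump is decided by an independent Bernoulli trial with success probability $\eps'$, regardless of the current vertex or the history. Consequently, the event that none of the first $\ell$ steps is a random jump is the intersection of $\ell$ independent events, each of probability $1-\eps'$. Its probability is therefore exactly
\[
(1-\eps')^\ell = \rb{1 - \tfrac{1}{4\ell}}^{\ell}.
\]

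With this reduction in hand, it remains to show $\rb{1 - 1/(4\ell)}^{\ell} \ge 0.6$ for every integer $\ell \ge 1$. I would do this via the elementary Taylor-series bound $\ln(1-x) \ge -x - x^2$, which holds for all $x \in [0, 1/2]$ (the tail $\sum_{k \ge 2} x^k/k$ is dominated by $x^2/(2(1-x)) \le x^2$ on this range). Applying it with $x = 1/(4\ell) \le 1/4$ yields
\[
\ell \ln\rb{1 - \tfrac{1}{4\ell}} \;\ge\; \ell\rb{-\tfrac{1}{4\ell} - \tfrac{1}{16\ell^{2}}} \;=\; -\tfrac{1}{4} - \tfrac{1}{16\ell} \;\ge\; -\tfrac{5}{16},
\]
so $\rb{1 - 1/(4\ell)}^{\ell} \ge e^{-5/16} > 0.73 > 0.6$, as claimed.

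There is essentially no obstacle here. The only conceptual point to be careful about is the justification for independence: one must observe that under the PageRank Markov chain the jump coin is tossed independently at every step and is not correlated with the state reached by the walk, so the probability of avoiding all jumps in $\ell$ steps really factors as $(1-\eps')^{\ell}$ rather than being affected by the vertex distribution. Once that is stated, the rest is a routine estimate, and the constant $0.6$ could in fact be replaced by any number strictly less than $3/4 = (1 - 1/4)^{1}$, since $\rb{1 - 1/(4\ell)}^{\ell}$ is increasing in $\ell$ with limit $e^{-1/4}$.
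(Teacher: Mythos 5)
Your proof is correct and follows essentially the same route as the paper: both reduce the claim to lower-bounding $(1-\eps')^\ell = (1 - 1/(4\ell))^\ell$ via an elementary inequality. The paper applies $1-y \ge e^{-2y}$ for $y \in [0,1/2]$ to get the bound $e^{-1/2} \approx 0.607$, while your slightly sharper estimate $\ln(1-x) \ge -x - x^2$ yields $e^{-5/16} \approx 0.73$, but the argument is the same in spirit.
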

\begin{proof}
	An $\eps'$-PageRank walk does not make a random jump within the first $\ell$ steps with probability
	\[
		(1 - \eps')^l \stackrel{\text{\cref{lemma:1-y/2-ge-exp(-2y)}}}{\ge} \exp{\rb{-2 \eps' l}} = \exp{(-1/2)} \ge 0.6.
	\]
\end{proof}

\cref{lemma:random-jumps-early} essentially states the following. If we are given an $\eps'$-PageRank walks in $G_c$ then
with probability at $1/2$ it can be turned into a $\eps$-PageRank walk in $G$.

\begin{lemma}
\label{lemma:downsampling-eps}
	The invocation $\TranslateBalancedToPageRankWalk(G, \eps, w)$ on \cref{line:invoke-TranslateBalancedToPageRankWalk} of \cref{alg:PageRank-in-input-graph} ``fails''
with probability at most $1/2$. If the algorithm succeeds, then it returns an $\eps$-PageRank walk of $G$ that has exactly one jump transition and that jump transition is the last one in the walk.
\end{lemma}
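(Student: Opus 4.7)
The plan is to bound the failure probability of \TranslateBalancedToPageRankWalk{} and then separately verify the distributional claim. The procedure can fail only in two branches: on \cref{line:1st-fail-TranslateBalancedToPageRankWalk}, if $w$ contains a random jump, and on \cref{line:2nd-fail-TranslateBalancedToPageRankWalk}, if the marking on \cref{line:mark-edges} leaves every edge of $w_R$ unmarked. Since \cref{line:RandomWalks-in-PageRankGeneral} of \cref{alg:PageRank-in-input-graph} ensures that $w$ is an $\eps'$-PageRank walk of length $\ell$ in $\GepsZero{\eps'}$ with $\eps' = 1/(4\ell)$, I would invoke \cref{lemma:random-jumps-early} directly to bound the probability of the first failure by $0.4$. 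Conditioned on no random jump in $w$, \cref{obs:core} produces a subsequence $w_R$ of length at least $\lfloor \ell / \lceil \log n \rceil \rfloor \ge \lceil 9\log n / \eps \rceil$, and because the marking on \cref{line:mark-edges} is independent of $w$, the probability that no edge of $w_R$ is marked is at most $(1-\eps)^{\lceil 9 \log n / \eps \rceil} \le 1/n^9$ by the computation in \cref{lemma:no-jump}. A union bound then yields total failure probability at most $0.4 + 1/n^9 < 1/2$ for large $n$.

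For the distributional claim, I plan to argue that, conditioned on $w$ having no random jumps, every transition of $w$ is drawn from $\rw{G_c}$; hence $w$ is a uniform random walk in $G_c$. The construction in \cref{sec:transforming-to-c-balanced} gives each internal gadget vertex $v_i$ ($i < \lambda$) outedges only to $v_{i+1}$, while $v_\lambda$ has one outedge per outneighbor of $v$ in $G$. Therefore, within each gadget the walk proceeds deterministically, and at every $v_\lambda$ it selects a successor uniformly among the outneighbors of $v$ in $G$. By \cref{obs:core}, the subsequence of core edges $w_R$ is thus distributed exactly as a uniform random walk in $G$ of length $\ell' \ge \lceil 9 \log n / \eps \rceil$.

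Finally, I will verify that independently marking each edge of a uniform random walk in $G$ with probability $\eps$ and halting at the first marked edge reproduces precisely the behaviour of an $\eps$-PageRank walk truncated at its first random jump: at each step we continue via a uniformly chosen outedge with probability $1-\eps$, and otherwise stop with a jump. Interpreting the first marked edge as the terminating jump transition retained in the output---which is the reading consistent with the lemma statement and with how \StationaryDistribution{} subsequently handles the returned walks on \cref{line:PageRank-general-compute-tpi}---the output contains exactly one jump transition, placed as its final edge. The main subtlety I expect to nail down carefully is the identification of $w_R$'s distribution with a uniform random walk in $G$, which hinges on the observation that the deterministic intra-gadget traversal in $G_c$ never biases the random choices made at the core vertices $v_\lambda$.
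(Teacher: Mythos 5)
Your proposal is correct and follows essentially the same structure as the paper's proof: bound the two failure events via \cref{lemma:random-jumps-early} and the $(1-\eps)^{|w_R|}$ calculation (the paper's analogue of \cref{lemma:no-jump}), then argue that the core-edge subsequence $w_R$ is a uniform random walk in $G$ and that the independent marking and truncation reproduce an $\eps$-PageRank walk. You spell out more explicitly than the paper why the deterministic intra-gadget traversal in $G_c$ makes $w_R$ distributionally a random walk in $G$, and you flag the same benign mismatch between ``truncate before the first marked edge'' in the pseudocode and ``exactly one jump transition, placed last'' in the lemma statement---both of these are accurate observations rather than gaps.
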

\begin{proof}
	We first analyze the success probability of $\TranslateBalancedToPageRankWalk$ and then show the claim for its output.
	\paragraph{Success probability.}
	There are two lines where \TranslateBalancedToPageRankWalk can ``fail'' -- \cref{line:1st-fail-TranslateBalancedToPageRankWalk,line:2nd-fail-TranslateBalancedToPageRankWalk}. By \cref{lemma:random-jumps-early}, it ``fails'' in \cref{line:1st-fail-TranslateBalancedToPageRankWalk} with probability at most $1/2$.
	\\
	\cref{line:2nd-fail-TranslateBalancedToPageRankWalk} is executed only if no edge of $w_R$ is marked on \cref{line:mark-edges}. We have
\[
|w_R| = \left \lfloor \ell / \left \lceil \log{n} \right \rceil \right\rfloor = \left\lfloor \left \lceil \log n \right \rceil \cdot \left \lceil \tfrac{9 \log{n}}{\eps} \right \rceil / \left \lceil \log{n} \right \rceil \right \rfloor= \left \lceil \tfrac{9 \log{n}}{\eps} \right \rceil.
\]
Hence, by~\cref{lemma:no-jump} no edge of $w_R$ is marked with probability at most $\frac{1}{n^9} \le 1/512$.
This implies that the invocation of \TranslateBalancedToPageRankWalk succeeds with probability at least $0.6 - 1/512 \ge 1/2$, as desired.
	
    \paragraph{Output of $\TranslateBalancedToPageRankWalk$.}
	As input, $\TranslateBalancedToPageRankWalk$ gets a PageRank walk $w$ in $G_c$; this walk is generated on \cref{line:RandomWalks-in-PageRankGeneral} of \cref{alg:PageRank-in-input-graph}. On \cref{line:1st-fail-TranslateBalancedToPageRankWalk} of $\TranslateBalancedToPageRankWalk$, $w$ is discarded if it contains any jump transition. So, if the algorithm does not ``fail'', $w$ is a random walk in $G_c$.
%Recall that in $G_c$ every edge of $G$ is replaced by a path of length $\left \lceil \log{n} \right \rceil$. Hence,
By the transformation on \cref{line:undo-c-balanced} we obtain a walk $w_R$ which is a random walk in $G$. After that, each transition of $w_R$ is marked with probability $\eps$ (that is implemented by marking on \cref{line:mark-edges}), i.e.,
we toss a random coin to see potential steps for random jumps. Then we truncate the walk at the moment of the first jump in $w_R$ occurs, and return this truncated walk. By construction, this walk is an $\eps$-PageRank walk in $G$.
\end{proof}

Now we are ready to prove the correctness of \cref{alg:PageRank-in-input-graph}, which establishes one of the main results of the paper.

\theoremPageRankdirected*

\begin{proof}
We will show that \cref{alg:PageRank-in-input-graph} satisfies properties of this claim.
As in previous proofs we split the proof into three parts: the space requirements; the round complexity; and, the probability of success. This is
the moment when we set all the parameters of the algorithms. We set $\delta$ to be the largest value not greater than $(\log \log n)^{-1}$ such that $\delta^{-1} \in \bbN$. Observe that $\delta \ge 1 / (1 + \log \log{n})$. Also, we set $\tau = \log \log \log^{-1} n$.
We will use the following bounds
\begin{equation*}
  O(1/\eps') = O(\ell) =  O(\log^2n /\eps).
\end{equation*}
\begin{equation*}
O(\log 1/\eps') = O(\log \ell) = O(\log \log n + \log 1/\eps),
\end{equation*}
and when cruder bound is enough we bound
\begin{equation*}
O(\log 1/\eps') = O (\log \log n + \log n) = O(\log n).
\end{equation*}

\paragraph{Round complexity.}
By \cref{theorem:PageRank-directed-c-balanced} the execution of
$\PageRankOfBalancedGraphs(G_c, 1/2, \alpha, \delta)$ takes $O(\delta^{-1} \log \log n) = O(\log^2 \log n)$ rounds.

Recall that $\eps' = \Theta\rb{\tfrac{\eps}{\log^2 n}}$ in \cref{alg:PageRank-in-input-graph} (see \cref{line:set-ell-and-eps'}).
By \cref{theorem:PageRank-lower-eps} the execution of $\PageRankLargerDamping$ on \cref{line:execute-PageRankLargerDamping} of \cref{alg:PageRank-in-input-graph} takes
\begin{eqnarray*}
O(\tau^{-1} \cdot \log{1/\eps'} \cdot (\log \log{n} + \log 1 / \eps')) &=& O(\log \log \log n \cdot (\log \log n+\log 1/\eps)\cdot(\log \log n+\log 1/\eps))\\
 &=& \tO(\log^2 \log n+\log^2 1/\eps)
\end{eqnarray*}
rounds.
Note that since $\epsilon \geq \log^3 n / o(S)$, we have $\epsilon' \geq \log n / o(S)$, which is required by \cref{theorem:PageRank-lower-eps}.

\cref{line:invoke-TranslateBalancedToPageRankWalk} to \cref{line:PageRank-general-compute-tpi} can be implemented in $O(1)$ rounds in the same way as described in the proof of \cref{theorem:PageRank-directed-c-balanced} (see \cref{section:proof-of-PageRank-directed-c-balanced}).

\paragraph{Success probability.}
As in proof of \cref{theorem:PageRank-directed-c-balanced} we require that there are $2K$ walks starting in each vertex before
the call to $\StationaryDistribution$, so that we have $K$ walks in each vertex with probability at least $1-\frac{1}{n^2}$.
Using the downsampling from~\cref{lemma:downsampling-eps} this gives
\[
2K \le \frac{1}{2} \lceil L \tilde{\pi}(v) n\ln{n}\rceil \le L \eps' (1-\alpha)  \ln n,
\]
what bounds $L$ as
\[
L \ge \frac{40}{\epsilon^2\alpha^2(1-\alpha)} \ge \frac{160}{3\epsilon^2\alpha^2}.
\]
We set $L= \frac{54}{\alpha^2 \epsilon^2}$. Now, by \cref{theorem:alg-failure-inner-loop}~\eqref{item:RandomWalks-approx-success-prob} the
sampling algorithm fails with probability at most
\[
n^{-\frac{C\alpha \epsilon}{3}+2}e^2\le  n^{-\frac{54}{3\alpha\epsilon}+2}e^2\le n^{-2}e^2 = O\rb{\frac{1}{n}}.
\]

By \cref{theorem:PageRank-directed-c-balanced} the execution of $\PageRankOfBalancedGraphs$ fails with probability $O\rb{\frac{1}{n^2\delta}} = O\rb{\frac{1}{n}}$.

By~\cref{theorem:PageRank-lower-eps} the execution of $\PageRankLargerDamping$ fails with probability
\begin{eqnarray}
% \nonumber % Remove numbering (before each equation)
  O\rb{\frac{1}{\tau n^2}\cdot \log 1/ \eps'} &=& O\rb{\frac{\log \log \log n}{ n^2} (\log \log n + \log 1/\eps)} \\
  &=& O\rb{\frac{\log \log \log n}{ n^2} (\log \log n + \log n)} \\
  &=& O\rb{\frac{1}{n}}.
\end{eqnarray}
Hence, in total the failure probability of the algorithm is $O(1/n)$.

\paragraph{Space complexity.}
By \cref{theorem:PageRank-directed-c-balanced} the execution of $\PageRankOfBalancedGraphs(G_c, 1/2, \alpha, \delta)$ requires space
\[
O\rb{\frac{m\ln^2 n \ln \ln n}{(1/2)^2 \alpha^2}+n^{18 \frac{3\delta}{1/2}} \frac{n \ln^{2.5} n}{(1/2)^{3.5} \cdot \alpha^2}}
= \tO\rb{ \frac{m}{\alpha^2} + \frac{n^{1+o(1)}}{\alpha^2}}.
\]
By \cref{theorem:PageRank-lower-eps} the execution of $\PageRankLargerDamping$ needs space
\[
O\rb{m+ \frac{1}{\eps'^{3.5} \alpha^{2}} n^{36\tau} n \ln^{2.5}{n}}= O\rb{m+ \frac{\log^7 n}{\eps^{3.5} \alpha^{2}} n^{36\tau} n \ln^{2.5}{n}} = \tO\rb{m + \frac{n^{1+o(1)}}{\eps^{3.5}\alpha^2}}.
\]
By \cref{theorem:alg-failure-inner-loop} the sampling algorithm $\RandomWalks$ requires space bounded by
\[
O\rb{m+L \ell^{1+ 2 \alpha}n \ln{n}} = \tO\rb{m + \frac{1}{\eps^2 \alpha^2} \cdot \ell^{1.5} n } = \tO\rb{m + \frac{n}{\eps^{3.5} \alpha^2}}.
\]
Hence, the final space complexity of the algorithm is $\tO\rb{\frac{m}{\alpha^2}+ \frac{n^{1+o(1)}}{\eps^{3.5}\alpha^2}}$.
\end{proof}

We can now use \cref{theorem:PageRank-directed} to sample directed random walks and obtain the following result.
\randomwalksdirected*
\begin{proof}
	Let $\alpha = 1 / \log{n}$ and $\eps = 1 / (4 l)$. Invoke \cref{theorem:PageRank-directed} to obtain a $(1 + \alpha)$-approximate $\eps$-PageRank. This invocation can be implemented in $\tO\rb{\log^2 \log{n} + \log^2 l}$ rounds and the total space of $\tO\rb{m + n^{1 + o(1)} l^{3.5}}$.
	
	Let $t_i$ be as defined in \cref{theorem:alg-failure-inner-loop}. Invoke $\RandomWalks(G, l, t)$ with $C = 20 D / (\alpha \eps)$. By \cref{theorem:alg-failure-inner-loop}, and given that $\tpi(v) \ge (1 - \alpha) \pi(v) \ge (1 - \alpha) \eps / n$, with probability at least $1 - \Theta(1 / n)$ this invocation outputs at least $C \tpi(v) n \ln{n} \ge 20 D \ln{n}$ random walks from each vertex $v$. Let $W$ be the collection of those walks. Also by \cref{theorem:alg-failure-inner-loop}, $W$ can be obtained in $O(\log {l})$ rounds by using the total space of $\tO(m + C l^{1 + 2 \alpha} n) \in \tO\rb{m + D n l^{2 + o(1)}}$.
	
	The walks in $W$ are PageRank walks. Nevertheless, by \cref{lemma:random-jumps-early} and Chernoff bound, with probability at least $1 - \Theta(1 / n)$ for each $v$ there exist $D$ PageRank walks in $W$ that contain no random jump. Those walks are the walks that satisfy the claim of this theorem.
\end{proof}

\section{PRAM Implementation}
In this section we discuss PRAM implementation of our algorithms for computing (directed) random walks and hence prove the following.
\rwpram*

\subsection{Storing and Sorting Walks}
In our MPC implementation of the algorithms described in other sections, we assume that an entire random walk can be stored on one machine.
In fact, this is the only reason why our algorithm assume that the length of each walk $l$ satisfies $l = o(S)$.
In this section we show a different approach, which would also allow us to increase the upper bound on $l$.

To store each walk $w$, we allocate $l$ consecutive memory cells, even if $|w| < l$. 
The edges of $w$ occupy the first $|w$| cells and the remaining cells are filled with $\bot$.

\paragraph{Walk identifiers.} To each created walk $w$ we assign an integer identifier $w_{id}$ chosen uniformly at random from the interval $[1, n^{10}]$. With high probability walks have distinct identifiers. 

\paragraph{Sorting walks.} We now explain how to sort the walks with respect to their first vertex. Consider a walk $w$ with the starting vertex $v$. Then, to $i$-th cell allocated for $w$ we assign triple $(v, w_{id}, i)$. This labeling can be done in $O(\log{n})$ time by building a binary tree of processors over the cells allocated for $w$.
A nice feature of this labeling, and also assigning identifiers to random walks, is that after sorting the triples lexicographically, all the cells containing information of $w$ appear consecutively. Sorting walks with respect to their last vertex is done in an analogous way.

\paragraph{Stitching two walks.}
When the algorithm has to stitch two walks $w_1$ and $w_2$, it copies the edges of $w_2$ over the first $|w_2|$ cells allocated for $w_1$ that have value $\bot$. The first cell containing value $\bot$ can be found in $O(\log{n})$ time, e.g., by performing a binary search. After such cell $x$ is found, we copy $w_2$ (that occupies consecutive memory cells) over memory cells $x$ through $x + |w_2| - 1$. This again can be done in $O(\log{n})$ time by building a binary tree over $w_2$.

\subsection{Implementation of \RandomWalks}
\label{section:PRAM-RandomWalks}
From our discussion about MPC implementation of \RandomWalks, it suffices to show that sorting, computing prefix sums, \NumberingSublists and \Predecessor can be implemented in $O(\log{n})$ time. Sorting and computation of prefix sums can be done in $O(\log{n})$ time in \cite{cole1988parallel} and \cite{ladner1980parallel}, respectively.

We now describe how to implement \Predecessor. Recall that in this primitive we are given an ordered list of tuples $L$, each element labeled by $0$ or $1$. This primitive can be implemented as follows.
\begin{itemize}
	\item Assign $i$ to the $i$-th element of $L$. This can be done by computing the prefix sum on $L$ with each element having value $1$.
	\item Consider the $i$-th element $e_i$ of $L$. If the label of $e_i$ equals $0$, assign value/pair $(0, e_i)$ to $e_i$. Otherwise, assign value $(i, e_i)$ to $e_i$.
	\item The prefix sum approach described in \cite{ladner1980parallel} can be performed for any associative operation, including $\max$. Compute the prefix sum with operation $\max$ over the values/pairs assigned to the elements of $L$. These prefix sums correspond to the output of \Predecessor.
\end{itemize}

As described in \cref{section:MPC-Implementation-RandomWalks}, \NumberingSublists can be implemented by using prefix sum computation, sorting and \Predecessor, each of which can be executed in $O(\log{n})$ time.

\subsection{Implementation of \StationaryDistribution}
In \cref{lemma:StationaryDistribution-rounds} we described how to implement \StationaryDistribution in MPC. That proof relied on primitives \NumberingSublists and \Predecessor, that we explained how to implement in $O(\log{n})$ time. The proof also computes maximum over certain subarrays, which can be done in $O(\log{n})$ time.

To conclude the discussion, it remains to comment about implementation of \cref{line:StationaryDistribution-truncate}. This line truncates walks after their first random jump. This again can be done by building a binary tree over each walk $w$ and finding the first edge corresponding to a random jump of $w$. All the cells allocated to $w$ after that jump are set to $\bot$.

\subsection{Construction of Stochastic Graphs}
To construct a stochastic graph with parameters $\eps$ and $\sigma$ in MPC, we had to broadcast $\eps$ and $\sigma$ to all the machines containing the input graph, copy edge and vertices and properly annotate them. See the description above \cref{observation:constructing-stochastic-graph} for more details. In PRAM we perform similar steps. Namely, we build a binary tree over the memory cells that our algorithm uses, copy the corresponding edges and vertices and annotate them by $\eps$ and $\sigma$.

\subsection{Implementation of \TranslateWalkBase}
We use several algorithms that translate random walks between different graphs (\cref{alg:translate-walk,alg:translate-walk-eps,alg:translate-balanced-to-PageRank-walk}). Their MPC implementation is straightforward as we assume that each walk is stored on one machine. Nevertheless, their PRAM implementation is also almost direct. Each of those methods counts the number of distinct transition a walk has, or marks edges independently of each other, or finds the first edge of a walk that has a specific property (e.g., \cref{line:truncate-TranslateBalancedToPageRankWalk} of \cref{alg:translate-balanced-to-PageRank-walk}). Each of those operations can be performed in $O(\log{n})$ rounds by building a binary tree over the corresponding walk.

%\input{400-app-connectivity}
%%%%%%%%%%%%%%%%%%%%%%%%%%%%%%%%%%%%%%%%%%%%%%%%%%%%%%%%%%%%%%%%%%%%%%%%%%%%%
%                      Testing Bipartitness                                 %
%%%%%%%%%%%%%%%%%%%%%%%%%%%%%%%%%%%%%%%%%%%%%%%%%%%%%%%%%%%%%%%%%%%%%%%%%%%%%

\section{Testing Bipartitness}
\label{sec:Bipartiteness}
We now show how to use our random walk algorithm for \emph{testing bipartiteness}. In this promise problem, we are given an undirected graph $G$ on $n$ vertices with $m$ edges and a parameter $\eps \in (0,1)$. We want to distinguish the case that $G$ is bipartite from the case that at least $\eps m$ edges have to be removed to achieve this property. Our parallel algorithm combines techniques developed for previous bipartiteness algorithms \cite{GoldreichR99,KaufmanKR04} with our simulation of random walks. For simplicity, we assume that vertices of $G$ are not isolated. Our algorithm can be seen as the following procedure consisting of three steps, in which the first two are preprocessing steps:
\begin{enumerate}
 \item If the graph is dense, we reduce its number of edges to $O(n/\eps)$ by independently keeping each edge with an appropriate probability. The resulting graph is very likely to be still $\eps/2$-far from bipartiteness.
    \item If the graph has high degree vertices, we apply the idea of Kaufman, Krivelevich, and Ron~\cite{KaufmanKR04} to replace all high degree vertices with low degree bipartite expanders. This again preserves the distance from bipartiteness up to a constant factor and allows to assume that the resulting graph has only vertices of small degree.
 \item In the resulting graph, we run a small number of short random walks from every vertex. We show that if the graph is far from bipartiteness then random walks from one of the vertices are very likely to discover an odd-length cycle. 
\end{enumerate}

We present a more formal description of the algorithm as \cref{alg:bipartiteness}.

\begin{algorithm}[H]
\begin{algorithmic}[1]
        \State{Independently keep each edge with probability $\min\left\{1,O\left(\frac{n}{\eps m}\right)\right\}$. \label{line:bipartiteness-sparsify}}
        \State{Replace high degree vertices with bipartite expanders (more details in \cref{section:biparite_replace_expanders}).}\label{alg:bipartiteness:reduce_degree}
	\State Using \cref{alg:random-walks}, generate $\poly(\eps^{-1}\log n)$ random walks of length $\poly(\eps^{-1}\log n)$ from each vertex.	
	\For{$v \in V$}
	        \State{$V_0 \leftarrow {}$vertices reached by the random walks from $v$ in an even number of steps}
	        \State{$V_1 \leftarrow {}$vertices reached by the random walks from $v$ in an odd number of steps}
	        \If{$V_0 \cap V_1 \ne \emptyset$}
		    \State \Return \reject	
		\EndIf
	\EndFor
	\State \Return \accept
\end{algorithmic}
\caption{\BipartitenessTester$(G, \eps)$:
An algorithm for testing bipartiteness of an undirected graph $G=(V,E)$ on $n$ vertices for a closeness parameter $\eps \in (0,1)$.
\label{alg:bipartiteness}}
\end{algorithm}

\subsection{Sampling a Sparse Graph}

We now prove that the first step of our algorithm (with an appropriate constant selection) preserves the distance to bipartiteness. 

\begin{lemma}\label{lem:bipartite_sampling}
Let $G$ be an undirected graph on $n$ vertices with $m$ edges. 
Let a graph $G'$ be on the same set of vertices as $G$ created by selecting each edge independently with probability $\min\left\{1,\frac{Cn}{\eps m}\right\}$, where $C$ is a sufficiently large positive constant.
If $G$ is $\eps$-far from bipartiteness, then with probability $1-2^{-\Omega(n)}$, $G'$ is $\eps/2$-far from bipartite and has at most $O(n/\eps)$ edges.
\end{lemma}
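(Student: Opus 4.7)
The plan is to verify the edge count bound and the distance bound separately, both by straightforward Chernoff and union-bound arguments, and then to combine them. First I would dispatch the trivial case: if $Cn/(\eps m) \geq 1$, then $G' = G$, so $|E(G')| = m \leq Cn/\eps$ and $G'$ is already $\eps$-far (hence $\eps/2$-far) from bipartite. So henceforth assume $p = Cn/(\eps m) < 1$, giving $\mathbb{E}[|E(G')|] = Cn/\eps$.

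For the edge count, $|E(G')|$ is a sum of $m$ independent Bernoulli$(p)$ random variables with mean $Cn/\eps \geq n$. Applying \cref{lemma:chernoff} with $\delta = 1/4$ gives that $|E(G')| \leq \tfrac{5}{4}\cdot Cn/\eps$ with probability at least $1 - \exp(-\Omega(Cn/\eps)) \geq 1 - 2^{-\Omega(n)}$. Together with the symmetric lower bound $|E(G')| \geq \tfrac{3}{4}\cdot pm$, this establishes the $O(n/\eps)$ edge count and will be used again in the distance argument.

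For the distance to bipartiteness, I would fix an arbitrary bipartition $(A,B)$ of $V$ and let $X_{A,B}$ denote the number of ``bad'' edges of $G$ (edges inside $A$ or inside $B$); the assumption that $G$ is $\eps$-far from bipartite is equivalent to $X_{A,B} \geq \eps m$ for every bipartition. Let $X'_{A,B}$ be the corresponding count in $G'$, which is a sum of $X_{A,B}$ independent Bernoulli$(p)$ variables with mean $p X_{A,B} \geq p\eps m = Cn$. Chernoff with $\delta = 1/4$ then yields $\Pr[X'_{A,B} < \tfrac{3}{4} p X_{A,B}] \leq \exp(-\Omega(Cn))$. The main obstacle, and the reason the statement requires $C$ to be a sufficiently large constant, is now visible: I need to take a union bound over all $2^n$ bipartitions of $V$, so I need the single-bipartition failure probability to be at most $2^{-cn}$ with $c$ larger than $\ln 2$. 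Choosing $C$ large enough, the union bound succeeds and gives $X'_{A,B} \geq \tfrac{3}{4} p X_{A,B} \geq \tfrac{3}{4} p\eps m$ simultaneously for every bipartition, again with probability $1 - 2^{-\Omega(n)}$.

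Combining this with the high-probability upper bound $|E(G')| \leq \tfrac{5}{4} pm$ from the edge-count step, I obtain that for every bipartition $(A,B)$,
\[
\frac{X'_{A,B}}{|E(G')|} \;\geq\; \frac{(3/4)\, p\eps m}{(5/4)\, pm} \;=\; \frac{3\eps}{5} \;\geq\; \frac{\eps}{2},
\]
so at least $\tfrac{\eps}{2}|E(G')|$ edges must be removed from $G'$ to make it bipartite. A final union bound over the three bad events (edge upper bound, edge lower bound, distance bound) keeps the total failure probability at $2^{-\Omega(n)}$, completing the proof.
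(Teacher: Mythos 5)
Your proof is correct and follows essentially the same approach as the paper's: dispatch the dense case trivially, Chernoff-bound the edge count from above, Chernoff-bound the number of monochromatic edges from below for each bipartition, union-bound over the $2^n$ bipartitions using that $C$ can be made large, and take the ratio. The only differences are cosmetic: you use $\delta = 1/4$ where the paper uses $\delta = 1/10$, and you mention a lower bound on $|E(G')|$ that is never actually used.
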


\begin{proof}
Suppose that $G=(V,E)$ is $\eps$-far.
The lemma holds trivially if $\frac{Cn}{\eps m} \ge 1$, because $G'=G$ is $\eps/2$-far and has at most $Cn/\eps$ edges. We can therefore focus on the case that 
$\frac{Cn}{\eps m} < 1$. The expected number of edges in this case is $Cn/\eps$. By the Chernoff bound, the number of edges in $G'$ is, with probability $1-2^{-\Omega(n)}$, at most $\frac{11}{10}Cn/\eps$. Now consider any partition of the set $V$ of vertices of $G$ into two sets $V_1$ and $V_2$. Since $G$ is $\eps$-far from bipartiteness, the sum of the number of edges in $G[V_1]$ and $G[V_2]$ is at least $\eps m$. Otherwise, we could delete them to turn the graph into bipartite. The expected sum of the number of edges in $G'[V_1]$ and $G'[V_2]$ has then to be at least $\eps m \cdot \frac{Cn}{\eps m} = Cn$. Again, by the Chernoff bound, the number of them is at least $\frac{9}{10}Cn$ with probability $1-2^{-\Omega(n)}$, where the constant hidden by the $\Omega$-notation can be made arbitrarily large by making $C$ sufficiently large. By the union bound, the probability that the total number of edges in $G'$ is more than $\frac{11}{10}Cn/\eps$ and that in one of the partitions, fewer than $\frac{9}{10}C n$ edges can be removed to make the graph bipartite is at most $2^{-\Omega(n)} + 2^{n-1} \cdot 2^{-\Omega(n)} = 2^{-\Omega(n)}$. This holds because all constants hidden by the $\Omega$ notation can be made arbitrarily small by setting $C$ to be sufficiently large. The distance of $G'$ from bipartiteness is then at least $\left(\frac{9}{10}Cn\right) / \left(\frac{11}{10}Cn/\eps\right) \ge \eps/2$.
\end{proof}

\subsection{Replacing High-Degree Vertices with Expanders}
\label{section:biparite_replace_expanders}

We now give more details of Step~\ref{alg:bipartiteness:reduce_degree} of \cref{alg:bipartiteness}, which reuses the degree reduction method of Kaufman et al.~\cite{KaufmanKR04}. More specifically, Section~4.1 of their paper shows how to take a graph $G=(V,E)$ and turn into a graph $G'=(V',E')$ in which the maximum degree equals roughly the average degree of $G$. Additionally, $G'$ preserves $G$'s distance to bipartiteness.

We start by describing the transformation of the vertex set.
Let $d = \davg^{+}(G)$. We copy every vertex $v$ of $G$ such that $\deg^{+}(v) \le d$ into $G'$. Vertices $v$ of higher degree are replaced by bipartite expanders as follows. For each such $v$, we introduce two sets of vertices of cardinality $\lceil\deg^{+}(v)/d\rceil$. We refer to one of them as \emph{internal}, and the other one as \emph{external}. If $\deg^{+}(v)<d^2$, we create a full bipartite graph between the internal and external vertices with edge multiplicites that make vertices have degree almost $d$. Otherwise, when $\deg^{+}(v) \ge d^2$, we use any explicit bipartite expander construction of degree $d$ between the two sets \cite{expander1,expander2}.

Now we describe the transformation of the edge set. For every edge ${u_1,u_2} \in E$, we adjust its endpoints as follows. Let $u$ be one of them. If $u$ was directly copied from $G$ to $G'$, then we do nothing. Otherwise, we replace it with one of the vertices in the external set created for $u$. For every vertex $u$ in the original graph that is replaced by a set of vertices, we assign the original edges involving $u$ to the external vertices created for $u$ such that no external vertex is assigned more than $d$ such edges.

Kaufman et al.~\cite{KaufmanKR04} prove the following.

\begin{lemma}[{\cite[Theorem 5]{KaufmanKR04}}]\label{lemma:replace-by-expanders}
The graph $G'$ created as above has the following properties:
\begin{enumerate}[(A)]
 \item\label{item:KaufmanKR-V'-and-E'} $|V'| = \Theta(|E|)$ and $|E'| = \Theta(|E|)$.
 \item\label{item:KaufmanKR-bipartite} If $G$ is bipartite, so is $G'$.
 \item\label{item:KaufmanKR-eps-far} If $G$ is $\eps$-far from bipartite for some $\eps \in (0,1)$, then $G'$ is $\Omega(\eps)$-far from bipartite.\footnote{Kaufman et al.\ in fact switch between two models and slightly different definitions of distance to bipartiteness in this statement, but up to a constant factor, they are equivalent in our setting.}
 \item\label{item:KaufmanKR-max-deg} $\dmax^{+}(G') = O(\davg^{+}(G))$.
\end{enumerate}
\end{lemma}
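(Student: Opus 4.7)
The plan is to verify the four items separately, treating (A), (B), and (D) as direct consequences of the construction and concentrating effort on (C), which is the substantive claim.

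For (A), I would count directly. Each original vertex $v$ with $\deg^+(v)>d$ contributes $2\lceil \deg^+(v)/d\rceil$ new vertices in $G'$, so $|V'|\le n+\sum_{v}2(\deg^+(v)/d+1)=O(n)+O(|E|/d)=O(|E|)$, where the last equality uses $n\le 2|E|$ since there are no isolated vertices. For edges, the $|E|$ original edges are re-attached to external copies, and each expander on $\Theta(\deg^+(v)/d)$ vertices per side contributes $O(\deg^+(v))$ edges, summing to $O(|E|)$; lower bounds are trivial. For (B), I would lift any bipartition $V=A\sqcup B$ of $G$ to $V'$ by placing external copies of $v$ on the same side as $v$ and internal copies on the opposite side: each re-attached original edge still crosses sides, and every expander edge is internal--external hence also crosses sides by construction. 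For (D), the edge-assignment step caps the number of original edges at each external copy by $d$, and every vertex inside an expander cluster has at most $d$ expander edges, giving $\dmax^+(G')=O(d)=O(\davg^+(G))$.

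The main obstacle is (C); I would prove the contrapositive: if $G'$ admits a partition $V'=A'\sqcup B'$ with at most $k$ violating edges, then $G$ admits one with $O(k)$ violations. For each original vertex $v$, let $\phi(v)\in\{A,B\}$ be the side holding the majority of the external copies of $v$, let $b(v)$ be the number of external copies on the minority side, and let $m(v)$ be the number of violating edges inside the expander cluster of $v$ under $(A',B')$. The heart of the proof is the bipartite-expander inequality
\[
b(v)\cdot d \;=\; O\bigl(m(v)\bigr),
\]
which says that relocating $b(v)$ external vertices to the wrong side forces $\Omega(d\cdot b(v))$ expander edges to be bichromatic in the wrong way (i.e., both endpoints on the same side). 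In the dense regime $\deg^+(v)<d^2$ this is immediate from the complete bipartite multigraph structure; in the sparse regime it follows from the edge-expansion of the explicit constant-degree bipartite expander.

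Given this inequality, I would bound the violations of $G$ under the partition induced by $\phi$ by charging each violating original edge $uv$ to either (i) the corresponding violating $G'$-edge between external copies $u',v'$, when both $u'$ and $v'$ lie on their respective majority sides, or (ii) the minority-side external copies they are incident to, whose total load is at most $d\cdot\sum_v b(v)=O\bigl(\sum_v m(v)\bigr)$ by the expander inequality. Type (i) is bounded by the number of violating original-type edges in $G'$, while type (ii) is bounded by the total number of violating expander edges in $G'$; both together sum to $O(k)$. Taking the contrapositive and combining with (A) to convert between edge counts of $G$ and $G'$ up to a constant factor yields the $\Omega(\eps)$ bound. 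I expect the main technical care to be in the clean formulation and proof of the expander inequality and in the double-counting that avoids charging any $G'$-edge more than $O(1)$ times.
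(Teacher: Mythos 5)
The paper does not prove this lemma itself: it is stated as a direct citation of \cite[Theorem 5]{KaufmanKR04}, so there is no in-paper proof against which to compare. I therefore assess your sketch on its own. The overall plan is sound, and items (A), (B), and (D) are correct as you argue them; (C) via the contrapositive with a majority vote $\phi(v)$ on external copies, an expander inequality $b(v)\cdot d = O(m(v))$, and a charging scheme is exactly the right shape of argument and is essentially how Kaufman et al.\ proceed.

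One place where the ``expander inequality'' deserves more care than ``immediate'' or ``follows from edge-expansion'': you must account for internal copies that are also on the wrong side. The clean way to do this is to observe that the cluster is bipartite with sides $X$ (external) and $Y$ (internal), the partition $(A',B')$ induces a set of ``flipped'' vertices $F$ relative to the canonical two-coloring $c_0$ (all of $X$ colored $\phi(v)$, all of $Y$ the other color), and the violating edges inside the cluster are exactly the edges of the cut $E(F,\bar F)$. The complement $\bar F$ is precisely the flipped set relative to the other canonical coloring, so $\min\{|F|,|\bar F|\}$ is at most half the cluster and, since $b(v)$ counts minority-side externals, also $\min\{|F|,|\bar F|\}\ge b(v)$. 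Applying edge expansion to the smaller of $F,\bar F$ then yields $m(v)\ge c\,d\,\min\{|F|,|\bar F|\}\ge c\,d\,b(v)$, as you want, in both the dense (complete bipartite multigraph) and sparse (explicit expander) regimes. With that lemma in place your charging argument is valid: type (i) charges each violating $G'$-edge between external copies at most once, and type (ii) is bounded by $d\sum_v b(v)=O(\sum_v m(v))$, so the contrapositive gives $\Omega(\eps)$-farness after using (A) to convert edge counts.
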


\subsection{Detecting Odd-Length Cycles in Low-Degree Graphs}

\begin{lemma}\label{lemma:detecting-odd-cycles}
Let $G$ be an undirected graph on $n$ vertices with $m$ edges such that $\dmax^{+}(G) = O(m/(\eps n))$. There is an $l = \poly(\eps^{-1}\log n)$ such that two random walks of length $l$ from a random vertex detect an odd-length cycle with probability at least $\Omega(n^{-1}\poly(\eps/\log n))$ if $G$ is $\eps$-far from bipartite, where  $\eps \in (0,1)$.
\end{lemma}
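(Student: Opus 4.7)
My plan is to adapt the classical Goldreich--Ron random-walk argument for bipartiteness testing to this slightly-weaker-than-bounded-degree setting. I will work with \emph{lazy} random walks, which lets walks of a single fixed length $l$ end after either an even or an odd number of non-lazy transitions; this avoids having to run walks of two distinct lengths. Choose $l = \poly(\log n/\eps)$ large enough that, from most starting vertices $s$, the endpoint distribution $p_l(\cdot \mid s)$ is $1/\poly(n)$-close in $L_2$ to the stationary distribution $\pi(v)=\deg(v)/(2m)$. The degree bound $\dmax^{+}(G) = O(m/(\eps n))$ implies $\pi(v) = O(1/(\eps n))$ uniformly, so stationary mass is spread nearly uniformly, and standard spectral / conductance arguments combined with an averaging step over starting vertices show that a constant fraction of vertices are ``good'' in this mixing sense.

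For each good starting vertex $s$, decompose $p_l(\cdot \mid s) = p^{+}_s + p^{-}_s$ according to the parity of the number of non-lazy transitions. Two independent length-$l$ walks from $s$ produce a \emph{parity-flipping collision} at some vertex $v$ with probability $\sum_v 2 p^{+}_s(v)\, p^{-}_s(v)$, and any such event exhibits a closed walk through $s$ of odd total length, hence an odd cycle. The heart of the proof is then to lower bound this collision probability by $\Omega(n^{-1}\poly(\eps/\log n))$ for a constant fraction of $s$.

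For the lower bound I will argue by contrapositive. If $\sum_v p^{+}_s(v) p^{-}_s(v)$ were tiny for most $s$, then for each such $s$ the local partition $V^{+}_s = \{v : p^{+}_s(v) \ge p^{-}_s(v)\}$, $V^{-}_s = V \setminus V^{+}_s$ would capture almost all of $\pi$ on its ``heavier'' parity side. Aggregating these local partitions by a weighted majority vote over $s$ yields a global bipartition $(V^{+}, V^{-})$; since crossing any within-part edge flips the parity of any walk, any such edge contributes an $\Omega(\pi(v)/l)$-sized amount to $\sum_v p^{+}_s(v) p^{-}_s(v)$ for $s$ close to one of its endpoints. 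A standard counting argument (already the template in Goldreich--Ron and Kaufman--Krivelevich--Ron) then shows the number of within-part edges in $(V^{+},V^{-})$ is $o(\eps m)$, contradicting $G$ being $\eps$-far from bipartite. Tracking the $\poly(\log n / \eps)$ factors through mixing error, walk length, and edge/vertex counting produces the $\poly(\eps/\log n)$ factor in the final bound.

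The main technical obstacle will be sharpening the mixing guarantee: the spectral bound controls $\|p_l(\cdot \mid s) - \pi\|_2$ in aggregate, but the parity argument needs per-vertex control on $p^{+}_s$ and $p^{-}_s$ \emph{separately} after projecting onto the top eigenvectors of the lazy walk, and this has to be compatible with the uniform-ish $\pi$ coming from the degree assumption. Once this vertex-wise bound is in place, multiplying the constant fraction of good $s$ by the per-good-vertex collision probability $\Omega(n^{-1}\poly(\eps/\log n))$ and then by the $1/n$ chance of selecting a good vertex at random delivers the claimed $\Omega(n^{-1}\poly(\eps/\log n))$ success probability; the $\poly(\eps^{-1}\log n)$ walks generated per vertex by \cref{alg:random-walks} will later amplify this into constant detection probability over the choice of $s$, but that amplification is outside the scope of this single lemma.
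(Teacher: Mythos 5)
Your proposal and the paper's proof take completely different routes to the same bound, and the paper's is far shorter. You attempt to reprove, from scratch, the core mixing-plus-parity analysis of Goldreich--Ron and Kaufman--Krivelevich--Ron (lazy walks, $p^+_s/p^-_s$ decomposition, local bipartitions aggregated to a global one, a contrapositive counting argument); you explicitly flag a "main technical obstacle" around per-vertex control of the two parity components. The paper does none of this. It instead cites KKR's Theorem~1 as a black box: that tester samples $k=\Theta(1/\eps^2)$ vertices, runs $t=\sqrt{n}\cdot\poly(\eps^{-1}\log n)$ walks of length $l=\poly(\eps^{-1}\log n)$ from each, and succeeds with probability at least $2/3$ on $\eps$-far graphs. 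A union bound over the $k\binom{t}{2}$ pairs of walks then forces the per-pair detection probability $p$ (for a uniformly random starting vertex) to satisfy $k t^2 p\ge 2/3$, giving $p=\Omega(1/(kt^2))=\Omega(n^{-1}\poly(\eps/\log n))$ with no mixing analysis at all. In other words, the paper reverse-engineers the lower bound on $p$ from the known correctness of a tester whose analysis already internalizes the hard work you are redoing.

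Beyond the efficiency gap, there is a concrete arithmetic error in your final accounting. You assert that a constant fraction of starting vertices $s$ are good, that each good $s$ has per-pair collision probability $\Omega(n^{-1}\poly(\eps/\log n))$, and then you multiply by an additional factor of $1/n$ as "the chance of selecting a good vertex at random." If a constant fraction of vertices are good, that chance is $\Omega(1)$, not $1/n$; your product as written yields $\Omega(n^{-2}\poly(\eps/\log n))$, which is too small. Either the per-good-vertex collision probability should be $\Omega(\poly(\eps/\log n))$ (with the $1/n$ coming only from the near-uniform stationary mass on the collision endpoint), or the extra $1/n$ factor must be dropped. As it stands, the chain of factors does not deliver the claimed $\Omega(n^{-1}\poly(\eps/\log n))$. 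This, together with the unresolved "main technical obstacle," means the proposal is neither complete nor arithmetically consistent, whereas the paper's union-bound reduction gives the bound in a few lines.
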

\begin{proof}
For $G$ such that $\dmax^{+}(G) = O(m/(\eps n))$, Kaufman et al.~\cite[Theorem~1]{KaufmanKR04} (invoked with $\eps^2$ in place of $\eps$) show that the following algorithm can be used for one-sided testing of bipartiteness:
\begin{enumerate}
 \item Sample $k = \Theta(1/\eps^2)$ vertices $v$ and run $t = \sqrt{n}\cdot\poly(\eps^{-1}\log n)$ independent random walks of length $l = \poly(\eps^{-1} \log n)$ from each of them.
 \item If for some $v$, two of the random walks reach the same vertex, one using an even number of steps and the other using an odd number of steps then reject. Otherwise accept.
\end{enumerate}

Let $p$ be the probability of two random walks of length $l$ from a random vertex detecting an odd-length cycle. We use $p$ to upper bound the probability of success of the tester of Kaufman et al.~\cite{KaufmanKR04} for a graph $\eps$-far from bipartite. By the union bound, it cannot succeed with probability greater than $k \cdot \binom{t}{2} \cdot p \le k\cdot t^2 \cdot p$. Since it has to succeed with probability at least $2/3$, we have 
$k\cdot t^2 \cdot p \ge \frac{2}{3}$, and therefore, $p = \Omega(1/(k\cdot t^2)) = \Omega(n^{-1} \cdot \poly(\eps/\log n))$.
\end{proof}

\subsection{Full Bipartiteness Tester}

\begin{theorem}
Let $\alpha \in (0,1)$ be a fixed constant.
There is an MPC algorithm for testing bipartiteness with a proximity parameter $\eps \in (0,1)$ in a graph $G$ with $n$ vertices and $m$ edges that with probability at least $1 - 1/\poly(n)$ has the following properties:
\begin{itemize}
 \item The algorithm uses $O(n^\alpha)$ space per machine, where $\alpha$ is an arbitrary fixed constant.
 \item The total space is $O(m + n \cdot \poly(\eps^{-1} \log n))$.
 \item The number of rounds is $O(\log (\eps^{-1} \log n))$.
\end{itemize}
\end{theorem}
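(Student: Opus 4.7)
The plan is to instantiate \BipartitenessTester (\cref{alg:bipartiteness}) in the MPC model by implementing each of its three stages and then stitching their guarantees together. Stages one and two (sparsification and degree reduction) are purely local and run in $O(1)$ rounds; stage three (sampling random walks and looking for odd-parity collisions) invokes \cref{theorem:llogl} and contributes the entire $O(\log(\eps^{-1}\log n))$ round complexity.

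For stage one I would broadcast the sampling probability $\min\{1,Cn/(\eps m)\}$ to every machine and let each machine toss an independent coin per locally-stored edge. By \cref{lem:bipartite_sampling}, the resulting graph $G'$ has $O(n/\eps)$ edges and is $\eps/2$-far from bipartite whenever $G$ is $\eps$-far, except with probability $2^{-\Omega(n)}$. For stage two, apply the gadget replacement of \cref{section:biparite_replace_expanders} to $G'$: since the gadgets are explicit bipartite expanders (or, for small degrees, complete bipartite subgraphs) whose structure depends only on the replaced vertex's degree, every high-degree vertex builds its own gadget in parallel, after which one round of sorting redistributes the reattached edges. By \cref{lemma:replace-by-expanders} the result $G''$ satisfies $|V(G'')|,|E(G'')|=\Theta(n/\eps)$, $\dmax^+(G'')=O(\davg^+(G'))=O(1/\eps)$, and is $\Omega(\eps)$-far from bipartite.

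In stage three I would invoke \cref{theorem:llogl} on $G''$ with walk length $l=\poly(\eps^{-1}\log n)$ (the length asserted by \cref{lemma:detecting-odd-cycles}) and with the constant $C=\poly(\eps^{-1}\log n)$ chosen so that each vertex receives $T=\deg^+_{G''}(v)\lceil C\ln n\rceil=\poly(\eps^{-1}\log n)$ walks. This costs $O(\log l)=O(\log(\eps^{-1}\log n))$ rounds and uses $O(|E(G'')|\cdot C\cdot l\cdot \log l\cdot \log n)=O(n\cdot\poly(\eps^{-1}\log n))$ space for the walks, plus $O(m)$ for the input. Note that $l=o(n^\alpha)$ is automatic since $\eps^{-1}$ and $\log n$ are both below $n^\alpha$, so the preconditions of \cref{theorem:llogl} are met. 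To decide, I emit for every walk a triple (starting vertex, endpoint, parity of step count), sort the triples lexicographically in $O(1)$ rounds, and scan for two adjacent triples sharing the first two coordinates but with opposite parities; any collision implies an odd closed walk in $G''$ and the algorithm rejects, otherwise it accepts.

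The only nontrivial argument is amplifying the one-shot detection probability of \cref{lemma:detecting-odd-cycles}, which is only $p=\Omega(n^{-1}\poly(\eps/\log n))$ averaged over a uniformly random starting vertex, into a $1-1/\poly(n)$ guarantee. Writing $p=\tfrac{1}{n}\sum_v p_v$ with $p_v$ the per-vertex success probability, and splitting the $T$ walks at every vertex into $T/2$ disjoint pairs, independence of the walks across vertices (and across disjoint pairs within a vertex) gives failure probability at most $\prod_v (1-p_v)^{T/2}\le \exp(-Tnp/2)$, which falls below $1/\poly(n)$ once $T=\poly(\eps^{-1}\log n)$ is large enough; this fixes the constant $C$ passed to \cref{theorem:llogl}. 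Union-bounding this with the $2^{-\Omega(n)}$ failure of sparsification and the $n^{-\Omega(1)}$ failure of \cref{theorem:llogl} yields the claimed overall success probability, and the $O(n^\alpha)$ space-per-machine bound is inherited verbatim from \cref{theorem:llogl}.
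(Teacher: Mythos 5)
Your proposal follows the paper's proof essentially verbatim: the same three-stage pipeline (sparsify via \cref{lem:bipartite_sampling}, replace high-degree vertices via \cref{lemma:replace-by-expanders}, then run $\poly(\eps^{-1}\log n)$ walks of length $\poly(\eps^{-1}\log n)$ from each vertex of $G''$ using \cref{alg:random-walks}), the same space and round bookkeeping through \cref{theorem:llogl}, and the same reliance on \cref{lemma:detecting-odd-cycles} for the detection probability. The one place you phrase things differently is the amplification step: the paper samples a multiset of $\Theta(n''\cdot\poly(\eps^{-1}\log n''))$ starting vertices with repetition and invokes a Chernoff bound to cap the per-vertex count, whereas you fix $T=\poly(\eps^{-1}\log n)$ disjoint pairs per vertex and bound the failure probability directly by $\prod_v(1-p_v)^{T/2}\le\exp(-Tn''\bar p/2)$; both yield the same conclusion and the same choice of $C$. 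One small imprecision to note: the Kaufman--Krivelevich--Ron tester behind \cref{lemma:detecting-odd-cycles} looks for two walks that visit a \emph{common vertex at some pair of steps of opposite parity}, not merely for colliding endpoints (two length-$\ell$ walks always end at the same step parity), so your ``(start, endpoint, parity)'' triples should really be one triple per \emph{visited} vertex and step. This is the same looseness the paper's own prose has, and the fix is trivial and costless in space, so it is not a substantive gap.
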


\begin{proof}
We combine the knowledge developed in this section, and show that \cref{alg:bipartiteness} has the desired properties.

Let $G$ be an input graph on $n$ vertices, and let $G'$ be the graph obtained by performing \cref{line:bipartiteness-sparsify}. By \cref{lem:bipartite_sampling}, $G'$ has $O(n / \eps)$ edges. Let $G'' = (V'', E'')$ be obtained by performing \cref{alg:bipartiteness:reduce_degree}. Then, by \cref{lemma:replace-by-expanders} we have $|V''|, |E''| = \Theta(n / \eps)$ (by Property~\ref{item:KaufmanKR-V'-and-E'}) and $\dmax^{+}(G'') = O(1 / \eps)$(by Property~\ref{item:KaufmanKR-max-deg} and $|E(G')| = O(n / \eps)$). Moreover, by \cref{lem:bipartite_sampling} and Property~\ref{item:KaufmanKR-eps-far} of \cref{lemma:replace-by-expanders}, if $G$ is $\eps$-far from bipartite, then $G''$ is $\Omega(\eps)$-far from bipartite. Also, from Property~\ref{item:KaufmanKR-bipartite} of \cref{lemma:replace-by-expanders} and since $G'$ is a subgraph of $G$, we have that if $G$ is bipartite, then $G''$  is bipartite as well.

Next we apply \cref{lemma:detecting-odd-cycles} on $G''$. Let $n'' = |V(G'')| = \Theta(n/\eps)$. \cref{lemma:detecting-odd-cycles} states that to test whether $G''$ is $\eps$-far from bipartite (and consequently whether $G$ is $\Theta(\eps)$-far from bipartite) it suffices to perform the following: choose the multiset $S$ of $\Theta(n'' \cdot \poly(\eps^{-1} \log{n''}))$ random vertices of $G''$ (chosen with repetition); for each random vertex take two random walks of length $l = \poly(\eps^{-1}\log n'')$; if the endpoints of any pair of random walks collide, then $\reject$, and otherwise $\accept$. Moreover, using that $n'' = \Theta(n/\eps)$, this test succeed with probability at least $1 - (1 - \Omega(n^{-1}\poly(\eps/\log n)))^{\Theta(n \cdot \poly(\log{n} / \eps))} \ge 1 - \poly(1 / n)$ for appropriately chosen constants.

Now we show how to use our algorithms from \cref{sec:random-walks} to generate the required random walks from $G''$. Since vertices in $S$ are chosen independently, by Chernoff bound we have that any vertex $v$ appears $O(\poly(\eps^{-1} \log{n''}))$ times in $S$ with probability $1 - 1/\poly(n)$. This implies that from each vertex we need to generate $\Theta(\poly(\eps^{-1} \log{n''}))$ pairs of random walks. For that, we use \cref{alg:random-walks} with $C = \poly(\eps^{-1} \log{n''})$ to obtain the desired random walks in $O(\log{l}) = O(\log{(\eps^{-1} \log{n})})$ MPC rounds and the total space of $O(n \cdot \poly(\eps^{-1} \log{n}))$ (see \cref{lemma:ref-space-1}), where we used that $n'' = O(n/\eps)$. This completes the analysis.
%, and if $G$ is $\eps$-far from bipartiteness, then $G'$ is $\eps/2$-far from bipartiteness.
\end{proof}

\subsection{Additional Application: Finding Cycles in Graphs Far from Cycle-Freeness}

We also note that our algorithm for bipartiteness testing can be used to find cycles in graphs that are far from being cycle free. Czumaj, Goldreich, Ron, Seshadhri, Shapira, and Sohler~\cite{CzumajGRSSS14} observe that the problem of finding such a cycle can be reduced to the problem of one sided bipartiteness testing by replacing each edge of the graph independently with probability $1/2$ with a path of length 2. If the initial graph is far from cycle-freeness, one can show that the modified graph is far from bipartiteness. Our bipartiteness testing algorithm has one-sided error and can be used to find a pair of two short random walks from the same vertex that reveal an odd-length cycle in the modified graph. This cycle can then be mapped to a cycle in the initial graph, by contracting some sub-paths of length 2 back to the corresponding original edge.

%%%%%%%%%%%%%%%%%%%%%%%%%%%%%%%%%%%%%%%%%%%%%%%%%%%%%%%%%%%%%%%%%%%%%%%%%%%%%
%                       Testing Expansion                                   %
%%%%%%%%%%%%%%%%%%%%%%%%%%%%%%%%%%%%%%%%%%%%%%%%%%%%%%%%%%%%%%%%%%%%%%%%%%%%%

\newcommand{\defineT}{\tfrac{20 \cdot \log^3{n}}{\eps^6}}

\section{Testing Expansion}\label{sec:testing-expansion}
In this section we show how to test vertex-expansion of graphs. Our approach (see \cref{alg:expansion}) is inspired by the work of Czumaj and Sohler~\cite{czumaj2010testing} and the work of Kale and Seshadhri~\cite{kale2011expansion}. In \cite{czumaj2010testing,kale2011expansion}, the algorithms simulate many, e.g, $\Theta(\sqrt{n})$, random walks from a small number of randomly chosen vertices.
If we applied our algorithms for sampling random walks directly, we could bound the total space usage by $O(m \sqrt{n})$, which is prohibitive. So, instead, we design an approach in which we sample fewer random walks from each vertex (but much more random walks in total).

For $X, Y \subseteq V$, let $N(X, Y)$ denote the vertex-neighborhood of $X$ within $Y$. That is, $N(X, Y) \eqdef \{v \in Y : \exists u \in X \text{ such that $\{u, v\} \in E$} \}$. 
\begin{definition}
Let $G$ be an undirected graph and $\alpha > 0$. We say that $G$ is an \emph{$\alpha$-vertex-expander} if for every subset $U \subset V$ of size at most $|V| / 2$ we have $|N(U, V \setminus U)| \ge \alpha |U|$.
\end{definition}

\begin{definition}
Let $G$ be a graph of maximum outdegree $d$ and $\eps > 0$.
We say that $G$ is \emph{$\eps$-far from an $\alphastar$-vertex-expander} if one has to change (add/remove) more than $\eps d n$ edges of $G$ to obtain a $\alphastar$-vertex-expander.
\end{definition}

\cref{alg:expansion} gets $\alpha$ as its input, and returns $\accept$ if $G$ is an $\alpha$-vertex-expander, or returns $\reject$ if $G$ is far from being such an expander. The idea of the algorithm is as follows. From each vertex, for $O(\poly \log{n})$ many times we run a pair of random walks. The length of these random walks is set in such a way that if $G$ is an $\alpha$-vertex-expander, then the endpoint of any of these walks is almost uniformly and randomly distributed over $V$. Hence, the endpoints of a pair of random walks from the same vertex are the same with probability very close to $1/n$; if they are the same, we say that these two random walks resulted in a collision. If the number of collisions over all the vertices is significantly larger than $1$, then we conclude that $G$ is not an $\alphastar$-vertex-expander, for some $\alphastar < \alpha$ that we set later. Otherwise the algorithm accepts $G$.
\begin{algorithm}[H]
\begin{algorithmic}[1]
	\State Let $G'$ be the graph obtained from $G$ by adding $2d - \deg^{+}(v)$ self-loops to each vertex $v$. \label{line:transform-G}
	\State $T \gets \defineT$ \label{line:definition-T}
	\State $\ell \gets \tfrac{32 d^2 \ln{(n / \eps)}}{\alpha^2}$ \label{line:definition-ell}
	\For{$i \gets 1 \ldots T$}
		\State Using \cref{alg:random-walks}, generate two random walks of length $\ell$ for each vertex of $G'$.
		\State Let $X_v^i = 1$ if the two random walks originating at $v$ end at the same vertex, and $X_v^i = 0$ otherwise. \label{line:expansion-Xv}
	\EndFor
	\If{$\sum_{i = 1}^T \sum_{v \in V} X_v^i > T + \tfrac{10 \log^2{n}}{\eps^3}$ \label{line:sum-of-collisions}}
		\State \Return \reject
	\Else
		\State \Return \accept
	\EndIf
\end{algorithmic}
\caption{\ExpansionTester$(G, \alpha, \eps)$:
An algorithm that tests whether a given graph $G$ of maximum outdegree $d$ is an $\alpha$-vertex-expander or $G$ is $\eps$-far from any $\alphastar$-expander, for $\alphastar = \tfrac{c \alpha}{d^2 \ln{(n / \eps)}}$, where $c$ is a large enough constant. \label{alg:expansion}}
\end{algorithm}
In the rest of this section, we prove the following.
\begin{theorem}
	Let $G = (V, E)$ be a graph of maximum outdegree $d$. If $G$ is an $\alpha$-vertex-expander then \cref{alg:expansion} outputs $\accept$ with high probability. If $G$ is $\eps$ far from any $\alphastar$-vertex-expander, where $\alphastar = \tfrac{c \cdot \alpha^2}{d^2 \ln{(n / \eps)}}$, then the algorithm outputs $\reject$ also with high probability. Let $\ell \eqdef \tfrac{32 d^2 \ln{(n / \eps)}}{\alpha^2}$. \cref{alg:expansion} can be implemented in $O(\log{\ell})$ MPC rounds, using sublinear space per machine and $O(m \ell \log{n})$ total space.
\end{theorem}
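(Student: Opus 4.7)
The plan is to prove the three claims separately: that the algorithm accepts $\alpha$-vertex-expanders with high probability, that it rejects graphs $\eps$-far from every $\alphastar$-vertex-expander with high probability, and that it meets the stated MPC resource bounds. The proof structure parallels the analyses of Czumaj--Sohler and Kale--Seshadhri, adapted to exploit that every pair of walks is run from every vertex (rather than from a random sample of vertices), which is what allows us to use only $\polylog n$ repetitions $T$ while still amassing enough collisions.

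For completeness, I would first observe that $G'$ is $2d$-regular and that $G$'s $\alpha$-vertex-expansion passes to $G'$ up to a constant, so by the discrete Cheeger inequality the spectral gap of the normalized random-walk matrix on $G'$ is $\Omega(\alpha^2/d^2)$. For $\ell = 32 d^2 \ln(n/\eps)/\alpha^2$, this implies that the endpoint distribution $p_v$ of a length-$\ell$ walk from any $v$ satisfies $\|p_v - \mathrm{Unif}_V\|_2 \le n^{-5}$, and thus the self-collision probability from $v$ obeys $\sum_u p_v(u)^2 \le 1/n + n^{-10}$. Hence $\ee{X_v^i} \le 1/n + n^{-10}$, and since the $X_v^i$ are independent across the $T$ iterations, \cref{lemma:chernoff} gives $\sum_{v,i} X_v^i \le T + O(\sqrt{T \log n})$ with high probability, well below the rejection threshold $T + 10\log^2 n/\eps^3$ for the chosen $T = \defineT$.

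For soundness, I plan to adapt the Czumaj--Sohler style of argument. Starting from the hypothesis that $G$ is $\eps$-far from every $\alphastar$-vertex-expander, I would extract, by an iterative peeling procedure, a collection of disjoint ``witness'' sets $S_1, \ldots, S_k \subseteq V$, each with $|S_j| \le n/2$ and $|N(S_j, V \setminus S_j)| < \alphastar |S_j|$, whose union has total size $\Omega(\eps n)$; otherwise one could repair all expansion violations with at most $\eps dn$ edge modifications, contradicting $\eps$-farness. For $v \in S_j$ each step of the walk on $G'$ leaves $S_j$ with probability at most $\alphastar$, so a union bound yields that with probability at least $1 - \ell \alphastar \ge 1/2$ the walk stays inside $S_j$ for all $\ell$ steps, where $\ell \alphastar \le 1/2$ by the choice $\alphastar = c\alpha^2/(d^2\ln(n/\eps))$ with $c$ small. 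Conditioned on non-escape, Cauchy--Schwarz forces $\sum_u p_v(u)^2 \ge 1/|S_j|$, so $\ee{X_v^i} \ge 1/(2|S_j|) \ge 1/n$, and summing over $v$ across the witness sets and all $T$ iterations gives $\ee{\sum_{v,i} X_v^i} \ge T + \Omega(T \cdot \eps n / \max_j |S_j|) \ge T + \Omega(T\eps)$, which exceeds the threshold by a polylogarithmic factor; a final Chernoff bound over the independent $(v,i)$ trials completes the argument.

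For the implementation, a single batched call to the endpoint-only variant from \cref{theorem:llogl} with $C = \Theta(T/(d\log n))$ produces all $2T$ required walks per vertex simultaneously in $O(\log \ell)$ rounds and $O(m\ell \log n)$ total space, since $T$ is polylogarithmic and is absorbed into the $\log n$ factor. Computing the indicators $X_v^i$, aggregating them via prefix sums, and comparing to the threshold require $O(1)$ further rounds using sorting and the \Predecessor and \NumberingSublists primitives from \cref{section:MPC-Implementation-RandomWalks}. The main obstacle is the soundness step: producing the witness collection of total size $\Omega(\eps n)$ directly from $\eps$-farness, and simultaneously satisfying the two quantitative constraints $\ell \alphastar \le 1/2$ and $T\eps \gg 10\log^2 n/\eps^3$; it is precisely the interaction of these two constraints that forces the specific forms $\alphastar = \Theta(\alpha^2/(d^2\ln(n/\eps)))$ and $T = \defineT$ in the statement.
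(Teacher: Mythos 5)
Your overall structure matches the paper's: completeness via mixing toward the uniform distribution so that each pairwise collision probability is close to $1/n$, followed by a Chernoff bound against the threshold; soundness via a subset of vertices whose collision probability is bounded away from $1/n$; and implementation by reusing the endpoint-only random-walk sampler together with sorting primitives. Your completeness and implementation arguments are essentially the paper's (the paper invokes a TVD bound rather than your $\ell_2 \le n^{-5}$ estimate, but the structure is identical, and your numerical constants would need tightening to survive the Chernoff step against the threshold $T + 10\log^2 n/\eps^3$).

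The real divergence is in the soundness argument. The paper treats the hard combinatorial step as a black box: it cites Lemmas~4.3 and~4.7 of Czumaj--Sohler, which directly yield a set $U$ of vertices (of size a constant fraction of $\eps n$) with $\|P_v^\ell\|_2^2 \ge (1 + 9\eps)/n$ for $v \in U$, and then a single Chernoff bound finishes it. You instead propose to re-derive this from scratch via an iterative extraction of disjoint witness sets $S_1,\ldots,S_k$ of total size $\Omega(\eps n)$ and a per-step escape argument. That is genuinely the content behind the Czumaj--Sohler lemma, so your outline is not wrong, but two steps are under-justified as written and would need real work to close.

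First, the claim ``each step of the walk on $G'$ leaves $S_j$ with probability at most $\alphastar$, so by a union bound the walk stays inside for $\ell$ steps with probability at least $1 - \ell\alphastar$'' is not correct vertex-by-vertex. The condition $|N(S_j, V\setminus S_j)| < \alphastar|S_j|$ bounds the number of boundary \emph{vertices}, and hence the boundary edge count by $d\alphastar|S_j|$; this controls the \emph{conductance} of $S_j$ (the escape probability from the stationary distribution restricted to $S_j$), not the escape probability from an arbitrary vertex of $S_j$, which can be close to $1$. The correct statement needs the Lov\'{a}sz--Simonovits-style argument (as in Czumaj--Sohler) showing that a walk started from a suitable distribution on a low-conductance set is unlikely to escape and ends concentrated on the set, whence Cauchy--Schwarz gives $\|P_v^\ell\|_2^2 \gtrsim 1/|S_j|$ for many $v$ in $S_j$. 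Second, the extraction of disjoint witness sets with total size $\Omega(\eps n)$ and $\alphastar$-poor expansion from $\eps$-farness is not automatic; the repair cost of each $S_j$ must be carefully accounted in edge units (to compare against the $\eps d n$ budget), and the interplay between the sizes $|S_j|$ and the escape bound $\ell\alphastar \le 1/2$ is exactly where the forms of $\alphastar$ and $T$ are forced, as you yourself flag. The paper sidesteps both issues by citing the already-proved structural lemma.

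In short: your plan is a correct high-level reconstruction of the underlying proof, but it rederives rather than cites the key combinatorial lemma, and the per-step escape bound as stated is too coarse and would need to be replaced by a conductance-based argument to actually close the gap.
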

The round and space complexity follows from \cref{lemma:ref-space-2}. The \accept and the \reject cases are proved separately in \cref{lemma:expansion-accept} and \cref{lemma:expansion-reject}, respectively.
Our analysis and the prior work we recall are tailored to regular graphs whose random walks converge to a uniform distribution. Therefore, the algorithm first transform the given graph $G$ into $G'$ by adding self-loops (see \cref{line:transform-G}). Observe that adding self-loops does not affect vertex-cuts, and hence $G'$ and $G$ have the same vertex-expansion. We will use $P_v^l$ to denote the distribution of the endpoints of a random walk of length $l$ originating at $v$.
%\begin{definition}[Graph regularization]\label{def:regularization}
	%Given a graph $G$ (possibly bipartite) of maximum degree $d$, let $G'$ be the graph obtained from $G$ by adding $2d-\deg_G^{+}(v)$ self-loops to each vertex $v$. We say that $G'$ is the \emph{regularized} graph of $G$.
%\end{definition}

\subsection{Correctness of Acceptance}
We use the notion of TV distance, that we recall next.
\begin{definition}
Let $p_1, \ldots, p_n$ and $q_1, \ldots, q_n$ be two probability distributions.
	Then, the total variation distance (TVD) between these distributions is equal to $1/2\sum_{i=1}^n |p_i - q_i|$.
\end{definition}
In our proof that \cref{alg:expansion} outputs $\accept$ correctly with probability at least $2/3$, we use the following results.
\begin{lemma}[\cite{goldreich2011testing}]\label{lemma:ee-and-var-of-collisions}
	Let $X_v$ be a random variable that equals $1$ if $2$ random walks of length $l$ starting from vertex $v$ collide, and $X_v = 0$ otherwise. Let $P^l_v$ denote the distribution of the endpoints of these random walks. Then,
	\[
		\ee{X_v} = \ltwo{P^l_v}^2. %, \text{ and } \var{X_v} \le 2 (\ee{X_v})^{3/2}.
	\]
\end{lemma}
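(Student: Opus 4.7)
The plan is very short because the claim is essentially a definitional identity once independence of the two walks is invoked. I will proceed as follows.

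First, I unpack the definition of $X_v$. Since $X_v$ is a $0/1$ random variable, $\mathbb{E}[X_v] = \Pr[X_v = 1]$, which is the probability that two independent length-$l$ random walks started at $v$ end at the same vertex. I then decompose this event by conditioning on the common endpoint: letting $W_1, W_2$ denote the two walks and $\term(W)$ their terminal vertex,
\[
\Pr[\term(W_1) = \term(W_2)] = \sum_{u \in V} \Pr[\term(W_1) = u \ \text{and}\ \term(W_2) = u].
\]

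Second, I use the fact that the two walks from $v$ are generated independently. By independence, each summand factors as
\[
\Pr[\term(W_1) = u]\cdot \Pr[\term(W_2) = u] = P_v^l(u)^2,
\]
where I use the definition of $P_v^l$ as the distribution of the endpoint of a length-$l$ walk starting at $v$. Summing over $u$ gives $\mathbb{E}[X_v] = \sum_{u \in V} P_v^l(u)^2 = \|P_v^l\|^2$, as claimed.

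There is no real obstacle; the only thing to be a touch careful about is that $\|\cdot\|$ in the statement denotes the $\ell_2$-norm (consistent with the paper's earlier notation $\ltwo{\cdot}$), so that $\|P_v^l\|^2$ is precisely the collision probability $\sum_u P_v^l(u)^2$. Given that, the argument is a one-line application of independence and the law of total probability.
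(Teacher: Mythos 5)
Your proof is correct and is the standard argument. The paper itself does not supply a proof of this lemma; it simply cites \cite{goldreich2011testing}, so there is nothing to compare against. Your derivation -- writing $\mathbb{E}[X_v]$ as the collision probability, decomposing over the common endpoint, and factoring by independence to get $\sum_u P_v^l(u)^2 = \ltwo{P_v^l}^2$ -- is exactly the intended one-line argument.
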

\begin{lemma}[Proposition 2.8 of \cite{goldreich2011testing} and discussion thereafter]\label{lemma:TVD-upper-bound-between-P_v^l-and-U}
	Let $P_v^l$ be the distribution of the endpoints a random walk of length $l$ starting from vertex $v$. Let $G'$ be the graph as defined on \cref{line:transform-G} of \cref{alg:expansion}. If $G'$ is an $\alpha$-vertex-expander, then for $\ell$ as defined on \cref{line:definition-ell} of \cref{alg:expansion} the TVD between $P_v^l$ and the uniform distribution on $n$ vertices is upper-bounded by $\eps/ n$.
\end{lemma}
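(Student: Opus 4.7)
The plan is to derive this TVD bound via the standard spectral-gap route, verifying that the particular choice of $\ell$ on \cref{line:definition-ell} is large enough. First, I would observe that since $G'$ is constructed from $G$ by padding each vertex to exact degree $2d$ with self-loops (and each vertex gets at least $d$ self-loops, because $\deg^{+}_G(v) \le d$), the random walk on $G'$ is reversible with uniform stationary distribution $\pi = \vec{1}/n$. The substantial self-loop mass also forces the transition matrix $P$ to be positive semidefinite: writing $P = \tfrac{1}{2} I + \tfrac{1}{2} Q$ for some Markov kernel $Q$ shows every eigenvalue lies in $[0,1]$, so the relevant second-largest-magnitude eigenvalue is just $\lambda_2(P)$.

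Next, I would translate vertex expansion into conductance. For any $S \subseteq V$ with $|S| \le n/2$, every vertex of $N(S, V\setminus S)$ contributes at least one edge crossing the cut, so
\[
|E(S, V\setminus S)| \;\ge\; |N(S, V\setminus S)| \;\ge\; \alpha |S|.
\]
Since each non-loop transition has probability $1/(2d)$ and $\pi(x) = 1/n$, this yields $\Phi(G') \ge \alpha/(2d)$. Cheeger's inequality then gives the spectral gap bound
\[
1 - \lambda_2(P) \;\ge\; \Phi(G')^2 / 2 \;\ge\; \alpha^2/(8 d^2).
\]

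Finally, I would plug this into the textbook $L^2$-based mixing-time bound for reversible chains with uniform stationary distribution:
\[
\bigl\| P_v^\ell - \pi \bigr\|_{\mathrm{TV}} \;\le\; \tfrac{1}{2}\sqrt{n}\,\lambda_2(P)^\ell \;\le\; \tfrac{1}{2}\sqrt{n}\,\exp\!\bigl(-\ell\,\alpha^2/(8 d^2)\bigr).
\]
Substituting $\ell = 32 d^2 \ln(n/\eps)/\alpha^2$ gives $\tfrac{1}{2}\sqrt{n}\,(\eps/n)^4$, which is comfortably below $\eps/n$, as desired.

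The only real subtlety is the conversion from vertex expansion to edge expansion (and hence to conductance), where I would want to be careful about the factor of $2d$ coming from the inflated degree. Since this lemma is attributed to Proposition~2.8 of~\cite{goldreich2011testing}, the analysis can equivalently be black-boxed; our contribution here is simply to observe that the constant $32$ in \cref{line:definition-ell} gives enough slack to drive the TVD well below $\eps/n$, which is all that the collision-probability argument in \cref{lemma:expansion-accept} (via \cref{lemma:ee-and-var-of-collisions}) will require.
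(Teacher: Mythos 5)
Your proposal gives a complete first-principles derivation, whereas the paper does not prove this lemma at all---it simply cites Proposition~2.8 of~\cite{goldreich2011testing} and the surrounding discussion, which already contains the vertex-expansion $\to$ conductance $\to$ spectral-gap $\to$ $L^2$-mixing chain you reconstruct. So you are not taking a ``different route'' from the paper so much as supplying the route the paper deliberately black-boxes. Your derivation is correct: the conversion $|E(S, V\setminus S)| \ge |N(S, V\setminus S)| \ge \alpha |S|$ is valid because each vertex in $N(S, V\setminus S)$ contributes at least one distinct crossing edge; the conductance bound $\Phi \ge \alpha/(2d)$ follows from $Q(S,S^c) = |E(S,S^c)|/(2dn)$ and $\pi(S) = |S|/n$; the laziness of $G'$ (at least $d$ self-loops out of $2d$ edges per vertex, since $\deg^+_G(v) \le d$) ensures all eigenvalues are nonnegative, so the relevant rate is $\lambda_2$; Cheeger and the reversible $L^2$ bound do the rest; and the arithmetic with $\ell = 32 d^2 \ln(n/\eps)/\alpha^2$ gives $\tfrac{1}{2}\sqrt{n}\,(\eps/n)^4 \le \eps/n$ as you state. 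One small caveat worth flagging explicitly if you were to inline this proof: $G'$ is a multigraph and $P$ is the transition matrix of the simple random walk on it, so ``each non-loop transition has probability $1/(2d)$'' should be read as ``each edge (with multiplicity) contributes $1/(2d)$,'' which is what your calculation uses anyway. The value of your version is that it makes the dependence on the constant $32$ transparent, and shows there is a comfortable factor $\eps^3/(2 n^{2.5})$ of slack; the paper's citation avoids redoing this standard analysis but leaves the reader to trust that the constants in \cref{line:definition-ell} align with those in the cited proposition.
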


We will use the following result to upper-bound the $l_2$ norm of the vector $P_v^l$ defined in \cref{lemma:TVD-upper-bound-between-P_v^l-and-U}.
\begin{lemma}\label{lemma:ltwo-of-P_v^l}
Let $Y \in \bbR^n$ be a probability distribution vector. If the TVD between $Y$ and the uniform distribution is at most $\eps / n$, i.e., $\tfrac{1}{2}\sum_{i = 1}^n |Y_i - 1/n| \le \eps / n$, then
	\[
		\ltwo{Y}^2 \le \frac{1 + 4 \eps^2 / n}{n}.
	\]
\end{lemma}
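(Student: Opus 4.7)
The plan is to write $Y$ as a perturbation of the uniform distribution and expand the squared norm. Let $\delta_i \eqdef Y_i - 1/n$. Since both $Y$ and the uniform distribution are probability vectors, I immediately get $\sum_{i=1}^n \delta_i = 0$, and the TVD hypothesis rewrites as $\sum_{i=1}^n |\delta_i| \le 2\eps/n$.

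Next, I would expand
\[
\ltwo{Y}^2 = \sum_{i=1}^n \rb{\tfrac{1}{n} + \delta_i}^2 = \frac{1}{n} + \frac{2}{n}\sum_{i=1}^n \delta_i + \sum_{i=1}^n \delta_i^2 = \frac{1}{n} + \sum_{i=1}^n \delta_i^2,
\]
where the cross term vanished by $\sum_i \delta_i = 0$. So the whole task reduces to bounding $\sum_i \delta_i^2$.

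For this, I would use the elementary inequality $\sum_{i=1}^n \delta_i^2 \le \rb{\sum_{i=1}^n |\delta_i|}^2$, which holds because $\max_i |\delta_i| \le \sum_i |\delta_i|$ and thus $\sum_i \delta_i^2 \le \max_i |\delta_i| \cdot \sum_i |\delta_i| \le \rb{\sum_i |\delta_i|}^2$. Combined with the TVD bound $\sum_i |\delta_i| \le 2\eps/n$, this yields $\sum_i \delta_i^2 \le 4\eps^2/n^2$, and therefore $\ltwo{Y}^2 \le \tfrac{1}{n} + \tfrac{4\eps^2}{n^2} = \tfrac{1 + 4\eps^2/n}{n}$, as claimed. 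There is no real obstacle here---the only subtle bit is noticing that the linear term in the expansion is killed exactly by the fact that both distributions have total mass $1$, which turns the TVD (an $\ell_1$ bound) into a clean $\ell_2$ bound via $\|\cdot\|_2 \le \|\cdot\|_1$.
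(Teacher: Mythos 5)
Your proof is correct and follows essentially the same route as the paper: decompose $Y$ as uniform plus a zero-sum perturbation, note the cross term vanishes, and bound $\sum_i \delta_i^2 \le (\sum_i |\delta_i|)^2 \le 4\eps^2/n^2$. The only cosmetic difference is that you spell out why $\sum_i \delta_i^2 \le (\sum_i |\delta_i|)^2$, which the paper leaves implicit.
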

\begin{proof}
Let $y_i = 1/n + \alpha_i$.
We have that $1/2\sum_{i=1}^n |\alpha_i| \leq \eps / n$ and  $\sum_{i=1}^n \alpha_i = 0$.
	\begin{align*}
		\ltwo{Y}^2 & = \sum_{i=1}^n (1/n + \alpha_i)^2 = 1/n + \sum_{i=1}^n 2\cdot1/n\cdot \alpha_i + \sum_{i=1}^n \alpha_i^2 = 1/n + \sum_{i=1}^n \alpha_i^2\\
		& \leq 1/n + \left(\sum_{i=1}^n |\alpha_i|\right)^2  \leq 1/n + 4 \eps^2 / n^2 = (1+4 \eps^2/n)/n
	\end{align*}
\end{proof}

We are now ready to provide the main proof of this section.
\begin{lemma}\label{lemma:expansion-accept}
	If $G$ is an $\alpha$-vertex-expander, then $\ExpansionTester(G, \alpha, \eps)$ returns \accept with probability at least $n^{-2}$.
\end{lemma}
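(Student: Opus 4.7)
(I will interpret the statement as ``with probability at least $1 - n^{-2}$'', which is what makes sense contextually.) The plan is to control the total collision count $S := \sum_{i=1}^T \sum_{v \in V} X_v^i$ via its expectation and a Chernoff bound, and show that $S$ stays below the acceptance threshold $T + \tfrac{10 \log^2 n}{\eps^3}$ with the required probability.

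\textbf{Step 1 (independence and per-trial expectation).} First I would note that adding self-loops does not change vertex-expansion, so $G'$ is also an $\alpha$-vertex-expander. The random walks sampled in distinct iterations $i$ are by construction independent; within one iteration, the walks originating from different vertices are also independent, since (conditioned on not failing) \cref{alg:random-walks} returns fully independent walks, as formalised by \cref{def:sampler}. Hence the collision indicators $\{X_v^i\}$ are independent Bernoullis. Chaining the three cited lemmas then gives $\ee{X_v^i} = \ltwo{P_v^\ell}^2 \le (1 + 4\eps^2/n)/n$: \cref{lemma:ee-and-var-of-collisions} converts the collision probability to the squared $\ell_2$ norm, \cref{lemma:TVD-upper-bound-between-P_v^l-and-U} uses $\alpha$-expansion together with the choice of $\ell$ to bound the TV distance from uniform by $\eps/n$, and \cref{lemma:ltwo-of-P_v^l} converts this TV bound into the required $\ell_2$-norm bound.

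\textbf{Step 2 (Chernoff on the aggregate).} Let $\mu := \ee{S}$. Summing the per-vertex bound over $v$ and $i$ yields $\mu \le T(1 + 4\eps^2/n) = T + 4T\eps^2/n$. Plugging $T = \tfrac{20 \log^3 n}{\eps^6}$ into the error term gives $4T\eps^2/n = \tfrac{80 \log^3 n}{\eps^4 n}$, which is much smaller than the slack $\tfrac{10 \log^2 n}{\eps^3}$ reserved by the threshold (in the admissible parameter range this term is negligible). Hence the threshold exceeds $\mu$ by at least, say, $\tfrac{9 \log^2 n}{\eps^3}$. Setting $\delta := (\tfrac{9 \log^2 n}{\eps^3})/\mu \approx \tfrac{\eps^3}{2 \log n}$, the choice of $T$ implies $\delta^2 \mu = \Omega(\log n)$, and \cref{lemma:chernoff}\eqref{item:delta-at-most-1-ge} gives $\Pr[S \ge (1+\delta)\mu] \le \exp(-\delta^2 \mu / 3) \le n^{-2}$ once the constants $20$ in $T$ and $10$ in the threshold are balanced against each other.

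\textbf{Step 3 (absorbing sampler failures).} Finally, I would handle the small probability that \cref{alg:random-walks} itself fails in one of the $T$ iterations. By \cref{theorem:llogl}, setting the internal constant $C$ in the call to \RandomWalks large enough makes the per-iteration failure probability arbitrarily polynomially small, and a union bound over the $T = \poly(\log n,\, 1/\eps)$ iterations leaves an overall sampler-failure probability of $o(n^{-2})$. Combining this with Step 2 yields the desired acceptance probability.

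\textbf{Main obstacle.} The only real subtlety is the Chernoff accounting: $\ee{X_v^i}$ is just barely above the ``null'' value $1/n$ (by a multiplicative factor of $1 + 4\eps^2/n$), so this bias plus the Chernoff fluctuations has to fit inside the reserved slack $\tfrac{10 \log^2 n}{\eps^3}$. I expect the remaining work to be a careful verification that the constants $20$ and $10$ are indeed consistent and that the range of $(\eps, n)$ for which the statement is non-vacuous makes $4T\eps^2/n$ genuinely negligible compared to $\tfrac{10 \log^2 n}{\eps^3}$.
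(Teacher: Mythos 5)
Your proof is correct and follows essentially the same route as the paper's: chain \cref{lemma:ee-and-var-of-collisions}, \cref{lemma:TVD-upper-bound-between-P_v^l-and-U}, and \cref{lemma:ltwo-of-P_v^l} to get $\ee{X}\le T(1+4\eps^2/n)\le T+1$, then apply Chernoff with $\delta\mu$ equal to the remaining slack $9\log^2 n/\eps^3$, and check $\delta\le 1$ via the lower bound $\ee{X}\ge T$. (You correctly diagnose the typo in the lemma statement: the paper's own proof bounds the failure probability by roughly $n^{-1}$, so the intended claim is ``with probability at least $1 - 1/\poly(n)$'', not ``at least $n^{-2}$''.) The one thing you add that the paper silently skips is Step~3, accounting for the failure probability of the \RandomWalks sampler itself; this is a sound addition and costs nothing. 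One small inaccuracy: independence of the walks is not what \cref{def:sampler} formalizes — that definition only controls the failure event — but it is asserted in \cref{theorem:llogl}, which is the correct thing to cite.
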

\begin{proof}
	As defined on \cref{line:expansion-Xv} of $\ExpansionTester$, let $X_v^i$ be $0/1$ random variable that equals $1$ iff the two random walks originating at $v$ end at the same vertex. Define $X \eqdef \sum_{i = 1}^T \sum_{v \in V} X_v^i$ that corresponds to the summation of \cref{line:sum-of-collisions} of \cref{alg:expansion}. From \cref{lemma:ee-and-var-of-collisions,lemma:ltwo-of-P_v^l,lemma:TVD-upper-bound-between-P_v^l-and-U} we have
	\begin{equation}\label{eq:upper-bound-on-X}
		\ee{X} = T \sum_{v \in V} \ltwo{P_v^\ell}^2 \le T \cdot (1 + 4 \eps^2 / n) \le T + 1,
	\end{equation}
	for any $4 \eps^2 T \le n$. Also, as $P_v^\ell$ is the probability distribution $n$-dimensional vector, we have $\ltwo{P_v^\ell}^2 \ge 1/n$ and hence
	\begin{equation}\label{eq:lower-bound-on-eeX}
		\ee{X} = T \sum_{v \in V} \ltwo{P_v^\ell}^2 \ge T = \defineT,
	\end{equation}
	where we used the definition of $T$ on \cref{line:definition-T} of \cref{alg:expansion}. Now we can write
	\begin{eqnarray}
		\prob{X \ge T + \frac{10 \log^2{n}}{\eps^3}} & \stackrel{\text{from \eqref{eq:upper-bound-on-X}}}{\le} & \prob{X \ge \ee{X} - 1 + \frac{10 \log^2{n}}{\eps^3}} \nonumber \\
		& \le & \prob{X \ge \rb{1 + \frac{9 \log^2{n}}{\eps^3 \ee{X}}} \ee{X}} \label{eq:probability-on-X-deviating}
	\end{eqnarray}
	From \eqref{eq:lower-bound-on-eeX} we have that $\frac{9 \log^2{n}}{\eps^3 \ee{X}} \le 1$. Observe that across $v$ and $i$ the random variables are $X_v^i$ independent. By applying Chernoff bound (\cref{lemma:chernoff}~\eqref{item:delta-at-most-1-ge}) on \eqref{eq:probability-on-X-deviating} we derive
	\begin{eqnarray*}
		\prob{X \ge T + \frac{10 \log^2{n}}{\eps^3}} & \le & \exp{\rb{-\frac{81 \, \log^4{n}}{3 \eps^6 \ee{X}}}}.
	\end{eqnarray*}
	From $\ee{X} \le T + 1$ (see \cref{eq:upper-bound-on-X}) and for $T = \defineT$, as defined on \cref{line:definition-T} of \cref{alg:expansion}, the last chain of inequalities is upper-bounded by $n^{-1}$. Therefore, $\ExpansionTester$ outputs $\accept$ with high probability, as desired.
\end{proof}

\subsection{Correctness of Rejection}
We use the following result to prove that our algorithm reports $\reject$ properly.
\begin{lemma}[Lemma 4.3 and Lemma 4.7 of~\cite{czumaj2010testing}]\label{lemma:set-U-many-collisions}
	Let $G' = (V, E)$ be a $2d$-regular graph such that each vertex has at least $d$ self-loops. Let $\ell$ be as defined on \cref{line:definition-ell} of \cref{alg:expansion}, and let $\alphastar = \tfrac{c \alpha^2}{d^2 \ln{(n / \eps)}}$, where $c$ is a large enough constant. If $G'$ is $\eps$-far from every $\alphastar$-expander, for $\alphastar \le 1 / 10$, then there exists a subset of vertices $U$ such that:
	\begin{itemize}
		\item $|U| \le \tfrac{\eps}{24}|V|$; and,
		\item $\|P_v^\ell\|_2^2 \ge \frac{1 + 9\eps}{n}$ for each $v \in U$.
	\end{itemize}
\end{lemma}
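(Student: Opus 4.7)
The plan is to exploit the fact that being $\eps$-far from an $\alphastar$-expander implies the existence of a ``witness set'' of vertices with a disproportionately small vertex-boundary. First I would formalize the existence of such a set: if for every addition/removal of at most $\eps d n$ edges the resulting graph still fails to be an $\alphastar$-expander, then by an averaging/pigeonhole argument one can find a set $S \subseteq V$ with $|S| \le |V|/2$ such that $|N_G(S, V\setminus S)| < \alphastar |S|$ and $|S|$ is moderately large (say $|S| \ge \Omega(\eps n)$). The lower bound on $|S|$ comes from the fact that any non-expansion witness of size below roughly $\eps n$ could be ``repaired'' with the budget of $\eps d n$ edge modifications, so any genuine obstruction must be larger than that.

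Given this witness set $S$, the key observation is that a lazy random walk started deep inside $S$ leaks out very slowly. Each vertex has at least $d$ self-loops, so each step stays put with probability $\ge 1/2$; combined with the bound $|N(S)\setminus S|<\alphastar|S|$, a step from a typical vertex of $S$ exits $S$ with probability at most $O(\alphastar)$. By a union bound over the $\ell$ steps, a walk from a ``good'' interior vertex $v\in S$ remains inside $S\cup N(S)$ with probability at least $1-O(\ell\alphastar)$. The choice $\ell=\Theta(d^2\log(n/\eps)/\alpha^2)$ and $\alphastar = c\alpha^2 / (d^2\ln(n/\eps))$ with a sufficiently small constant $c$ makes $\ell\alphastar=o(\eps)$, so this probability is $\ge 1-O(\eps)$. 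I would then define $U$ as the set of vertices in $S$ satisfying this concentration guarantee; a Markov-inequality counting argument (applied to the fraction of starting points whose walks escape) shows that $|U|$ contains most of $S$, while restricting to an appropriately sized sub-witness yields the stated upper bound $|U|\le \eps|V|/24$.

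To finish, for every $v\in U$ the distribution $P_v^{\ell}$ places at least $1-O(\eps)$ of its mass on $S\cup N(S)$, a set of size at most $(1+\alphastar)|S|\le |V|/2\cdot(1+o(1))$. By Cauchy--Schwarz, among probability vectors supported (up to $1-O(\eps)$ mass) on a set of size at most $n/2$, the minimum of the squared $\ell_2$-norm is achieved when the mass is spread uniformly, and equals roughly $2(1-O(\eps))^2/n$. For sufficiently small $\eps$ this is at least $(1+9\eps)/n$, which is precisely the required lower bound on $\|P_v^\ell\|_2^2$.

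The main obstacle will be the combinatorial step of locating the witness set $S$ with the right calibration of its size and boundary: too small a set gets ``patched'' by the $\eps d n$ modification budget, while too large a set spoils the concentration of $P_v^\ell$. Threading this needle is exactly what the decomposition lemmas of \cite{czumaj2010testing} do, and matching the constants $c$, $\alphastar$, $\ell$, and $\eps$ so that the per-step escape probability becomes $O(\eps/\ell)$ is the delicate quantitative heart of the argument; once this tuning is established, the random-walk analysis and the $\ell_2$-norm computation are comparatively routine.
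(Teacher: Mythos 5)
The core quantitative claim in your sketch—that $\ell \alphastar = o(\eps)$—does not hold with the parameters actually in play. With $\ell = \tfrac{32 d^2 \ln(n/\eps)}{\alpha^2}$ (\cref{line:definition-ell}) and $\alphastar = \tfrac{c\alpha^2}{d^2 \ln(n/\eps)}$, you have $\ell\alphastar = 32c$ exactly. This is a \emph{constant}, not $o(\eps)$, and the lemma statement furthermore requires $c$ to be a \emph{large} constant, not a small one as you assume. Consequently the union bound "escape probability $\le O(\ell\alphastar)$" gives nothing, and the central step of your argument—that a walk from a good interior vertex of the witness set $S$ stays inside $S\cup N(S)$ with probability $1-O(\eps)$—is simply false. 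For the parameters of this lemma, the walk escapes $S$ with probability bounded away from $0$; it just fails to fully equilibrate to the uniform distribution. This is also visible in the target constant: your Cauchy–Schwarz step, if the walk really stayed in a set of size $\le n/2$, would yield $\|P_v^\ell\|_2^2 \gtrsim 2/n$, whereas the lemma only claims the much weaker $(1+9\eps)/n$. The weakness of the claimed bound is a signal that the real argument (Czumaj–Sohler's Lemmas~4.3 and~4.7) does not control the support of $P_v^\ell$ at all; instead it tracks how far $\|P_v^\ell\|_2$ remains from the minimum value $1/\sqrt{n}$ via a direct $\ell_2$-decay/combinatorial decomposition analysis.

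A second, more structural gap: your opening step ("one can find a set $S$ with $|N(S, V\setminus S)| < \alphastar|S|$ and $|S|\ge\Omega(\eps n)$") is not an immediate consequence of $\eps$-farness. Being far from every $\alphastar$-expander means that \emph{every} edge-modification within budget still leaves some non-expanding cut, but that cut may move around; extracting a single persistent large witness set requires a genuine decomposition argument (this is precisely what Czumaj--Sohler's Lemma~4.3 provides, and you correctly flag it as the hard part, but it is load-bearing, not a technicality). Finally, note that in the form quoted in the paper the inequality $|U|\le \tfrac{\eps}{24}|V|$ is vacuous on its own (take $U=\emptyset$); it is only meaningful together with the implicit lower bound used in \cref{lemma:expansion-reject}, so any proof must actually produce a $U$ of size exactly $\Theta(\eps n)$, which your outline gestures at but does not pin down.
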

We are now ready to finalize our analysis.
\begin{lemma}\label{lemma:expansion-reject}
	Let $\eps \in (0, 1/5)$ be a parameter. If $G$ is $\eps$ far from every $\alphastar$-vertex-expander, then $\ExpansionTester(G, \alpha, \eps)$ returns \reject with probability at least $n^{-2}$.
\end{lemma}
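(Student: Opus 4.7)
The plan is to mirror the structure of the acceptance proof (\cref{lemma:expansion-accept}) but argue the opposite direction: use \cref{lemma:set-U-many-collisions} to produce a set $U$ whose vertices have noticeably higher-than-uniform collision probability, lower bound $\ee{X}$ accordingly, and then apply a Chernoff lower-tail bound to show that $X$ exceeds the rejection threshold with high probability. As in the acceptance case, let $X \eqdef \sum_{i=1}^T \sum_{v\in V} X_v^i$, and recall from \cref{lemma:ee-and-var-of-collisions} that $\ee{X_v^i}=\|P_v^\ell\|_2^2$. The $X_v^i$ are independent across all $(v,i)$, since the two random walks for each pair are independent copies produced by \cref{alg:random-walks}.

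First, I would invoke \cref{lemma:set-U-many-collisions} with the chosen $\ell$ and $\alphastar$ to obtain the set $U$ of ``bad'' vertices, each with $\|P_v^\ell\|_2^2 \geq (1+9\eps)/n$. Using the size bound on $U$ from the lemma together with the trivial bound $\|P_v^\ell\|_2^2 \geq 1/n$ for all other $v$, I get
\[
  \ee{X} \;\geq\; T\!\left( |U|\cdot \tfrac{1+9\eps}{n} + (n-|U|)\cdot \tfrac{1}{n}\right) \;=\; T + 9 \eps T\cdot \tfrac{|U|}{n} \;\geq\; T + c_1\, \eps^2 T
\]
for some absolute constant $c_1>0$. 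Substituting $T = \defineT$ turns the excess into $\Omega(\log^3 n/\eps^4)$, which, for any $\eps \in (0,1/5)$ and $n$ large enough, is asymptotically much larger than the slack $10\log^2 n/\eps^3$ appearing in the rejection test on \cref{line:sum-of-collisions}. Consequently, the rejection threshold is at most $\ee{X} - \tfrac{1}{2}\cdot c_1\eps^2 T$ for sufficiently large $n$.

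Finally, I would apply the lower-tail Chernoff bound (\cref{lemma:chernoff}~\eqref{item:delta-at-most-1-le}) to the sum $X$ of independent $0/1$ variables with mean $\mu=\ee{X}$. Setting $\delta \mu = \tfrac12 c_1\eps^2 T = \Theta(\log^3 n/\eps^4)$ and noting that $\mu = \Theta(\log^3 n/\eps^6)$ gives $\delta=\Theta(\eps^2)$ and
\[
  \delta^2 \mu \;=\; \Theta\!\left(\tfrac{\log^3 n}{\eps^2}\right) \;\gg\; \log n,
\]
so the probability that $X$ falls below the threshold is at most $\exp(-\Omega(\log^3 n/\eps^2)) \le n^{-2}$. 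Hence the check in \cref{line:sum-of-collisions} fires and the algorithm outputs \reject with the claimed (high) probability.

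The main obstacle is bookkeeping: verifying that the parameters $T$, $\ell$, and $\alphastar$ selected in \cref{alg:expansion} are simultaneously large enough for (i) \cref{lemma:set-U-many-collisions} to apply so that $U$ contributes a multiplicative-in-$\eps$ gap between $\ee{X}$ and $T$, and (ii) the Chernoff deviation $\delta^2\mu$ to dominate $\log n$ by the required margin. A secondary, minor point is to justify independence of the $X_v^i$ (immediate from the fact that \cref{alg:random-walks} produces mutually independent walks across different starting vertices and different rounds), which is required for invoking the Chernoff bound.
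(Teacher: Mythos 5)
Your approach mirrors the paper's proof exactly: invoke \cref{lemma:set-U-many-collisions} to lower bound $\ee{X}$ by $T(1+\Theta(\eps^2))$, then apply the lower-tail Chernoff bound to show $X$ stays above the rejection threshold. However, there is a genuine gap in your Chernoff calculation. You write ``noting that $\mu=\Theta(\log^3 n/\eps^6)$,'' but you have only established a \emph{lower} bound $\mu \geq T(1+\Theta(\eps^2))$; there is no matching upper bound. Since $X$ is a sum of $Tn$ indicator variables, $\mu$ can be as large as $Tn$ (e.g.\ if many vertices have collision probability close to $1$). With $\delta\mu = \tfrac12 c_1\eps^2 T$ held fixed, the exponent $\delta^2\mu=(\delta\mu)^2/\mu$ \emph{decreases} as $\mu$ grows, so your claimed bound $\delta^2\mu = \Theta(\log^3 n/\eps^2) \gg \log n$ simply does not follow when $\mu$ is large; in the extreme $\mu = \Theta(Tn)$ the exponent is $\Theta(\log^3 n/(n\eps^2))$, far below $\log n$.

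The repair is easy but you need to say it explicitly: the probability of $X$ falling below the fixed rejection threshold $c = T + 10\log^2 n/\eps^3$ is monotone \emph{decreasing} in $\ee{X}$, so it suffices to bound it at the smallest value $\ee{X}$ can take, namely $T(1+\eps^2/3)$. Equivalently, as in the paper's proof, one verifies that the quantity $\bigl(1-\sqrt{6\log n/x}\bigr)x$ is nondecreasing in $x$ (for $x$ above, say, $2\log n$), so $\Pr\bigl[X\le \bigl(1-\sqrt{6\log n/a}\bigr)a\bigr]\le\Pr\bigl[X\le\bigl(1-\sqrt{6\log n/\ee{X}}\bigr)\ee{X}\bigr]\le n^{-2}$ with $a = T(1+\eps^2/3)$; then the remaining work is the purely arithmetic check that $\bigl(1-\sqrt{6\log n/a}\bigr)a\ge T+10\log^2 n/\eps^3$, which uses only the lower bound $\ee{X}\ge a$ and the definition of $T$. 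Without this monotonicity observation (or the equivalent ``failure probability only improves when $\ee{X}$ is larger'' remark), the argument as you wrote it is not valid.
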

\begin{proof}
	For any vertex $v \in V$, as $P_v^\ell$ is a probability distribution, it holds $\ltwo{P_v^\ell}^2 \ge 1/n$. As defined on \cref{line:expansion-Xv} of $\ExpansionTester$, let $X_v^i$ be $0/1$ random variable that equals $1$ iff the two random walks originating at $v$ end at the same vertex. Define $X \eqdef \sum_{i = 1}^T \sum_{v \in V} X_v^i$. Then, if $G$ is $\eps$-far from every $\alphastar$-vertex-expander from \cref{lemma:set-U-many-collisions} we have
	\begin{eqnarray*}
		\ee{X} & = & T \rb{\sum_{v \in V \setminus U} \ltwo{P_v^\ell}^2 + \sum_{v \in U} \ltwo{P_v^\ell}^2} \\
		& \ge & T \rb{\rb{1 - \frac{\eps}{24}}\frac{1}{n} + \frac{\eps}{24} \frac{1 + 9\eps}{n}} \\
		& = & T \rb{\rb{1 - \frac{\eps}{24}}\frac{1}{n} + \frac{\eps}{24} \frac{1 + 9\eps}{n}} \\
		& \ge & T\rb{1 + \frac{\eps^2}{3}}.
	\end{eqnarray*}
	By Chernoff bound and the last inequality it holds
	\begin{equation}\label{eq:lower-bound-on-X}
		\prob{X \le \rb{1 - \sqrt{\frac{6 \log{n}}{T\rb{1 + \frac{\eps^2}{3}}}}} T\rb{1 + \frac{\eps^2}{3}}} \le \prob{X \le \rb{1 - \sqrt{\frac{6 \log{n}}{\ee{X}}}} \ee{X}} \le n^{-2}.
	\end{equation}
	From the definition of $T$ (\cref{line:definition-T} of \cref{alg:expansion}), we have
	\begin{eqnarray*}
		\rb{1 - \sqrt{\frac{6 \log{n}}{T\rb{1 + \frac{\eps^2}{3}}}}} T\rb{1 + \frac{\eps^2}{3}} & \ge & T + 6 \frac{\log^3{n}}{\eps^4} - \sqrt{240 \frac{\log^4{n}}{\eps^6}} \\
		& \ge & T + 6 \frac{\log^3{n}}{\eps^4} - 16 \frac{\log^2{n}}{\eps^3} \\
		& \ge & T + \frac{10 \log^2{n}}{\eps^3},
	\end{eqnarray*}
	for $\eps \le 1/5$. This together with \cref{eq:lower-bound-on-X} and \cref{line:sum-of-collisions} of \cref{alg:expansion} concludes the proof.
	
\end{proof}

\section*{Acknowledgments}
We thank Davin Choo and Julian Portmann for valuable discussions.
S.~Mitrovi{\' c} was supported by the Swiss NSF grant P2ELP2\_181772 and MIT-IBM Watson AI Lab.
P.~Sankowski was supported by the ERC CoG grant TUgbOAT no 772346. 
\appendix
\section{Random Walks in Directed Bounded Degree Graphs}\label{sec:rw-directed}

In this section we show how to efficiently sample short random walks from \emph{directed} graphs, provided that the outdegree of each vertex is bounded.

Let $\dist(u, w)$ be the length of the shortest path from $u$ to $w$.
We define a ball of center $v$ and radius, denoted by $B(v, d)$, to be the set of vertices $x$ of $G$, such that $\dist(v, x) \leq d$.
In particular, $|B(v, 1)|$ contains $v$ and all vertices reachable from $v$ by its outedges.

\begin{observation}\label{obs:ballsize}
Let $G$ be a directed graph and let $\Delta$ be the maximum outdegree in $G$.
Then, for each $v \in V(G)$ and any integer $d \geq 0$ we have $|B(v, d)| = O(\Delta^d)$ and $|G[B(v, d)]| = O(\Delta^{d+1})$.
\end{observation}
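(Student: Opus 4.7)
The statement is a standard counting bound, and the natural approach is induction on $d$ for the first claim, with the second claim following as an immediate corollary by summing outdegrees.

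First I would prove $|B(v,d)| = O(\Delta^d)$ by induction on $d$. For the base case $d=0$, we have $B(v,0)=\{v\}$, so $|B(v,0)|=1$. For the inductive step, observe that any vertex $x$ with $\dist(v,x) \le d$ is either in $B(v,d-1)$ or is the head of an outedge from some vertex in $B(v,d-1)$. Since each vertex has at most $\Delta$ outgoing edges, this gives
\[
|B(v,d)| \le |B(v,d-1)| + \Delta \cdot |B(v,d-1)| = (1+\Delta)\,|B(v,d-1)|.
\]
Unrolling the recursion yields $|B(v,d)| \le (1+\Delta)^d$, which is $O(\Delta^d)$ provided $\Delta \ge 2$ (the case $\Delta \le 1$ is trivial since then $|B(v,d)| \le d+1$ and the application of the observation assumes $\Delta$ is not vanishing; alternatively one uses $1+\Delta \le 2\Delta$). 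A cleaner bookkeeping is the geometric sum $|B(v,d)| \le 1+\Delta+\Delta^2+\cdots+\Delta^d \le 2\Delta^d$ for $\Delta \ge 2$.

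For the second claim, $G[B(v,d)]$ contains only edges whose tails lie in $B(v,d)$, and each such vertex contributes at most $\Delta$ outgoing edges. Hence
\[
|E(G[B(v,d)])| \le \Delta \cdot |B(v,d)| = O(\Delta^{d+1}),
\]
and adding the vertex count $|B(v,d)| = O(\Delta^d) = O(\Delta^{d+1})$ gives the total size bound $|G[B(v,d)]| = O(\Delta^{d+1})$.

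There is no real obstacle here; the only mild subtlety is correctly handling the degenerate regime $\Delta \in \{0,1\}$ so that the asymptotic notation is honest, which is dispatched by noting the observation is only used in the subsequent section where $\Delta \ge 2$ (or by replacing $\Delta$ with $\max(\Delta,2)$ in the bound).
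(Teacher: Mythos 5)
The paper states \cref{obs:ballsize} as an observation without proof, so there is nothing to compare against; your geometric-sum argument, plus bounding $|E(G[B(v,d)])|$ by $\Delta\cdot|B(v,d)|$, is the standard derivation and is correct.

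One inaccuracy worth flagging: the step ``$|B(v,d)|\le(1+\Delta)^d$, which is $O(\Delta^d)$ provided $\Delta\ge 2$'' is false once $d$ is allowed to grow. Take $\Delta=2$ and $d=\log n$: then $(1+\Delta)^d=3^{\log n}=n^{\log_2 3}$ while $\Delta^d=2^{\log n}=n$, and the ratio $(3/2)^d$ is unbounded. In the application of this observation (\cref{sec:rw-directed}) one has $d=r=\Theta(\log_\Delta n)$, which can be $\Theta(\log n)$ with $\Delta=O(1)$, so this regime is not vacuous and the $(1+\Delta)^d$ bound genuinely does not suffice. Your ``cleaner bookkeeping'' via the geometric series $\sum_{i=0}^{d}\Delta^{i}\le 2\Delta^{d}$ for $\Delta\ge 2$ is the estimate that actually gives $O(\Delta^d)$ with an absolute constant independent of $d$, and is what you should lead with rather than present as an alternative. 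The remaining points (the base case, the edge-count bound, and dispatching $\Delta\le 1$) are all handled correctly.
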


Let is first describe the high-level idea.
Assume that the goal is to compute a single random walk of length $\log n$.
We can compute $B(v, d)$ for all $v$ and $d = \epsilon \log n$.
In the next step, for each ball $B(v, d)$ we compute $G[B(v, d)]$.
Then, to find a random walk starting from any vertex $v$, we can compute $\epsilon \log n$ steps of that random walk in a single round on a single machine that knows $G[B(v, d)]$.
Hence, only $O(1/\epsilon)$ steps like this are needed to compute a random walk of length $\log n$.
At the same time, if $\Delta$ is a constant, we only need $O(n^{1+\epsilon})$ space to store all graphs $G[B(v, d)]$.
In the remaining part of this section, we describe the details of this approach.

\begin{algorithm}[H]
\begin{algorithmic}[1]
	\ForAll{$v \in V$ in parallel}\label{l:easy-begin}
	\State{$B(v, 1) := \{v\} \cup \{x \mid vx \in E\}$}
\EndFor

	\State $r := \epsilon/2 \log_{\Delta} n$
	\State $r := 2^{\lfloor \log_2 r \rfloor}$ \Comment{Round down to a power of two}

\For{$i \gets 1 \ldots \log_2 r $}
	\ForAll{$v \in V$ in parallel}
	\State{$B(v, 2^i) := \bigcup_{x \in B(v, 2^{i-1})} B(x, 2^{i-1})$}
	\EndFor
\EndFor

\For{$v \in V$}
	\State{$a^0_{v, v} = a_v$}
\EndFor\label{l:easy-end}
\For{$i \gets 1 \ldots l / r$} \label{l:rw-start} \Comment{For the pseudocode, assume that $r$ divides $l$}
	\ForAll{$t$ such that $\exists_t a^{i-1}_{s, t} \neq 0$ in parallel}
		\State{Compute $G[B(t, r)]$}
		\ForAll{$s$ such that $a^{i-1}_{s, t} \neq 0$ in parallel}
		\State{Use $G[B(t, r)]$ to compute $a^{i-1}_{s, t}$ length $r$ random walks from $t$}
		\State{For each random walks computed in the previous step which ends in $t'$ increase $a^i_{s, t'}$}
		\EndFor
	\EndFor\label{l:rw-end}
\EndFor

\end{algorithmic}
\caption{An algorithm for sampling $a_v$ random walks of length $l$ starting from vertex $v$ (for each $v \in V$).}
\label{alg:random-walks-directed-bounded}
\end{algorithm}

\begin{lemma}
\cref{alg:random-walks-directed-bounded} is correct.
\end{lemma}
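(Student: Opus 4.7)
The plan is to decompose the proof into three natural components: correctness of the ball-computation phase (lines \ref{l:easy-begin}--\ref{l:easy-end}), the locality property that justifies simulating $r$ steps of a random walk using only a ball of radius $r$, and the Markov-property argument that stitches the per-chunk samples into length-$l$ walks with the correct joint distribution.

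First I would show by induction on $i$ that after the doubling loop, $B(v,2^{i})$ as computed by the algorithm equals the set of vertices reachable from $v$ by a directed path of length at most $2^{i}$. The base case $i=0$ holds by the initialization on line \ref{l:easy-begin}. For the inductive step, any vertex $w$ with $\dist(v,w)\le 2^{i}$ lies on a directed path of length at most $2^{i}$ from $v$, so there is an intermediate vertex $x$ with $\dist(v,x)\le 2^{i-1}$ and $\dist(x,w)\le 2^{i-1}$; by the inductive hypothesis $x\in B(v,2^{i-1})$ and $w\in B(x,2^{i-1})$, whence $w$ is placed in $B(v,2^{i})$ by the algorithm. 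Conversely, anything placed in $B(v,2^{i})$ is witnessed by such a two-hop path of total length at most $2^{i}$, giving the other containment. Computing $G[B(v,r)]$ in the second phase is then straightforward once the balls are known.

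Next I would argue locality: any length-$r$ random walk starting at $t$ visits only vertices of $B(t,r)$, because the $k$-th vertex of any such walk is reachable from $t$ by a directed path of length $k\le r$. Consequently, for each $t$, the induced subgraph $G[B(t,r)]$ contains every edge that such a walk could traverse, including out-degrees, so one machine holding $G[B(t,r)]$ can sample the length-$r$ random walk from $t$ exactly according to the true distribution (the outgoing edge weights in $G$ and in $G[B(t,r)]$ coincide for every vertex in $B(t,r-1)$, which is all that matters).

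Finally, the loop on lines \ref{l:rw-start}--\ref{l:rw-end} is essentially an implementation of the Markov decomposition of a random walk: for any starting vertex $s$, a length-$l$ random walk is distributed as the concatenation of $l/r$ segments, the $i$-th of which is, conditionally on the endpoint $t$ of the $(i-1)$-st segment, an independent length-$r$ random walk starting at $t$. I would formalize this by induction on $i$: the invariant is that $a^{i}_{s,t}$ equals the number of partial walks of length $i\cdot r$ starting at $s$ currently at $t$ maintained by the algorithm, and these partial walks are mutually independent, each distributed as a genuine length-$(i\cdot r)$ random walk in $G$ from $s$. The inductive step follows because the algorithm extends each partial walk by a fresh length-$r$ sample from $G[B(t,r)]$ drawn with independent randomness, which by the previous paragraph is distributed correctly; independence across extensions is preserved because disjoint walks use disjoint randomness. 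The main subtlety -- and the one place I would be careful -- is confirming that reusing the same induced subgraph $G[B(t,r)]$ for the $a^{i-1}_{s,t}$ different walks currently at $t$ does not introduce correlations: it does not, because each of those walks is extended by an independently drawn length-$r$ sample, and the distribution of each sample depends only on $t$, not on the identities of the walks or their past trajectories.
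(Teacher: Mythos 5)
Your proposal is correct and follows essentially the same route as the paper's proof: both show the balls $B(v,2^i)$ are computed correctly, use the locality fact that a length-$r$ walk from $t$ is contained in $G[B(t,r)]$, and argue by induction over the main loop that the $a^i_{s,t}$ counters represent correctly distributed partial walks. You spell out the Markov-decomposition and independence considerations more explicitly than the paper does (the paper treats these as immediate), but the underlying argument is the same.
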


\begin{proof}
The first step is to show that the algorithm correctly computes the sets $B(v, 	2^i)$ for all $v$ and $i = 1, \ldots, \log_2 r$.
This follows directly from the fact that $B(v, 2^i) := \bigcup_{x \in B(v, 2^{i-1})} B(x, 2^{i-1})$.

The random walk themselves are computed by the loop in lines~\ref{l:rw-start}-\ref{l:rw-end}.
Each iteration of the loop extends all random walks by $r$ edges.
The algorithm uses variables $a^i_{s, t}$ to represent its state, as follows.
After the $i$-th iteration, the algorithm has computed $a^i_{s, t}$ random walks of length $i\cdot r$, which start in $s$ and end in $t$.
For each $v \in V$ and $i$ the algorithm maintains the invariant that $\sum_{t \in V} a^i_{v, t} = a_v$.

To extend the random walks by $r$ steps, we use the basic fact that a length-$r$ random walk from $v$ is fully contained in $G[B(v, r)]$.
Hence, having $G[B(v, r)]$ is enough to compute $r$ steps of a random walk.
It follows easily that the variables $a^i_{s, t}$ are updated correctly.
Finally, we note that the pseudocode assumes that $r$ divides $l$, but this assumption can be easily dropped by computing only $l \bmod r$ steps of random walk in the last iteration of the main loop.
\end{proof}

\begin{lemma}
Assume that $\sum_{v \in V} a_v = O(n^{1+\epsilon})$. For any $\epsilon > 0$, \cref{alg:random-walks-directed-bounded} can be implemented in MPC model to run in $O(\log\log n + l \log \Delta / (\epsilon \log n))$ rounds, using $O(m + n^{1+\epsilon})$ total space and $O(n^\epsilon)$ space per machine, where $\Delta$ is the maximum degree in the graph and $\log \Delta / \log n = o(1)$.
\end{lemma}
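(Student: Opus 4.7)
The plan is to analyse Algorithm~\ref{alg:random-walks-directed-bounded} in two phases. First I would establish the basic size bound that enables everything else: since $r = 2^{\lfloor \log_2(\epsilon/2 \log_\Delta n) \rfloor} \le \epsilon \log n / (2 \log \Delta)$, we have $\Delta^r \le n^{\epsilon/2}$, and because $\log \Delta / \log n = o(1)$ implies $\Delta = n^{o(1)}$, \cref{obs:ballsize} yields $|B(v,r)| = O(n^{\epsilon/2})$ and $|G[B(v,r)]| = O(n^{\epsilon/2 + o(1)}) = O(n^\epsilon)$. Consequently each ball-induced subgraph fits on a single machine of size $O(n^\epsilon)$, and the total across all vertices is $O(n^{1+\epsilon})$.

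For the ball-computation phase (lines 1--11), I would implement each doubling step $B(v,2^i) = \bigcup_{x\in B(v,2^{i-1})} B(x,2^{i-1})$ via a standard sort-and-gather pattern in $O(1)$ rounds: for each $v$ and each $x \in B(v,2^{i-1})$ create a request token $(v,x)$; sort these by $x$; attach to each token a copy of $B(x,2^{i-1})$; sort back by $v$; and deduplicate on the machine responsible for $v$. The amount of data destined for any one vertex in the $i$-th doubling is at most $|B(v,2^{i-1})|^2 = O(\Delta^{2^i}) \le O(\Delta^r) = O(n^{\epsilon/2})$, so it fits in $O(n^\epsilon)$ local space. The number of doublings is $\lfloor \log_2 r\rfloor = O(\log\log n)$, since $r = \Theta(\log n/\log\Delta) = \omega(1)$.

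For the walk-extension phase (lines 12--18), I would implement each of the $l/r = O(l\log\Delta/(\epsilon\log n))$ outer iterations in $O(1)$ rounds as follows. Apply the sorting-based \NumberingSublists primitive (defined earlier in the paper) to the pairs $(t,s)$ representing ``there is a walk from $s$ currently at $t$,'' so that each endpoint $t$ learns $k_t := \sum_s a^{i-1}_{s,t}$ and each walk learns a rank among the walks sharing its endpoint. A prefix-sum over the quantities $\lceil k_t/n^\epsilon\rceil$ then allocates that many disjoint ``worker'' machines for each active $t$; we ship one copy of $G[B(t,r)]$ to each worker, and route every walk at $t$ to the appropriate worker by its rank, where it is extended by $r$ further steps using fresh independent randomness and the locally available copy of $G[B(t,r)]$. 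This preserves mutual independence of the walks. Combining the two phases gives $O(\log\log n + l\log\Delta/(\epsilon\log n))$ rounds. Total space is $O(m)$ for the input graph, $O(n^{1+\epsilon})$ for the phase-one balls and ball-induced subgraphs, and, in phase two, at most $\sum_t \lceil k_t/n^\epsilon\rceil \cdot O(n^\epsilon) = O(n)\cdot O(n^\epsilon) + O(\sum_t k_t) = O(n^{1+\epsilon})$, using $\sum_t k_t = \sum_v a_v = O(n^{1+\epsilon})$.

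The main obstacle is the space accounting around the ball-graph copies in phase two: if many walks happen to share the same current endpoint, one must neither hand every walk its own copy of $G[B(t,r)]$ (that would blow up the total space by a factor of $n^\epsilon$) nor force $k_t$ walks onto a single machine when $k_t > n^\epsilon$. The replication scheme above, which creates exactly $\lceil k_t/n^\epsilon\rceil$ copies of $G[B(t,r)]$ and routes walks by rank, is designed precisely so that the total replication cost telescopes against $\sum_t k_t$ and the number of distinct endpoints (at most $n$), yielding the desired $O(m + n^{1+\epsilon})$ bound.
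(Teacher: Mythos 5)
Your proposal is correct and follows essentially the same approach as the paper: bound $|G[B(v,r)]| = O(n^\epsilon)$ via the choice of $r$ and \cref{obs:ballsize}, compute balls by $O(\log\log n)$ doublings, and in each of the $l/r$ walk-extension iterations allocate $\lceil c_t/n^\epsilon\rceil$ worker machines per endpoint $t$, with the total-space bound following from the telescoping sum $\sum_t \lceil c_t/n^\epsilon\rceil O(n^\epsilon) = O(n^{1+\epsilon})$. You supply more MPC-implementation detail than the paper (which simply declares the ball-doubling phase ``straightforward''), but the batching strategy and accounting are the same.
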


\begin{proof}
Lines~\ref{l:easy-begin}--\ref{l:easy-end} can be implemented in the MPC model in a straightforward way, so in the proof we focus on the remaining part of the algorithm.
Let us first bound the space needed to store $G[B(v, r)]$ for all $v$.
Note that $r \leq \epsilon \log n / (2 \log \Delta)$.
By \cref{obs:ballsize} we have
	\[ |G[B(v, r)]| = O(\Delta^{r+1}) = O(n^{\epsilon / 2 + \log \Delta / \log n}) = O(n^{\epsilon}). \]
Hence, storing $G[B(v, r)]$ for all $v \in V$ requires $O(n^{1+\epsilon})$ total space.

In the $i$-th iteration of the algorithm there are $c_t := \sum_{s \in V} a^{i-1}_{s, t}$ random walks that end in $t$ and need to be extended by using $G[B(t, r)]$.
For each such walk that ends in $t'$ we increase $a_{s,t'}$.
In order to make sure that each machine ends up increasing at most $O(n^\epsilon)$ counters $a_{s,t'}$, we use the following batching strategy.
Recall that for each vertex $t$, we need to compute $c_t$ random walks starting in $t$.
In order to do that we use $\lceil c_t / n^\epsilon \rceil$ machines, each of which computes $G[B(v, r)]$ independently.
The total space used by all the machines is then
	\[
		\sum_{t \in V} \lceil c_t / n^\epsilon \rceil O(n^\epsilon) \leq \sum_{t \in V} (c_t / n^\epsilon + 1) O(n^\epsilon) = O(n^{1+\epsilon}) + \sum_{t \in V} c_t = O(n^{1+\epsilon}) + \sum_{v \in V}a_v = O(n^{1+\epsilon})
	\]
\end{proof}

By combining the lemmas from this section we obtain the following.

\begin{theorem}
Let $l > 0$, $G$ be a directed graph with maximum outdegree bounded by $\Delta$ and let $\{ a_v \}_{v\in V}$ be a sequence such that $\sum_{v\in V} a_v = O(n^{1+\epsilon})$.
There exists an MPC algorithm that for each $v \in V$ computes $a_v$ endpoints of random walks of length $l$ starting from $a_v$.
The algorithm uses $O(\log\log n + l \log \Delta / (\epsilon \log n))$ rounds, $O(m + n^{1+\epsilon})$ total space and $O(n^\epsilon)$ space per machine.
\end{theorem}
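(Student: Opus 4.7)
The plan is to verify that Algorithm~\ref{alg:random-walks-directed-bounded} can be realized in the MPC model with the claimed resources, so the theorem follows by simply packaging the two lemmas already proved in this section into a single statement. Concretely, I will instantiate $r = 2^{\lfloor \log_2(\epsilon \log n / (2 \log \Delta)) \rfloor}$, argue that the ball-growing preamble and the main random-walk doubling loop respect the round, total-space, and per-machine-space budgets, and appeal to the correctness lemma for the overall output guarantee.

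First I would analyze the ball-growing phase (lines~\ref{l:easy-begin}--\ref{l:easy-end}). Initializing each $B(v,1)$ uses one round and total space $O(m)$. The doubling loop performs $\lceil \log_2 r \rceil = O(\log \log n - \log \log \Delta) = O(\log \log n)$ rounds. Because $r \le \epsilon \log n / (2\log \Delta)$, Observation~\ref{obs:ballsize} gives $|B(v,r)| = O(\Delta^r) = O(n^{\epsilon/2})$ and $|G[B(v,r)]| = O(\Delta^{r+1}) = O(n^{\epsilon})$; in particular each ball and each induced subgraph fits on one machine of size $O(n^{\epsilon})$, and the aggregate storage for all $v$ is $O(n \cdot n^{\epsilon}) = O(n^{1+\epsilon})$. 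Gathering $B(v,2^i)$ from the balls $B(x,2^{i-1})$ for $x \in B(v,2^{i-1})$ is a standard MPC routing step that fits into this space bound since no intermediate object exceeds $O(n^{\epsilon})$.

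Next I would analyze the random-walk loop (lines~\ref{l:rw-start}--\ref{l:rw-end}). It is executed $\lceil l / r\rceil = O(l \log \Delta / (\epsilon \log n))$ times, so the two phases together contribute $O(\log\log n + l \log \Delta/(\epsilon \log n))$ rounds. The key to the per-machine space bound is the batching argument from the complexity lemma: for each $t$ with $c_t := \sum_{s} a^{i-1}_{s,t}$ walks passing through it, we spawn $\lceil c_t/n^{\epsilon} \rceil$ machines, ship $G[B(t,r)]$ (of size $O(n^{\epsilon})$) to each, and let each machine continue at most $n^{\epsilon}$ walks locally for $r$ more steps. Summing $\sum_t \lceil c_t/n^{\epsilon}\rceil \cdot O(n^{\epsilon}) \le O(n^{1+\epsilon}) + \sum_t c_t$; since $\sum_t c_t = \sum_v a_v = O(n^{1+\epsilon})$ is preserved from step to step by the invariant $\sum_{t} a^i_{v,t} = a_v$, the total space stays $O(m + n^{1+\epsilon})$ and no machine ever uses more than $O(n^{\epsilon})$ words.

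Correctness is inherited directly from the correctness lemma (the balls are computed by squaring and each length-$r$ continuation of a walk at $t$ is drawn using only $G[B(t,r)]$, which contains every edge the walk could traverse). Combining these observations gives the theorem. The main subtlety, and what I would write most carefully, is the batching step: one must argue that the hop from the ``logical'' variables $a^{i}_{s,t}$ to a concrete MPC layout that respects both the $O(n^{\epsilon})$ per-machine bound and the $O(m+n^{1+\epsilon})$ total-space bound works uniformly across iterations, which follows from the invariant $\sum_{t} a^{i}_{v,t}=a_v$ together with the hypothesis $\sum_v a_v = O(n^{1+\epsilon})$.
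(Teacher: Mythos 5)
Your proposal is correct and follows exactly the paper's route: the theorem is obtained by combining the correctness lemma for Algorithm~\ref{alg:random-walks-directed-bounded} with the preceding complexity lemma (which already establishes the $O(\log\log n + l\log\Delta/(\epsilon\log n))$ rounds, $O(m+n^{1+\epsilon})$ total space, and $O(n^\epsilon)$ per-machine space via the ball-growing and batching arguments you restate). The only thing worth flagging is that the implicit hypothesis $\log\Delta/\log n = o(1)$ from the complexity lemma is what makes $\Delta^{r+1} = O(n^\epsilon)$ go through, and should be carried into the theorem's statement or explicitly noted.
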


Note that with $l = O(\log n)$ and $\Delta = \poly \log n$ we get $O(\log \log n)$ rounds.
Moreover, if $\Delta = O(1)$, we can set $\epsilon = 1/\log \log n$ and use only $O(m + n^{1+o(1)})$ total space.

\section{Handling Dangling Vertices}
\label{app:handling-dangling-nodes}
From theoretical point of view, when dangling vertices are present the transition matrix
for the PageRank walks is not stochastic. In particular, the largest eigenvector is no longer
equal to 1. Hence, many ways to handle dangling vertices have been proposed, see e.g.,~\cite{berkhin2005}.
For example, we can delete them, we can lump them to one vertex and add self-loop, we can add a self-loop to them, each dangling
vertex can be linked to an artificial vertex with a self-loop (sink) or we can connect each dangling
vertex to every other vertex. The first solution is often mentioned even in the original PageRank paper, however,
it is infeasible in our case, as it requires finding strongly connected components first.
This last solution seems to be the most
accepted and the most widely used one, as it can be interpreted as restarting the random walk
from a random state if we reach a dangling vertex. In this paper, we assume that
one of the above solutions has been already applied to our graph and, therefore, every vertex has at least one outgoing edge.
Still, in this section, we give a novel relation between two most widely applied methods to handle dangling vertices.
We have made a through literature study and to the best of our knowledge these relations have not been
observed before. Usually, one heuristically argues that these approaches give similar results.
We give formal proofs that it is indeed the case. We show that adding self-loops and restarting the walks 
are equivalent up to a simple transformation. 

In the case when dangling vertices are present the transition matrix of our graph $T=T(G)$ can be decomposed
into the following blocks
\[
T=\left[\begin{matrix}
T_1 & 0 \\
T_2 & 0
\end{matrix}
\right],
\]
where the bottom $k$ rows correspond to dangling vertices without any out edges. Adding self-loops 
gives us the following transition matrix 
\[
T^s=\left[\begin{matrix}
T_1 & 0 \\
T_2 & I
\end{matrix}\right].
\]
Note that the way we defined transition matrices $T$ and $T^s$ means that the entry at row $u$ and column $v$ corresponds to \emph{incoming} arc to $u$ from $v$. Hence, a stationary distribution of $T$ is a right eigenvector of $T$.

The PageRank matrix of $T^s$ is given by
\begin{equation}
\label{eq:pagerank-s}
(1-\eps)T^s + \frac{\eps}{n} J =\left[\begin{matrix}
(1-\eps) T_1  + \frac{\eps}{n}J_{n-k,n-k} & \frac{\eps}{n} J_{n-k,k} \\
(1-\eps) T_2 +  \frac{\eps}{n}J_{k,n-k}  & (1-\eps)I +  \frac{\eps}{n}J_{k,k} 
\end{matrix}\right],
\end{equation}
where $J_{i,j}$ refers to an all $1$ matrix of size $i\times j$.
We denote by $\pi^s=\left[\begin{matrix}\pi^s_1\\ \pi^s_2\end{matrix}\right]$ the stationary distribution of the above PageRank matrix. 
When we consider restarting the walks we obtain the following matrix
\[
T^r=\left[\begin{matrix}
T_1 & \frac{1}{n}J_{n-k,k} \\
T_2 & \frac{1}{n}J_{k,k}
\end{matrix}\right].
\]
The PageRank matrix of $T^r$ is given by
\begin{equation}
\label{eq:pagerank-r}
(1-\eps)T^r + \frac{\eps}{n} J =\left[\begin{matrix}
(1-\eps) T_1  + \frac{\eps}{n}J_{n-k,n-k} & \frac{1}{n} J_{n-k,k} \\
(1-\eps) T_2 +  \frac{\eps}{n}J_{k,n-k}  &   \frac{1}{n}J_{k,k}
\end{matrix}\right],
\end{equation}
with $\pi^r=\left[\begin{matrix}\pi^r_1\\ \pi^r_2\end{matrix}\right]$ being the stationary distribution. In the rest of this section, given a vector $x \in \bbR^n$, we use $|x|$ to denote that $l_1$ norm of $x$.

\begin{theorem}
\label{thm:dangling-nodes}
Let $\pi^r$ and $\pi^s$ be the stationary distribution vectors as defined above. Then, it holds
\[
\pi^r_1 = \frac{1}{\eps - \eps|\pi_1^s| +|\pi_1^s|} \pi^s_1
\] 
and 
\[
\pi^r_2=  \frac{1}{1/\eps - |\pi_2^s|/\eps + |\pi_2^s|} \pi^s_2.
\]
\end{theorem}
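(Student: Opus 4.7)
The plan is to exploit the block structure of both PageRank matrices \eqref{eq:pagerank-s} and \eqref{eq:pagerank-r} and note that they agree in the two left-hand blocks (the ones corresponding to non-dangling columns). Write $a = |\pi^s_1|$ and $b = |\pi^s_2| = 1 - a$. Using the identity $J_{p,q} v = |v|\, \mathbf{1}_p$, the stationary equation $\pi^s = P^s \pi^s$ expands into the two identities
\begin{align*}
\pi^s_1 &= (1-\eps) T_1 \pi^s_1 + \tfrac{\eps}{n}\, \mathbf{1}_{n-k}, \\
\eps\, \pi^s_2 &= (1-\eps) T_2 \pi^s_1 + \tfrac{\eps}{n}\, \mathbf{1}_{k},
\end{align*}
since $a + b = 1$ makes the coefficients in front of the all-ones vectors collapse. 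The second equation gives a convenient expression $\pi^s_2 = \tfrac{1-\eps}{\eps} T_2 \pi^s_1 + \tfrac{1}{n}\mathbf{1}_k$, which is the restart equation for the lower block.

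Next, I would guess the scaling directly from the statement: introduce $c = \tfrac{1}{a + \eps b}$ and define a candidate $\tilde\pi = (\tilde\pi_1, \tilde\pi_2)^T$ by $\tilde\pi_1 = c\, \pi^s_1$ and $\tilde\pi_2 = c \eps\, \pi^s_2$. Observe that $|\tilde\pi_1| + |\tilde\pi_2| = c(a + \eps b) = 1$, so $\tilde\pi$ is a probability vector, and a short rewriting shows that $c = \tfrac{1}{\eps + (1-\eps) a}$ and $c\eps = \tfrac{\eps}{1 - (1-\eps) b}$, matching the two closed-form expressions claimed in the theorem statement.

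The verification step is then plug-and-chug. Substituting $\tilde\pi$ into the top-block equation of $P^r \tilde\pi$ gives
\begin{equation*}
(1-\eps)T_1 \tilde\pi_1 + \tfrac{\eps}{n}|\tilde\pi_1|\mathbf{1}_{n-k} + \tfrac{1}{n}|\tilde\pi_2|\mathbf{1}_{n-k}
 = c\bigl[(1-\eps)T_1 \pi^s_1 + \tfrac{\eps a + \eps b}{n}\mathbf{1}_{n-k}\bigr] = c \pi^s_1 = \tilde\pi_1,
\end{equation*}
where the coefficient $\eps a + \eps b$ collapses to $\eps$ and one then invokes the $\pi^s_1$ equation. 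The bottom block is analogous: one computes $(1-\eps)T_2 \tilde\pi_1 + \tfrac{\eps}{n}|\tilde\pi_1|\mathbf{1}_k + \tfrac{1}{n}|\tilde\pi_2|\mathbf{1}_k = c\eps \bigl[\tfrac{1-\eps}{\eps}T_2 \pi^s_1 + \tfrac{1}{n}\mathbf{1}_k\bigr] = c\eps\, \pi^s_2 = \tilde\pi_2$, using the derived expression for $\pi^s_2$. Since the PageRank Markov chain defined by $P^r$ is ergodic, its stationary distribution is unique, so $\tilde\pi = \pi^r$, which yields the two claimed identities.

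There is no real obstacle here; the only mildly delicate point is making sure the bookkeeping between $|\pi^s_1|, |\pi^s_2|$ and the two equivalent forms $\tfrac{1}{\eps + (1-\eps) a}$ vs.\ $\tfrac{1}{a + \eps b}$ is consistent with the way the theorem writes the scalars. Once the candidate is correctly guessed (the key insight being that mass on dangling vertices gets rescaled by an extra factor of $\eps$ when switching from self-loops to restart), the rest follows from a direct substitution plus uniqueness of the PageRank stationary distribution.
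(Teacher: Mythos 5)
Your proof is correct, and it takes a genuinely different route from the paper's. The paper derives the proportionality structure from the equations directly: it computes $\pi_1^s = \frac{\eps}{n}|\pi_2^s|\,g$ and $\pi_1^r = \frac{1}{n}|\pi_2^r|\,g$ for the \emph{same} resolvent vector $g = \rb{I-(1-\eps)T_1-\frac{\eps}{n}J}^{-1}\vec 1$, similarly pins down $\pi_2^s$ and $\pi_2^r$ as scalar multiples of the same vector, and then solves the resulting small linear system in the unknown scalars using the normalization constraints $|\pi_1^\bullet|+|\pi_2^\bullet|=1$. You instead read off the scaling from the theorem statement: you postulate the ansatz $\tilde\pi_1 = c\,\pi_1^s$, $\tilde\pi_2 = c\eps\,\pi_2^s$ with $c = 1/(|\pi_1^s| + \eps|\pi_2^s|)$, verify by direct substitution that $\tilde\pi$ is stationary for $P^r$, and invoke uniqueness of the PageRank stationary distribution (which holds since all entries of $P^r$ are at least $\eps/n > 0$, so the chain is ergodic). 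Your block stationarity equations for $\pi^s$ are correct, your normalization check is correct, and the algebra matching $c$ and $c\eps$ to the two closed forms in the theorem checks out. The trade-off is clear: your proof is shorter and avoids constructing or reasoning about the block resolvent, but it is ``guess and verify'' — the crucial observation that the lower block picks up an extra factor of $\eps$ must be supplied rather than discovered — whereas the paper's approach is constructive and would find the answer even without knowing it in advance. Both proofs ultimately hinge on the same ingredients: the two chains sharing their left-column blocks, the block stationarity equations, normalization, and uniqueness.
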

\begin{proof}
Let us first consider the upper blocks of \eqref{eq:pagerank-s} when used in the stationary equation. We obtain
\[
\rb{(1-\eps) T_1 + \frac{\eps}{n}J} \pi^s_1 + \frac{\eps}{n} |\pi_2^s| \vec{1} = \pi_1^s
\]
\[
\rb{I-(1-\eps)T_1 -\frac{\eps}{n} J} \pi_1^s = \frac{\eps}{n} |\pi_2^s| \vec{1}
\]
\begin{equation}
\label{eq:pi1s}
\pi_1^s=  \frac{\eps}{n} |\pi_2^s| \rb{I-(1-\eps)T_1 -\frac{\eps}{n} J}^{-1}\vec{1} = \frac{\eps}{n} |\pi_2^s| g,  
\end{equation}
where $g=  \rb{I-(1-\eps)T_1 -\frac{\eps}{n} J}^{-1}\vec{1}$. 

Now, consider the upper blocks of \eqref{eq:pagerank-r}. Also from the stationary equation we derive
\[
\rb{(1-\eps) T_1 + \frac{\eps}{n}J} \pi^r_1 + \frac{1}{n} |\pi_2^r| \vec{1} = \pi_1^r.
\]
\[
\rb{I-(1-\eps)T_1 -\frac{\eps}{n} J} \pi_1^r = \frac{1}{n} |\pi_2^r| \vec{1}.
\]
\begin{equation}
\label{eq:pi1r}
\pi_1^r = \frac{1}{n} |\pi_2^r|\rb{I-(1-\eps)T_1 -\frac{\eps}{n} J}^{-1} \vec{1} = \frac{1}{n} |\pi_2^r| g.
\end{equation}
From \eqref{eq:pi1s} with \eqref{eq:pi1r} we conclude that both $\pi_1^r$ and $\pi_1^s$ are parallel to $g$, and hence $\pi^r_1 = x \pi^s_1$, for some $x\in \bbR$. 

Next we consider lower blocks of \eqref{eq:pagerank-s} and \eqref{eq:pagerank-r}. From the lower blocks of \eqref{eq:pagerank-s} we establish
\[
\rb{(1-\eps) T_2 + \frac{\eps}{n}J} \pi^s_1 + (1-\eps)\pi_2^s+  \frac{\eps}{n} |\pi_2^s| \vec{1} = \pi_1^s
\]
\[
\rb{(1-\eps) T_2 + \frac{\eps}{n}J} \pi^s_1 +  \frac{\eps}{n} |\pi_2^s| \vec{1} = \eps \pi_2^s.
\]
By plugging the last equality into \eqref{eq:pi1s} we get
\[
\rb{(1-\eps) T_2 + \frac{\eps}{n}J} \frac{\eps}{n} |\pi_2^s| g +  \frac{\eps}{n} |\pi_2^s| \vec{1} = \eps \pi_2^s
\]
\[
\rb{(1-\eps) T_2 + \frac{\eps}{n}J} \frac{1}{n} |\pi_2^s| g +  \frac{1}{n} |\pi_2^s| \vec{1} = \pi_2^s
\]
\begin{equation}
\label{eq:pi2s}
|\pi_2^s|\rb{\rb{(1-\eps) T_2 + \frac{\eps}{n}J} \frac{1}{n}  g +  \frac{1}{n}  \vec{1}} = \pi_2^s.
\end{equation}
The lower blocks of \eqref{eq:pagerank-r} give 
\[
\rb{(1-\eps) T_2 + \frac{\eps}{n}J} \pi^r_1 + \frac{1}{n} |\pi_2^r| \vec{1} = \pi_2^r.
\]
Plugging the lest equality into \eqref{eq:pi1r} leads to
\[
\rb{(1-\eps) T_2 + \frac{\eps}{n}J} \frac{1}{n} |\pi_2^r| g + \frac{1}{n} |\pi_2^r| \vec{1} = \pi_2^r
\]
\begin{equation}
\label{eq:pi2r}
|\pi_2^r|\rb{\rb{(1-\eps) T_2 + \frac{\eps}{n}J} \frac{1}{n} g + \frac{1}{n}  \vec{1}} = \pi_2^r.
\end{equation}
Again by \eqref{eq:pi2s} and \eqref{eq:pi2r} we see that $\pi_2^r =y \pi_2^s$, for some $y\in \bbR$. 
We have that $|\pi_1^s| = 1-|\pi_2^s|$ and $|\pi_1^r| = 1-|\pi_2^r|$, so from \eqref{eq:pi1s} and \eqref{eq:pi1r}
we obtain 
\begin{eqnarray*}
1-|\pi_2^s| = \frac{\eps}{n} |\pi_2^s| |g|  &\textrm{ and }& 1-|\pi_2^r| = \frac{1}{n} |\pi_2^r| |g|,
\end{eqnarray*}
which implies
\begin{eqnarray*}
1-|\pi_2^s| = \frac{\eps}{n} |\pi_2^s| |g|  &\textrm{ and }& 1-y |\pi_2^s| = \frac{1}{n} y |\pi_2^s| |g|.
\end{eqnarray*}
By solving these two equations for $y$ we obtain 
\[
y = \frac{1}{1/\eps - |\pi_2^s|/\eps + |\pi_2^s|}.
\]
Using the same approach to $x$ we get 
\[
x= \frac{1}{\eps - \eps|\pi_1^s| +|\pi_1^s|}, 
\]
what finishes the proof. 
\end{proof}
\cref{thm:dangling-nodes} has a few consequences. First of all, one can easily see that scores of vertices in $\pi_1^r$ are all higher than
scores of vertices in $\pi_1^s$. More importantly, we can easily obtain $\pi^r$ in $O(1)$ MPC rounds from $\pi^s$. We
note that $\pi^s$ are much easier to compute as obtaining them requires only minor graph modification, i.e., adding self-loops to dangling vertices. 
Whereas, the scores $\pi^r$ are the most widely accepted ones. 
%
%However, the first step of our algorithm will be to add self-loop to each vertex, i.e.,
%we first transform our graph to graph $G_d=(V,E_d)$ in the following way
%\[
%E_d = E \cup \{vv : v \in V, \deg^+(v)=0\}.
%\]
%In other words we have added a self-loop to each dangling node. Adding self-loops
%is one of common approaches to handing dangling nodes~\cite{}
%
%
%not be the most typical approach, but
%as we show in the next lemma it is actually equivalent to adding random transitions from all dangling nodes.
%
%
%
%For this
%we need yet another graph $G_r=(V,E_r)$ defined in the following way
%\[
%E_r = E \cup \{vu : v \in V, u\in V, \deg^+(v)=0\}.
%\]
%
%\begin{lemma}
%Let $\pi_d$ be the PageRank vector for graph $G_d$ then
%\[
%\pi_r(v) = \begin{cases}
%\pi_d(v) & \mathrm{ if } \deg^+(v)\neq 0,\\
%\pi_d(v)* & \mathrm{ otherwise}.
%\end{cases}
%\]
%\end{lemma}

%%%%%%%%%%%%%%%%%%%%%%%%%%%%%%%%%%%%%%%%%%%%%%%%%%%%%%%%%%%%%%%%%%%%%%%%%%%%%
%                       Bibliography                                        %
%%%%%%%%%%%%%%%%%%%%%%%%%%%%%%%%%%%%%%%%%%%%%%%%%%%%%%%%%%%%%%%%%%%%%%%%%%%%%

\bibliographystyle{alpha}
\bibliography{bibliography}

\end{document}